\newcommand{\tohide}[2]{#1} 
\newcommand{\fullproof}[1]{} 
\newcommandx{\ross}[2][1=]
{\todo[linecolor=blue,backgroundcolor=blue!25,bordercolor=blue,#1]{Ross: #2}} 
\newcommandx{\bogdan}[2][1=]{\todo[linecolor=red,backgroundcolor=red!25,bordercolor=red,#1]{Bogdan: #2}}
\newcommandx{\gabriel}[2][1=]{\todo[linecolor=green,backgroundcolor=green!25,bordercolor=green,#1]{Gabriel: #2}}
\newcommandx{\alwen}[2][1=]{\todo[linecolor=purple,backgroundcolor=purple!25,bordercolor=purple,#1]{Alwen: #2}}
\theoremstyle{plain}
\def\proof{{\noindent \bf Proof.~}}
\title{De Morgan Dual 
 Nominal Quantifiers Modelling Private Names in Non-Commutative Logic}
\author[R.~Horne]{Ross Horne}
\affiliation{
Computer Science and Communications,
University of Luxembourg
}
\email{ross.horne@uni.lu}
\author[A.~Tiu]{Alwen Tiu}
\affiliation{
Research School of Computer Science,
The Australian National University, Australia
}
\email{alwen.tiu@anu.edu.au}
\author[B.~Aman]{Bogdan Aman}
\affiliation{
Alexandru Ioan Cuza University of Ia\c si, Romania
}
\email{bogdan.aman@iit.academiaromana-is.ro}
\author[G.~Ciobanu]{Gabriel Ciobanu}
\affiliation{
Alexandru Ioan Cuza University of Ia\c si, Romania
}
\email{gabriel@info.uaic.ro}
\begin{abstract}
This paper explores the proof theory necessary for recommending an expressive but decidable first-order system, named \textsf{MAV1}, featuring a de Morgan dual pair of nominal quantifiers. These nominal quantifiers called `new' and `wen' are distinct from the self-dual Gabbay-Pitts and Miller-Tiu nominal quantifiers. The novelty of these nominal quantifiers is they are polarised in the sense that `new' distributes over positive operators while `wen' distributes over negative operators. This greater control of bookkeeping enables private names to be modelled in processes embedded as formulae in \textsf{MAV1}. The technical challenge is to establish a cut elimination result, from which essential properties including the transitivity of implication follow.
Since the system is defined using the calculus of structures, a generalisation of the sequent calculus, novel techniques are employed. The proof relies on an intricately designed multiset-based measure of the size of a proof, which is used to guide a normalisation technique called \textit{splitting}. The presence of equivariance, which swaps successive quantifiers, induces complex inter-dependencies between nominal quantifiers, additive conjunction and multiplicative operators in the proof of splitting. Every rule is justified by an example demonstrating why the rule is necessary for soundly embedding processes and ensuring that cut elimination holds.
\end{abstract}
\begin{document}

%
%
\begin{CCSXML}
<ccs2012>
<concept>
<concept_id>10003752.10003790.10003792</concept_id>
<concept_desc>Theory of computation~Proof theory</concept_desc>
<concept_significance>500</concept_significance>
</concept>
<concept>
<concept_id>10003752.10003753.10003761.10003764</concept_id>
<concept_desc>Theory of computation~Process calculi</concept_desc>
<concept_significance>500</concept_significance>
</concept>
<concept>
<concept_id>10003752.10003790.10003801</concept_id>
<concept_desc>Theory of computation~Linear logic</concept_desc>
<concept_significance>500</concept_significance>
</concept>
</ccs2012>
\end{CCSXML}

\ccsdesc[500]{Theory of computation~Proof theory}
\ccsdesc[500]{Theory of computation~Process calculi}
\ccsdesc[500]{Theory of computation~Linear logic}

%
%


\keywords{calculus of structures, nominal logic, non-commutative logic}

\thanks{
R.\ Horne\ and A.\ Tiu\ acknowledge support from Ministry of Education, Singapore, Tier 2 grant MOE2014-T2-2-076.

Authors' addresses: 
R.~Horne,
Faculty of Science, Technology and Communication,
6 avenue de la Fonte,
L-4364 Esch-sur-Alzette,
Luxembourg,
email:
\texttt{ross.horne@uni.lu};
A.~Tiu,
Research School of Computer Science,
The Australian National University,
Building 108, North Road,
Canberra, ACT 2061, Australia;
email: \texttt{alwen.tiu@anu.edu.au};
B.~Aman and G.~Ciobanu,
Alexandru Ioan Cuza University of Ia\c si, Romania;
email: \texttt{bogdan.aman@iit.academiaromana-is.ro}
and
\texttt{gabriel@info.uaic.ro}.
}

\maketitle

\section{Introduction}
\label{sec:intro}

This paper investigates the proof theory of a novel pair of de Morgan dual nominal quantifiers.
These quantifiers are motivated by the desire to model private name binders in processes by embedding the processes directly as formulae in a suitable logical system.
The logical system in which this investigation is conducted is sufficiently expressive to soundly embed the finite fragment of several process calculi.

A requirement of directly embedding processes as formulae is that the logic should be able to capture causal dependencies.
To do so, we employ a non-commutative multiplicative operator, which can be used to model the fact that `$a$ happens before $b$' is not equivalent to `$b$ happens before $a$'.
Such non-commutative operators are problematic for traditional proof frameworks such as the sequent calculus;
hence we adopt a formalism called the \textit{calculus of structures}~\cite{Guglielmi2007,Tiu2006,Guglielmi2011,Strassburger2011,Strassburger2003thesis}.
The calculus of structures permits more proofs than the sequent calculus, by allowing inference rules to be applied in any context; while still satisfying proof theoretic properties, notably cut elimination.
An advantage of the calculus of structures is that it can express proof systems combining connectives for sequentiality and parallelism.
%
The calculus of structures was motivated by a need for understanding
why pomset logic~\cite{Retore1997} could not be expressed in the
sequent calculus. Pomset logic is inspired by
pomsets~\cite{Pratt1986} and linear logic~\cite{Girard1987}, the
former being a model of concurrency respecting
causality, while the latter can be interpreted in
various ways as a logic of resources and
concurrency~\cite{Caires2010,Kobayashi1993,Wadler2014}.

These observations lead to the propositional system \textsf{MAV}~\cite{Horne2015} and its first-order extension presented in this work, named \textsf{MAV1}.
Related work establishes that linear implication in such logical systems is sound with respect to both pomset ideals~\cite{Horne2017}
and weak simulation~\cite{MSCS}.
These results tighten results in initial investigations concerning a minimal calculus \textsf{BV} and trace inclusion~\cite{Bruscoli2002}.
Hence reasoning using linear implication is sound with respect to most useful (weak) preorders over processes,
for a range of languages not limited to \textsf{CCS}~\cite{Milner1982} and $\pi$-calculus~\cite{Milner1992}.

This paper resolves the fundamental logical problem of whether cut elimination holds for \textsf{MAV1}.
Cut elimination, the corner stone of a proof system, is essential for confidently recommending a proof system.
In the setting of the calculus of structures, cut elimination is formalised quite differently compared to traditional proof frameworks;
hence the proof techniques employed in this paper are of considerable novelty.
Furthermore, this paper is the first paper to establish cut elimination for a de Morgan dual pair of nominal quantifiers in any proof framework.
These nominal quantifiers introduce intricate interdependencies between other operators in the calculus, reflected in the technique of \textit{splitting} (Lemma~\ref{lemma:split-times}) which is the key lemma required to establish cut elimination (Theorem~\ref{theorem:cut}).

Logically speaking, nominal quantifiers $\new$ and $\wen$, pronounced `new' and `wen' respectively, sit between $\forall$ and $\exists$ such that
$
\forall x. P \multimap \new x. P
$
and
$
\new x. P \multimap \wen x. P
$ and
$
\wen x. P \multimap \exists x. P
$, where $\multimap$ is linear implication.
The quantifier $\new$ is similar in some respects to $\forall$, whereas $\wen$
is similar to $\exists.$ A crucial difference between $\exists x. P$
and $\wen x. P$ is that variable $x$ in the latter cannot be instantiated with arbitrary terms,
but only `fresh' names introduced by $\new.$
Our \textit{new} quantifier $\new$, distinct from the Gabbay-Pitts quantifier, addresses limitations of established self-dual nominal quantifiers for modelling private names in embeddings of processes as formulae. 
In particular, our $\new$ quantifier does not distribute over parallel composition in either direction.
In \textsf{MAV1}, the formulae
$
\mathopen{\new x.}({\mathrm{event}(x)} \cpar {\mathrm{event}(x)})
$ and
$
\new x.{\mathrm{event}(x)} \cpar \new x.{\mathrm{event}(x)}
$
are unrelated by linear implication. This property is essential for soundly modelling private name binders in processes.
\textbf{Outline.}
For a new logical system it is necessary to justify correctness, which we approach in proof theoretic style by cut elimination.
Section~\ref{section:self-dual} illustrates why an established self-dual nominal quantifier~\cite{Gabbay2002,Pitts2003,Tiu2009,Gacek2011} is incapable of soundly modelling name restriction in a processes-as-formulae embedding.
Section~\ref{section:syntax} defines \textsf{MAV1}, explains cut elimination and discusses rules.
Section~\ref{section:nominal} presents an explanation of the rules for the nominal quantifiers.
Section~\ref{section:splitting} presents technical lemmas and the \textit{splitting} technique which is key to cut elimination.
Section~\ref{section:context} presents a context lemma which is used to eliminate \textit{co-rules} that form a cut; thereby establishing cut elimination.
Section~\ref{section:complexity} explains the complexity classes for various fragments of \textsf{MAV1}.

The cut elimination result in this article was announced at CONCUR 2016~\cite{Horne2016}, without full proofs.
This journal version of the paper explains the cut elimination proof, elaborates on the motivating discussion, and highlights further corollaries of cut elimination.
Since $\new$ is a Cyrillic vowel, we use another Cyrillic vowel $\wen$ for nominal quantifier 'wen'. This Cyrillic vowel is pronounced as the hard e in `wen' and reminds the reader of its existential nature.

Due to the space limitation, some proofs are omitted in the printed version of this article, but are available in the accompanying Electronic Appendix.

\section{Why not a self-dual nominal quantifier?}
\label{section:self-dual}

Nominal quantifiers in the literature are typically self-dual in the sense of de Morgan dualities.
That is, for a nominal quantifier, say $\nabla$, ``not $\nabla x$ $P$'' is equivalent to ``$\nabla x$ not $P$.''
Such self-dual nominal quantifiers have been successfully introduced in classical and intuitionistic frameworks, typically used to reason about higher-order abstract syntax with name binders.
Such nominal frameworks are therefore suited to program analysis, where the semantics of a programming language are encoded as a theory over terms in the logical framework.

Rather surprisingly, when processes themselves are directly embedded as formulae in a logic, where constructs are mapped directly to primitive logical connectives 
(as opposed to terms inside a logical encoding of the semantics of processes),
self-dual quantifiers do not exhibit typical properties expected of name binders.
To understand this problem, in this section we recall an established calculus \textsf{BVQ}~\cite{Roversi14} that can directly embed processes but features a self-dual nominal quantifier.
We explain that such a self-dual quantifier provides an unsound semantics for name binders.
This motivates the need for a finer polarised nominal quantifier, which leads to the calculus introduced in subsequent sections.

We assume the reader has a basic understanding of 
the semantics of the $\pi$-calculus~\cite{Milner1992} and $\textsf{CCS}$~\cite{Milner1982}.
This section provides necessary preliminaries for the calculus of structures.

\subsection{An established extension of \textsf{BV} with a self-dual quantifier}

An abstract syntax for formulae and the rules of \textsf{BVQ} are defined in Fig~\ref{figure:BVQ}.
In an inference rule, the formula appearing above the horizontal line is the premise and the formula below the horizontal line is the conclusion.
The key feature of the calculus of structures is \textit{deep inference}, which is the ability to apply
all rules in any context, i.e.\ formulae with a hole of the following form: $\context{} \Coloneqq \ehole{\,\cdot\,} \mid \context{} \odot P \mid P \odot \context{} \mid \nabla x . \context{}$, 
where $\odot \in \left\{ \cseq, \cpar, \tensor \right\}$.


\begin{figure}
 \begin{center}
\[
\begin{array}{l}
\mbox{\footnotesize Structural rules}
\\[4pt]
\begin{tabular}{c}
$\left(P, \cpar, \cunit\right)\mbox{ and }\left(P, \tensor, \cunit\right)\mbox{ are commutative monoids}$
\\[9pt]
$\left(P, \cseq, \cunit\right)\mbox{ is a monoid}$
\qquad
\mbox{$\alpha$-conversion for $\nabla$ quantifier}
\\[9pt]
$\nabla x .\nabla y. P \equiv \nabla y .\nabla x. P
~~\mbox{(equivariance)}$
\\[9pt]
$\nabla x . P \equiv P
~~\mbox{only if $\nfv{x}{P}$}
~~\mbox{(vacuous)}$
\end{tabular}
\end{array}
\qquad\quad
\begin{array}{l}
\mbox{\footnotesize Syntax}
\\[4pt]
\begin{tabular}{l@{\hspace{1ex}}l@{\hspace{1ex}}l@{\hspace{1ex}}l}
$P$&$::=$ & $\cunit$ & (unit)\\
&& $\alpha$ & (atom)\\
&& $\overline{\alpha}$ & (co-atom)\\
&& $\nabla{x}. P$ & (nabla)\\
&& $P \cpar P$ & (par)\\
&& $P \tensor P$ & (times)\\
&& $P \cseq P$ & (seq)
\end{tabular}
\end{array}
\]
\[
\begin{array}{l}
\mbox{\footnotesize Inference rules}
\\[4pt]
\begin{array}{c}
\infer[\mbox{(atomic interaction)}]{
\context{ \co{\alpha} \cpar \alpha }
}{
\context{ \cunit }
}
\qquad\qquad
\infer[\mbox{(switch)}]{
\context{P \cpar \left(Q \tensor S\right)}
}{
\context{(P \cpar Q) \tensor S}
}
\\[9pt]
\infer[\mbox{(sequence)}]{
\context{(P \cseq Q) \cpar (R \cseq S)}
}{
\context{(P \cpar R) \cseq (Q \cpar S)}
}
\qquad\qquad
\infer[\mbox{(unify)}]{
\context{ \nabla x. P \cpar \nabla x. Q }
}{
\context{ \mathopen{\nabla x .}\left( P \cpar Q \right) }
}
\end{array}\end{array}
\]
\end{center}
\caption{Syntax and rules of system \textsf{BVQ}~\cite{Roversi14}: which is \textsf{BV} extended with a self-dual nominal quantifier.
}\label{figure:BVQ}
\end{figure}

Inference rules are defined \textit{modulo a structural congruence}, where a congruence is an equivalence relation that holds in any context.
A \textit{derivation} is a sequence of rules from Fig.~\ref{figure:BVQ}, where the structural congruence can be applied at any point in a derivation.
The length of a derivation involving only the structural congruence is zero. 
The length of a derivation involving one inference rule instance is one.
Given a derivation $\vcenter{\infer{Q}{P}}$ of length $m$ and another $\vcenter{\infer{R}{Q}}$ of length $n$, the derivation $\vcenter{\infer{R}{P}}$ is of length $m+n$.
Unless we make it clear in the context that we refer to a specific rule, this horizontal line notation is generally used to represent derivations of any length.
For example, since $\nabla x.\cunit \equiv \cunit$, derivation $\infer{\nabla x.\cunit}{\cunit}$  of length 0, and derivation 
$\vcenter{\infer{\left(P \tensor Q\right) \cpar R \cpar S}{\left(P \cpar R\right) \tensor \left(Q \cpar S\right)}}$ is of length 2, since two instances of \textit{switch} are applied.


The congruence, $\equiv$ in Fig.~\ref{figure:BVQ}, makes \textit{par} and \textit{times} commutative and \textit{seq} non-commutative in general. 
For the nominal quantifier $\nabla$, the congruence enables: $\alpha$-conversion for renaming bound names;
\textit{equivariance} which allows names bound by successive nominal quantifiers to be swapped; and \textit{vacuous} that allows the nominal quantifier to be introduced or removed whenever the bound variable does not appear in the formula. 
As standard, we define a freshness predicate such that a variable $x$ is fresh for a formulae $P$, written $\nfv{x}{P}$, if and only if $x$ is not a member of the set of free variables of $P$, where $\nabla x.P$ binds occurrences of $x$ in $P$.

Consider the syntax and rules of $\textsf{BVQ}$ in Figure~\ref{figure:BVQ}.
The three rules \textit{atomic interaction} and $\textit{switch}$ and \textit{sequence} define the basic system \textsf{BV}~\cite{Guglielmi2007} that also forms the core of the system \textsf{MAV1} investigated in later sections.
The only additional inference rule for $\nabla$ is called \textit{unify}.

\textbf{Atomic interaction.}
The atomic interaction rule should remind the reader of the classical tautology $\neg \alpha \vee \alpha$ or intuitionistic axiom $\alpha \Rightarrow \alpha$, applied only to the predicates forming the atoms of the calculus.
Since there is no contraction rule for $\cpar$, once atoms are consumed by \textit{atomic interaction} they cannot be reused. Thus \textit{atomic interaction} is useful for modelling communication in process, where $\alpha$ models a receive action or event and $\co{\alpha}$ is the complementary send, which cancel each other out.

\textbf{Switch and sequence.}
The \textit{atomic interaction} and $\textit{switch}$ rules together provide a model for multiplicative linear logic (with \textit{mix})~\cite{Girard1987}.
The difference between $\cpar$ and $\tensor$ is that $\cpar$ allows interaction, but $\tensor$ does not. In this sense the switch rule restricts where which atoms may interact.
The \textit{seq} rule also restricts where interactions can take place, but, since \textit{seq} is non-commutative, it can be used to capture causal dependencies between atoms.
The \textit{sequence} rule preserves these causal dependencies, while permitting new causal dependencies. In terms of process models, the \textit{sequence} rule appears in the theory of pomsets~\cite{Gischer1988} and can refine parallel composition to its interleavings.

\textbf{Unify.}
The novel rule for \textsf{BVQ} is \textit{unify} for nominal quantifier $\nabla$.
The \textit{unify} rule should be admissible in a well-designed extension of linear logic with a self-dual quantifier. To see why, consider the following auxiliary definitions.
Observe that the following definition of linear implication ensures that 
$\nabla$ is self-dual in the sense that the de Morgan dual of $\nabla$ is $\nabla$ itself. Similarly, \textit{seq} and the unit are self-dual, while $\tensor$ and $\cpar$ are a de Morgan dual pair of operators. 
\begin{definition}
\textit{Linear negation} is defined by the following function over formulae.
\begin{gather*}
\co{\cunit} = \cunit
\qquad
\co{\co{\alpha}} = \alpha
\qquad
\co{ P \tensor Q } = \co{P} \cpar \co{Q}
\qquad
\co{ P \cpar Q } = \co{P} \tensor \co{Q}
\qquad
\co{P \cseq Q} = \co{P} \cseq \co{Q}
\qquad
\co{\nabla x. P} = \nabla x. \co{P}
\end{gather*}
\textit{Linear implication}, written $P \multimap Q$, is defined as $\co{P} \cpar Q$.
\end{definition}

We are particularly interested in special derivations, called proofs.
\begin{definition}
A \textit{proof} is a derivation of any length with conclusion $P$ and premise $\cunit$. When such a derivation exists, we say that $P$ is provable, and write $\vdash P$ holds.
\end{definition}

As a basic property of linear implication $\vdash P \multimap P$ must hold for any $P$. Now assume that $\vdash Q \multimap Q$ is provable in \textsf{BVQ} (hence, by the above definitions, there exists a derivation with conclusion $\co{Q} \cpar Q$ and premise $\cunit$), and consider formula $\nabla x. Q$. Using the \textit{unify} rule and the definition of linear implication, we can construct the following proof of $\vdash \nabla x.Q \multimap \nabla x.Q$.
\[
\infer[\mbox{by the \textit{unify} rule}]{
\nabla x.\co{Q} \cpar \nabla x.Q 
}{
\infer[\mbox{by the assumption $\vdash \co{Q} \cpar Q$}]{
\mathopen{\nabla x.}\left(\co{Q} \cpar Q\right)
}{
\infer[\mbox{by the \textit{vacuous} rule}]{
\nabla x.\cunit
}{
\cunit
}}}
\]
The above illustrates why \textit{unify} should be admissible in order to guarantee \textit{reflexivity} --- the most basic property of implication --- for an extension of \textsf{BV} with a self-dual nominal quantifier.
In the next section, we explain why the \textit{unify} rule is problematic for modelling processes as formulae.

\subsection{Fundamental problems with a self-dual nominal for embeddings of processes}

Initially, it seems that desirable properties of name binding, typical of process calculi, are achieved in \textsf{BVQ}.
For example, we expect that if $\nfv{x}{Q}$ then $\vdash \nabla x.\left( P \cpar Q \right) \multimap \nabla x.P \cpar Q$,
indicating that the scope of a name can be \textit{extruded} as long as another name is not captured,
which is provable using the \textit{vacuous} and \textit{unify} rules.
The \textit{equivariance} rule that swaps name binders is also a property preserved by most equivalences over processes.

Another strong property of \textsf{BVQ}, expected of all nominal quantifiers, is that we avoid the \textit{diagonalisation} property.
Diagonalisation $\vdash \forall x .\forall y .P(x,y) \multimap \forall z .P(z,z)$ holds in any system with universal quantifiers, as does the converse for existential quantifiers.
However, for nominals such at $\nabla$, \textbf{neither} $\nabla x .\nabla y .P(x,y) \multimap \nabla z .P(z,z)$ \textbf{nor} its converse $\nabla z .P(z,z) \multimap \nabla x .\nabla y .P(x,y)$ hold.
This is a critical feature of all nominal quantifiers that ensures that distinct fresh names in the same scope never collapse to the same name, and explains why universal and existential quantifiers are not suited modelling fresh name binders.
It is precisely the absence of diagonalisation for nominals that is used in classical~\cite{Pitts2003,Gabbay2002} and intuitionistic frameworks~\cite{Tiu2009,Gacek2011}
to logically manage the bookkeeping of fresh name in, so called, \textit{deep embeddings} of processes as terms in a theory.
Avoiding diagonalisation is sufficient in such deep embeddings since nominal quantifiers cannot appear inside a term representation of a process, so are always pushed to the outermost level where formulae are used to define the operational semantics of processes as a theory over process terms.

\textbf{Soundness criterion.}
The problem with \textsf{BVQ} is that when processes are directly embedded as formulae $\nabla$ quantifiers may appear inside embeddings of processes, which can result in unsound behaviours.
To see why the \textit{unify} rule induces unsound behaviours consider the following $\pi$-calculus terms. 
$\mathopen{\nu x.}(\co{z}x \mathrel{|} \co{y}x)$ is a $\pi$-calculus process that can output a fresh name twice, once on channel
$z$ and once on channel $y$; but cannot output two distinct names in any execution.
In contrast, observe that $\nu x.\co{z}x \mathrel{|} \nu x.\co{y}x$ is a $\pi$-calculus process that outputs
two distinct fresh names before terminating, but cannot output the same name twice in any execution.
As a soundness criterion, since the processes $\mathopen{\nu x.}(\co{z}x \mathrel{|} \co{y}x)$ and $\nu x.\co{z}x \mathrel{|} \nu x.\co{y}x$ do not have any complete traces in common, these processes must not be related by any sound preorder over processes.

Now consider an embedding of these processes in \textsf{BVQ}, where the parallel composition operator of the $\pi$-calculus is encoded as \emph{par} and $\nu$ is encoded as $\nabla$.
This gives us the formulae
$\mathopen{\nabla x.} \left(\co{\mathrm{act}(z, x)} \cpar \co{\mathrm{act}(y, x)}\right)$
and
$\nabla x.\co{\mathrm{act}(z, x)} \cpar \nabla x.\co{\mathrm{act}(y, x)}$.
Note that output action prefixes are encoded as negated predicates, e.g., $\co{z}x$ is
encoded $\co{\mathrm{act}(z,x)}$. 

Observe that 
$\vdash
\mathopen{\nabla x.} \left(\co{\mathrm{act}(z,x)} \cpar \co{\mathrm{act}(y,x)}\right)
\multimap
\nabla x.\co{\mathrm{act}(z,x)} \cpar \nabla x.\co{\mathrm{act}(y,x)}
$
is provable, as follows. 
\[
\infer[\mbox{by \textit{unify}}]{
\mathopen{\nabla x.} \left({\mathrm{act}(z,x)} \tensor {\mathrm{act}(y,x)}\right)
\cpar
\nabla x.\co{\mathrm{act}(z,x)} \cpar \nabla x.\co{\mathrm{act}(y,x)}
}{
\infer[\mbox{by \textit{unify}}]{
\mathopen{\nabla x.} \left({\mathrm{act}(z,x)} \tensor {\mathrm{act}(y,x)}\right)
\cpar
\mathopen{\nabla x.}\left(\co{\mathrm{act}(z,x)} \cpar \co{\mathrm{act}(y,x)}\right)
}{
\infer[\mbox{by \textit{switch}}]{
\mathopen{\nabla x.} \left(\left({\mathrm{act}(z,x)} \tensor {\mathrm{act}(y,x)}\right)
\cpar
\co{\mathrm{act}(z,x)} \cpar \co{\mathrm{act}(y,x)}\right)
}{
\infer[\mbox{by \textit{switch}}]{
\mathopen{\nabla x.} \left(\left(\left({\mathrm{act}(z,x)} \cpar \co{\mathrm{act}(z,x)}\right) \tensor {\mathrm{act}(y,x)}\right)
 \cpar \co{\mathrm{act}(y,x)}\right)
}{
\infer[\mbox{by \textit{atomic interaction}}]{
\mathopen{\nabla x.} \left(\left({\mathrm{act}(z,x)} \cpar \co{\mathrm{act}(z,x)}\right)
\tensor
\left({\mathrm{act}(y,x)} \cpar \co{\mathrm{act}(y,x)}\right)\right)
}{
\infer[\mbox{by \textit{atomic interaction}}]{
\mathopen{\nabla x.} \left({\mathrm{act}(y,x)} \cpar \co{\mathrm{act}(y,x)}\right)
}{
\infer[\mbox{by \textit{vacuous}}]{
\mathopen{\nabla x.} \cunit
}{
\cunit
}}}}}}}
\]
The above implication is \textbf{unsound} with respect to trace inclusion for the $\pi$-calculus. 
The implication wrongly suggests that the process $\nu x.\co{z}x \mathrel{|} \nu x.\co{y}x$, that cannot output the same names twice, can be refined to a process $\mathopen{\nu x.}(\co{z}x \mathrel{|} \co{y}x)$, that outputs the same name twice.
This is exactly the contradiction that we avoid by using polarised nominal quantifiers investigated in subsequent sections.

As a further example of unsoundness issues for a self-dual nominal, consider the following criterion: an embedding of a process is provable if and only if there is a series of internal transitions leading to a successful termination state.
A successful termination state is a state without any unconsumed actions.
Now consider the process $\mathopen{\nu x.} \left(x.y\right) \mathrel{|} \nu z.\co{z} \mathrel{|} \co{y}$ in process calculus \textbf{CCS}~\cite{Milner1982}.
We can attempt to embed this process in \textsf{BVQ} as $\mathopen{\nabla x.} \left(\mathrm{event}(x) \cseq \mathrm{event}(y)\right) \cpar \nabla z.\co{\mathrm{event}(z)} \cpar \co{\mathrm{event}(y)}$, where $\mathrm{event}(x)$ is a unary predicate representing an event identified by variable $x$.
This embedding \textbf{violates} our soundness criterion. Under the semantics of \textsf{CCS} the process is immediately deadlocked; hence none of the four actions are consumed.
However, the embedding is a provable formula, by the following derivation.
\[
\infer[\mbox{by \textit{unify} and $\alpha$-conversion}]{
\mathopen{\nabla x.} \left(\mathrm{event}(x) \cseq \mathrm{event}(y)\right) \cpar \nabla z.\co{\mathrm{event}(z)} \cpar \co{\mathrm{event}(y)}
}{
\infer[\mbox{by  \textit{vacuous} and \textit{unify}}]{
\mathopen{\nabla x.} \left(\left(\mathrm{event}(x) \cseq \mathrm{event}(y)\right) \cpar \co{\mathrm{event}(x)}\right) \cpar \co{\mathrm{event}(y)}
}{
\infer[\mbox{by  \textit{sequence}}]{
\mathopen{\nabla x.} \left(\left(\mathrm{event}(x) \cseq \mathrm{event}(y)\right) \cpar \co{\mathrm{event}(x)} \cpar \co{\mathrm{event}(y)}\right)
}{
\infer[\mbox{by  \textit{sequence}}]{
\mathopen{\nabla x.} \left(\left(\mathrm{event}(x) \cseq \mathrm{event}(y)\right) \cpar \left(\co{\mathrm{event}(x)} \cseq \co{\mathrm{event}(y)}\right)\right)
}{
\infer[\mbox{by  \textit{atomic interaction}}]{
\mathopen{\nabla x.} \left(\left(\mathrm{event}(x) \cpar \co{\mathrm{event}(x)}\right) \cseq \left(\mathrm{event}(y) \cpar \co{\mathrm{event}(y)}\right)\right)
}{
\infer[\mbox{by  \textit{atomic interaction} and \textit{vacuous}}]{
\mathopen{\nabla x.} \left(\mathrm{event}(y) \cpar \co{\mathrm{event}(y)}\right)
}{
\cunit
}}}}}}
\]

The above observations lead to a specification of the properties desired for a nominal quantifier suitable for direct embeddings of processes as formulae.
We desire a nominal quantifier, say $\new$, such that properties such as \textit{no diagonalisation}, \textit{equivariance} and \textit{extrusion} hold except that
also 
\textbf{neither} 
$
\mathopen{\new x.} \left(P \cpar Q \right)
\multimap
\new x.P \cpar \new x.Q
$
\textbf{nor}
$
\new x.P \cpar \new x.Q
\multimap
\mathopen{\new x.} \left(P \cpar Q \right)
$
hold in general.
Also, by the arguments above the quantifier cannot be self-dual;
and hence, as a side effect, we expose another nominal quantifier, called ``wen'', denoted $\wen$, that is de Morgan dual to $\new$.
The rest of this paper is devoted to establishing that indeed there does exist a logical system with such a pair of nominal quantifiers.

\section{Introducing a Proof System with a Pair of Nominal Quantifiers}
\label{section:syntax}

Soundness issues associated with a self-dual nominal quantifier in embeddings of processes as formulae, can be resolved by instead using a pair of de Morgan dual nominal quantifiers.
This section introduces a proof system for such a pair of nominal quantifiers, 
building on the core system \textsf{BV}, further extended with: additives useful for expressing non-deterministic choice; and first-order quantifiers which range over terms not only fresh names.
Investigating the pair of nominal quantifiers in the presence of these operators is essential for understanding  the interplay between nominal quantifiers and other operators, showing that this pair of nominal quantifiers can exist in a system sufficiently expressive to embed rich process models.
This section also summarises the main proof theoretic result, although lemmas are postponed until later sections.


\subsection{The inference rules and structural rules}

\begin{figure}[t]
\begin{tabular}{c}
\begin{tabular}{c}
$x$ a variable
\\[9pt]
$c$ a constant
\\[9pt]
$f$ a function symbol
\\[9pt]
$\mathrm{p}$ a predicate symbol
\\[9pt]
\end{tabular}
\\
\begin{tabular}{l@{\hspace{1ex}}l@{\hspace{1ex}}l@{\hspace{1ex}}l}
$t$&$::=$ & $x$ & (variable)\\
&& $c$ & (constant)\\
&& $f(t, \hdots t)$ & ($n$-ary function)
\\[9pt]
$\alpha$&$::=$ & $\mathrm{p}(t, \hdots t)$ & ($n$-ary predicate)\\
\end{tabular}
\qquad\qquad\qquad
\end{tabular}
\begin{tabular}{l@{\hspace{1ex}}l@{\hspace{1ex}}l@{\hspace{1ex}}l}
$P$&$::=$ & $\cunit$ & (unit)\\
&& $\alpha$ & (atom)\\
&& $\overline{\alpha}$ & (co-atom)\\
&& $\forall x. P$ & (all)\\
&& $\exists x. P$ & (some)\\
&& $\new x. P$ & (new)\\
&& $\wen x. P$ & (wen)\\
&& $P \wwith P$ & (with)\\
&& $P \ooplus P$ & (plus) \\
&& $P\cpar P$ & (par)\\
&& $P \tensor P$ & (times)\\
&& $P \cseq P$ & (seq)
\end{tabular}
\caption{Syntax for \textsf{MAV1} formulae.}\label{figure:syntax}
\end{figure}

\begin{figure}[t]
\begin{tabular}{c}
$\left(P, \cpar, \cunit\right)\mbox{ and }\left(P, \tensor, \cunit\right)\mbox{ are commutative monoids}
\mbox{ and }
\left(P, \cseq, \cunit\right)\mbox{ is a monoid.}$
\\[9pt]
$\new x .\new y. P \equiv \new y .\new x. P
\qquad
\wen x .\wen y .P \equiv \wen y .\wen x .P
\qquad\mbox{(equivariance)}$
\end{tabular}
\caption{Structural congruence ($\equiv$)
 for \textsf{MAV1} formulae,
plus $\alpha$-conversion for all quantifiers.
}\label{figure:congruence}
\end{figure}

\begin{figure}[t]
\[
\begin{array}{c}
\infer[\mbox{(atomic interaction)}]{
\context{ \co{\alpha} \cpar \alpha }
}{
\context{ \cunit }
}
\qquad\quad\quad
\infer[\mbox{(switch)}]{
\context{P \cpar \left(Q \tensor S\right)}
}{
\context{(P \cpar Q) \tensor S}
}
\\ [10pt]
\infer[\mbox{(sequence)}]{
\context{(P \cseq Q) \cpar (U \cseq V)}
}{
\context{(P \cpar U) \cseq (Q \cpar V)}
}
\\[10pt]
\hline
\\
\infer[\mbox{(external)}]{
\context{ (P \wwith Q) \cpar S }
}{
\context{ (P \cpar S) \wwith (Q \cpar S) }
}
\qquad\qquad
\infer[\mbox{(medial)}]{
\context{ \left(P \cseq Q\right) \wwith \left(U \cseq V\right) }
}{
\context{ \left(P \wwith U\right) \cseq \left(Q \wwith V\right) }
}
\\[10pt]
\infer[\mbox{(tidy)}]{
\context{ \cunit \wwith \cunit }
}{
\context{ \cunit }
}
\qquad\qquad
\infer[\mbox{(left)}]{
\context{ P \ooplus Q }
}{
\context{ P}
}
\qquad\qquad
\infer[\mbox{(right)}]{
\context{ P \ooplus Q }
}{
\context{ Q}
}
\\[10pt]
\hline
\\
\infer[\mbox{(extrude1)}]{
\context{ \forall x. P \cpar R }
}{
\context{ \mathopen{\forall x.}\left( P \cpar R \right) }
}
\qquad\qquad\quad
\infer[\mbox{(medial1)}]{
\context{ \mathopen{\forall x .}\left(P \cseq S \right) }
}{
\context{ \forall x .P \cseq \forall x .S }
}
\\[10pt]
\infer[\mbox{(tidy1)}]{
\context{ \forall x. \cunit }
}{
\context{ \cunit }
}
\qquad\qquad\quad
\infer[\mbox{(select1)}]{
\context{ \exists x. P }
}{
\context{ P \sub{x}{t} }
}
\\[10pt]
\hline
\\
\infer[\mbox{(extrude new)}]{
\context{ \new x .P \cpar R }
}{
\context{ \mathopen{\new x .}\left( P \cpar R \right) }
}
\qquad
\infer[\mbox{(medial new)}]{
\context{ \mathopen{\new x .}\left( P \cseq S \right) }
}{
\context{ \new x. P \cseq \new x. S }
}
\\[10pt]
\infer[\mbox{(tidy name)}]{
\context{ \new x. \cunit }
}{
\context{ \cunit }
}
\qquad
\infer[\mbox{(close)}]{
\context{ \new x. P \cpar \wen x. Q }
}{
\context{ \mathopen{\new x .}\left( P \cpar Q \right) }
}
\\[10pt]
\infer[\mbox{(fresh)}]{
\context{ \wen x. P }
}{
\context{ \new x. P }
}
\qquad
\infer[\mbox{(new wen)}]{
\context{ \new x .\wen y .P }
}{
\context{ \wen y. \new x .P }
}
\qquad
\infer[\mbox{(all name)}]{
\context{ \forall x .\quantifier y. P }
}{
\context{ \quantifier y. \forall x .P }
}
\\[10pt]
\infer[\mbox{(suspend)}
]{
\context{ \wen x. P \odot \wen x. S }
}{
\context{ \mathopen{\wen x .}\left( P \odot S \right) }
}
\qquad
\infer[\mbox{(left wen)}
]{
\context{ \wen x .P \odot R }
}{
\context{ \mathopen{\wen x .}\left( P \odot R \right) }
}
\qquad
\infer[\mbox{(right wen)}
]{
\context{ R \odot \wen x. Q }
}{
\context{ \mathopen{\wen x .}\left( R \odot Q \right) }
}
\\[10pt]
\infer[\mbox{(with name)}
]{
\context{ \quantifier x .P \wwith \quantifier x. S }
}{
\context{ \mathopen{\quantifier x .}\left( P \wwith S \right) }
}
\quad
\infer[\mbox{(left name)}
]{
\context{ \quantifier x .P \wwith R }
}{
\context{ \mathopen{\quantifier x .}\left( P \wwith R \right) }
}
\quad
\infer[\mbox{(right name)}]{
\context{ R \wwith \quantifier x. Q }
}{
\context{ \mathopen{\quantifier x .}\left( R \wwith Q \right) }
}
\\\\
\end{array}
\]
\begin{flushright}
where $\protect{\quantifier \in \left\{ \new, \wen \right\}}$, $\odot \in \left\{ \cpar, \cseq \right\}$ and $\nfv{x}{R}$, in all rules containing $R$
\end{flushright}
\caption[Rules for formulae in system \textsf{MAV1}.]{Rules for formulae in system \textsf{MAV1}. 
Notice the figure is divided into four parts. The first part defines sub-system \textsf{BV}~\cite{Guglielmi2007}. The first and second parts define sub-system \textsf{MAV}~\cite{Horne2015}.
}
\label{figure:rewrite}
\end{figure}

We present the syntax and rules of a first-order system expressed in the calculus of structures, with the technical name \textsf{MAV1}.
The derivations of the system are defined by the \textit{abstract syntax} in Fig.~\ref{figure:syntax}, \textit{structural congruence} in Fig.~\ref{figure:congruence},
and the \textit{inference rules}, in Fig~\ref{figure:rewrite}.
We emphasise that, in contrast to the sequent calculus, 
rules can be applied in any context, i.e.\ \textsf{MAV1} formulae from Fig.~\ref{figure:syntax} with a hole of the form 
\[
\context{} \Coloneqq \ehole{\,\cdot\,} \mid \context{} \odot P \mid P \odot \context{} \mid \quantifier x . \context{}, 
\mbox{ where $\odot \in \left\{ \cseq, \cpar, \tensor, \wwith, \ooplus \right\}$ and $\quantifier \in \left\{ \exists, \forall, \new, \wen \right\}$. }
\] 
We also assume the standard notion of capture avoiding substitution of a variable for a term.
Terms may be constructed from variables, constants and function symbols.

To explore the theory of proofs, two auxiliary definitions are introduced: linear negation and linear implication. 
Notice in the syntax in Fig.~\ref{figure:syntax} linear negation applies only to atoms.
\begin{definition}
\textit{Linear negation} is defined by the following function from formulae to formulae.
\begin{gather*}
\co{\co{\alpha}} = \alpha
\qquad
\co{ P \tensor Q } = \co{P} \cpar \co{Q}
\qquad
\co{ P \cpar Q } = \co{P} \tensor \co{Q}
\qquad
\co{ P \ooplus Q } = \co{P} \wwith \co{Q}
\qquad
\co{ P \wwith Q } = \co{P} \ooplus \co{Q}
\\
\co{\cunit} = \cunit
\quad
\co{P \cseq Q} = \co{P} \cseq \co{Q}
\qquad
\co{\forall x. P} = \exists x. \co{P}
\qquad
\co{\exists x. P} = \forall x. \co{P}
\qquad
\co{\new x. P} = \wen x. \co{P}
\qquad
\co{\wen x. P} = \new x. \co{P}
\end{gather*}
\textit{Linear implication}, written $P \multimap Q$, is defined as $\co{P} \cpar Q$.
\end{definition}
Linear negation defines de Morgan dualities.
As in linear logic, the multiplicatives $\tensor$ and $\cpar$ are de Morgan dual; as are the additives $\wwith$ and $\ooplus$, the first-order quantifiers $\exists$ and $\forall$, and the nominal quantifiers $\new$ and $\wen$. As in \textsf{BV}, \textit{seq} and the unit are self-dual.

A basic, but essential, property of implication can be established immediately.
The following proposition is simply a reflexivity property of linear implication in \textsf{MAV1}.
\begin{proposition}[Reflexivity]\label{proposition:reflexivity}
For any formula $P$, $\vdash \co{P} \cpar P$ holds, i.e., $\vdash P \multimap P$.
\end{proposition}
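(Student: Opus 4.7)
The plan is to proceed by structural induction on $P$. The base cases are $P = \alpha$, $P = \co{\alpha}$, and $P = \cunit$: the first reduces immediately to $\cunit$ by one application of atomic interaction, the second is the same after noting $\co{\co{\alpha}} = \alpha$ and using commutativity of $\cpar$, and for the unit the monoid laws already give $\cunit \cpar \cunit \equiv \cunit$.

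For the inductive step on the multiplicatives, take $P \tensor Q$: rewrite $\co{P \tensor Q} \cpar (P \tensor Q)$ as $(\co{P} \cpar \co{Q}) \cpar (Q \tensor P)$ by commutativity of $\tensor$, reassociate as $\co{P} \cpar (\co{Q} \cpar (Q \tensor P))$, then apply switch to reach $\co{P} \cpar ((\co{Q} \cpar Q) \tensor P)$. The induction hypothesis on $Q$ collapses the inner $\co{Q} \cpar Q$ to $\cunit$, leaving $\co{P} \cpar (\cunit \tensor P) \equiv \co{P} \cpar P$, which the induction hypothesis on $P$ closes. The case $P \cpar Q$ is symmetric. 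For $P \cseq Q$, a single application of the sequence rule yields $(\co{P} \cpar P) \cseq (\co{Q} \cpar Q)$, reducing by the hypothesis twice to $\cunit \cseq \cunit \equiv \cunit$. For the additives $P \with Q$ and $P \oplus Q$, the corresponding rules from Fig.~\ref{figure:rewrite} (external distribution of $\cpar$ over $\with$ and projection through $\oplus$) push the problem to two subgoals of the form $\co{P_i} \cpar P_i$ closed by the induction hypothesis and then to $\cunit \with \cunit \equiv \cunit$. For the standard first-order quantifiers, the interaction rule that reduces $\exists x.\co{P} \cpar \forall x.P$ to $\forall x.(\co{P} \cpar P)$ places the induction hypothesis under the binder.

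The delicate cases are the nominal quantifiers. For $\new x.P$ we need $\wen x.\co{P} \cpar \new x.P$ to be provable, which requires \textsf{MAV1} to contain a rewrite reducing this shape to $\new x.(\co{P} \cpar P)$; then the induction hypothesis under the binder, together with a vacuous-binder congruence, collapses the result to $\cunit$. This is the polarised analogue of the \textit{unify} rule of \textsf{BVQ}, but restricted to the matched $\wen$/$\new$ pair rather than applicable between two instances of a single self-dual nominal. The main obstacle is verifying that the rules for $\new$ and $\wen$ in Fig.~\ref{figure:rewrite} are exactly strong enough to push reflexivity through, yet still refuse the distribution-style derivations warned against in Section~\ref{section:self-dual}: any extra symmetry would collapse $\new$ and $\wen$ to a self-dual operator and resurrect the unsoundness exhibited for \textsf{BVQ}, while any less would block this very induction.
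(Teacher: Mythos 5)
Your induction follows the same overall route as the paper's proof: induct on the structure of $P$, handling one principal connective per case using the paired rule for that connective, with the nominal case closed by \textit{close} and the instance of the induction hypothesis under the binder. That part is fine, and the $\tensor$ case, where you peel off one factor with a single \textit{switch} rather than splitting both at once, is a harmless variation.

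There are, however, two places where you appeal to a structural congruence that \textsf{MAV1} does not have, and the derivation would not typecheck as written. First, you claim ``$\cunit \with \cunit \equiv \cunit$'': there is no such congruence; the step needs the \textit{tidy} rewrite rule $\context{\cunit \with \cunit} \longrightarrow \context{\cunit}$. Second, and more importantly, you close the $\new$ case with ``a vacuous-binder congruence,'' but \textsf{MAV1} deliberately drops the \textit{vacuous} congruence that \textsf{BVQ} has (Section~\ref{section:syntax} explains this is to keep proof search analytic); the residual $\new x.\cunit$ must be removed by the \textit{tidy name} rewrite rule, and similarly $\forall x.\cunit$ needs \textit{tidy1}. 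Your phrasing for the $\forall$ case also glosses over the fact that no single ``interaction rule'' turns $\exists x.\co{P} \cpar \forall x.P$ into $\forall x.(\co{P} \cpar P)$: one first applies \textit{extrude1}, then \textit{select1} to instantiate $\exists x.\co{P}$ with the universally bound $x$, then \textit{tidy1} at the end. None of this changes the shape of the argument — the paper does exactly these steps — but if you meant your appeals to ``$\equiv$'' literally, those three cases (\textit{with}, \textit{all/some}, \textit{new/wen}) would stall, and the vacuous-congruence point is precisely the design distinction between \textsf{BVQ} and \textsf{MAV1} that the paper labors to make.
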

The proof of the above follows by a straightforward induction over the structure of $P$.

\subsection{Intuitive explanations for the rules of \textsf{MAV1}.}

We briefly recall the established system \textsf{MAV},
before explaining the rules for quantifiers.
This paper focuses on necessary proof theoretical prerequisites, and hints at result for process embeddings in \textsf{MAV1}.
Details on the soundness of process embeddings appear in a companion paper~\cite{MSCS}.

\textbf{The additives.}
The rules of the basic system \textsf{BV} in the top part of Fig.~\ref{figure:rewrite} are as described previously in Section~\ref{section:self-dual}.
The first and second parts of Fig.~\ref{figure:rewrite} define multiplicative-additive system \textsf{MAV}~\cite{Horne2015}.
The additives are useful for modelling non-deterministic choice in processes~\cite{Abramsky1993}: the \textit{left} rule $\vcenter{\infer{P \ooplus Q}{P}}$ suggests we chose the left branch $P$ \textbf{or} alternatively the right branch $Q$ by using the \textit{right} rule; the \textit{external} rule 
$\vcenter{\infer{\left(P \wwith Q\right) \cpar R}{\left(P  \cpar R\right) \wwith \left(Q \cpar R\right)}}$
suggests that we try both branches $P \cpar R$ \textbf{and} $Q \cpar R$ separately; and the \textit{tidy} rule indicates a derivation is successfully only if both branches explored are successful.
The \textit{medial} rule is a partial distributivity property between the additives and \textit{seq} (in concurrency theory, this is a property expected of most preorders over processes).
The role of the additives as a form of \textit{internal} and \textit{external} choice has been investigated in related work~\cite{Ciobanu2015}.

\textbf{The first-order quantifiers.}
The rules for the first-order quantifiers in the third part of Fig.~\ref{figure:rewrite} follow a similar pattern to the additives.
The \textit{select1} rule allows a variable to be replaced by any term. Notice we stick to the first-order case, since variables only appear in atomic formulae and may only be replaced by terms.
The \textit{extrude1}, \textit{tidy1} and \textit{medial1} rules follow a similar pattern to the rules for the additives \textit{external}, \textit{tidy} and \textit{medial} respectively.
In process embeddings, first-order quantifiers are useful as input binders.
For example we can soundly embed the $\pi$-calculus process $\co{y}z \pipar y(x).\co{x}w \pipar z(x)$ as the following provable formula:
\[
\infer[\mbox{by \textit{select1}}]{
\co{\mathrm{act}(y, z)} \cpar \mathopen{\exists x.}\left(\mathrm{act}(y, x)\cseq \co{\mathrm{act}(x, w)}\right) \cpar \exists v.\mathrm{act}(z, v)
}{
\infer[\mbox{by \textit{sequence}}]{
\co{\mathrm{act}(y, z)} \cpar \left(\mathrm{act}(y, z)\cseq \co{\mathrm{act}(z, w)}\right) \cpar \exists v.\mathrm{act}(z, v)
}{
\infer[\mbox{by \textit{atomic interaction}} ]{
\left(\left(\co{\mathrm{act}(y, z)} \cpar \mathrm{act}(y, z)\right)\cseq \co{\mathrm{act}(z, w)}\right) \cpar \exists v.\mathrm{act}(z, v) 
}{
\infer[\mbox{by \textit{select1}}]{
\co{\mathrm{act}(z, w)} \cpar \exists v.\mathrm{act}(z, v)
}{
\infer[\mbox{by \textit{atomic interaction}}]{
\co{\mathrm{act}(z, w)} \cpar \mathrm{act}(z, w)
}{
\cunit
}}}}}
\]
Notice, that the above process can also reach a successfully terminated state using $\tau$ transitions in the $\pi$-calculus semantics.
Indeed the cut elimination result established in this paper is a prerequisite in order to prove this soundness criterion holds for finite $\pi$-calculus processes. 

\textbf{The polarised nominal quantifiers.}
The rules for the de Morgan dual pair of nominal quantifiers are more intricate.
For first-order quantifiers many properties are derivable, e.g., the following implications hold (appealing to Prop.~\ref{proposition:reflexivity}):
$\vdash \forall x. \forall y. P \multimap \forall y. \forall x. P$,
$\vdash \exists x. \forall y. P \multimap \forall y. \exists x. P$
and
$\vdash \mathopen{\forall x.}\left( P \cpar Q \right) \multimap \forall x. P \cpar \exists x. Q$.
The three proofs proceed as follows.
\[
\infer[]{
\exists x. \exists y. \co{P} \cpar \forall y. \forall x. P
}{
\infer[]{
\mathopen{\forall y. \forall x.}\left(\exists x. \exists y. \co{P} \cpar  P\right)
}{
\infer[]{
\mathopen{\forall y. \forall x.}\left(\co{P} \cpar  P\right)
}{
\infer[]{
\mathopen{\forall y. \forall x.}\cunit
}{
\cunit
}}}}
\qquad\qquad
\infer[]{
\forall x. \exists y. \co{P} \cpar \forall y. \exists x. P
}{
\infer[]{
\mathopen{\forall x. \forall y.} \left(\exists y. \co{P} \cpar \exists x. P\right)
}{
\infer[]{
\mathopen{\forall x. \forall y.} \left(\co{P} \cpar P\right)
}{
\infer[]{
\mathopen{\forall x. \forall y.} \cunit
}{
\cunit
}}}}
\qquad\qquad
\infer[]{
\mathopen{\exists x.}\left( \co{P} \tensor \co{Q} \right) \cpar \forall x. P \cpar \exists x. Q
}{
\infer[]{
\mathopen{\forall x.}\left(\mathopen{\exists x.}\left( \co{P} \tensor \co{Q} \right) \cpar P \cpar \exists x. Q\right)
}{
\infer[]{
\mathopen{\forall x.}\left( \co{P \cpar Q} \cpar P \cpar Q\right)
}{
\infer[]{
\mathopen{\forall x.}\cunit
}{
\cunit
}}}}
\]
We desire analogous properties for the nominals $\new$ and $\wen$.
However, in contrast to first-order quantifiers, these properties must be induced for our pair of nominals. 
The first property is induced for $\new$ and $\wen$ by \textit{equivariance} in the structural congruence. 
The other rules analogous to the above derived implications are induced by the rules: \textit{new wen}, which allow a weaker quantifier $\wen$ to commute over a stronger quantifier $\new$;
and \textit{close} which models that $\wen$ can select a name as long as it is fresh as indicated by $\new$.

We avoid \textit{new} distributing over $\cpar$, i.e., in general \textbf{neither} 
$
\mathopen{\new x.} \left(P \cpar Q \right)
\multimap
\new x.P \cpar \new x.Q
$
\textbf{nor}
$
\new x.P \cpar \new x.Q
\multimap
\mathopen{\new x.} \left(P \cpar Q \right)
$
hold.
Hence $\new$ is suitable for embedding the name binder $\nu$ of the $\pi$-calculus.
Interestingly, the dual quantifier $\wen$ is also useful for embedding a variant of the $\pi$-calculus called the $\pi I$-calculus, where every communication creates a new fresh name.
For example, $\pi I$-calculus process $\mathopen{\oprivate{v}{x}}.\iprivate{x}{y} \pipar \mathopen{\iprivate{v}{z}}.\oprivate{z}{w}$ can be embedded as the following provable formula.\footnote{
To disambiguate from the $\pi$-calculus we use square brackets as binders for the $\pi I$-calculus.
So $\mathopen{\oprivate{v}{x}}.P$ denotes a process that outputs a fresh name $x$
and $\mathopen{\iprivate{v}{x}}.P$ denotes a process that receives a name $x$ only if it is fresh.
}
\[
\infer[\mbox{by \textit{close} and $\alpha$-conversion}]{
\mathopen{\new x.}\left( \co{\mathrm{act}(v, x)} \cseq \wen y.\mathrm{act}(x, y)\right) \cpar \mathopen{\wen z.}\left(\mathrm{act}(v, z) \cseq \new w.\co{\mathrm{act}(z, w)}\right)
}{
\infer[\mbox{by \textit{sequence}}]{
\mathopen{\new x.}\left(\left( \co{\mathrm{act}(v, x)} \cseq \wen y.\mathrm{act}(x, y)\right) \cpar \left(\mathrm{act}(v, x) \cseq \new w.\co{\mathrm{act}(x, w)}\right)\right)
}{
\infer[\mbox{by \textit{atomic interaction}}]{
\mathopen{\new x.}\left(\left( \co{\mathrm{act}(v, x)} \cpar \mathrm{act}(v, x)\right) \cseq \left(\wen y.\mathrm{act}(x, y) \cpar \new w.\co{\mathrm{act}(x, w)}\right)\right)
}{
\infer[\mbox{by \textit{close}}]{
\mathopen{\new x.}\left(\wen y.\mathrm{act}(x, y) \cpar \new w.\co{\mathrm{act}(x, w)}\right)
}{
\infer[\mbox{by \textit{atomic interaction}}]{
\mathopen{\new x. \new w.}\left(\mathrm{act}(x, w) \cpar \co{\mathrm{act}(x, w)}\right)
}{
\infer[\mbox{by \textit{tidy name}}]{
\mathopen{\new x. \new w.}\cunit
}{
\cunit
}}}}}}
\]
Note that $\alpha$-renaming is implicitly applied in the derivation above. 

There is no \textit{vacuous} rule in Fig.~\ref{figure:syntax}, in contrast to the presentation of \textsf{BVQ} in Fig.~\ref{figure:BVQ}.
This is because the \textit{vacuous} rule creates problems for proof search, since arbitrarily many nominal quantifiers can be introduced at any point in the proof leading to unnecessary infinite search paths. Instead we build the introduction and elimination of fresh names into rules only where required.
For example, \textit{extrude new} is like \textit{close} with a vacuous $\wen$ implicitly introduced;
similarly, for \textit{left wen}, \textit{right wen}, \textit{left name} and \textit{right name} a vacuous $\wen$ is implicitly introduced.
Also the \textit{tidy name} allows vacuous $\new$ operators to be removed from a successful proof in order to terminate with $\cunit$ only.
The reason why the rules \textit{medial new}, \textit{suspend}, \textit{all name} and \textit{with name} are required are in order to make cut elimination work; hence we postpone their explanation until after the statement of the cut elimination result.

In addition to forbidding the \textit{vacuous} rule, 
the following restrictions are placed on the rules to avoid meaningless infinite paths in proof search.
\begin{itemize}
\item For the \textit{switch}, \textit{sequence}, \textit{medial1}, \textit{medial new} and \textit{extrude new} rules, $P \not \equiv \cunit$ and $S \not \equiv \cunit$.
\item The \textit{medial} rule is such that either $P \not\equiv \cunit$ or $R \not\equiv \cunit$
and also either $Q \not\equiv \cunit$ or $S \not\equiv \cunit$.
\item The rules \textit{external}, \textit{extrude1}, \textit{extrude new}, \textit{left wen} and \textit{right wen} are such that $R \not \equiv \cunit$.
\end{itemize}
Avoiding infinite search paths is important for the termination of our cut elimination procedure.
Essentially, we desire that our system for \textsf{MAV1} is in a sense \textit{analytic}~\cite{Bruscoli2009}.

\paragraph{Note on term ``medial''} Medials were introduced, historically, to make contraction local (reducing contraction to a rule acting only over atoms)~\cite{Brunnler2001}.
Although the rules in Fig.~\ref{figure:rewrite} do not define such a local system, we discovered these rules by first
defining a local system, and then designing a more controlled system retaining only the medials of the local system that are not admissible.
Related work~\cite{Tubella2018} shows that medials are a ubiquitous recipe underlying the rules of proof systems.

\subsection{Cut elimination and its consequences}
\label{section:implication}

This section confirms that the rules of \textsf{MAV1} indeed define a logical system, as established by a cut elimination theorem.
Surprisingly, prior to this work, the only direct proof of cut elimination involving quantifiers in the calculus of structures was for \textsf{BVQ}~\cite{Roversi14}.
Related cut elimination results involving first-order quantifiers in the calculus of structures relied on a correspondence with the sequent calculus~\cite{Brunnler2006,Strassburger09}.
However, due to the presence of the non-commutative operator \textit{seq} there is no sequent calculus presentation~\cite{Tiu2006} for \textsf{MAV1}; hence we pursue here a direct proof.

The main result of this paper is the following, which is a generalisation of \textit{cut elimination} to the setting of the calculus of structures.
\begin{theorem}[Cut elimination]\label{theorem:cut}
For any formula $P$, if $\vdash \context{ P \tensor \co{P} }$ holds,
then $\vdash \context{ \cunit }$ holds.
\end{theorem}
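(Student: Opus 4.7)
The plan is to prove cut elimination by induction on the structure of the cut formula $P$, reducing a non-atomic cut $P \tensor \co{P}$ to smaller cuts on the immediate sub-formulas, and bottoming out at atomic cuts $\alpha \tensor \co{\alpha}$ which are eliminated directly via the atomic interaction rule after an appropriate use of splitting. To make this induction viable, the key technical tool is the splitting lemma (Lemma~\ref{lemma:split-times}): given a proof of $\context{P \tensor Q}$, splitting lets us reorganise the proof so that the resources interacting with $P$ and those interacting with $Q$ can be separated, producing premises in which $P$ and $Q$ are each paired with a suitable residue of the context.

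With splitting in hand, the inductive step proceeds by cases on the principal connective of $P$. For the multiplicative case $P = P_1 \cpar P_2$, one uses associativity and the switch rule to rewrite $(P_1 \cpar P_2) \tensor (\co{P_1} \tensor \co{P_2})$ into a configuration where the two smaller cuts $P_i \tensor \co{P_i}$ appear, to which the induction hypothesis applies. For the self-dual case $P = P_1 \cseq P_2$ the sequence rule plays the analogous role. For the additive case $P = P_1 \with P_2$ versus $\co{P} = \co{P_1} \oplus \co{P_2}$, the external and left/right rules, combined with tidy, dispatch a branch consistent with the choice made by $\oplus$, reducing to a smaller cut on the chosen $P_i$. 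The first-order case is handled by extrude1 and select1: the existential on one side instantiates the universal on the other, producing a single cut $P'\sub{x}{v} \tensor \co{P'}\sub{x}{v}$ of strictly smaller size.

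The main obstacle, as anticipated by the abstract, is the nominal case $P = \new x. P'$ with dual $\co{P} = \wen x. \co{P'}$. Here the close rule is the principal reduction, contracting $\new x. P' \tensor \wen x. \co{P'}$ to $\new x.(P' \tensor \co{P'})$ so that the induction hypothesis can be applied under the $\new$-binder. However, splitting for nominals must cope with the many rules that propagate $\new$ and $\wen$ through other connectives (\emph{extrude new}, \emph{medial new}, \emph{medial wen}, the left/right wen rules and the \emph{all name}, \emph{with name} rules) and with equivariance that allows successive nominal binders to be permuted. This forces splitting for the nominals to be proved simultaneously with splitting for the other connectives it interacts with, using the intricately designed multiset-based proof measure so that each splitting step strictly decreases the measure despite the commutations available.

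Once splitting is established, the last ingredient is the context lemma of Section~\ref{section:context}, which shows that each co-rule that appears transiently when splitting pushes the cut inward can be eliminated. Assembling these pieces, the cut elimination theorem then follows: an outermost induction on the size of $P$, an inner appeal to splitting to isolate the occurrence $P \tensor \co{P}$ within $\context{\cdot}$, a principal reduction on the outermost connective of $P$ using the rules described above, and the context lemma to clean up co-rules, converts a proof of $\context{P \tensor \co{P}}$ into a proof of $\context{\cunit}$.
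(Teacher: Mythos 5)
Your proposal follows essentially the same strategy as the paper: an outer induction on the structure of the cut formula, driven by the splitting lemma (Lemma~\ref{lemma:split-times}) and the context reduction lemma (Lemma~\ref{lemma:context}), with a case analysis on the root connective of $P$ reducing $P \tensor \co{P}$ to cuts on the immediate subformulas. You also correctly identify the crux of the argument --- the mutual induction required between $\new$, $\wen$, $\tensor$, $\cseq$ and $\with$ forced by equivariance and the medial-style rules, justifying the multiset-based measure --- and the fact that atomic cuts bottom out via splitting for atoms.

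Two points of detail are off, one of which would derail the nominal case if taken literally. First, the co-rule for \emph{close} that must be shown admissible is $\context{\wen x.(A \tensor B)} \longrightarrow \context{\wen x.A \tensor \new x.B}$, obtained by dualising both sides of $\new x.P \cpar \wen x.Q \longrightarrow \new x.(P \cpar Q)$. Applying this to the cut $\new x.P' \tensor \wen x.\co{P'}$ (read, up to commutativity, as $\wen x.\co{P'} \tensor \new x.P'$) yields $\wen x.(P' \tensor \co{P'})$, \emph{not} $\new x.(P' \tensor \co{P'})$ as you wrote. This matters because the stronger claim is not provable (it would require $\wen x.R \multimap \new x.R$, which fails), and the finishing step is then the admissibility of \emph{co-tidy name} (eliminating $\wen x.\cunit$), not the ordinary \emph{tidy name} rule. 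Similarly, the first-order case in the paper goes via \emph{co-extrude1} and Lemma~\ref{lemma:universal}, producing $\exists x.(P \tensor \co{P})$ (then cleared by \emph{co-tidy1}), rather than directly producing a ground instance $P'\sub{x}{v} \tensor \co{P'}\sub{x}{v}$. Second, the role you assign to the context lemma is slightly inverted: Lemma~\ref{lemma:context} does not itself eliminate co-rules; rather, each co-rule elimination lemma (Lemmas~\ref{lemma:co-close}--\ref{lemma:co-atoms}) is proved by first establishing the desired implication for a shallow context $\{\cdot\} \cpar R$ via splitting, and then invoking context reduction to extend it to arbitrary deep contexts. These corrections do not change the overall shape of the argument, which is sound.
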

The above theorem can be stated alternatively by supposing that there is a proof in \textsf{MAV1} extended with the extra inference rule: 
$\vcenter{\infer[\mbox{(cut)}]{\context{\cunit}}{\context{P \tensor \co{P}}}}.$
Given such a proof, a new proof can be constructed that uses only the rules of \textsf{MAV1}.
In this formulation, we say that \textit{cut} is \textit{admissible}.

Cut elimination for the propositional sub-system \textsf{MAV} has been previously established~\cite{Horne2015}.
The current paper advances cut-elimination techniques to tackle first-order system \textsf{MAV1}, as achieved by the lemmas in later sections.
Before proceeding with the necessary lemmas, we provide a corollary that demonstrates that one of many consequences of cut elimination is indeed that linear implication defines a precongruence --- a reflexive transitive relation that holds in any context.
\begin{corollary}~\label{corollary:precongruence}
Linear implication defines a precongruence.
\end{corollary}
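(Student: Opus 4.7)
The plan is to verify the three properties defining a precongruence: reflexivity, transitivity, and closure under arbitrary contexts.

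Reflexivity, that $\vdash P \multimap P$, is already Proposition~\ref{proposition:reflexivity}.

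For transitivity, suppose $\vdash P \multimap Q$ and $\vdash Q \multimap R$, i.e., derivations $\co{P} \cpar Q \longrightarrow \cunit$ and $\co{Q} \cpar R \longrightarrow \cunit$. Using two instances of \textit{switch} I would build a derivation
\[
\co{P} \cpar R \cpar (Q \tensor \co{Q}) \longrightarrow (\co{P} \cpar Q) \tensor (\co{Q} \cpar R),
\]
and then apply the two given proofs inside the respective tensor components by deep inference to reach $\cunit \tensor \cunit \equiv \cunit$. This exhibits $\vdash \context{Q \tensor \co{Q}}$ for the context $\context{} = \co{P} \cpar R \cpar \ehole{\,\cdot\,}$, so Theorem~\ref{theorem:cut} yields $\vdash \context{\cunit} \equiv \co{P} \cpar R$, i.e., $\vdash P \multimap R$ as required.

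For closure under contexts, I would prove by induction on the structure of $\context{}$ that $\vdash P \multimap Q$ implies $\vdash \context{P} \multimap \context{Q}$. The base case (empty context) is the hypothesis. In each inductive case the hole sits inside one outermost connective or quantifier; a short deep-inference derivation strips that outermost layer, leaving the induction hypothesis as the residual goal together with extra material discharged by reflexivity (Proposition~\ref{proposition:reflexivity}). The multiplicative cases rely on \textit{switch}; the additive cases on \textit{external}, \textit{left}/\textit{right}, and \textit{tidy}; the first-order quantifier cases on \textit{extrude1}, \textit{select1}, and \textit{tidy1}; and both nominal quantifier cases reduce to \textit{close} followed by \textit{tidy name}, exploiting that $\wen x. A \cpar \new x. B \longrightarrow \mathopen{\new x.}(A \cpar B)$.

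The main obstacle is not any single case but cataloguing every outermost connective for the inductive step of closure under contexts. The nominal cases are conceptually the most delicate, since it is precisely \textit{close} (rather than the self-dual \textit{unify} rule of \textsf{BVQ}) that provides exactly enough monotonicity to yield a precongruence while still blocking the unsound implications highlighted in Section~\ref{section:self-dual}.
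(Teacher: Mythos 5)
Your proof is correct, and the reflexivity and transitivity parts match the paper essentially verbatim. The contextual-closure part, however, takes a genuinely different route. The paper constructs a single derivation
\[
\co{\context{P}} \cpar \context{ \left( P \tensor \co{P} \right) \cpar Q}
   \longrightarrow \co{\context{P}} \cpar \context{ P \tensor \left( \co{P}  \cpar Q \right) }
   \longrightarrow \co{\context{P}} \cpar \context{ P }
   \longrightarrow \cunit,
\]
using \textit{switch}, the hypothesis $\vdash \co{P} \cpar Q$ applied deeply, and reflexivity --- and then discharges the embedded $P \tensor \co{P}$ in one shot with Theorem~\ref{theorem:cut}. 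This avoids any case analysis on the context and is a two-line argument. Your version instead proves $\vdash \co{\context{P}} \cpar \context{Q}$ by structural induction on $\context{}$, stripping the outermost connective at each step and discharging the side formula with reflexivity; the case analysis is essentially the same one used to prove Proposition~\ref{proposition:reflexivity} (and indeed reflexivity is the degenerate instance where $P = Q$). What your route buys is that contextual closure then needs no cut elimination at all --- only the rules of \textsf{MAV1} and reflexivity --- which is arguably more primitive. What the paper's route buys is brevity and uniformity: one derivation pattern covers every context, with no exhaustive check of each binary connective and quantifier. Both are sound; you have simply traded a one-line cut-elimination argument for a longer but cut-free inductive one.
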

\begin{proof}
For transitivity, if $\vdash P \multimap Q$ and $\vdash Q \multimap R$ hold, we have the following.
\[
\infer[\mbox{by the \textit{switch} rule}]{
 \left(\co{P} \cpar \left( Q \tensor \co{Q} \right) \cpar R \right)
}{
\infer[\mbox{by the assumptions $\vdash \co{P} \cpar Q$ and $\vdash \co{Q} \cpar R$}]{
 \left(\co{P} \cpar Q \right) \tensor \left( \co{Q} \cpar R \right)
}{
\cunit
}}
\]
Hence, by Theorem~\ref{theorem:cut}, $\vdash P \multimap R$ as required.

For contextual closure, if $\vdash P \multimap Q$ holds, we have the following.
\[
\infer[\mbox{by the \textit{switch} rule}]{
\co{\context{P}} \cpar \context{ \left( P \tensor \co{P} \right) \cpar Q}
}{
\infer[\mbox{by the assumption $\vdash P \multimap Q$}]{
\co{\context{P}} \cpar \context{ P \tensor \left( \co{P}  \cpar Q \right) } 
}{
\infer[\mbox{by Proposition~\ref{proposition:reflexivity}}]{
\co{\context{P}} \cpar \context{ P } 
}{
 \cunit
}}}
\]
Hence by Theorem~\ref{theorem:cut}, $\vdash \context{P} \multimap \context{Q}$ as required.
Reflexivity holds by Proposition~\ref{proposition:reflexivity}.
\end{proof}

\subsection{Discussion on logical properties of the rules for nominal quantifiers}
\label{section:nominal}

The rules for the nominal quantifiers \textit{new} and \textit{wen} require justification.
The \textit{close} and \textit{tidy name} rules ensure the reflexivity of implication for nominal quantifiers.
Using the \textit{extrude new} rule (and Proposition~\ref{proposition:reflexivity}) we can establish the following proof of $\vdash \wen x. P \multimap \exists x. P$.
\[
\infer[\mbox{by the \textit{extrude new} rule}]{
 \exists x. P \cpar \new x. \co{P}
}{
\infer[\mbox{by the \textit{select1} rule}]{
 \mathopen{\new x.} \left(\exists x .P \cpar \co P\right)
}{
\infer[\mbox{by Proposition~\ref{proposition:reflexivity}}]{
 \mathopen{\new x.} \left( P \cpar \co P \right)
}{
\infer[\mbox{by the \textit{tidy name} rule}]{
 \new x. \cunit
}{
 \cunit
}}}}
\]
The above also serves as a proof of the dual statement $\vdash \forall x. P \multimap \new x. P$.

Using the \textit{fresh} rule we can establish the following implication $\vdash \new x. P \multimap \wen x. P$, as follows.
\[
\infer[\mbox{by the \textit{fresh} rule}]{
 \wen x .\co{P} \cpar \wen x .P
}{
\infer[\mbox{by Proposition~\ref{proposition:reflexivity}}]{
 \new x .\co{P} \cpar \wen x .P
}{
 \cunit
}}
\]
This completes the chain $\vdash \forall x. P \multimap \new x. P$, $\vdash \new x. P \multimap \wen x .P$ and $\vdash \wen x. P \multimap \exists x. P$. These linear implications are strict unless $\nfv{x}{P}$, in which case, for $\quantifier \in \left\{ \forall, \exists, \new, \wen \right\}$, $\quantifier x .P$ is logically equivalent to $P$. For example, using the \textit{fresh} rule followed by the \textit{extrude new} and \textit{tidy name} rules, $\vdash
 \new x .P \multimap P$
 holds, whenever $\nfv{x}{P}$.
Thus the implication corresponding to the \textit{vacuous} rule as in Fig.~\ref{figure:BVQ} is provable for any quantifier.

\textbf{The medial rules for nominals.}
The \textit{medial new} rule is particular to handling nominals in the presence of the self-dual non-commutative operator \textit{seq}. To see why this medial rule cannot be excluded, consider the following
formulae, where $x$ is free for atoms $\bbb$, $\ccc$, $\eee$ and $\ffff.$
\[
\begin{array}{rl}
 \left( \aaa \cseq \mathopen{\wen x.}\left( \bbb \cseq \ccc \right)\right) \tensor \left( \ddd \cseq \mathopen{\wen x.}\left( \eee \cseq \ffff \right) \right)
 \!\!\!&\multimap
 \left( \aaa \cseq \exists x. \bbb \cseq \exists x. \ccc \right) \tensor \left( \ddd \cseq \exists x. \eee \cseq \exists x .\ffff \right) 
\\
 \left( \aaa \cseq \exists x. \bbb \cseq \exists x .\ccc \right) \tensor \left( \ddd \cseq \exists x. \eee \cseq \exists x .\ffff \right) 
 \!\!\!&\multimap
 \left(\left( \aaa \cseq \exists x .\bbb \right) \tensor \left( \ddd \cseq \exists x. \eee \right)\right) \cseq \left( \exists x. \ccc \tensor \exists x. \ffff \right)
\end{array}
\]
Without using the \textit{medial new} rule, the above formulae are provable.
The first is as follows.
\[
\infer[\mbox{\textit{switch}}]{
 \left( \co{\aaa} \cseq \mathopen{\new x.}\left( \co{\bbb} \cseq \co{\ccc} \right)\right) \cpar \left( \co{\ddd} \cseq \mathopen{\new x.}\left( \co{\eee} \cseq \co{\ffff} \right) \right)
 \cpar
 \left( \aaa \cseq \exists x. \bbb \cseq \exists x. \ccc \right) \tensor \left( \ddd \cseq \exists x. \eee \cseq \exists x .\ffff \right) 
}{
\infer[\mbox{\textit{sequence}}]{
 \left(
  \left( \co{\aaa} \cseq \mathopen{\new x.}\left( \co{\bbb} \cseq \co{\ccc} \right)\right) \cpar \left( \aaa \cseq \exists x. \bbb \cseq \exists x. \ccc \right)
 \right)
 \tensor
 \left(
  \left( \co{\ddd} \cseq \mathopen{\new x.}\left( \co{\eee} \cseq \co{\ffff} \right) \right) \cpar \left( \ddd \cseq \exists x. \eee \cseq \exists x .\ffff \right) 
 \right)
}{
\infer[\mbox{\textit{extrude}}]{
 \left(
  \left( \co{\aaa} \cpar \aaa \right) \cseq \left( \mathopen{\new x.}\left( \co{\bbb} \cseq \co{\ccc} \right) \cpar \left( \exists x. \bbb \cseq \exists x. \ccc \right)\right)
 \right)
 \tensor
 \left(
  \left( \co{\ddd} \cpar \ddd \right) \cseq \left(\mathopen{\new x.}\left( \co{\eee} \cseq \co{\ffff} \right) \cpar \left( \exists x. \eee \cseq \exists x .\ffff \right) \right)
 \right)
}{
\infer[\mbox{\textit{select1}}]{
 \left(
  \left( \co{\aaa} \cpar \aaa \right) \cseq  \mathopen{\new x.}\left(\left( \co{\bbb} \cseq \co{\ccc} \right) \cpar \left( \exists x. \bbb \cseq \exists x. \ccc \right)\right)
 \right)
 \tensor
 \left(
  \left( \co{\ddd} \cpar \ddd \right) \cseq \mathopen{\new x.}\left(\left( \co{\eee} \cseq \co{\ffff} \right) \cpar \left( \exists x. \eee \cseq \exists x .\ffff \right) \right)
 \right)
}{
\infer[\mbox{by Proposition~\ref{proposition:reflexivity}}]{
 \left(
  \left( \co{\aaa} \cpar \aaa \right) \cseq  \mathopen{\new x.}\left(\left( \co{\bbb \cseq \ccc} \right) \cpar \left( \bbb \cseq \ccc \right)\right)
 \right)
 \tensor
 \left(
  \left( \co{\ddd} \cpar \ddd \right) \cseq \mathopen{\new x.}\left(\co{\eee \cseq \ffff} \cpar \left( \eee \cseq \ffff \right) \right)
 \right)
}{
\infer[\mbox{by \textit{tidy name}}]{
 \left(
  \mathopen{\new x.}\cunit
 \right)
 \tensor
 \left(
  \mathopen{\new x.}\cunit
 \right)
}{
\cunit
}}}}}}
\]
The proof of the second formula above is as follows.
\[
\infer[\mbox{by \textit{sequence}}]{
 \left( \co{\aaa} \cseq \forall x. \co{\bbb} \cseq \forall x .\co{\ccc} \right) \cpar \left( \co{\ddd} \cseq \forall x. \co{\eee} \cseq \forall x .\co{\ffff} \right) 
 \cpar
 \left(\left( \aaa \cseq \exists x .\bbb \right) \tensor \left( \ddd \cseq \exists x. \eee \right)\right) \cseq \left( \exists x. \ccc \tensor \exists x. \ffff \right)
\\\quad
}{
\infer[\mbox{by \textit{sequence}}]{
 \left(\left( \co{\aaa} \cseq \forall x. \co{\bbb}\right) \cpar \left(\co{\ddd} \cseq \forall x. \co{\eee}\right) \right) \cseq \left( \forall x .\co{\ccc} \cpar \forall x .\co{\ffff} \right) 
 \cpar
 \left(\left( \aaa \cseq \exists x .\bbb \right) \tensor \left( \ddd \cseq \exists x. \eee \right)\right) \cseq \left( \exists x. \ccc \tensor \exists x. \ffff \right)
}{
\infer[\mbox{by Prop.~\ref{proposition:reflexivity}}]{
 \left( \co{\left( \aaa \cseq \exists x .\bbb \right) \tensor \left( \ddd \cseq \exists x. \eee \right)} \cpar \left(\left( \aaa \cseq \exists x .\bbb \right) \tensor \left( \ddd \cseq \exists x. \eee \right)\right)\right) \cseq \left( \co{\exists x. \ccc \tensor \exists x. \ffff} \cpar \left(\exists x. \ccc \tensor \exists x. \ffff \right)\right)
}{
 \cunit
}}}
\]

However, the issue is that
the following formula would not be provable without using the \textit{medial new} rule; hence cut elimination cannot hold without the \textit{medial new} rule.
\[
\begin{array}{l}
 \left( \aaa \cseq \mathopen{\wen x.}\left( \bbb \cseq \ccc \right)\right) \tensor \left( \ddd \cseq \mathopen{\wen x.}\left( \eee \cseq \ffff \right) \right)
 \multimap
 \left(\left( \aaa \cseq \exists x. \bbb \right) \tensor \left( \ddd \cseq \exists x. \eee \right)\right) \cseq \left( \exists x .\ccc \tensor \exists x. \ffff \right)
\end{array}
\]
In contrast, with the \textit{medial new} rule the above formula is provable, as verified by the proof in Figure~\ref{fig:medial-ex}.
Notice the above proofs use only the \textit{medial new}, \textit{extrude new} and \textit{tidy name} rules for nominals. These rules are of the same form as rules \textit{medial1}, \textit{extrude1} and \textit{tidy1} for universal quantifiers, hence the same argument holds for the necessity of the \textit{medial1} rule by replacing $\new$ with $\forall$.

\begin{figure}[h!]
\[
\infer[]{
 \left( \co{\aaa} \cseq \new x. \left( \co{\bbb} \cseq \co{\ccc} \right)\right) \cpar \left( \co{\ddd} \cseq \new x. \left( \co{\eee} \cseq \co{\ffff} \right) \right)
 \cpar 
 \left(\left( \aaa \cseq \exists x .\bbb \right) \tensor \left( \ddd \cseq \exists x .\eee \right)\right) \cseq
 \left( \exists x .\ccc \tensor \exists x. \ffff \right)
}{
\infer[]{
  \left( \co{\aaa} \cseq \new x. \co{\bbb} \cseq \new x. \co{\ccc} \right) \cpar \left( \co{\ddd} \cseq \new x. \co{\eee} \cseq \new x .\co{\ffff} \right)
 \cpar 
 \left( \left( \aaa \cseq \exists x. \bbb \right) \tensor \left( \ddd \cseq \exists x. \eee \right) \right) \cseq
 \left( \exists x.\ccc \tensor \exists x. \ffff \right)
}{
\infer[]{
  \left( \left( \co{\aaa} \cseq \new x. \co{\bbb} \right) \cpar \left( \co{\ddd} \cseq \new x. \co{\eee} \right)\right)
  \cseq
  \left( \new x. \co{\ccc} \cpar \new x .\co{\ffff} \right)
  \cpar
  \left( \left( \aaa \cseq \exists x. \bbb \right) \tensor \left( \ddd \cseq \exists x. \eee \right) \right) \cseq
  \left( \exists x .\ccc \tensor \exists x. \ffff \right)
}{
\infer[]{
  \left( \left( \co{\aaa} \cseq \new x .\co{\bbb} \right) \cpar \left( \co{\ddd} \cseq \new x. \co{\eee} \right)
  \cpar
  \left( \left( \aaa \cseq \exists x. \bbb \right) \tensor \left( \ddd \cseq \exists x .\eee \right) \right) \right) 
  \cseq 
  \left( \new x. \co{\ccc} \cpar \new x. \co{\ffff} \cpar \left( \exists x.\ccc \tensor \exists x. \ffff \right) \right)
}{
\infer[]{
 \left( \left( \left( \co{\aaa} \cseq \new x. \co{\bbb} \right) \cpar \left( \aaa \cseq \exists x .\bbb \right) \right)
  \tensor
  \left(  \left( \co{\ddd} \cseq \new x .\co{\eee} \right) \cpar \left( \ddd \cseq \exists x. \eee \right) \right) \right) 
 \cseq 
  \left(  \left( \new x. \co{\ccc} \cpar \exists x .\ccc \right) \tensor \left( \new x. \co{\ffff} \cpar \exists x. \ffff \right) \right)
}{
\infer[]{
  \left( \left( \left( \co{\aaa} \cpar \aaa \right) \cseq \left( \new x .\co{\bbb} \cpar \exists x .\bbb \right) \right)
  \tensor
  \left(  \left( \co{\ddd} \cpar \ddd \right) \cseq \left( \new x .\co{\eee} \cpar \exists x .\eee \right) \right) \right) 
  \cseq 
  \left(  \left( \new x. \co{\ccc} \cpar \exists x .\ccc \right) \tensor \left( \new x. \co{\ffff} \cpar \exists x. \ffff \right) \right)
}{
\infer[]{
  \left( \left( \left( \co{\aaa} \cpar \aaa \right) \cseq \mathopen{\new x.}\left(  \co{\bbb} \cpar \exists x .\bbb \right) \right)
  \tensor
  \left(  \left( \co{\ddd} \cpar \ddd \right) \cseq  \mathopen{\new x.} \left(\co{\eee} \cpar \exists x. \eee \right) \right) \right) 
  \cseq 
  \left(  \mathopen{\new x.} \left( \co{\ccc} \cpar \exists x. \ccc \right) \tensor \mathopen{\new x.}\left( \co{\ffff} \cpar \exists x .\ffff \right) \right)
}{
\infer[]{
  \left( \left( \left( \co{\aaa} \cpar \aaa \right) \cseq \mathopen{\new x.} \left(  \co{\bbb} \cpar \bbb \right) \right)
  \tensor
  \left(  \left( \co{\ddd} \cpar \ddd \right) \cseq  \mathopen{\new x.}\left(\co{\eee} \cpar \eee \right) \right) \right) 
  \cseq
  \left(  \mathopen{\new x.}\left( \co{\ccc} \cpar \ccc \right) \tensor \mathopen{\new x.}\left( \co{\ffff} \cpar \ffff \right) \right)
}{
\infer[]{
  \left(
  \new x .\cunit
  \tensor
  \new x .\cunit \right) 
  \cseq
  \left(  \new x. \cunit \tensor \new x. \cunit \right)    
}{
  \cunit
}}}}}}}}}
 \]
 \caption{A proof of  $ \left( \aaa \cseq \mathopen{\wen x.}\left( \bbb \cseq \ccc \right)\right) \tensor \left( \ddd \cseq \mathopen{\wen x.}\left( \eee \cseq \ffff \right) \right)
 \multimap
 \left(\left( \aaa \cseq \exists x. \bbb \right) \tensor \left( \ddd \cseq \exists x. \eee \right)\right) \cseq \left( \exists x .\ccc \tensor \exists x. \ffff \right)$
}
 \label{fig:medial-ex}
 \end{figure}

Including the \textit{medial new} rule forces the \textit{suspend} rule to be included. To see why, observe that the following linear implications are provable.
\[
\begin{array}{rl}
\left( \new x. \aaa \cseq \new x. \bbb \right) \tensor \left( \new x .\ccc \cseq \new x .\ddd \right)
\multimap&
\!\!\!\!
\mathopen{\new x.}\left( \aaa \cseq \bbb \right) \tensor \mathopen{\new x.}\left( \ccc \cseq \ddd \right)
\\
\mathopen{\new x.}\left( \aaa \cseq \bbb \right) \tensor \mathopen{\new x.}\left( \ccc \cseq \ddd \right)
\multimap&
\!\!\!\!
\mathopen{\new x.}\left( \left( \aaa \cseq \bbb \right) \tensor \left( \ccc \cseq \ddd \right) \right)
\end{array}
\]
However, without the \textit{suspend} rule the following implication is not provable, which would contradict the cut elimination result of this paper.
\[
\left( \new x .\aaa \cseq \new x .\bbb \right) \tensor \left( \new x. \ccc \cseq \new x. \ddd \right)
\multimap
\mathopen{\new x.}\left( \left( \aaa \cseq \bbb \right) \tensor \left( \ccc \cseq \ddd \right) \right)
\]
Fortunately, including the \textit{suspend} rule ensures that the above implication is provable as follows.
\[
\infer[\mbox{by \textit{suspend}}]{
\left( \wen x .\co{\aaa} \cseq \wen x .\co{\bbb} \right) \cpar \left( \wen x. \co{\ccc} \cseq \wen x. \co{\ddd} \right)
\cpar
\mathopen{\new x.}\left( \left( \aaa \cseq \bbb \right) \tensor \left( \ccc \cseq \ddd \right) \right)
}{
\infer[\mbox{by \textit{suspend}}]{
\mathopen{\wen x.}\left( \co{\aaa} \cseq \co{\bbb} \right) \cpar \mathopen{\wen x.}\left( \co{\ccc} \cseq \co{\ddd} \right)
\cpar
\mathopen{\new x.}\left( \left( \aaa \cseq \bbb \right) \tensor \left( \ccc \cseq \ddd \right) \right)
}{
\infer[\mbox{by Proposition~\ref{proposition:reflexivity}}]{
\mathopen{\wen x.}\left(\left( \co{\aaa} \cseq \co{\bbb} \right) \cpar \left( \co{\ccc} \cseq \co{\ddd} \right)\right)
\cpar
\mathopen{\new x.}\left( \left( \aaa \cseq \bbb \right) \tensor \left( \ccc \cseq \ddd \right) \right)
}{
\cunit
}}}
\]
A similar argument justifies the inclusion of the \textit{left wen} and \textit{right wen} rules.

\textbf{Rules induced by equivariance.}
Interestingly, \textit{equivariance} is a design decision in the sense that cut elimination still holds if we drop the \textit{equivariance} rule from the structural congruence.
For such a system without \textit{equivariance}, also the rules \textit{all name}, \textit{with name}, \textit{left name} and \textit{right name} could also be dropped.
Perhaps there may be interesting applications for a non-equivariant nominal quantifiers; however, for embedding of process such as $\nu$ in the $\pi$-calculus, \textit{equivariance} is an essential property for scope extrusion. For example, \textit{equivariance} is used when proving the embedding of labelled transition $\nu x. \nu y.\co{z}y.p \lts{\co{z}{(y)}} \nu x. p$, assuming $z \not= x$ and $z \not= y$.

In our embedding of the $\pi$-calculus in $\mathsf{MAV1}$, addressed thoroughly in a companion paper~\cite{MSCS}, we assume process $p$ is embedded as formula $P$.
In this case, process $\nu x. \nu y.\co{z}y.p$ maps to $Q = \mathopen{\new x. \new y.}\left(\co{\textrm{act}(z, y)} \cseq P\right)$, process $\nu x. p$ maps to $R = \new x. P$.
In this embedding of processes as formulae, we can prove that whenever the above labelled transition is enabled,
we can prove the following implication
$
\mathopen{\new y.}\left( \co{\textrm{act}(z, y)} \cseq {R} \right)
\multimap
Q
$, where the binder $\new y$ and atom $\textrm{act}(z, y)$ indicate that the process can commit to a bound output.
Indeed this formula is provable, as follows, by using \textit{equivariance}.
\[
\infer[\mbox{by \textit{equivariance}}]{
\mathopen{\wen y.}\left( \textrm{act}(z, y) \cseq \wen x.\co{P} \right)
\cpar
\mathopen{\new x. \new y.}\left(\co{\textrm{act}(z, y)} \cseq P\right)
}{
\infer[\mbox{by \textit{close}}]{
\mathopen{\wen y.}\left( \textrm{act}(z, y) \cseq \wen x.\co{P} \right)
\cpar
\mathopen{\new y. \new x.}\left(\co{\textrm{act}(z, y)} \cseq P\right)
}{
\infer[\mbox{by \textit{medial new}}]{
\mathopen{\new y.} \left(\left( \textrm{act}(z, y) \cseq \wen x.\co{P}\right)
\cpar
\mathopen{\new x.} \left(\co{\textrm{act}(z, y)} \cseq P\right) \right)
}{
\infer[\mbox{by \textit{sequence}}]{
\mathopen{\new y.} \left(\left( \textrm{act}(z, y) \cseq \wen x.\co{P}\right)
\cpar
 \left(\mathopen{\new x.}\co{\textrm{act}(z, y)} \cseq \mathopen{\new x.}P\right) \right)
}{
\infer[\mbox{by \textit{extrude new}}]{
\mathopen{\new y.} \left(
\left( \textrm{act}(z, y) \cpar \mathopen{\new x.}\co{\textrm{act}(z, y)} \right)
\cseq
 \left(\wen x.\co{P} \cpar \mathopen{\new x.}P\right) \right)
}{
\infer[\mbox{by Proposition~\ref{proposition:reflexivity}}]{
\mathopen{\new y.} \left(
\mathopen{\new x.}\left( \textrm{act}(z, y) \cpar \co{\textrm{act}(z, y)} \right)
\cseq
 \left(\co{\new x.P} \cpar \mathopen{\new x.}P\right) \right)
}{
\infer[\mbox{by \textit{tidy name}}]{
\mathopen{\new y.} 
\mathopen{\new x.}\cunit
}{
\cunit
}}}}}}}
\]
In response to the above problem, modelling the $\pi$-calculus, \textsf{MAV1} includes equivariance.

The \textit{equivariance} rule forces additional distributivity properties for $\new$ and $\wen$ over $\wwith$ and $\forall$, given by the \textit{all name}, \textit{with name}, \textit{left name}, \textit{right name} rules.
These rules allow $\new$ and $\wen$ quantifiers to propagate to the front of certain contexts.
To see why these rules are necessary consider the following implications,
with matching formulae, respectively, after and before the implication.
\[
\vdash
\mathopen{\new x.}\left(\new y. \forall z. \aaa \cpar \mathopen{\wen y.}\left( \bbb \wwith \ccc \right)\right)
\multimap
\new x.\new y. \forall z. \aaa \cpar \mathopen{\wen x. \wen y.}\left( \bbb \wwith \ccc \right)
\]
\[
\vdash
\new x.\new y. \forall z. \aaa \cpar \mathopen{\wen x. \wen y.}\left( \bbb \wwith \ccc \right)
\multimap
\new y. \forall z. \new x. \aaa \cpar \mathopen{\wen y.}\left( \wen x. \bbb \wwith \wen x. \ccc \right)
\]
Any proof of the second implication does involve \textit{equivariance}; but neither proof requires
\textit{all name} or \textit{with name}.
A proof of the first implication above is as follows.
\[
\infer[\mbox{by \textit{close}}]{
\mathopen{\wen x.}\left(\wen y. \exists z. \co{\aaa} \tensor \mathopen{\new y.}\left( \co{\bbb} \ooplus \co{\ccc} \right)\right)
\cpar
\new x.\new y. \forall z. \aaa \cpar \mathopen{\wen x. \wen y.}\left( \bbb \wwith \ccc \right)
}{
\infer[\mbox{by Proposition~\ref{proposition:reflexivity}}]{
\mathopen{\wen x.}\left(
\wen y. \exists z. \co{\aaa} \tensor \mathopen{\new y.}\left( \co{\bbb} \ooplus \co{\ccc} \right)\right)
\cpar
\mathopen{\new x.}\left(\new y. \forall z. \aaa \cpar \mathopen{\wen y.}\left( \bbb \wwith \ccc \right)\right)
}{
\cunit
}}
\]
A proof of the second implication above is given in Figure~\ref{fig:equiv-ex}.

\begin{figure}[h!]
\[
\infer[\mbox{by \textit{switch}}]{
\left(
\wen x.\wen y. \exists z. \co{\aaa} \tensor \mathopen{\new x. \new y.}\left( \co{\bbb} \ooplus \co{\ccc} \right)
\right)
\cpar
\new y. \forall z. \new x. \aaa \cpar \mathopen{\wen y.}\left( \wen x. \bbb \wwith \wen x.\ccc \right)
}{
\infer[\mbox{by \textit{equivariance} and \textit{close}}]{
\left(\wen x.\wen y. \exists z. \co{\aaa} \cpar \new y. \forall z. \new x. \aaa \right)
\tensor
\left(
\mathopen{\new x. \new y.}\left( \co{\bbb} \ooplus \co{\ccc} \right)
\cpar \mathopen{\wen y.}\left( \wen x. \bbb \wwith \wen x.\ccc \right)
\right)
}{
\infer[\mbox{by \textit{extrude1}}]{
\mathopen{\new y.}
\left(\wen x. \exists z. \co{\aaa} \cpar \forall z. \new x. \aaa \right)
\tensor
\left(
\mathopen{\new x. \new y.}\left( \co{\bbb} \ooplus \co{\ccc} \right)
\cpar \mathopen{\wen y.}\left( \wen x. \bbb \wwith \wen x.\ccc \right)
\right)
}{
\infer[\mbox{by \textit{close}}]{
\mathopen{\new y.\forall z. }
\left(\wen x. \exists z. \co{\aaa} \cpar \new x. \aaa \right)
\tensor
\left(
\mathopen{\new x. \new y.}\left( \co{\bbb} \ooplus \co{\ccc} \right)
\cpar \mathopen{\wen y.}\left( \wen x. \bbb \wwith \wen x.\ccc \right)
\right)
}{
\infer[\mbox{by \textit{select1}}]{
\mathopen{\new y.\forall z. \new x.}
\left(\exists z. \co{\aaa} \cpar  \aaa \right)
\tensor
\left(
\mathopen{\new x. \new y.}\left( \co{\bbb} \ooplus \co{\ccc} \right)
\cpar \mathopen{\wen y.}\left( \wen x. \bbb \wwith \wen x.\ccc \right)
\right)
}{
\infer[\mbox{by \textit{equivariance} and \textit{close}}]{
\mathopen{\new y.\forall z. \new x.}
\left(\co{\aaa} \cpar  \aaa \right)
\tensor
\left(
\mathopen{\new x. \new y.}\left( \co{\bbb} \ooplus \co{\ccc} \right)
\cpar \mathopen{\wen y.}\left( \wen x. \bbb \wwith \wen x.\ccc \right)
\right)
}{
\infer[\mbox{by \textit{external}}]{
\mathopen{\new y.\forall z. \new x.}
\left(\co{\aaa} \cpar \aaa \right)
\tensor
\mathopen{\new y.}
\left(
\mathopen{\new x.}\left( \co{\bbb} \ooplus \co{\ccc} \right)
\cpar \left( \wen x. \bbb \wwith \wen x.\ccc \right)
\right)
}{
\infer[\mbox{by \textit{close}}]{
\mathopen{\new y.\forall z. \new x.}
\left(\co{\aaa} \cpar \aaa \right)
\tensor
\mathopen{\new y.}
\left(
\left(
\mathopen{\new x.}\left( \co{\bbb} \ooplus \co{\ccc} \right) \cpar \wen x. \bbb
\right) \wwith
\left(
\mathopen{\new x.}\left( \co{\bbb} \ooplus \co{\ccc} \right) \cpar \wen x.\ccc
\right)
\right)
}{
\infer[\mbox{by \textit{left} and \textit{right}}]{
\mathopen{\new y.\forall z. \new x.}
\left(\co{\aaa} \cpar  \aaa \right)
\tensor
\mathopen{\new y.}
\left(
\mathopen{\new x.}\left(\left( \co{\bbb} \ooplus \co{\ccc} \right) \cpar \bbb\right)
 \wwith
\mathopen{\new x.}\left(\left( \co{\bbb} \ooplus \co{\ccc} \right) \cpar \ccc\right)
\right)
}{
\infer[\mbox{by \textit{atomic interaction}}]{
\mathopen{\new y.\forall z. \new x.}
\left(\co{\aaa} \cpar \aaa \right)
\tensor
\mathopen{\new y.}
\left(
\mathopen{\new x.}\left( \co{\bbb} \cpar \bbb\right)
 \wwith
\mathopen{\new x.}\left(\co{\ccc}  \cpar \ccc\right)
\right)
}{
\infer[\mbox{by \textit{tidy name} and \textit{tidy1}}]{
\mathopen{\new y.\forall z. \new x.}
\cunit
\tensor
\mathopen{\new y.}
\left(
\mathopen{\new x.}\cunit
 \wwith
\mathopen{\new x.}\cunit
\right)
}{
\cunit
}}}}}}}}}}}
\]
\caption{A proof of $\new x.\new y. \forall z. \aaa \cpar \mathopen{\wen x. \wen y.}\left( \bbb \wwith \ccc \right) \multimap
\new y. \forall z. \new x. \aaa \cpar \mathopen{\wen y.}\left( \wen x. \bbb \wwith \wen x. \ccc \right)$}
\label{fig:equiv-ex}
\end{figure}

By the implications above, if cut elimination holds, it must be the case that the following is provable.
\[
\mathopen{\new x.}\left(\new y. \forall z. \aaa \cpar \mathopen{\wen y.}\left( \bbb \wwith \ccc \right)\right)
\multimap
\new y. \forall z. \new x. \aaa \cpar \mathopen{\wen y.}\left( \wen x. \bbb \wwith \wen x. \ccc \right)
\]
However, without the \textit{all name} and \textit{with name} rules,
the above implication is not provable and hence cut elimination would not hold in the presence of \textit{equivariance}.
Fortunately, using both the \textit{all name} and \textit{with name} rules the above implication is provable, as follows.
\[
\infer[\mbox{\textit{all name} and \textit{equivariance}}]{
\mathopen{\wen x.}\left(\wen y. \exists z. \co{\aaa} \tensor \mathopen{\new y.}\left( \co{\bbb} \ooplus \co{\ccc} \right)\right)
\cpar
\new y. \forall z. \new x. \aaa \cpar \mathopen{\wen y.}\left( \wen x. \bbb \wwith \wen x. \ccc \right)
}{
\infer[\mbox{\textit{with name} and \textit{equivariance}}]{
\mathopen{\wen x.}\left(\wen y. \exists z. \co{\aaa} \tensor \mathopen{\new y.}\left( \co{\bbb} \ooplus \co{\ccc} \right)\right)
\cpar
\new x. \new y. \forall z. \aaa \cpar \mathopen{\wen y.}\left( \wen x. \bbb \wwith \wen x. \ccc \right)
}{
\infer[\mbox{by \textit{close}}]{
\mathopen{\wen x.}\left(\wen y. \exists z. \co{\aaa} \tensor \mathopen{\new y.}\left( \co{\bbb} \ooplus \co{\ccc} \right)\right)
\cpar
\new x. \new y. \forall z. \aaa \cpar \mathopen{\wen x. \wen y.}\left( \bbb \wwith \ccc \right)
}{
\infer[\mbox{by Proposition~\ref{proposition:reflexivity}}]{
\mathopen{\wen x.}\left(\wen y. \exists z. \co{\aaa} \tensor \mathopen{\new y.}\left( \co{\bbb} \ooplus \co{\ccc} \right)\right)
\cpar
\mathopen{\new x.} \left(\new y. \forall z. \aaa \cpar \mathopen{\wen y.}\left( \bbb \wwith \ccc \right)\right)
}{
\cunit
}}}}
\]
A similar argument justifies the necessity of the \textit{left name} and \textit{right name} rules. 

\textbf{Polarities of the nominals.}
As with focussed proof search~\cite{Andreoli1992,Chaudhuri2011}, assigning a positive or negative polarity to operators explains certain distributivity properties.
Consider $\cpar$, $\wwith$, $\forall$ and $\new$ to be negative operators, 
and $\tensor$, $\ooplus$, $\exists$ and $\wen$ to be positive operators, where \textit{seq} is both positive and negative. The negative quantifier $\new$ distributes over all positive operators. 
Considering positive operator \textit{tensor} 
for example, $\vdash \new x .\alpha \tensor \new x. \beta  \multimap \new x. \left( \alpha \tensor \beta \right)$ holds but the converse implication does not hold. Furthermore, $\wen x. \alpha \tensor \wen x. \beta$ and $\wen x .\left(\alpha \tensor \beta\right)$ are unrelated by linear implication in general.
Dually, for the negative operator \textit{par} the only distributivity property that holds for nominal quantifiers is $\vdash \wen x .\left( \alpha \cpar \beta \right) \multimap \wen x .\alpha \cpar \wen x .\beta$.
The \textit{new wen} rule completes this picture of \textit{new} distributing over positive operators and \textit{wen} distributing over negative operators.
From the perspective of embedding name-passing process calculi in logic, the above distributivity properties of \textit{new} and \textit{wen} suggest that processes should be
encoded using negative operators $\new$ and $\cpar$ for private names and parallel composition
(or perhaps dually, using positive operators $\wen$ and $\tensor$),
so as to avoid private names distributing over parallel composition,
which we have shown to be problematic in Section~\ref{section:self-dual}.

The control of distributivity exercised by \textit{new} and \textit{wen} contrasts with the situation for universal and existential quantifiers, where $\exists$ commutes in one direction over all operators and $\forall$ commutes with all operators in the opposite direction, similarly to the additive $\ooplus$ and $\wwith$ which are also  insensitive to the polarity of operators with which they commute.
In the sense of control of distributivity~\cite{Blute2012}, \textit{new} and \textit{wen} behave more like multiplicatives than additives, but are unrelated to multiplicative quantifiers in the logic of bunched implications~\cite{BI}.

\section{The Splitting Technique for Renormalising Proofs}
\label{section:splitting}

This section presents the \textit{splitting} technique that is central to the cut elimination proof for \textsf{MAV1}.
Splitting is used to recover a syntax directed approach for sequent-like contexts.
Recall that in the sequent calculus rules are always applied to the root connective of a formula in a sequent, whereas deep inference rules can be applied deep within any context.
The technique is used to guide proof normalisation
leading to the cut elimination result at the end of Section~\ref{section:context}.

There are complex inter-dependencies between the nominals \textit{new} and \textit{wen} and other operators, particularly the multiplicatives \textit{times} and \textit{seq} and additive \textit{with}.
As such, the splitting proof is tackled as follows, as illustrated in Fig.~\ref{figure:strategy}:
\begin{itemize}
\item Splitting for the first-order universal quantifier $\forall$ can be treated independently of the other operators; hence a direct proof of splitting for this operator is provided first as a simple induction over the length of a derivation in Lemma~\ref{lemma:universal}. Splitting for all other operators are dependent on this lemma.

\item Due to inter-dependencies between $\new$, $\wen$, $\tensor$, $\cseq$ and $\wwith$, splitting for these operators are proven simultaneously by a (huge) mutual induction in Lemma~\ref{lemma:split-times}.
The induction is guided by an intricately designed multiset-based measure of the size of a proof in Definition~\ref{definition:size}. 
The balance of dependencies between operators in this lemma is, by far, the most challenging aspect of this paper.

\item Having established 
Lemma~\ref{lemma:universal} and Lemma~\ref{lemma:split-times},
splitting for the remaining operators $\exists$ and $\ooplus$ and the atoms can each be established independently of each other in Lemmas~\ref{lemma:split-exists},~\ref{lemma:split-plus} and~\ref{lemma:split-atoms} respectively.
\end{itemize}
\begin{figure}[h!]
\[
\xymatrix@C=1.5pc@R=2pc{
&& 
\mbox{\txt{
Splitting $\exists$
\\ (Lemma~\ref{lemma:split-exists})}}
\ar[dr]
\\
\mbox{
\txt{
Splitting $\forall$
\\ (Lemma~\ref{lemma:universal})}}
\ar[r]
&
\mbox{\txt{
Splitting $\new, \wen, \tensor, \cseq, \wwith$
\\
 (Lemma~\ref{lemma:split-times})}}
\ar[r]
\ar[ru]
\ar[rd]
&
\mbox{\txt{
Splitting $\ooplus$\\
 (Lemma~\ref{lemma:split-plus})}}
\ar[r]
&
\mbox{Section~\ref{section:context}}
\\
&&
\mbox{Splitting $\alpha, \co{\alpha}$ (Lemma~\ref{lemma:split-atoms})}
\ar[ur]
}
\]
\caption{
The proof strategy:
dependencies between splitting lemmas leading to cut elimination.}
\label{figure:strategy}
\end{figure}

\subsection{Elimination of universal quantifiers from a proof}

%

We employ a trick 
where universal quantification $\forall$ receives a more direct treatment than other operators.
The proof requires closure of rules under substitution of terms for variables, established as follows directly by induction over the length of a derivation using a function over formulae.
\begin{lemma}[Substitution]
\label{lemma:substitution}
If we have derivation $\vcenter{\infer{Q}{P}}$,
then we have derivation $\vcenter{\infer{Q\sub{x}{v}}{P\sub{x}{v}}}$.
\end{lemma}
\fullproof{
\begin{proof}
Consider the case of the \textit{switch} rule. The following rule can be derived.
\[
\left(\left(P \tensor Q\right) \cpar V\right) \sub{x}{v}
\equiv
\left(P\sub{x}{v} \tensor Q\sub{x}{v}\right) \cpar V\sub{x}{v}
\longrightarrow
P\sub{x}{v} \tensor \left(Q\sub{x}{v} \cpar V\sub{x}{v}\right)
\equiv
\left(P \tensor \left(Q \cpar V\right)\right)\sub{x}{v}
\]
The cases are similar for the rules \textit{sequence}, \textit{left}, \textit{right}, \textit{external}, \textit{atomic interaction}, \textit{tidy}, \textit{tidy1} and \textit{tidy name}.

Consider the case of the \textit{extrude new} rule. The following rule can be derived, where $\nfv{y}{Q}$, $z \not = x$, $\nfv{z}{v}$ and $\nfv{x}{(\new y. P \cpar Q)}$.
\[
\begin{array}{rl}
\left(\new y. P \cpar Q\right)\sub{x}{v}
\equiv&
\mathopen{\new z.} \left(P\sub{y}{z}\sub{x}{v}\right) \cpar Q
\\
\longrightarrow&
\mathopen{\new z.} \left(P\sub{y}{z}\sub{x}{v} \cpar Q \right)
\equiv
\mathopen{\new z.} \left(\left(P \cpar Q \right)\sub{y}{z}\sub{x}{v}\right)
\equiv
\mathopen{\new y.} \left(\left(P \cpar Q \right)\right)\sub{x}{v}
\end{array}
\]
The cases for \textit{extrude1}, \textit{left wen}, and \textit{right wen} rules are similar.

In the case of the \textit{select} rule. The following rule can be derived, where $z \not = x$, $\nfv{z}{v}$ and $\nfv{x}{(\exists y. P)}$.
\[
\begin{array}{rl}
\left( \exists y. P \right) \sub{x}{v}
\equiv
\left( \exists z. P\sub{y}{z}\sub{x}{v} \right) 
\longrightarrow
P\sub{y}{z}\sub{x}{v}\sub{z}{t\sub{x}{v}}
\equiv
P\sub{y}{z}\sub{z}{t}\sub{x}{v}
\equiv
P\sub{y}{t}\sub{x}{v}
\end{array}
\]

Consider the case of the \textit{close} rule. The following rule can be derived, where $z \not = x$, $\nfv{z}{v}$ and $\nfv{x}{(\wen{y}. P \cpar \new{y}. Q)}$.
\[
\begin{array}{rl}
\left( \wen{y}. P \cpar \new{y}. Q \right) \sub{x}{v}
\equiv&
\mathopen{\wen{z}}.\left(P\sub{y}{z}\sub{x}{v}\right) \cpar \mathopen{\new{z}.} \left( Q\sub{y}{z}\sub{x}{v} \right)
\\
\longrightarrow&
\mathopen{\new{z}.}\left(P\sub{y}{z}\sub{x}{v} \cpar Q\sub{y}{z}\sub{x}{v} \right)
\equiv
\mathopen{\new{z}.}\left(\left(P \cpar Q \right) \sub{y}{z}\sub{x}{v}\right)
\equiv
\left(\mathopen{\new{y}.}\left(P \cpar Q \right) \right)\sub{x}{v}
\end{array}
\]
The cases for \textit{medial1}, \textit{medial new}, \textit{suspend} are similar.

Consider the case of the \textit{new wen} rule. Pick fresh names $y'$ and $z'$ such that $x \not= y'$, $x \not=z'$, $y' \not= z'$, $\nfv{y'}{v}$, $\nfv{z'}{v}$, $\nfv{y'}{P}$, and $\nfv{z'}{P}$, hence $\sub{y}{y'}\sub{z}{z'} = \sub{z}{z'}\sub{y}{y'}$; and thereby the following rule can be construced.
\[
\left( \new y. \wen z. P \right)\sub{x}{v} 
\equiv
\mathopen{\new y'. \wen z'.} \left( P \sub{z}{z'}\sub{y}{y'}\sub{x}{v}  \right)
\longrightarrow
\mathopen{\wen z'. \new y'.} \left( P \sub{y}{y'}\sub{z}{z'}\sub{x}{v}  \right)
\equiv
\left(\wen z. \new y. P \right)\sub{x}{v}
\] 
The lemma follows by induction on the length of a derivation.

\end{proof}
}
%


We can now establish, the following lemma directly, which is a \textit{co-rule} elimination lemma. 
By a co-rule, we mean that, for \textit{select} rule $\vcenter{\infer[]{\context{\exists x.P}}{\context{P\sub{x}{v}}}}$, 
there is complementary rule $\vcenter{\infer[]{\context{P\sub{x}{v}}}{\context{\forall x.P}}}$ where the direction of inference is reversed and the formulae are complemented. 
Such a co-rule can always be eliminated from a proof, in which case we say \textit{co-select1} is \textit{admissible}, as established by the following lemma.
\begin{lemma}[Universal]\label{lemma:universal}
If $\vdash \context{ \forall x. P }$ holds
 then, for all terms~$v$, $\vdash \context{ P\sub{x}{v} }$ holds.
\end{lemma}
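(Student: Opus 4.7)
The plan is to argue by induction on the length $n$ of the derivation $D: \context{\forall x.P} \longrightarrow \cunit$, using Lemma~\ref{lemma:substitution} as the main tool. By $\alpha$-renaming the tracked quantifier I may assume $x$ is fresh, in particular not free in $\context{}$ or in $v$. Since no rule of the structural congruence can introduce or eliminate a $\forall$ connective, $\context{\forall x.P} \not\equiv \cunit$, and hence $n \geq 1$. I then case-split on the first rewrite step, depending on how its redex interacts with the tracked subformula $\forall x.P$.

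When the redex does not overlap with the tracked quantifier, or overlaps only passively (the rule does not modify the $\forall$ itself, merely rearranges the surrounding context or performs a substitution that propagates through $P$), the same rule can be applied in $\context{P\sub{x}{v}}$: Lemma~\ref{lemma:substitution} is invoked whenever the rule carries a substitution that reaches inside the tracked subformula, and when the rewrite lies strictly inside $P$ a step $P \longrightarrow P'$ lifts to $P\sub{x}{v} \longrightarrow P'\sub{x}{v}$ by the same lemma. A straightforward application of the inductive hypothesis on the shorter derivation then completes the case. The remaining cases are those whose rewrite pattern names $\forall$ explicitly: \textit{tidy1}, where $P = \cunit$ and $\cunit\sub{x}{v} = \cunit$ so that the tail of $D$ already proves $\context{\cunit}$; \textit{extrude1}, which rewrites $\forall x.P \cpar R$ with $x \notin \fv{R}$ to $\forall x.(P \cpar R)$, and for which I would apply IH to the new tracked $\forall x.(P \cpar R)$ with the same $v$, using $(P \cpar R)\sub{x}{v} = P\sub{x}{v} \cpar R$; and \textit{all name}, handled analogously by commuting $\sub{x}{v}$ past $\new y$ or $\wen y$ via the freshness of $x$.

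The main obstacle is \textit{medial1}, which replaces the tracked $\forall x.(P_1 \cseq S)$ by two fresh occurrences $\forall x.P_1 \cseq \forall x.S$; and dually any rule that duplicates the tracked subformula, most notably \textit{external} applied with $S = \forall x.P$. Neither is handled by a single IH application on the lemma as stated, so I would prove a multi-hole strengthening by the same induction on $n$: given any $k$-hole context, if $\vdash \context{\forall x_1.P_1, \dots, \forall x_k.P_k}$ then for any terms $v_1, \dots, v_k$, $\vdash \context{P_1\sub{x_1}{v_1}, \dots, P_k\sub{x_k}{v_k}}$. The \textit{medial1} case invokes the IH on a $(k{+}1)$-hole context, using $v_i$ for both new slots and the identity $(P_1 \cseq S)\sub{x_i}{v_i} = P_1\sub{x_i}{v_i} \cseq S\sub{x_i}{v_i}$; the \textit{external} duplication case invokes the IH on the $(k{+}1)$-hole context obtained by duplication and then re-applies \textit{external} in the substituted formula; and deletions by \textit{left} or \textit{right} (when a tracked formula is inside the discarded $\oplus$-branch) simply decrement the hole count. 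The stated Lemma is the $k = 1$ instance.
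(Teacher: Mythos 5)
Your proposal is correct and follows essentially the same approach as the paper's. Both proceed by induction on the derivation length and both isolate the same obstacle — duplication of the tracked $\forall$ by \textit{medial1} and \textit{external} — and handle it by strengthening the induction hypothesis to multiple marked occurrences simultaneously; the paper packages this into a recursive function $s_v$ that erases all bold-marked universal quantifiers at once, which is morally your multi-hole context (your version, allowing distinct $v_i$ per hole, is slightly more general than necessary, since the duplicated copies always inherit the same substitution).
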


A corollary of Lemma~\ref{lemma:universal} is: if $\vdash \forall x. P \cpar Q$ then $\vdash P\sub{x}{y} \cpar Q$, where $\nfv{y}{\left(\forall x. P \cpar Q\right)}$.
This corollary is in the form of a \textit{splitting} lemma,
where we have a principal connective $\forall$ at the root of a formula inside a context of the form $\left\{\ \cdot\ \right\} \cpar Q$. This corollary of the above lemma should remind the reader of the (invertible) sequent calculus rule for universal quantifiers:
\[
\begin{prooftree}
\vdash P\sub{x}{y}, \Gamma
\justifies
\vdash \forall x. P, \Gamma
\using
\mbox{where $y$ is fresh for $\forall x. P$ and all formulae in $\Gamma$}
\end{prooftree}
\]
We discuss, the significance of splitting lemmas after some preliminary lemmas required for the main splitting result.

\subsection{Killing contexts and technical lemmas required for splitting}

We require a restricted form of context called a killing context (terminology is from~\cite{Chaudhuri2011}). 
A killing context is a context with one or more holes, defined as follows.
\begin{definition}
A \textit{killing context} is a context defined by the following grammar.
\[
 \tcontext{} \Coloneqq \{\ \cdot\ \} \mid \tcontext{} \wwith \tcontext{} \mid \forall x. \tcontext{}
\mid \new x.\tcontext{}
\]
In the above, $\{\ \cdot\ \}$ is a hole into which any formula can be plugged. An $n$-ary killing context is a killing context in which $n$ holes appear.
\end{definition}

For readability of large formulae involving an $n$-ary killing context, for $n > 1$, we represent the holes using a comma-separated list, so for example, 
instead of writing $\tcontextsym \{\cdot\} \{ \cdot\}$, we write $\tcontext{\cdot, \cdot}$ for a binary context. Given an $n$-ary killing context $\tcontext{\ldots}$, we write $\tcontext{Q_1,\ldots,Q_n}$ to denote the formula obtained by filling the holes in the context with formulas $Q_1,\ldots, Q_n.$ 
We also introduce the notation $\tcontext{Q_i \colon 1 \leq i \leq n}$ as shorthand for $\tcontext{Q_1, Q_2, \hdots, Q_n}$; and $\tcontext{ Q_i \colon i \in I }$ for a family of formulae indexed by finite subset of natural numbers $I$. 

A killing context represents a context that cannot in general be removed until all other rules in a proof have been applied, hence the corresponding \textit{tidy} rules are suspended until the end of a proof.  A killing context has properties that are applied frequently in proofs, characterised by the following lemma. 
\begin{lemma}\label{lemma:kill-distribute}
For any killing context $\tcontext{}$, $\vdash \tcontext{\cunit, \hdots, \cunit}$ holds; and, assuming the free variables of $P$ are not bound by $\tcontext{}$,
we have derivation
\[
\infer[.]{
P \cpar \tcontext{Q_1, Q_2, \hdots Q_n}
}{
\tcontext{P \cpar Q_1, P \cpar Q_2, \hdots P \cpar Q_n}
}
\]
\end{lemma}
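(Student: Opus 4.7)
The plan is to prove both parts of the lemma by induction on the structure of the killing context $\tcontext{}$.

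For the first claim, that $\vdash \tcontext{\cunit, \ldots, \cunit}$ holds, the base case is $\tcontext{} = \{\,\cdot\,\}$, for which $\tcontext{\cunit} = \cunit$ is trivially provable. For the inductive case $\tcontext{} = \tcontexta{} \with \tcontextb{}$, the induction hypothesis gives proofs of both conjuncts reducing to $\cunit$; composing these (by contextual closure of the rewriting) yields $\cunit \with \cunit$, which reduces to $\cunit$ by \textit{tidy}. The cases $\tcontext{} = \forall x.\tcontexta{}$ and $\tcontext{} = \new x.\tcontexta{}$ are analogous, with \textit{tidy1} and \textit{tidy name} respectively disposing of the quantifier once the body has been proved.

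For the second claim, the base case $\tcontext{} = \{\,\cdot\,\}$ is immediate, since $P \cpar \tcontext{Q} = P \cpar Q = \tcontext{P \cpar Q}$. For $\tcontext{} = \tcontexta{} \with \tcontextb{}$, I apply the \textit{external} rule to push $P$ inside the $\with$, obtaining $(P \cpar \tcontexta{\ldots}) \with (P \cpar \tcontextb{\ldots})$, and then use the induction hypothesis on each side. For $\tcontext{} = \forall x.\tcontexta{}$, the freshness hypothesis ensures $\nfv{x}{P}$, so \textit{extrude1} rewrites $P \cpar \forall x.\tcontexta{\ldots}$ to $\forall x.(P \cpar \tcontexta{\ldots})$, after which the induction hypothesis applies under the context $\forall x.\{\,\cdot\,\}$. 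The case $\tcontext{} = \new x.\tcontexta{}$ is symmetric, using \textit{extrude new}.

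The one genuine subtlety is that the rules \textit{external}, \textit{extrude1}, and \textit{extrude new} all carry the side condition $R \not\equiv \cunit$ (to avoid trivial infinite proof-search paths). So the argument above presupposes $P \not\equiv \cunit$; if instead $P \equiv \cunit$, then by the commutative monoid laws in the structural congruence, $P \cpar \tcontext{Q_1, \ldots, Q_n} \equiv \tcontext{Q_1, \ldots, Q_n} \equiv \tcontext{\cunit \cpar Q_1, \ldots, \cunit \cpar Q_n} \equiv \tcontext{P \cpar Q_1, \ldots, P \cpar Q_n}$, so the zero-length derivation suffices. This case split is the main technical obstacle; once it is observed the induction is routine, since the killing-context connectives $\with$, $\forall$, and $\new$ each have precisely the distributivity rule over $\cpar$ needed to drive $P$ inward.
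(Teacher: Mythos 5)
Your proof is correct and takes essentially the same approach as the paper's: structural induction on the killing context, applying \textit{external}, \textit{extrude1}, \textit{extrude new} to push $P$ inward, and \textit{tidy}, \textit{tidy1}, \textit{tidy name} to discharge the first claim. The one genuine addition over the paper's own argument is your explicit case split on $P \equiv \cunit$, which respects the side conditions on the distribution rules and is correctly disposed of via the monoid laws of the structural congruence---a detail the paper's proof silently elides.
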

\fullproof{
\begin{proof}
The proofs follow by straightforward inductions over the structure of a killing context.

When the killing context is one hole only $P \cpar \ehole{Q} = \ehole{P \cpar Q}$ and $\ehole{\cunit} = \cunit$, as required.

Now assume that by the induction hypothesis the following hold for killing contexts $\tcontextn{1}{}$ and $\tcontextn{2}{}$, and also $\vdash \tcontextn{1}{\cunit, \hdots, \cunit}$ and $\vdash \tcontextn{2}{\cunit, \hdots, \cunit}$.
\[
\begin{array}{rl}
 P \cpar \tcontextn{1}{Q_1, \hdots, Q_m} &\longrightarrow \tcontextn{1}{P \cpar Q_1, \hdots, P \cpar Q_m}
 \\
 P \cpar \tcontextn{2}{Q_{m+1}, \hdots, Q_{m+n}} &\longrightarrow \tcontextn{2}{P \cpar Q_{m+1}, \hdots, P \cpar Q_{m+n}}
\end{array}
\]
Hence, by distributivity the following derivation can be constructed.
\[
\begin{array}{l}
P \cpar \left( \tcontextn{1}{Q_1, \hdots, Q_m} \wwith \tcontextn{2}{Q_{m+1}, \hdots, Q_{m+n}} \right)
\\\qquad
 \begin{array}{l}
 \longrightarrow P \cpar \tcontextn{1}{Q_1, \hdots, Q_m} \wwith P \cpar \tcontextn{2}{Q_{m+1}, \hdots, Q_{m+n}}
 \\
 \longrightarrow \tcontextn{1}{P \cpar Q_1, \hdots, P \cpar Q_m} \wwith \tcontextn{2}{P \cpar Q_{m+1}, \hdots, P \cpar Q_{m+n}}
 \end{array}
\end{array}
\]
Furthermore, $\tcontextn{1}{\cunit, \hdots, \cunit} \wwith \tcontextn{2}{\cunit, \hdots, \cunit} \longrightarrow \cunit \wwith \cunit \longrightarrow \cunit$ holds.

As the induction hypothesis, assume $P \cpar \tcontext{ Q_1, \hdots, Q_n } \longrightarrow \tcontext{ P \cpar Q_1, \hdots, P \cpar Q_n }$ and also that $\vdash \tcontext{ \cunit, \hdots, \cunit }$ . Hence, given $\nfv{x}{P}$, the following derivation holds.
\[
P \cpar \new x \tcontext{ Q_1, \hdots, Q_n } \longrightarrow \new x \left( P \cpar \tcontext{ Q_1, \hdots, Q_n }\right) \longrightarrow \new x \tcontext{ P \cpar Q_1, \hdots, P \cpar Q_n }
\]
Furthermore, $\new x \tcontext{ \cunit, \hdots, \cunit} \longrightarrow \new x \cunit \longrightarrow \cunit$.

A similar argument holds for a killing context of the form $\forall x \tcontext{}$.
\end{proof}

}

Killing contexts also satisfy the following property that is necessary for handling the \textit{seq} operator, which interacts subtly with killing contexts.
\begin{lemma}~\label{lemma:medial}
Assume that $I$ is a finite subset of natural numbers, $P_i$ and $Q_i$ are formulae, for $i \in I$, and $\tcontext{}$ is a killing context. There exist killing contexts $\tcontextn{0}{}$ and $\tcontextn{1}{}$ and sets of natural numbers $J \subseteq I$ and $K \subseteq I$ such that the following derivation holds:
\[
\infer[.]{
\tcontext{ P_i \cseq Q_i \colon i \in I }
}{
\tcontextn{0}{ P_j \colon j \in J } \cseq
\tcontextn{1}{ Q_k \colon k \in K }
}
\]
\end{lemma}

The following lemma checks that \textit{wen} quantifiers can propagate to the front of a killing context.
Similarly, to the proof of the lemma above, the proof is by induction on the structure of a killing context, applying the \textit{all name}, $\textit{new wen}$, $\textit{with name}$, $\textit{left name}$ or $\textit{right name}$ rule, as appropriate.
\begin{lemma}\label{lemma:commute}
Consider an $n$-ary killing context $\tcontext{}$ and formulae such that $\nfv{x}{P_i}$ and either $P_i = \wen x. Q_i$ or $P_i = Q_i$,
for $1 \leq i \leq n$. If for some $i$ such that $1 \leq i \leq n$, $P_i = \wen x. Q_i$, then we have derivation 
$\vcenter{\infer{\tcontext{ P_1, P_2, \hdots P_n}}{\wen x. \tcontext{ Q_1, Q_2, \hdots, Q_n }}}$.
\end{lemma}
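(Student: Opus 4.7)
The plan is to proceed by induction on the structure of the $n$-ary killing context $\tcontext{}$. Before starting, by $\alpha$-conversion on the structural congruence I assume without loss of generality that no quantifier appearing in $\tcontext{}$ binds $x$; this choice preserves the statement and will discharge every freshness side-condition encountered along the way.

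In the base case, $\tcontext{}$ is the single-hole identity context, so the hypothesis that some $P_i = \wen x. Q_i$ forces $P_1 = \wen x. Q_1$, and hence $\tcontext{P_1} = \wen x. Q_1 = \wen x. \tcontext{Q_1}$ via a derivation of length zero.

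For the inductive step there are three cases, one for each non-trivial production of the grammar for killing contexts. If $\tcontext{} = \forall y. \tcontextn{1}{}$, then $y \neq x$ by the $\alpha$-convention, and the inductive hypothesis applied to $\tcontextn{1}{}$ yields $\tcontextn{1}{P_1, \hdots, P_n} \longrightarrow \wen x. \tcontextn{1}{Q_1, \hdots, Q_n}$; a single application of the \textit{all name} rule then commutes $\wen x$ past $\forall y.$. The case $\tcontext{} = \new y. \tcontextn{1}{}$ is treated analogously, using \textit{new wen} in place of \textit{all name}. The main case is $\tcontext{} = \tcontextn{1}{} \with \tcontextn{2}{}$: the indices partition according to which sub-context they fill, and, since at least one $P_i$ has the form $\wen x. Q_i$, at least one of the two sub-contexts satisfies the hypothesis of the induction. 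If both do, apply induction on each side and combine using the \textit{with name} rule. If only one side contains such a $P_i$, apply induction only to that side; the other side is left untouched because all of its fillings satisfy $P_i = Q_i$ with $\nfv{x}{Q_i}$, and \textit{left name} or dually \textit{right name} then delivers the required derivation, its freshness premise being immediate from $\nfv{x}{P_i}$ and the $\alpha$-convention on $\tcontext{}$.

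The only point worth flagging is the freshness bookkeeping in the \textit{with} case, which is the reason for performing the $\alpha$-renaming of $\tcontext{}$ upfront rather than threading it through each step of the induction; once that convention is in place, every case is a one-step rewrite on top of the inductive hypothesis.
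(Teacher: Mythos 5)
Your proposal is correct and takes essentially the same approach as the paper's own proof: structural induction on the killing context, with the single-hole base case, the $\forall$ and $\new$ cases handled by the \emph{all name} and \emph{new wen} rules respectively, and the $\with$ case split according to which sub-contexts contain a $\wen x$-prefixed filling and discharged by \emph{with name}, \emph{left name}, or \emph{right name}. Your explicit upfront $\alpha$-conversion of $\tcontext{}$ to avoid binding $x$ is a helpful hygiene step that the paper leaves implicit; it is a sound move here since killing contexts contain no atoms and the hypothesis $\nfv{x}{P_i}$ guarantees the renaming captures nothing in the fillings.
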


To handle certain cases in splitting the following definitions and property is helpful.
Assume $\vec{y}$ defines a possibly empty list of variables $y_1, y_2, \hdots, y_n$ and $\quantifier \vec{y}. P$ abbreviates $\quantifier y_1. \quantifier y_2. \hdots \quantifier y_n. P$. Let $\nfv{\vec{y}}{P}$ hold only if $\nfv{y}{P}$ for every $y \in \vec{y}$.
By induction over the length of $\vec{z}$ we can establish the following lemma, by repeatedly applying the \textit{close}, \textit{fresh} and \textit{extrude new} rules.
\begin{lemma}\label{lemma:close}
If $\vec{y} \subseteq \vec{z}$ and $\nfv{\vec{z}}{\wen \vec{y}. P}$, then we have derivations 
$\vcenter{\infer{\wen \vec{y}. P \cpar \new \vec{z}. Q}{\mathopen{\new \vec{z}.} \left( P \cpar Q \right)}}$ 
and $\vcenter{\infer{\new \vec{y}. P \cpar \wen \vec{z}. Q}{\mathopen{\new \vec{z}.} \left( P \cpar Q \right)}}$.
\end{lemma}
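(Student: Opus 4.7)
The plan is to proceed by induction on the length of $\vec{z}$. For the base case, when $\vec{z}$ is empty, the side condition $\vec{y} \subseteq \vec{z}$ forces $\vec{y}$ to be empty as well, so both target derivations collapse to the trivial zero-length derivation $P \cpar Q \longrightarrow P \cpar Q$.

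For the inductive step on the first derivation $\wen \vec{y}.P \cpar \new \vec{z}.Q \longrightarrow \mathopen{\new \vec{z}.}(P \cpar Q)$, pick any element $z$ of $\vec{z}$ and, using equivariance, bring it to the front so that $\vec{z} = z, \vec{z}'$; then split on whether $z \in \vec{y}$. If $z \in \vec{y}$, equivariance for $\wen$ similarly moves $z$ to the front of $\vec{y}$, giving $\vec{y} = z, \vec{y}'$. A single \textit{close} step then combines the outermost $\wen z$ and $\new z$, producing $\mathopen{\new z.}(\wen \vec{y}'.P \cpar \new \vec{z}'.Q)$, to which the induction hypothesis applies with $\vec{y}' \subseteq \vec{z}'$ (the freshness premise $\nfv{\vec{z}'}{\wen \vec{y}'.P}$ is inherited), yielding $\mathopen{\new \vec{z}.}(P \cpar Q)$. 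If $z \notin \vec{y}$, then $\nfv{\vec{z}}{\wen \vec{y}.P}$ gives $\nfv{z}{\wen \vec{y}.P}$, so \textit{extrude new} applies and pushes $\new z$ outside $\wen \vec{y}.P$, producing $\mathopen{\new z.}(\wen \vec{y}.P \cpar \new \vec{z}'.Q)$; the induction hypothesis on $\vec{z}'$ (which still contains $\vec{y}$) finishes the derivation.

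The second statement $\new \vec{y}.P \cpar \wen \vec{z}.Q \longrightarrow \mathopen{\new \vec{z}.}(P \cpar Q)$ is treated by the same induction. The case $z \in \vec{y}$ is again dispatched by \textit{close}, in the symmetric pairing of $\new z$ with $\wen z$. The case $z \notin \vec{y}$ requires one extra step: the outermost quantifier on $Q$ is now $\wen z$, which by the \textit{fresh} rule can be rewritten to $\new z$, after which \textit{extrude new} pushes it past $\new \vec{y}.P$, using that $\new \vec{y}.P$ has the same free variables as $\wen \vec{y}.P$ and therefore does not contain $z$ freely. No step is delicate in isolation; the only bookkeeping is checking that the freshness side conditions of \textit{close}, \textit{extrude new}, \textit{fresh} and equivariance continue to hold at each smaller instance, which is immediate from the assumptions $\vec{y} \subseteq \vec{z}$ and $\nfv{\vec{z}}{\wen \vec{y}.P}$ after passing to the tails of the lists.
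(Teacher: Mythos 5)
Your proof takes the same approach as the paper, which establishes this lemma only with a one-line remark that it follows by induction over the length of $\vec{z}$ through repeated applications of the \textit{close}, \textit{fresh} and \textit{extrude new} rules. Your proposal fills in exactly that induction, correctly splitting on whether the head variable $z$ belongs to $\vec{y}$, using equivariance to normalise the binder order before applying \textit{close} or \textit{extrude new} (with \textit{fresh} as the extra step in the second part), and checking that the inductive invariants $\vec{y}' \subseteq \vec{z}'$ and $\nfv{\vec{z}'}{\wen\vec{y}'.P}$ are preserved.
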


\tohide{
\subsection{An Affine Measure for the Size of a Proof.}

As an induction measure in the splitting lemmas, we employ a multiset-based measure~\cite{Dershowitz1979} of the size of a proof.
An \textit{occurrence count} is defined in terms of a multiset of multisets. To give weight to nominals, a \textit{wen} and \textit{new} count is employed.
The measure of the size of a proof, Definition~\ref{definition:size}, is then given by the lexicographical order induced by the occurrence count, wen count and new count for the formula in the conclusion of a proof, and the derivation length of the proof itself.

In the sub-system \textsf{BV}~\cite{Guglielmi2007}, the occurrence count is simply the number of atom and co-atom occurrences.
For the sub-system corresponding to \textsf{MALL} (multiplicative-additive linear logic)~\cite{Strassburger2003thesis}, i.e. without \textit{seq}, 
a multiset of atom occurrences such that $\occ{\left(P \wwith Q\right) \cpar R} = \occ{\left(P  \cpar R\right) \wwith \left(Q \cpar R\right)}$ is sufficient, to ensure that the \textit{external} rule does not increase the size of the measure. 
The reason why a multiset of multisets is employed for extensions of \textsf{MAV}~\cite{Horne2015} is to handle subtle interactions between the unit, \textit{seq} and \textit{with} operators.
In particular, by applying the structural rules for units, such that $ \context{ P \wwith Q }
 \equiv
 \context{ \left(P \cseq \cunit\right) \wwith \left(\cunit \cseq Q\right) }$
and the \textit{medial} rule, we obtain the following inference.
\[
\infer[\mbox{by the \textit{medial} rule}]{
 \context{ P \wwith Q }
}{
 \context{ \left(P \wwith \cunit\right) \cseq \left(\cunit \wwith Q\right) }
}
\]
In the above derivation, the units cannot in general be removed from the formula in the premise; hence extra care should be taken that these units do not increase the size of the formula. This observation leads us to the notion of multisets of multisets of natural numbers defined below.

\begin{definition}
We denote the standard multiset disjoint union operator as $\discup$, a multiset sum operator defined such that $M + N = \left\{ m + n \colon m \in M \mbox{ and } n \in N \right\}$.
We also define pointwise plus and pointwise union over multisets of multisets of natural numbers, where $\mathcal{M}$ and $\mathcal{N}$ are multisets of multisets.
$\mathcal{M} \mplus \mathcal{N} = \left\{ M + N, M \in \mathcal{M} \mbox{ and } N \in \mathcal{N} \right\}$
and
$\mathcal{M} \mcup \mathcal{N} = \left\{ M \discup N, M \in \mathcal{M} \mbox{ and } N \in \mathcal{N} \right\}$.
\end{definition}

We employ two distinct multiset orderings over multisets and over multisets of multisets.
\begin{definition}
For multisets of natural numbers $M$ and $N$, define a multiset ordering $M \leq N$ if and only if there exists an injective multiset function $f \colon M \rightarrow N$ such that, for all $m \in M$, $m \leq f(m)$. Strict multiset ordering $M < N$ is defined such that $M \leq N$ but $M \not= N$.
\end{definition}

\begin{definition}
Given two multisets of multisets of natural numbers $\mathcal{M}$ and $\mathcal{N}$, $\mathcal{M} \sqsubseteq \mathcal{N}$ holds if and only if $\mathcal{M}$ can be obtained from $\mathcal{N}$ by repeatedly removing a multiset $N$ from $\mathcal{N}$ and replacing $N$ with zero or more multisets $M_i$ such that $M_i < N$. $\mathcal{M} \sqsubset \mathcal{N}$ is defined when $\mathcal{M} \sqsubseteq \mathcal{N}$ but $\mathcal{M} \not= \mathcal{N}$.
\end{definition}

\begin{definition}
The occurrence count is the following function from formulae to multiset of multisets of natural numbers.
\begin{gather*}
\begin{array}{l}
\occ{\cunit} = \left\{\left\{ 0 \right\}\right\}
\qquad
\occ{\alpha} = \occ{\co{\alpha}} = \left\{\left\{ 1 \right\}\right\}
\\[6pt]
\occ{P \wwith Q} = \occ{P \ooplus Q} = \occ{P} \sqcup \occ{Q}
\\[6pt]
\occ{P \cpar Q} = \occ{P} \boxplus \occ{Q}
\\[6pt]
\occ{ \forall x. P } = \occ{ \exists x. P } = \left\{\left\{ 0 \right\}\right\} \mcup \occ{ P }
\end{array}
\!\!\!\!\!\!\!\!\!\!
\begin{array}{r}
\occ{ \new x. P } = \occ{ \wen x. P } = 
\left\{
\begin{array}{lr}
\left\{\left\{ 0, 0 \right\}\right\} & \mbox{if $P \equiv \cunit$} \\
\occ{P} & \mbox{otherwise}
\end{array}
\right.
\\[6pt]
\occ{P \tensor Q} = \occ{P \cseq Q} = 
\left\{
\begin{array}{lr}
\occ{P} & \mbox{if $Q \equiv \cunit$} \\
\occ{Q} & \mbox{if $P \equiv \cunit$} \\
\occ{P} \discup \occ{Q} & \mbox{otherwise}
\end{array}
\right.
\end{array}
\end{gather*}
\end{definition}

\begin{definition}
The wen count is the following function from formulae to natural numbers.
\begin{gather*}
\wensize{ \wen x. P } = 1 + \wensize{ P }
\qquad
\wensize{ \exists x. P } = \wensize{ \forall x. P } = \wensize{ \new x. P } = \wensize{ P }
\qquad
\wensize\alpha = \wensize{\co\alpha} = \wensize\cunit = 1
\\[4pt]
\wensize{ P \cseq Q } = \wensize{ P \tensor Q } = \wensize{ P \cpar Q } = \wensize{ P }\wensize{ Q }
\qquad
\wensize{ P \ooplus Q } = \wensize{ P \wwith Q } = \wensize P + \wensize Q
\end{gather*}
\end{definition}

\begin{definition}
The new count is the following function from formulae to natural numbers.
\begin{gather*}
\newsize{ \new x. P } = 1 + \newsize{ P }
\qquad
\newsize{ \exists x. P } = \newsize{ \forall x. P } = \newsize{ \wen x. P } = \newsize{ P }
\qquad
\newsize\alpha = \newsize{\co\alpha} = \newsize\cunit = 1
\\[4pt]
\newsize{ P \cpar Q } = \newsize{ P }\newsize{ Q }
\quad
\newsize{ P \ooplus Q } = \newsize{ P \wwith Q } = \newsize P + \newsize Q
\quad
\newsize{ P \cseq Q } = \newsize{ P \tensor Q } = \mmax{\newsize{P}, \newsize{Q}}
\end{gather*}
\end{definition}

\begin{definition}
The size of a formula $\size{P}$ is defined as the triple $\left(\occ{P}, \wensize{P}, \newsize{P} \right)$ lexicographically ordered by $\prec$.
$\phi \preceq \psi$ is defined such that $\phi \prec \psi$ or $\phi = \psi$ pointwise.
\end{definition}

\begin{definition}\label{definition:size}
The size of a proof of $P$ with derivation of length $n$ is given by the tuple of the form $\left(\size{P}, n\right)$, subject to lexicographical ordering.
\end{definition}

%

\begin{lemma}\label{lemma:size-sub}
For any formula $P$ and term $t$, $\size{ P } = \size{ P\sub{x}{t} }$.
\end{lemma}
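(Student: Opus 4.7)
The plan is to prove this by a straightforward structural induction on $P$, relying on two simple observations: that capture-avoiding substitution commutes with every connective (and with every quantifier, after a possible $\alpha$-renaming of its bound variable), and that $P \equiv \cunit$ holds if and only if $P\sub{x}{t} \equiv \cunit$ holds. The second fact is needed because the defining clauses of $\occ{\cdot}$ (and analogously of $\wensize{\cdot}$ and $\newsize{\cdot}$) for $\tensor$, $\cseq$, $\new$ and $\wen$ branch on whether their immediate subpredicates are congruent to $\cunit$. This equivalence-preservation holds because the structural congruence is generated by the monoid laws, commutativity of $\cpar$ and $\tensor$, equivariance, and $\alpha$-conversion, none of which are sensitive to replacing variables inside atoms.

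In more detail, I would first verify the base cases. For $\cunit$, $\cunit\sub{x}{t} = \cunit$, so all three measures are identical. For an atom $\alpha = p\,u_1 \cdots u_n$, the substituted predicate $\alpha\sub{x}{t}$ is again an atom, and $\occ{\alpha} = \left\{\left\{ 1 \right\}\right\}$, $\wensize\alpha = 1$, $\newsize\alpha = 1$ depend only on the outer shape; the same applies to co-atoms.

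For the inductive step on a binary connective $\odot \in \{\cpar, \tensor, \cseq, \with, \oplus\}$, we have $(P \odot Q)\sub{x}{t} \equiv P\sub{x}{t} \odot Q\sub{x}{t}$. For $\cpar$, $\with$ and $\oplus$ the measures are defined by unconditional compositional clauses ($\mplus$, $\mcup$, sum, or product), so the induction hypothesis immediately yields equality. For $\tensor$ and $\cseq$ the definition splits on whether a factor is equivalent to $\cunit$; here the preservation lemma above ensures both definitions take the same branch, and again the induction hypothesis finishes the case. For a quantifier $\quantifier y. P$, we first $\alpha$-convert so that $y \notin \fv{t} \cup \{x\}$ (which does not affect any of the three measures, since they treat bound-variable names symbolically and are insensitive to the chosen name), and then $(\quantifier y. P)\sub{x}{t} = \quantifier y. (P\sub{x}{t})$. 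For $\forall$ and $\exists$ the clauses are unconditional, while for $\new$ and $\wen$ the branch on $P \equiv \cunit$ versus $P \not\equiv \cunit$ is again preserved; the induction hypothesis then gives equality in all three components.

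No genuine obstacle arises: the argument is essentially bookkeeping. The only subtlety worth singling out is the preservation of the predicate $P \equiv \cunit$ under substitution, which I would establish as an auxiliary remark before running the main induction; everything else is a clause-by-clause unfolding of the three size functions against the definition of capture-avoiding substitution.
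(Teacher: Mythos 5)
The paper states this lemma without a proof, evidently treating it as routine. Your structural induction is correct and is the natural argument the authors presumably intend; in particular you have rightly singled out the one genuine subtlety, namely that the clauses of $\occ{\cdot}$ (and of $\wensize{\cdot}$, $\newsize{\cdot}$) for $\tensor$, $\cseq$, $\new$, $\wen$ case-split on congruence to $\cunit$, and that $P\equiv\cunit$ is stable under capture-avoiding substitution because the structural congruence never creates or destroys atoms while substitution only rewrites terms inside atoms.
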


\begin{lemma}\label{lemma:multiset-congruence}
If $P \equiv Q$ then $\size{P} = \size{Q}$.
\end{lemma}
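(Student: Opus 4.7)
The plan is to proceed by induction on the derivation of the congruence $P \equiv Q$, verifying the claim for each axiom generating $\equiv$ (commutative monoid laws for $\cpar$ and $\tensor$, monoid laws for $\cseq$, equivariance for $\new$ and $\wen$, and $\alpha$-conversion), and then lifting by compatibility with contextual closure and transitivity. Since $\size{P} = (\occ{P}, \wensize{P}, \newsize{P})$, for each axiom I would need to check equality of all three components separately; the wen count and new count are straightforward since they treat each connective by either a sum, a product, or a max of the component values and are insensitive to unit-special-cases.

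First, I would dispatch $\alpha$-conversion by invoking Lemma~\ref{lemma:size-sub}: if $\quantifier x. P \equiv \quantifier y. P\sub{x}{y}$ with $y$ fresh, then $\size{P} = \size{P\sub{x}{y}}$, and the outer quantifier contributes the same increment on both sides. Equivariance for $\new$ and $\wen$ is immediate: both $\new x. \new y. P$ and $\new y. \new x. P$ expose the same subterm $P$ through two nested $\new$ quantifiers, and all three components depend only on $\occ{P}$ (when $P\not\equiv\cunit$) or $\{\{0,0\}\}$ (when $P\equiv\cunit$), and add $2$ to the new count (and leave the wen count unchanged); similarly for $\wen$.

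Next, I would tackle the monoid axioms. For $\cpar$, observe that $\occ{\cdot \cpar \cdot}$ is computed by $\mplus$, which is pointwise multiset sum and therefore commutative and associative, with $\occ{\cunit} = \{\{0\}\}$ as a two-sided identity (since $\{0\} + N = N$ elementwise), so both the commutative monoid laws and the unit laws for $\cpar$ hold without case analysis. For $\tensor$ and $\cseq$, the definition distinguishes a special case when an operand is $\equiv \cunit$; commutativity of $\tensor$ is routine by a case split, and for associativity I would split on whether any operand is $\equiv\cunit$: if at least one is, the special cases collapse the expression to the occurrence count of the non-unit factors combined by $\discup$, while if all three are non-unit the result is $\occ{P} \discup \occ{Q} \discup \occ{R}$ on both sides, using that $\discup$ is associative and commutative. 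The wen count uses multiplication (with $\wensize{\cunit}=1$) and the new count uses $\max$ (with $\newsize{\cunit}=1$) or product/sum/max as appropriate, all of which are associative/commutative with the unit acting as identity.

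Finally, for contextual closure, I note that every clause in the definitions of $\occ{\cdot}$, $\wensize{\cdot}$ and $\newsize{\cdot}$ is a function of the corresponding measures of the immediate subterms (together, in the $\tensor, \cseq, \new, \wen$ clauses, with the syntactic test ``$\equiv\cunit$''); thus an inductive hypothesis $\size{P} = \size{Q}$ at the hole extends to $\size{\context{P}} = \size{\context{Q}}$. The one subtlety is that the test ``$\equiv\cunit$'' appears in the definition itself, so strictly speaking I would interpret the definition as acting on equivalence classes; this is well-founded because the test refers only to whether the subterm lies in the equivalence class of $\cunit$, and the induction hypothesis guarantees that this property is preserved under $\equiv$. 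The main obstacle is precisely this bookkeeping around unit special cases for $\tensor$ and $\cseq$: the rest reduces to elementary algebraic identities for the operations $\mplus$, $\discup$, addition, multiplication and maximum on (multisets of) natural numbers.
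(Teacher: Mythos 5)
Your proposal is correct and follows essentially the same strategy as the paper: induction on the derivation of $P \equiv Q$, verifying each generating axiom of the congruence and closing under congruence and transitivity, with Lemma~\ref{lemma:size-sub} dispatching $\alpha$-conversion. If anything you are slightly more thorough than the paper's own proof, which only works through the monoid laws explicitly and leaves the equivariance and $\alpha$-conversion cases implicit. One small imprecision: you attribute the preservation of the test ``$\equiv\cunit$'' under $\equiv$ to the induction hypothesis, but that fact is simply the transitivity of $\equiv$ (if $P \equiv Q$ and $Q \equiv \cunit$ then $P \equiv \cunit$); the induction hypothesis concerns equality of the three measure components, not membership in the equivalence class of $\cunit$. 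The substance of the argument is unaffected.
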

\fullproof{
\begin{proof}
For commutativity the following arguments hold for \textit{par} and \textit{times}.
\begin{gather*}
\begin{array}{rl}
\occ{P \cpar Q} = \occ{P} \mplus \occ{Q} =&\!\!\left\{ M + N \colon M \in \occ{P}, N \in \occ{Q} \right\}
\\
=&\!\!\left\{ N + M \colon M \in \occ{P}, N \in \occ{Q} \right\} = \occ{Q} \mplus \occ{P} = \occ{ Q \cpar P }
\end{array}
\\
\occ{P \tensor Q} = \occ{P} \discup \occ{Q} = \occ{Q} \discup \occ{P} = \occ{Q \tensor P}
\\
\wensize{ P \tensor Q } = \wensize{ P \cpar Q } = \wensize{ P }\wensize{ Q } = \wensize{ Q }\wensize{ P } = \wensize{ Q \cpar P } = \wensize{ Q \tensor P }
\\
\newsize{ P \cpar Q } = \newsize{ P }\newsize{ Q } = \newsize{ Q }\newsize{ P } = \newsize{ Q \cpar P }
\\
\newsize{ P \tensor Q } = \mmax{\newsize{ P } , \newsize{ Q }} = \mmax{\newsize{ Q }, \newsize{ P }} = \newsize{ Q \tensor P }
\qquad
\end{gather*}

The identity rules hold by the following reasoning, where $\odot \in \left\{ \tensor, \cseq \right\}$, since $\newsize{P} \geq 1$ always $\newsize{P \odot \cunit} = \mmax{\newsize{ P}, 1} = \newsize{P}$ and the following hold. 
\begin{gather*}
\occ{P \cpar \cunit} = \occ{P} \mplus \left\{\left\{0\right\}\right\} = \occ{P}
\quad
\occ{\cunit \odot P} = \occ{P}
\quad
\wensize{P \odot \cunit} 
= \wensize{P}
= \wensize{P \cpar \cunit}
\quad
\newsize{P \cpar \cunit} = \newsize{ P }
\end{gather*}

Associativity properties hold by extending associativity of multisets to multisets of multisets.
\[
\begin{array}{rl}
\size{\left(P \cpar Q\right) \cpar R}
 =
 \left(\size{P} \mplus \size{Q}\right) \mplus \size{R}
 =&
 \left\{ \left(M + N\right) + K \colon M \in \size{P}, N \in \size{Q}, K \in \size{R} \right\}
 \\
 =&
 \left\{ M + \left(N + K\right) \colon M \in \size{P}, N \in \size{Q}, K \in \size{R} \right\}
 =  \size{P} \mplus \left(\size{Q} \mplus \size{R}\right) 
 = \size{P \cpar \left(Q \cpar R\right)}
\end{array}
\]
If any one of $P \equiv \cunit$, $Q \equiv \cunit$ or $R \equiv \cunit$ hold, then $\size{\left(P \cseq Q\right) \cseq R} = \size{P \cseq \left(Q \cseq R\right)}$ by definition. If $P \not\equiv \cunit$ and $Q \not\equiv \cunit$ and $R \not\equiv \cunit$, then the following equalities hold:
$
\size{\left(P \cseq Q\right) \cseq R} = \left(\size{P} \discup \size{Q}\right) \discup \size{R} = \size{P} \discup \left(\size{Q} \discup \size{R}\right) = \size{P \cseq \left(Q \cseq R\right)}
$.
The same associativity argument works for the \textit{times} operator. Furthermore, the following wen count and new count preserve associativity as follows, where $\odot \in \left\{ \tensor, \cseq \right\}$.
\[
\wensize{ \left( P \cpar Q\right) \cpar R } =
\wensize{ \left( P \odot Q\right) \odot R } = \left(\wensize{ P }\wensize{ Q }\right)\wensize{ R } = \wensize{ P }\left(\wensize{ Q }\wensize{R}\right) = \wensize{ P \odot \left( Q \odot R\right) }
= \wensize{ P \cpar \left(Q \cpar R\right) }
\]
\[
\newsize{ \left( P \cpar Q\right) \cpar R } = \left(\newsize{ P }\newsize{ Q }\right)\newsize{ R } = \newsize{ P }\left(\newsize{ Q }\newsize{R}\right) = \newsize{ P \cpar \left( Q \cpar R\right) }
\]
\[
\newsize{ \left( P \odot Q\right) \odot R } = \mmax{\mmax{\newsize{ P },\newsize{ Q }},\newsize{ R }} = \mmax{\newsize{ P },\mmax{\newsize{ Q },\newsize{R}}} = \newsize{ P \odot \left( Q \odot R\right) }
\]
The lemma then follows by induction over the structure of the derivation of $P \equiv Q$.
\end{proof}
}

The following lemma we will appeal to regularly in the splitting proofs in subsequent sections to bound the size of a derivation.
\begin{lemma}[affine]\label{lemma:bound}
Any derivation $\vcenter{\infer{Q}{P}}$, is bound such that $\size{P} \preceq \size{Q}$.
\end{lemma}

}{}
\subsection{The splitting technique for simulating sequent-like rules}

The technique called splitting~\cite{Guglielmi2007,Guglielmi2011} generalises the application of rules in the sequent calculus.
In the sequent calculus, any root connective in a sequent can be selected and some rule for that connective can be applied.
For example, consider the following rules in linear logic forming part of a proof in the sequent calculus, where $\nfv{x}{P,Q,U,V,W}$.
\[
\begin{prooftree}
\begin{prooftree}
   \vdash P, U
   \quad
   \vdash Q, R
\justifies
   \vdash P \tensor Q, R, U
\end{prooftree}
  \quad
\begin{prooftree}
\begin{prooftree}
    \vdash P, R, V
    \quad
    \vdash Q, W
\justifies
    \vdash P \tensor Q, R, V, W
\end{prooftree}
\justifies
   \vdash P \tensor Q, R, V \cpar W
\end{prooftree}
\justifies
\begin{prooftree}
\vdash P \tensor Q, R, U\wwith \left(V \cpar W\right)
\justifies
 \vdash P \tensor Q, \forall x.R, U\wwith \left(V \cpar W\right)
\end{prooftree}
\end{prooftree}
\]
In the setting of the calculus of structures, the sequent at the conclusion of the above proof corresponds to a \textit{shallow context} of the form $\left\{\ \cdot\ \right\} \cpar \forall x.R \cpar \left(U\wwith \left(V \cpar W\right)\right)$
where the \textit{times} operator at the root of $P \tensor Q$ is a \textit{principal formula} that is plugged into the shallow context.
Splitting proves that there is always a derivation reorganising a shallow context 
into a form such that a rule for the root connective of the principal formula may be applied.
In the above example, this would correspond to the following derivation over contexts:
\[
\infer[\mbox{by the \textit{extrude1} rule}]{
\left\{\ \cdot\ \right\} \cpar \forall x.R \cpar \left(U\wwith \left(V \cpar W\right)\right)
}{
\infer[\mbox{by the \textit{external} rule}]{
\left\{\ \cdot\ \right\} \cpar \mathopen{\forall x.}\left(R \cpar \left(U\wwith \left(V \cpar W\right)\right)\right)
}{
\left\{\ \cdot\ \right\} \cpar \mathopen{\forall x.}\left(\left(R \cpar U\right) \wwith \left(R \cpar V \cpar W\right)\right)
}}
\]
By plugging in the principal formula, $P \tensor Q$, into the hole in the premise of the above derivation and applying distributivity properties of a killing context (Lemma~\ref{lemma:kill-distribute}),
the \textit{switch} rule involving the principal connective can be applied as follows.
\[
\infer[\mbox{by Lemma~\ref{lemma:kill-distribute}}]{
\left(P \tensor Q\right) \cpar \mathopen{\forall x.}\left(\left(R \cpar U\right) \wwith \left(R \cpar V \cpar W\right)\right)
}{
\infer[\mbox{by the \textit{switch} rule}]{
\mathopen{\forall x.}\left(\left(\left(P \tensor Q\right) \cpar R \cpar U\right) \wwith \left(\left(P \tensor Q\right) \cpar R \cpar V \cpar W\right)\right)
}{
\mathopen{\forall x.}\left(\left(\left(P \cpar U\right)\tensor \left(Q \cpar R\right)\right) \wwith \left(\left(P \cpar R \cpar V\right)\tensor \left(Q \cpar W\right)\right)\right)
}}
\]
Notice that the final formula above holds when all of the following hold: $\vdash P \cpar U$, $\vdash Q \cpar R$, $\vdash P \cpar R \cpar V$ and $\vdash Q \cpar W$.
Notice that these correspond to the leaves of the example sequent above.

Splitting is sufficiently general that the technique can be applied to operators such as \textit{seq} that have no sequent calculus presentation~\cite{Tiu2006}.
The technique also extends to the pair of nominals \textit{new} and \textit{wen}, for which a sequent calculus presentation is an open problem.

The operators \textit{times}, \textit{seq}, \textit{new} and \textit{wen} are treated together in Lemma~\ref{lemma:split-times}.
These operators give rise to \textit{commutative cases}, where rules for these operators can permute with any principal formula, swapping the order of rules in a proof.
\textit{Principal cases} are where the root connective of the principal formula is directly involved in the bottommost rule of a proof.
As with \textsf{MAV}~\cite{Horne2015}, the \textit{principal cases} for \textit{seq} are challenging, demanding Lemma~\ref{lemma:medial}.
The principal case induced by \textit{medial new} demands Lemma~\ref{lemma:commute}. The cases where two nominal quantifiers commute are also interesting, particularly where the case arrises due to \textit{equivariance}.
\begin{lemma}[Core Splitting]
\label{lemma:split-times}
The following statements hold.
\begin{enumerate}
\item If $\vdash \left( P \tensor  Q \right) \cpar R$, then there exist formulae $V_i$ and $W_i$ such that $\vdash P \cpar V_i$ and $\vdash Q \cpar W_i$, where $1 \leq i \leq n$, and $n$-ary killing context $\tcontext{}$ such that $\vcenter{\infer{R}{\tcontext{V_1 \cpar W_1, V_2 \cpar W_2, \hdots, V_n \cpar W_n}}}$
and if $\tcontext{}$ binds $x$ then $\nfv{x}{\left(P \tensor Q\right)}$.

\item If $\vdash \left( P \cseq Q \right) \cpar R$, then there exist formulae $V_i$ and $W_i$ such that $\vdash P \cpar V_i$ and $\vdash Q \cpar W_i$, where $1 \leq i \leq n$, and $n$-ary killing context $\tcontext{}$ such that $\vcenter{\infer{R}{\tcontext{V_1 \andthen W_1, V_2 \andthen W_2, \hdots, V_n \andthen W_n}}}$
and if $\tcontext{}$ binds $x$ then $\nfv{x}{\left(P \cseq Q\right)}$.

\item If $\vdash \new x. P \cpar Q$, then there exist formulae $V$ and $W$
where $\nfv{x}{V}$ and $\vdash P \cpar W$ and either $V = W$ or $V = \wen x .W$,
such that there is a derivation $\vcenter{\infer{Q}{V}}$.

\item If $\vdash \wen x. P \cpar Q$, then there exist formulae $V$ and $W$ where $\nfv{x}{V}$ and $\vdash P \cpar W$ and either $V = W$ or $V = \new x. W$, such that there is a derivation $\vcenter{\infer{Q}{V}}$.

\item If $\vdash \left( P \wwith Q \right) \cpar R$, then $\vdash P \cpar R$ and $\vdash Q \cpar R$.
\end{enumerate}
Furthermore, for all $1 \leq i \leq n$, in the first two cases the size of the proofs
of ${P \cpar V_i}$ and ${Q \cpar W_i}$ are strictly bounded above by the size of the proofs of ${ \left( P \tensor Q \right) \cpar R}$ and ${ \left( P \cseq Q \right) \cpar R }$.
In the third and fourth cases, the size of the proof ${P \cpar W}$ is strictly bounded above by the size of the proofs of $\new x .P \cpar Q$ and $\wen x. P \cpar Q$.
The size of a proof is measured according to Definition~\ref{definition:size}.
\end{lemma}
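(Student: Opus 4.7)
The plan is to prove all five statements simultaneously by mutual induction on the size of the proof, measured according to Definition~\ref{definition:size}, relying on Lemma~\ref{lemma:bound} to guarantee that every appeal to the induction hypothesis is on a strictly smaller proof. For each of the five principal shapes, I case-split on the bottommost rule of the given proof, distinguishing two broad classes: \emph{commutative cases} in which the bottommost rule acts entirely inside $R$ (or in a subterm disjoint from the displayed root connective), and \emph{principal cases} in which the bottommost rule interacts directly with the displayed root connective.

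For the commutative cases the pattern is uniform. Suppose the bottommost rule rewrites $\contextp{U}$ to $\contextp{U'}$ inside $R$, producing a shorter proof of, say, $(P\tensor Q)\cpar\contextp{U'}$. The induction hypothesis yields a killing context $\tcontext{}$ and witnesses $V_i,W_i$ for $\contextp{U'}$; we then prepend the bottommost rule to recover a derivation from $\contextp{U}$. The subtleties are: for rules like \textit{external}, \textit{extrude1}, \textit{extrude new}, \textit{all name}, \textit{with name}, \textit{left/right name/wen} that push additive or quantifier structure outward, the killing context $\tcontext{}$ itself grows --- but it grows by exactly the connectives permitted in the grammar of killing contexts ($\with$, $\forall x$, $\new x$), which is why the class of killing contexts is closed under these operations. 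Lemma~\ref{lemma:kill-distribute} lets parallel components be pushed inside $\tcontext{}$, and the freshness sidecondition on the statements propagates from the freshness sidecondition on the rule.

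The principal cases are where the real work lies. For $P \tensor Q$, the principal rule is \textit{switch}, and one directly extracts the two required derivations; for $P \cseq Q$, the principal rule is \textit{sequence}, and here Lemma~\ref{lemma:medial} is essential: after inductively splitting we obtain a killing context over $P_i \cseq Q_i$ pairs which Lemma~\ref{lemma:medial} rearranges into the required $\tcontextn{0}{V_j \colon j\in J} \cseq \tcontextn{1}{W_k \colon k\in K}$ shape. For $\new x.P$ the principal rules are \textit{close} (matching a $\wen x$ in $Q$) and \textit{extrude new}; for $\wen x.P$ the principal rules include \textit{fresh}, \textit{medial wen}, \textit{left/right wen}, and --- crucially --- the equivariance-driven \textit{new wen} and \textit{all name}, which force us to thread $\wen x$ through an already-constructed killing context using Lemma~\ref{lemma:commute}. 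For $P \with Q$ the principal rule is \textit{external} (or \textit{tidy}, as a degenerate base case), giving directly two proofs of strictly smaller size. The \textit{medial new} and \textit{medial wen} rules yield nontrivial principal cases for both $\cseq$ and the nominals, and here Lemma~\ref{lemma:close} is used to glue the resulting nested $\new/\wen$ prefixes back into a single $\new x$-bound derivation.

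The hard part, by a wide margin, is bookkeeping the nominal cases in the presence of equivariance. The statements for $\new$ and $\wen$ deliberately allow two possible shapes for the witness ($V=W$ or $V=\quantifier x.W$) precisely because, at the point one recovers from \textit{new wen}, \textit{all name}, \textit{with name}, or \textit{medial wen}, the nominal may or may not have already been pushed past the killing context. Checking that each principal and commutative case produces exactly one of these two shapes, and that when we feed the output of the $\tensor$ or $\cseq$ case into a later $\new$ or $\wen$ case the freshness sideconditions on the killing context are preserved, is the combinatorially heavy step. The strict decrease of the measure in the principal cases for $\wen$ relies on the strict drop in the wen count established in Lemma~\ref{lemma:bound} for \textit{fresh}, \textit{close}, and \textit{medial wen}, while the principal cases for $\new$ rely on the drop in occurrence count (via \textit{tidy name}) or in the new count (via \textit{medial new}); this is what makes the simultaneous induction on the triple $(\occ{\cdot},\wensize{\cdot},\newsize{\cdot})$ followed by derivation length go through.
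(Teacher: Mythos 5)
Your proposal is correct and takes essentially the same route as the paper: the same multiset-based measure from Definition~\ref{definition:size}, the same principal/commutative dichotomy driving an induction on the size of the proof, the same appeals to Lemmas~\ref{lemma:kill-distribute}, \ref{lemma:medial}, \ref{lemma:commute}, \ref{lemma:close} and \ref{lemma:bound}, and the same identification of equivariance-induced cases as the central difficulty and of the two-shape nominal witnesses as the device that absorbs them. Your catalogue of which rules produce principal cases has minor slips --- \textit{all name} is not a principal rule for $\wen$ since its left-hand side is rooted in $\forall$, you omit that \textit{medial} and \textit{left/right/with name} also give principal cases for $\with$, and the \textit{close} case for $\new$ drops via the wen count rather than the occurrence or new count --- but none of this changes the strategy or the correctness of the argument.
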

\tohide{
\begin{proof}
The proof proceeds by induction on the size of the proof, as in Defn.~\ref{definition:size}. 
In each of the following base cases, the conditions for splitting are immediately satisfied.
For the base case for the \textit{tidy name} rule, the bottommost rule of a proof is of the form 
$\vcenter{\infer[]{\new x.\new \vec{y}. \cunit \cpar P}{\new \vec{y}.\cunit \cpar P}}$, where $\nfv{\vec{y}}{P}$.
For the base case for the \textit{tidy} rule, the bottommost rule is of the form 
$\vcenter{\infer[]{\left(\cunit \wwith \cunit\right) \cpar P}{\cunit \cpar P}}$, such that $\vdash \cunit \cpar P$.
For the base case for $\textit{times}$ and $\textit{seq}$, $\vdash \left(\cunit \tensor \cunit\right) \cpar \cunit$ and $\vdash \left(\cunit \cseq \cunit\right) \cpar \cunit$ hold.

\begin{enumerate}[label=\textbf{\Alph*},ref=\Alph*,leftmargin=*]
\item \textbf{Principal cases for wen.}
There are principal cases for \textit{wen} where the rules \textit{close}, \textit{suspend}, \textit{left wen}, \textit{right wen} and \textit{fresh} interfere directly with \textit{wen} at the root of a principal formula. Three representative cases are presented.

\begin{enumerate}[label*=\textbf{.\arabic*}]
\item The first principal case for $\textit{wen}$ is when the bottommost rule of a proof is an instance of the \textit{close} rule of the form 
$
\vcenter{\infer[]{\wen x. P \cpar \new x. Q \cpar R
}{
\mathopen{\new x.} \left( P \cpar Q \right) \cpar R}}
$,
where $\vdash \mathopen{\new x.} \left( P \cpar Q \right) \cpar R$ and $\nfv{x}{R}$.
By the induction hypothesis,
there exist $S$ and $T$ such that $\vdash P \cpar Q \cpar T$ and $\nfv{x}{S}$ and either $S = T$ or $S = \wen x. T$, and also we have derivation $\infer{R}{S}$.
Since $\nfv{x}{S}$, if $S = T$ then $\vcenter{\infer[]{\new x. Q \cpar S}{\mathopen{\new x.}\left( Q \cpar T \right)}}$.
Furthermore, the size of the proof of $P \cpar Q \cpar T$ is no larger than the size of the proof of $\mathopen{\new x.} \left( P \cpar Q \right) \cpar R$; hence strictly bounded by the size of the proof of $\wen x. P \cpar \new x. Q \cpar R$.
If $S = \wen x. T$ then by the \textit{close} rule $\vcenter{\infer[]{\new x. Q \cpar \wen x. T}{\mathopen{\new x.}\left( Q \cpar T \right)}}$.
If $S = T$ then, since $\nfv{x}{S}$, by the \textit{extrude new} rule,
$\vcenter{\infer[]{\new x. Q \cpar T}{\mathopen{\new x.}\left( Q \cpar T \right)}}$.
Hence in either case $\vcenter{\infer[]{\new x. Q \cpar S}{\mathopen{\new x.}\left( Q \cpar T \right)}}$
and thereby the derivation 
$
\vcenter{
\infer[]{
\new x. Q \cpar R
}{
\infer[]{
\new x. Q \cpar S
}{
\mathopen{\new x.} \left(Q \cpar T \right)
}}}
$
can be constructed,
meeting the conditions for splitting for \textit{wen}.

\item Consider the second principal case for \textit{wen} where the bottommost rule of a proof is an instance of the \textit{suspend} rule of the form
$
\vcenter{
\infer[]{
\wen x. P \cpar \wen x. Q \cpar R
}{
\mathopen{\wen x.} \left( P \cpar Q \right) \cpar R
}}
$, where $\vdash \mathopen{\wen x.} \left( P \cpar Q \right) \cpar R$ and $\nfv{x}{R}$.
By the induction hypothesis,
there exist $S$ and $T$ such that and $\vdash P \cpar Q \cpar T$ and $\nfv{x}{S}$ and either $S = T$ or $S = \new x. T$, and also $\vcenter{\infer[]{R}{S}}$.
Furthermore, the size of the proof of $P \cpar Q \cpar T$ is no larger than the size of the proof of $\mathopen{\wen x.} \left( P \cpar Q \right) \cpar R$; hence strictly bounded by the size of the proof of $\wen x. P \cpar \wen x. Q \cpar R$.
Since $\nfv{x}{S}$, if $S = T$ then, by the \textit{new wen} and \textit{extrude new} rules, 
$\vcenter{\infer[]{\wen x. Q \cpar T}{\infer[]{\new x. Q \cpar T}{\mathopen{\new x.} \left( Q \cpar T \right)}}}$.
If $S = \new x. T$ then, by the \textit{close} rule, $\vcenter{\infer[]{\wen x. Q \cpar \new x. T}{\mathopen{\new x.}\left( Q \cpar T \right)}}$.
So in either case, $\vcenter{\infer[]{\wen x. Q \cpar S}{\mathopen{\new x.} \left( Q \cpar T \right)}}$,
and hence the derivation
$
\vcenter{
\infer[]{
\wen x. Q \cpar R
}{
\infer[]{
\wen x. Q \cpar S
}{
\mathopen{\new x.}\left(Q \cpar T \right)
}}}
$
can be constructed, as required.
The principal cases for \textit{left wen} and \textit{right wen} are similar.

\fullproof{
Another similar principal cases for $\textit{wen}$ is where the first the bottommost rule of a proof is of the form
$
\infer[]{
\wen x. P \cpar Q \cpar R
}{
\mathopen{\wen x.} \left( P \cpar Q \right) \cpar R
}
$, where $\vdash \mathopen{\wen x.} \left( P \cpar Q \right) \cpar R$ and $\nfv{x}{Q \cpar R}$.
By induction,
there exist $S$ and $T$ such that and $\vdash P \cpar Q \cpar T$ and either $S = T$ or $S = \new x. T$, and also $\vcenter{\infer[]{R}{S}}$.
Furthermore, the size of the proof of $P \cpar Q \cpar T$ is no larger than the size of the proof of $\mathopen{\wen x.} \left( P \cpar Q \right) \cpar R$; hence strictly bounded by the size of the proof of $\wen x. P \cpar Q \cpar R$.
If $S = T$ define $U = Q \cpar T$, and if $S = \new x. T$ define $U = \mathopen{\new x.} \left( Q \cpar T \right)$.
In the case $S = \new x. T$, since $\nfv{x}{Q}$, $\infer[]{Q \cpar \new x. T}{ \mathopen{\new x.} \left( Q \cpar T \right) }$.
Hence the following derivation
$
\infer[]{
Q \cpar R
}{
\infer[]{
Q \cpar S
}{
U
}}
$ can be constructed, as required.
}
%

\item Consider the principal case for $\textit{wen}$ when the bottommost rule of a proof is an instance of the \textit{fresh} rule of the form 
$\vcenter{\infer[]{\wen x. \wen \vec{y}. P \cpar Q}{ \wen \vec{y}. \new x. P \cpar Q }}$, 
where $\vdash \wen \vec{y}. \new x. P \cpar Q$. Notice that $\vec{y}$ is required to handle the effect of \textit{equivariance}.
By applying the induction hypothesis inductively on the length of $\vec{y}$, there exist $\vec{z}$ and $\hat{Q}$ such that $\vec{z} \subseteq \vec{y}$ and $\nfv{\vec{y}}{\new \vec{z} \hat{Q}}$ and $\vdash \new x. P \cpar \hat{Q}$, and also $\vcenter{\infer[]{ Q }{ \new \vec{z}. \hat{Q} }}$. Furthermore, the size of the proof of $\new x. P \cpar \hat{Q}$ is bounded above by the size of the proof of $\wen \vec{y}. \new x. P \cpar Q$.
By the induction hypothesis, there exist $R$ and $S$ such that $\nfv{x}{R}$, $\vdash P \cpar S$ and either $R = S$ or $R = \wen x. S$, and also
$
\vcenter{ \infer[]{\hat{Q}}{R}}
$.
There are two cases to consider. If $R = S$ then let $T = \new \vec{z}. S$; and if $R = \wen x. S$ then let $T = \new x. \new \vec{z}. S$, in which case, since $\new \vec{z}. \new x. S \equiv \new x. \new \vec{z}. S$ we have 
$\vcenter{\infer[]{\new \vec{z}. R}{T}}$. 
In either case $\nfv{x}{T}$. Thereby we can construct the derivation
$
\vcenter{
\infer[]{
Q
}{
\infer[]{
\new \vec{z}. \hat{Q}
}{
\infer[]{
\new \vec{z}. R
}{
T
}}}}
$.
Furthermore, appealing to Lemma~\ref{lemma:close}, the proof
$
\vcenter{
\infer[]{
\wen \vec{y}. P \cpar \new \vec{z}. S
}{
\infer[]{
\mathopen{\new \vec{y}.} \left( P \cpar S \right)
}{
\infer[]{
\new \vec{y}. \cunit
}{
\cunit
}}}}
$
can be constructed and, furthermore, $\size{ \wen \vec{y}. P \cpar \new \vec{z}. S } \prec \size{ \wen x. \wen \vec{y}. P \cpar Q }$, since by Lemma~\ref{lemma:bound} $\size{ \new \vec{z}. S } \preceq \size{ Q }$ and the \textit{wen} count strictly decreases.
\end{enumerate}

\item \textbf{Principal cases for new.}
The principal cases for \textit{new} are where the rules \textit{close}, \textit{extrude new}, \textit{medial new} and \textit{new wen} rules interfere directly with the $\textit{new}$ quantifier at the root of the principal formula.
Three cases are presented.

\begin{enumerate}[label*=\textbf{.\arabic*}]
\item The first principal case for $\textit{new}$ is when the bottommost rule of a proof is an instance of the \textit{close} rules of the form
$
\vcenter{
\infer[]{
\new x. P \cpar \wen x. Q \cpar R
}{
\mathopen{\new x.} \left( P \cpar Q \right) \cpar R
}}
$,
where $\vdash \mathopen{\new x.} \left( P \cpar Q \right) \cpar R$.
By the induction hypothesis, there exist formulae $U$ and $V$ such that $\vdash P \cpar Q \cpar V$ and $\nfv{x}{U}$ and either $U = V$ or $U = \wen x. V$, and also we have derivation
$\vcenter{\infer[]{R}{U}}$.
Furthermore, the size of the proof of $P \cpar Q \cpar V$ is no larger than the size of the proof of $\mathopen{\new x.} \left( P \cpar Q \right) \cpar R$; hence strictly bounded by the size of the proof of $\mathopen{\new x.} P \cpar \wen x. Q \cpar R$.
In the case $U = V$, we have $
\vcenter{\infer[]{
\wen x. Q \cpar V
}{
\mathopen{\wen x.} \left( Q \cpar V \right)
}}
$, since $\nfv{x}{U}$.
In the case $U = \wen x. V$, we have
$
\vcenter{\infer[]{
\wen x. Q \cpar \wen x. V
}{
\mathopen{\wen x.} \left( Q \cpar V \right)
}}
$.
Hence, by applying one of the above cases the following derivation
$
\vcenter{
\infer[]{
\wen x. Q \cpar R
}{
\infer[]{
\wen x. Q \cpar U
}{
\mathopen{\wen x.} \left( Q \cpar V \right)
}}}
$ can be constructed as required.
The principal case where the bottommost rule in a proof is the \textit{extrude new} rule follows a similar pattern.

\fullproof{
The second principal case for \textit{new} is when the bottommost rule of a proof is an instance of the \textit{extrude new} rule as follows, where $\nfv{x}{Q}$.
$
\vcenter{
\infer[]{
\new x. P \cpar Q \cpar R
}{
\mathopen{\new x.} \left( P \cpar Q \right) \cpar R
}}
$
where $\vdash \mathopen{\new x.} \left( P \cpar Q \right) \cpar R$ is provable.
By, the induction hypothesis, there exist formulae $U$ and $V$ where $\vdash P \cpar Q \cpar V$ and either $U = V$ or $U = \wen x. V$, and also
$\vcenter{\infer[]{R}{U}}$.
Furthermore, the size of the proof of $P \cpar Q \cpar V$ is bounded above by the size of the proof of $\mathopen{\new x.} \left( P \cpar Q \right) \cpar R$; thereby strictly bounded by the size of the proof of $\new x. P \cpar Q \cpar R$.
If $U = \wen x. V$, define $W = \mathopen{\wen x.} \left( Q \cpar V \right)$, and if $U = V$ define $W = Q \cpar V$.
In the cases where $U = \wen x. V$, since $\nfv{x}{Q}$, 
$\vcenter{\infer[]{Q \cpar \wen x. V}{\mathopen{\wen x.} \left( Q \cpar V \right)}}$, where the premise equals $W$.
In the cases where $U = V$, $Q \cpar U = W$.
Hence the derivation 
$
\vcenter{
\infer[]{
Q \cpar R
}{
\infer[]{
Q \cpar U
}{
W
}}}
$
can be constructed, as required.
}

\item Consider the second principal case for \textit{new} where the \textit{medial new} rule is the bottommost rule of a proof of the form
\[
\infer[\mbox{such that $\vdash \mathopen{\new \vec{y}.} \left(\new x. P \cseq \new x. Q\right) \cpar R$.}]{
\mathopen{\new x. \new \vec{y}.}\left( P \cseq Q \right) \cpar R
}{
\mathopen{\new \vec{y}.} \left( \new x. P \cseq \new x. Q \right) \cpar R
}
\]
The $\vec{y}$ is required to handle cases induced by equivariance.
By applying the induction hypothesis repeatedly, 
there exists $\vec{z}$ and $\hat{R}$ such that $\vec{z} \subseteq \vec{y}$ and
$\nfv{\vec{y}}{\wen \vec{z}. \hat{R} }$ and $\vdash \left( \new x. P \cseq \new x. Q \right) \cpar \hat{R}$, and also $\vcenter{\infer[]{R}{\hat{R}}}$.
Furthermore, the size of the proof of $\left( \new x. P \cseq \new x. Q \right) \cpar \hat{R}$ is bounded above by the size of the proof of $\mathopen{\new \vec{y}.} \left( \new x. P \cseq \new x. Q \right) \cpar R$.
By the induction hypothesis, there exist $S_i$ and $T_i$ such that $\vdash \new x. P \cpar S_i$ and $\vdash \new x. Q \cpar T_i$, for $1 \leq i \leq n$, and $n$-ary killing context such that 
$\vcenter{
\infer[]{\hat{R}}{\tcontext{ S_1 \cseq T_1, S_2 \cseq T_2, \hdots, S_n \cseq T_n }}}
$.
Furthermore, the size of the proofs of $\new x. P \cpar S_i$ and $\new x. Q \cpar T_i$ are bounded above by the size of the proof of $\left(\new x. P \cseq \new x. Q\right) \cpar R$.
%
By the induction hypothesis again, there exist $U^i$ and $\hat{U}^i$ such that $\vdash P \cpar \hat{U}^i$ and $\nfv{x}{U^i}$ and either $U^i = \hat{U}^i$ or $U^i = \wen x. \hat{U}^i$,
and also
$
\vcenter{\infer[]{S_i}{U^i}}
$.
Also by the induction hypothesis, there exist $V^i$ and $\hat{V}^i$ such that $\vdash Q \cpar \hat{V}^i$ and $\nfv{x}{V^i}$ and either $V^i = \hat{V}^i$ or $V^i = \wen x. \hat{V}^i$, 
and also
$
\vcenter{\infer[]{T_i}{V^i}}
$.
Now define $W$ and $\hat{W}$ such that
$
\hat{W} = \wen \vec{z}. \tcontext{ \hat{U}^i \cseq \hat{V}^i \colon 1 \leq i \leq n }
$
and, 
if for all $1 \leq i \leq n$, $U^i = \hat{U}^i$ and $V^i = \hat{V}^i$, then $W = \hat{W}$;
otherwise $W = \wen x. \hat{W}$.
%
Hence for each $i$, one of the following derivations holds.
\begin{itemize}
\item 
$U^i = \hat{U}^i$ and $V^i = \hat{V}^i$ hence
$U^i \cseq V^i = \hat{U}^i \cseq \hat{V}^i$.

\item
If $U^i = \wen x. \hat{U}^i$ and $V^i = \hat{V}^i$, hence $\nfv{x}{V^i}$, by the \textit{left wen} rule
$
\vcenter{
\infer[]{
\wen x. \hat{U}^i \cseq \hat{V}^i
}{
\mathopen{\wen x.} \left( \hat{U}^i \cseq \hat{V}^i \right)
}}
$.

\item
If $U^i = \hat{U}^i$, hence $\nfv{x}{\hat{U}^i}$, and $V^i = \wen x. \hat{V}^i$, by the \textit{right wen} rule
$
\vcenter{
\infer[]{
U^i \cseq \wen x. \hat{V}^i
}{
\mathopen{\wen x.} \left( \hat{U}^i \cseq \hat{V}^i \right)
}}
$.

\item
Otherwise by the \textit{suspend} rule $
\vcenter{\infer[]{
\wen x. \hat{U}^i \cseq \wen x. \hat{V}^i
}{
\mathopen{\wen x.} \left( \hat{U}^i \cseq \hat{V}^i \right)
}}
$
\end{itemize}
If for all $i$ such that $1 \leq i \leq n$, $U^i = \hat{U}^i$ and $V^i = \hat{V}^i$ then $W = \hat{W}$. Otherwise, by Lemma~\ref{lemma:commute},
$
\vcenter{
\infer[]{
\wen \vec{z}. \tcontext{
 U^i \cseq V^i \colon 1 \leq i \leq n
}
}{
\wen \vec{z}. \wen x. \tcontext{
 \hat{U}^i \cseq \hat{V}^i \colon 1 \leq i \leq n
}
}}$, where the premise is equialent to $W$.
Thereby the  derivation below left can be constructed, and furthermore, using Lemma~\ref{lemma:close}, 
the proof below right can also be constructed.
\[
\infer[]{
R
}{
\infer[]{
\wen \vec{z}. \hat{R}
}{
\infer[]{
\wen \vec{z}. \tcontext{ S_i \cseq T_i \colon 1 \leq i \leq n }
}{
\infer[]{
\wen \vec{z}. \tcontext{ U^i \cseq V^i \colon 1 \leq i \leq n }
}{
W
}}}}
\qquad
\infer[]{
\mathopen{\new \vec{y}.} \left( P \cseq Q \right) \cpar \hat{W}
}{
\infer[]{
\mathopen{\new \vec{y}.} \left(\left( P \cseq Q \right) \cpar
\tcontext{
 \hat{U}^i \cseq \hat{V}^i \colon 1 \leq i \leq n
}
\right)
}{
\infer[]{
\mathopen{\new \vec{y}.} 
\tcontext{
 \left( P \cseq Q \right) \cpar \left(\hat{U}^i \cseq \hat{V}^i\right) \colon 1 \leq i \leq n
}
}{
\infer[]{
\mathopen{\new \vec{y}.} 
\tcontext{
 \left( P \cpar \hat{U}^i \right) \cseq \left( Q \cpar \hat{V}^i\right) \colon 1 \leq i \leq n
}
}{
\infer[]{
\mathopen{\new \vec{y}.} 
\tcontext{
 \cunit \colon 1 \leq i \leq n
}
}{
\cunit
}}}}}
\]

By Lemma~\ref{lemma:bound},
$\size{\hat{W}} \preceq \size{R}$; hence $\size{ \mathopen{\new \vec{y}.} \left( P \cseq Q \right) \cpar \hat{W} } \prec \size{ \mathopen{\new x. \new \vec{y}.} \left( P \cseq Q \right) \cpar R }$ since the \textit{new count} strictly decreases, as required.
\smallskip

\item Consider the third principal  case for \textit{new} where the bottommost rule of a proof is the \textit{new wen} rule of the form
\[
\infer[\mbox{, where $\vdash \new \vec{z}. \wen y. \new x. P \cpar Q$.} ]{
\new x. \new \vec{z}. \wen y. P \cpar Q
}{
\new \vec{z}. \wen y. \new x. P \cpar Q
}
\]
By applying the induction hypothesis repeatedly, 
there exist $\vec{w}$ and $\hat{Q}$ such that $\vec{w} \subseteq \vec{z}$
and $\nfv{\vec{z}}{\wen \vec{w} .\hat{Q}}$ and $\vdash \wen y. \new x. P \cpar \hat{Q}$, and also $\vcenter{\infer[]{Q}{\wen \vec{w}. \hat{Q}}}$. Furthermore, the size of the proof of $\wen y. \new x. P \cpar \hat{Q}$ is bounded above by the size of the proof of $\new \vec{z}. \wen y.\new x. P \cpar Q$.
By the induction hypothesis, there exist $R$ and $S$ such that $\nfv{x}{R}$ and $\vdash \new x. P \cpar S$ and either $R = S$ or $R = \new y. S$, and also 
$
\vcenter{
\infer[]{\hat{Q}}{R}}
$.
Furthermore, the size of the proof of $\new x. P \cpar S$ is bounded above by the size of the proof of $\wen y. \new x. P \cpar Q$, hence strictly bounded above by the size of the proof of $\new x. \wen y. P \cpar Q$ enabling the induction hypothesis.
By the induction hypothesis again, there exist $U$ and $V$ such that $\nfv{x}{U}$ and $\vdash P \cpar V$ and either $U = V$ or $U = \wen x. V$, and also
$
\vcenter{
\infer[]{S}{U}}
$.

Let $W$ and $\hat{W}$ be defined such that, if $R = \new y. S$, then $\hat{W} = \new y. V$;
or, if $R = S$, then $\hat{W} = V$.
If $V = U$ then define $W = \wen \vec{w}. \hat{W}$.
If $U = \wen x. V$, then define $W = \wen x. \wen \vec{w}. \hat{W}$.
There are four scenarios for constructing a derivation with premise $W$ and conclusion $\wen \vec{w}. R$.
\begin{itemize}
\item In the case $V = U$ and $R = \new y. S$ then $\wen \vec{w}. \new y. U = W$.

\item If $V = U$ and $R = S$ then $\wen \vec{w}. U = W$.

\item If both $U = \wen x. V$ and $R = \new y. S$ hold, then 
we have 
\[
\infer[\mbox{, where the premise is $W$.}]{
\wen \vec{w}.R
}{
\infer[]{
 \wen \vec{w}. \new y. \wen x. V 
}{
\wen x. \wen \vec{w}. \new y. V
}}
\]

\item If both $U = \wen x. V$ and $R = S$ then 
$
\vcenter{
\infer[]{
\wen \vec{w}.R
}{
 \wen \vec{w}. U
}}
$,
where the premise is equivalent to 
$W$.
\end{itemize}
Thereby, by applying one of the above cases,  we have 
$\vcenter{
\infer[.]{
\hat{Q}
}{
\infer[]{
\wen \vec{w}. Q
}{
\infer[]{
\wen \vec{w}. R
}{
W
}}}}
$

In the case that $\hat{W} = \new y. V$,  the left most derivation below holds. 
In the case, $\hat{W} = V$ and $\nfv{y}{V}$ the middle derivation below holds.  
Hence in either case, appealing to Lemma~\ref{lemma:close}, the proof below right can be constructed:
\[
\infer[]{
\wen y. P \cpar \new y. V
}{
\mathopen{\new y.} \left( P \cpar V \right)
}
\qquad
\infer[]{
\wen y. P \cpar \hat{W}
}{
\infer[]{
\mathopen{\wen y.} \left( P \cpar V \right)
}{
\mathopen{\new y.} \left( P \cpar V \right)
}}
\qquad
\infer[]{
\new \vec{z}. \wen y. P \cpar \wen \vec{w}. \hat{W}
}{
\infer[]{
\mathopen{\new \vec{z}.} \left( \wen y. P \cpar \hat{W} \right)
}{
\infer[]{
\mathopen{\new \vec{z}. \new y.} \left( P \cpar V \right)
}{
\infer[]{
\new \vec{z}. \new y. \cunit
}{
\cunit
}}}}
\]

\noindent Furthermore, by Lemma~\ref{lemma:bound}, $\size{ \wen \vec{w}. \hat{W} } \preceq \size{ Q }$.
Hence $\size{\wen y. P \cpar \wen \vec{w}. \hat{W}} \prec \size{ \new x. \new \vec{z}. \wen y. P \cpar Q }$ since the \textit{new} count strictly decreases.

\end{enumerate}

\item \textbf{Principal cases for seq.} There are two forms of principal cases for \textit{seq}.
The first case, induced by the \textit{sequence} rule, is the case that forces the \textit{medial}, \textit{medial1} and \textit{medial new} rules.
The other cases are induced by the \textit{suspend}, \textit{left wen} and \textit{right wen} rules (which are forced as a knock on effect of the \textit{medial new} rule).

\begin{enumerate}[label*=\textbf{.\arabic*}]
\item Consider the first principal case for \textit{seq}.
The difficulty in this case is that, due to associativity of \textit{seq}, the \rseq rule may be applied in several ways when there are multiple occurrences of \textit{seq}.
Consider a principal formula of the form $\left(T_0 \cseq T_1\right) \cseq T_2$, where we aim to split the formula around the second \textit{seq} operator. The difficulty is that the bottommost rule may be an instance of the \rseq rule applied between $T_0$ and $T_1 \cseq T_2$. Symmetrically, the principal formula may be of the form $T_0 \cseq \left(T_1 \cseq T_2\right)$ but the bottommost rule may be an instance of the \rseq rule applied between $T_0 \cseq T_1$ and $T_2$.
In the following analysis, only the former case is considered; the symmetric case follows a similar pattern.
The principal formula is $\left(T_0 \cseq T_1\right) \cseq T_2$ and the bottommost rule is an instance of the \textit{sequence} rule of the form
\[
\infer[]{
\left( T_0 \cseq T_1 \cseq T_2 \right) \cpar \left( U \cseq V \right) \cpar W
}{
\left(\left( T_0 \cpar U \right) \cseq \left( \left(T_1 \cseq T_2\right) \cpar V \right)\right) \cpar W
}
\] 
where $T_0 \not\equiv \cunit$, $T_2 \not\equiv \cunit$ (otherwise splitting is trivial), and either $U \not\equiv \cunit$ or $V \not\equiv \cunit$ (otherwise the \rseq rule cannot be applied);
and also $\vdash \left(\left( T_0 \cpar U \right) \cseq \left( \left(T_1 \cseq T_2\right) \cpar V \right)\right) \cpar W$.
By the induction hypothesis, 
there exist $P_i$ and $Q_i$ such that $\vdash T_0 \cpar U \cpar P_i$ and $\vdash \left(T_1 \cseq T_2\right) \cpar V \cpar Q_i$ hold, for $1 \leq i \leq n$, and an $n$-ary killing context $\tcontext{}$ such that 
\[
\infer[.]{W}{
\tcontext{ P_1 \cseq Q_1, \hdots, P_n \cseq Q_n }
}
\]
Furthermore,
the size of the proof of formula
$
\left(T_1 \cseq T_2\right) \cpar V \cpar Q_i
$
is bounded above by the size of the proof of
$\left(\left( T_0 \cpar U \right) \cseq \left( \left(T_1 \cseq T_2\right) \cpar V \right)\right) \cpar W$, hence the induction hypothesis is enabled.
By the induction hypothesis, 
there exists $R^i_j$ and $S^i_j$ such that $\vdash T_1 \cpar R^i_j$ and $\vdash T_2 \cpar S^i_j$, for $1 \leq j \leq m_i$, and $m_i$-ary killing context $\tcontextn{i}{}$ such that
\[
\infer[.]{
V \cpar Q_i
}{
\tcontextn{i}{R^i_1 \cseq S^i_1, \hdots, R^i_{m_i} \cseq S^i_{m_i}}
}
\]
Furthermore, by Lemma~\ref{lemma:medial} there exist killing contexts $\tcontextmn{i}{0}{}$ and $\tcontextmn{i}{1}{}$ and sets of integers $J^i \subseteq \left\{1, \hdots, n\right\}$, $K^i \subseteq \left\{1, \hdots, n\right\}$ such that
\[
\infer[.]{
\tcontextn{i}{R^i_1 \cseq S^i_1, \hdots, R^i_{m_i} \cseq S^i_{m_i}}
}{
\tcontextmn{i}{0}{R^i_j \colon j \in J^i}
 \cseq
\tcontextmn{i}{1}{S^i_k \colon k \in K^i}
}
\]
Thereby, the following derivation can be constructed.
\[
\infer[]{
\left( U \cseq V \right) \cpar W
}{
\infer[]{
\left( U \cseq V \right) \cpar \tcontext{ P_1 \cseq Q_1, \hdots, P_n \cseq Q_n } 
}{
\infer[]{
 \tcontext{ \left( U \cseq V \right) \cpar \left( P_1 \cseq Q_1 \right), \hdots, \left( U \cseq V \right) \cpar \left( P_n \cseq Q_n \right) } 
}{
\infer[]{
\tcontext{ \left( U \cpar P_1 \right) \cseq \left( V \cpar Q_1 \right), \hdots, \left( U \cpar P_n \right) \cseq \left( V \cpar Q_n \right) } 
}{
\infer[]{
 \tcontext{ \left( U \cpar P_i \right) \cseq \tcontextn{i}{R^i_j \cseq S^i_j \colon 1 \leq j \leq m_i } \colon 1 \leq i \leq n } 
}{
\tcontext{
 \left( U \cpar P_i \right) \cseq \tcontextmn{i}{0}{R^i_j \colon j \in J^i} \cseq \tcontextmn{i}{1}{S^i_k \colon k \in K^i}
 \colon 1 \leq i \leq n 
} 
}}}}}
\]
Furthermore, the following two proofs can be constructed.
\[
\infer[]{
T_2 \cpar \tcontextn{i}{ S^i_j \colon 1 \leq j \leq m_i }
}{
\infer[]{
\tcontextn{i}{ T_2 \cpar S^i_j \colon 1 \leq j \leq m_i }
}{
\infer[]{
\tcontextn{i}{ \cunit \colon 1 \leq j \leq m_i }
}{
\cunit
}}}
\qquad\qquad
\infer[]{
\left(T_0 \cseq T_1\right) \cpar \left(\left( U \cpar P_i \right) \cseq \tcontextn{i}{R^i_j \colon 1 \leq j \leq m_i }\right)
}{
\infer[]{
\left(T_0 \cpar U \cpar P_i\right) \cseq \left(T_1 \cpar \tcontextn{i}{R^i_j \colon 1 \leq j \leq m_i }\right)
}{
\infer[]{
T_1 \cpar \tcontextn{i}{R^i_j \colon 1 \leq j \leq m_i }
}{
\infer[]{
\tcontextn{i}{T_1 \cpar R^i_j \colon 1 \leq j \leq m_i }
}{
\infer[]{
\tcontextn{i}{\cunit \colon 1 \leq j \leq m_i }
}{
\cunit
}}}}}
\]
By Lemma~\ref{lemma:bound}, 
\[\size{ \tcontext{
 \left( U \cpar P_1 \right) \cseq \tcontextmn{i}{0}{R^i_j \colon j \in J^i} \cseq \tcontextmn{i}{1}{S^i_k \colon k \in K^i}
 \colon 1 \leq i \leq n 
} }  \preceq \size{ \left( U \cseq V \right) \cpar W }
\]
which are also upper bounds for
$
\size{ \tcontextmn{i}{0}{R^i_j \colon j \in J^i} 
}$
and
$\size{
\tcontextmn{i}{1}{S^i_k \colon k \in K^i}
}$.
Furthermore, $T_0 \not\equiv \cunit$ and $T_2 \not\equiv \cunit$ both $\occ{ T_0 } \mstrict \occ{ T_0 \cseq T_1 \cseq T_2 }$ and  $\occ{ T_2 } \mstrict \occ {T_0 \cseq T_1 \cseq T_2 }$
Hence the sizes of the above proofs of 
$T_2 \cpar \tcontextn{i}{ S^i_j \colon 1 \leq j \leq m_i }$ 
and 
\[
\left(T_0 \cseq T_1\right) \cpar \left(\left( U \cpar P_i \right) \cseq \tcontextn{i}{R^i_j \colon 1 \leq j \leq m_i }\right)
\]
 are strictly less than the size of the proof of $\left( T_0 \cseq T_1 \cseq T_2 \right) \cpar \left( U \cseq V \right) \cpar W$.

\item Consider the principal case for \textit{seq} where the bottommost rule of a proof is an instance of the \textit{suspend} rule of the form
\[
\infer[\mbox{, where $\vdash \left(P_0\cseq \mathopen{\wen x.} \left( P_1\cseq P_2  \right) \cseq P_3 \right) \cpar Q$ holds.}]{
\left(P_0\cseq \wen x. P_1\cseq \wen x. P_2  \cseq P_3 \right) \cpar Q
}{
\left(P_0\cseq \mathopen{\wen x.} \left( P_1\cseq P_2  \right) \cseq P_3 \right) \cpar Q
}
\] 
By induction, there exist $U^0_i$ and $U^1_i$ such that $\vdash P_0\cpar U^0_i$ and $\vdash \left( \mathopen{\wen x.} \left( P_1\cseq P_2  \right) \cseq P_3 \right) \cpar U^1_i$ hold, for $1 \leq i \leq n$,
and $n$-ary killing context $\tcontext{}$ such that 
$
\vcenter{
\infer[]{Q}{
\tcontext{ U^0_i \cseq U^1_i \colon 1 \leq i \leq n }
}}
$.
Furthermore the size of the proof of $\left( \mathopen{\wen x.} \left( P_1\cseq P_2  \right) \cseq P_3 \right) \cpar U^1_i$ is bounded above by the size of the proof of $\left(P_0\cseq \wen x. P_1\cseq \wen x. P_2  \cseq P_3 \right) \cpar Q$.
By induction again, there exist $V^i_j$ and $W^i_j$ such that $\vdash \mathopen{\wen x.} \left( P_1\cseq P_2  \right) \cpar V^i_j$ and $\vdash P_3 \cpar W^i_j$, for $1 \leq j \leq m_i$, and $m_i$-ary killing context $\tcontextn{i}{}$ such that the following derivation holds.
$
\vcenter{
\infer[]{
U^1_i
}{
\tcontextn{i}{ V^i_j \cseq W^i_j \colon 1 \leq j \leq m_i }
}}
$.
Furthermore, the size of the proof of $\mathopen{\wen x.} \left( P_1\cseq P_2  \right) \cpar V^i_j$ is bounded by the size of the proof of $\left( \mathopen{\wen x.} \left( P_1\cseq P_2  \right) \cseq P_3 \right) \cpar U^1_i$. 
By applying the induction hypothesis again, there exist $R^i_j$ and $\hat{R}^i_j$ such that $\nfv{x}{R^i_j}$ and $\vdash \left(P_1\cseq P_2 \right) \cpar \hat{R}^i_j$ and either $R^i_j = \hat{R}^i_j$ or $R^i_j = \new x. \hat{R}^i_j$, and also
$
\vcenter{
\infer[]{
V^i_j
}{
 R^i_j
}}
$.
Furthermore, the size of the proof of $\left(P_1\cseq P_2 \right) \cpar \hat{R}^i_j$ is bounded above by the size of the proof of $\left( \mathopen{\wen x.} \left( P_1\cseq P_2  \right) \cseq P_3 \right) \cpar U^1_i$.
By a fourth induction, there exist $S^{i,j}_k$ and $T^{i,j}_k$ such that both $\vdash P_1\cpar S^{i,j}_k$ and $\vdash P_2  \cpar T^{i,j}_k$ hold, for $1 \leq k \leq \ell^{i,j}$, and $\ell^{i,j}$-ary killing context $\tcontextn{i,j}{}$ such that the following derivation holds:
\[
\infer[]{
\hat{R}^i_j
}{
\tcontextn{i,j}{ S^{i,j}_1 \cseq T^{i,j}_1, S^{i,j}_2 \cseq T^{i,j}_2, \hdots, S^{i,j}_{\ell^{i,j}} \cseq T^{i,j}_{\ell^{i,j}} }
}.
\]
By Lemma~\ref{lemma:medial}, there exists some $I^i_j \subseteq \left\{ 1 \hdots \ell^{i,j} \right\}$ and $J^i_j \subseteq \left\{ 1 \hdots \ell^{i,j} \right\}$ and killing contexts $\tcontextmn{i,j}{0}{}$ and $\tcontextmn{i,j}{1}{}$ such that
\[
\infer[]{
\hat{R}^i_j
}{
\infer[]{
\tcontextn{i,j}{ S^{i,j}_k \cseq T^{i,j}_k \colon 1 \leq k \leq \ell^{i,j} }
}{
\tcontextmn{i,j}{0}{S^{i,j}_k \colon k \in I^i_j } \cseq
\tcontextmn{i,j}{1}{T^{i,j}_k \colon k \in J^i_j }
}}.
\]
Define $\hat{S}^i_j$ and $\hat{T}^i_j$ as follows.
If $R^i_j = \hat{R}^i_j$, then 
\[
\hat{S}^i_j = \tcontextmn{i,j}{0}{S^{i,j}_k \colon k \in I^i_j }
\mbox{ and }
\hat{T}^i_j = \tcontextmn{i,j}{1}{T^{i,j}_k \colon k \in J^i_j };
\]
and hence, we can construct the derivation 
\[
\infer[]{
R^i_j
}{
\tcontextmn{i,j}{0}{S^{i,j}_k \colon k \in I^i_j } \cseq
\tcontextmn{i,j}{1}{T^{i,j}_k \colon k \in J^i_j }
}
\] where the premise equals $\hat{S}^i_j  \cseq \hat{T}^i_j$.
If however $R^i_j = \new x. \hat{R}^i_j$, then define
\[
\hat{S}^i_j = \new x. \tcontextmn{i,j}{0}{S^{i,j}_k \colon k \in I^i_j }
\mbox{ and }
\hat{T}^i_j = \new x. \tcontextmn{i,j}{1}{T^{i,j}_k \colon k \in J^i_j };
\]
and hence, the derivation 
\[
\infer[]{
R^i_j
}{
\infer[]{
\mathopen{\new x.} \left(
 \tcontextmn{i,j}{0}{S^{i,j}_k \colon k \in I^i_j } \cseq
 \tcontextmn{i,j}{1}{T^{i,j}_k \colon k \in J^i_j }
\right)
}{
\hat{S}^i_j \cseq
\hat{T}^i_j
}}
\]
can be constructed.
By Lemma~\ref{lemma:medial}, for some $K^i \subseteq \left\{ 1\hdots m_i \right\}$, $L^i \subseteq \left\{ 1\hdots m_i \right\}$ and killing contexts $\tcontextmn{i}{0}{}$ and $\tcontextmn{i}{1}{}$, we obtain the following derivation:
\[
\infer[]{
\tcontextn{i}{
    \hat{S}^i_j \cseq 
    \hat{T}^i_j \cseq 
    W^i_j \colon 1 \leq j \leq m_i
}
}{
\tcontextmn{i}{0}{
    \hat{S}^i_j
    \colon j \in K^i
} \cseq
\tcontextmn{i}{1}{
    \hat{T}^i_j \cseq 
    W^i_j \colon j \in L^i
}
}
\]
By using the above derivations we can construct the following derivation:
\[
\infer[]{
Q
}{
\infer[]{
\tcontext{ U^0_i \cseq U^1_i \colon 1 \leq i \leq n }
}{
\infer[]{
\tcontext{ U^0_i \cseq \tcontextn{i}{ V^i_j \cseq W^i_j \colon 1 \leq j \leq m_i } \colon 1 \leq i \leq n }
}{
\infer[]{
\tcontext{ U^0_i \cseq \tcontextn{i}{ 
                                    R^i_j 
                                                     \cseq W^i_j \colon 1 \leq j \leq m_i
                                    } \colon 1 \leq i \leq n }
}{
\infer[]{
\tcontext{
           U^0_i \cseq \tcontextn{i}{ 
                                                                        \hat{S}^i_j \cseq 
                                                                        \hat{T}^i_j 
                                                     \cseq W^i_j \colon 1 \leq j \leq m_i
                                    }  \colon 1 \leq i \leq n }
}{
\tcontext{ 
           U^0_i \cseq
             \tcontextmn{i}{0}{
    \hat{S}^i_j 
    \colon j \in K^i
} \cseq
\tcontextmn{i}{1}{
    \hat{T}^i_j \cseq 
    W^i_j \colon j \in L^i
}
           \colon 1 \leq i \leq n  }
}}}}}
\]

\noindent Consider whether the judgement $\vdash \wen x. P_1 \cpar \hat{S}^i_j$ holds.
We have two cases: in the first, $\hat{S}^i_j = \tcontextmn{i,j}{0}{S^{i,j}_k \colon k \in I^i_j }$ and $\nfv{x}{\hat{S}^i_j}$; in the second $\hat{S}^i_j = \new x. \tcontextmn{i,j}{0}{S^{i,j}_k \colon k \in I^i_j }$. In each case, one of the following derivations can be respectively constructed.
\[
\infer[]{
\wen x. P_1\cpar \tcontextmn{i,j}{0}{S^{i,j}_k \colon k \in I^i_j }
}{
\infer[]{
\new x. P_1\cpar \tcontextmn{i,j}{0}{S^{i,j}_k \colon k \in I^i_j }
}{
\mathopen{\new x.}\left( P_1\cpar \tcontextmn{i,j}{0}{S^{i,j}_k \colon k \in I^i_j } \right)
}}
\qquad\qquad
\infer[]{
\wen x. P_1\cpar \new x. \tcontextmn{i,j}{0}{S^{i,j}_k \colon k \in I^i_j }
}{
\mathopen{\new x.} \left( P_1\cpar \tcontextmn{i,j}{0}{S^{i,j}_k \colon k \in I^i_j } \right)
}
\]
Similarly, consider whether judgement $\vdash \wen x. P_2 \cpar \hat{T}^i_j$ holds.
Either we have 
\[
\hat{T}^i_j = \tcontextmn{i,j}{1}{T^{i,j}_k \colon k \in J^i_j } \mbox{ and } \nfv{x}{\hat{T}^i_j}; 
\]
or we have $\hat{T}^i_j = \new x. \tcontextmn{i,j}{1}{T^{i,j}_k \colon k \in J^i_j }$. In each case, one of the following derivations holds, respectively.
\[
\infer[]{
\wen x. P_2\cpar \tcontextmn{i,j}{1}{T^{i,j}_k \colon k \in J^i_j }
}{
\infer[]{
\new x. P_2\cpar \tcontextmn{i,j}{1}{T^{i,j}_k \colon k \in J^i_j }
}{
\mathopen{\new x.} \left( P_2\cpar \tcontextmn{i,j}{1}{T^{i,j}_k \colon k \in J^i_j } \right)
}}
\qquad\qquad
\infer[]{
\wen x. P_2\cpar 
\hat{T}^i_j
}{
\mathopen{\new x.} \left( P_2\cpar \tcontextmn{i,j}{1}{T^{i,j}_k \colon k \in J^i_j } \right)
}
\]
Thereby, by applying one of the above cases for each $i$ and $j$, 
the following two proofs exist.
\[
\infer[]{
\left( P_0 \cseq \wen x. P_1\right) \cpar
\left(
    U^0_i \cseq
    \tcontextmn{i}{0}{
         \hat{S}^i_j 
         \colon j \in K^i
    }
\right)
}{
\infer[]{
\left( P_0\cpar U^0_i \right) \cseq
\left(
    \wen x. P_1 \cpar
    \tcontextmn{i}{0}{
             \hat{S}^i_j 
      \colon j \in K^i
    }
\right)
}{
\infer[]{
    \wen x. P_1 \cpar
    \tcontextmn{i}{0}{
             \hat{S}^i_j 
      \colon j \in K^i
    }
}{
\infer[]{
    \tcontextmn{i}{0}{
         \wen x. P_1 \cpar
         \hat{S}^i_j 
      \colon j \in K^i
    }
}{
\infer[]{
    \tcontextmn{i}{0}{
         \mathopen{\new x.} \left( P_1\cpar \tcontextmn{i,j}{0}{S^{i,j}_k \colon k \in I^i_j } \right)
      \colon j \in K^i
    }
}{
\infer[]{
    \tcontextmn{i}{0}{
         \mathopen{\new x.} \tcontextmn{i,j}{0}{ P_1\cpar S^{i,j}_k  \colon k \in I^i_j }
      \colon j \in K^i
    }
}{
\infer[]{
    \tcontextmn{i}{0}{
         \new x. \tcontextmn{i,j}{0}{ \cunit \colon k \in I^i_j }
      \colon j \in K^i
    }
}{
\cunit
}}}}}}}
\qquad
\infer[]{
\left( \wen x. P_2  \cseq P_3 \right) \cpar
\left(
\tcontextmn{i}{1}{
    \hat{T}^i_j \cseq 
    W^i_j \colon j \in L^i
}\right)
}{
\infer[]{
\tcontextmn{i}{1}{
    \left( \wen x. P_2  \cseq P_3 \right) \cpar
    \left( \hat{T}^i_j \cseq W^i_j \right)
    \colon j \in L^i
}
}{
\infer[]{
\tcontextmn{i}{1}{
    \left( \wen x. P_2  \cpar \hat{T}^i_j\right) \cseq
    \left( P_3 \cpar W^i_j \right)
    \colon j \in L^i
}
}{
\infer[]{
\tcontextmn{i}{1}{
    \wen x. P_2  \cpar \hat{T}^i_j
    \colon j \in L^i
}
}{
\infer[]{
\tcontextmn{i}{1}{
    \mathopen{\new x.} \left( P_2\cpar \tcontextmn{i,j}{1}{T^{i,j}_k \colon k \in J^i_j } \right)
    \colon j \in L^i
}
}{
\infer[]{
\tcontextmn{i}{1}{
    \new x. \tcontextmn{i,j}{1}{ P_2\cpar T^{i,j}_k  \colon k \in J^i_j }
    \colon j \in L^i
}
}{
\infer[]{
\tcontextmn{i}{1}{
    \new x. \tcontextmn{i,j}{1}{ \cunit \colon k \in J^i_j }
    \colon j \in L^i
}
}{
\cunit
}}}}}}}
\]
Furthermore, by Lemma~\ref{lemma:bound}, 
\[
\size{ 
    U^0_i \cseq
    \tcontextmn{i}{0}{
         \hat{S}^i_j 
         \colon j \in K^i
    }
} \preceq \size{ Q }
\mbox{ and }
\size{ 
 \tcontextmn{i}{1}{
    \hat{T}^i_j \cseq 
    W^i_j \colon j \in L^i
} } \preceq \size{Q}.
\]
Hence, sizes
\[
\size{
\left( P_0 \cseq \wen x. P_1\right) \cpar
\left(
    U^0_i \cseq
    \tcontextmn{i}{0}{
         \hat{S}^i_j 
         \colon j \in K^i
    }
\right)
}
\mbox{ and }
\size {
\left( \wen x. P_2  \cseq P_3 \right) \cpar
\left(
\tcontextmn{i}{1}{
    \hat{T}^i_j \cseq 
    W^i_j \colon j \in L^i
}\right)
}
\]
are strictly bounded above by 
$\size{ \left(P_0\cseq \wen x. P_1\cseq \wen x. P_2  \cseq P_3 \right) \cpar Q }$, as required.
Cases for \textit{left wen} and \textit{right wen} rules are similar.

\end{enumerate}

\item \textbf{Principal case for times.}
There is only one principal case for \textit{times}, which does not differ significantly from the corresponding case in \textsf{BV} and its extensions. A proof may begin with an instance of the \textit{switch} rule of the form
\[
\infer[\mbox{where $\vdash \left(T_0 \tensor U_0 \tensor \left( \left(T_1 \tensor U_1\right) \cpar V \right)\right) \cpar W$,}]{
\left( T_0 \tensor T_1 \tensor U_0 \tensor U_1 \right) \cpar V \cpar W
}{
\left(T_0 \tensor U_0 \tensor \left( \left(T_1 \tensor U_1\right) \cpar V \right)\right) \cpar W
}
\]
such that $T_0 \tensor U_0 \not\equiv \cunit$ and $V \not\equiv \cunit$ (otherwise the \textit{switch} rule cannot be applied), and also $T_0 \tensor T_1 \not\equiv \cunit$ and $U_0 \tensor U_1 \not\equiv \cunit$ (otherwise splitting holds trivially).
By the induction hypothesis, there exist $R_i$ and $S_i$ such that $\vdash \left(T_0 \tensor U_0\right) \cpar R_i$ and $\vdash \left(T_1 \tensor U_1\right) \cpar V \cpar S_i$ hold, for $1 \leq i \leq n$, and an $n$-ary killing context $\tcontext{}$ such that 
derivation $\vcenter{\infer{W}{\tcontext{ R_1 \cpar S_1, \hdots, R_n \cpar S_n }}}$ holds.
Furthermore $\size{\left(T_0 \tensor U_0\right) \cpar R_i}$ and $\size{\left(T_1 \tensor U_1\right) \cpar V \cpar S_i}$ are bounded above by $\size{\left(T_0 \tensor U_0 \tensor \left( \left(T_1 \tensor U_1\right) \cpar V \right)\right) \cpar W}$.
Hence, by the induction hypothesis twice there exist formulae $P^{i,0}_j$, $Q^{i,0}_j$, $P^{i,1}_k$ and $Q^{i,1}_k$ such that $\vdash T_0 \cpar P^{i,0}_j$, $\vdash U_0 \cpar Q^{i,0}_j$, $\vdash T_1 \cpar P^{i,1}_k$ and $\vdash U_1 \cpar Q^{i,1}_k$, for $1 \leq j \leq m^0_i$ and $1 \leq k \leq m^1_i$, and $m^0_i$-ary killing context $\tcontextmn{0}{i}{}$ and $m^1_i$-ary killing context $\tcontextmn{1}{i}{}$ such that derivations
\[
\vcenter{
\infer[]{
R_i}{
\tcontextmn{0}{i}{P^{i,0}_j \cpar Q^{i,0}_j \colon 1 \leq j \leq m^0_i}
}
}
\quad
\mbox{ and }
\quad
\vcenter{
\infer[]{
V \cpar S_i
}{
\tcontextmn{1}{i}{P^{i,1}_k \cpar Q^{i,1}_k \colon 1 \leq k \leq m^1_i}
}}
\] 
can be constructed.
Thereby the following derivation can be constructed.
\[
\infer[]{
V \cpar W
}{
\infer[]{
V \cpar \tcontext{ R_i \cpar S_i \colon 1 \leq i \leq n } 
}{
\infer[]{
\tcontext{ R_i \cpar V \cpar S_i \colon 1 \leq i \leq n }
}{
\infer[]{
\tcontext{
  \tcontextmn{0}{i}{P^{i,0}_j \cpar Q^{i,0}_j \colon 1 \leq j \leq m^0_i}
   \cpar
  \tcontextmn{1}{i}{P^{i,1}_k \cpar Q^{i,1}_k \colon 1 \leq k \leq m^1_i}
\colon 1 \leq i \leq n }
}{
\infer[]{
 \tcontext{ \tcontextmn{1}{i}{
  \tcontextmn{0}{i}{P^{i,0}_j \cpar Q^{i,0}_j \colon 1 \leq j \leq m^0_i}
  \cpar
  P^{i,1}_k \cpar Q^{i,1}_k \colon 1 \leq k \leq m^1_i
} \colon 1 \leq i \leq n }
}{
 \tcontext{ \tcontextmn{1}{i}{
  \tcontextmn{0}{i}{
  P^{i,0}_j \cpar P^{i,1}_k \cpar Q^{i,0}_j \cpar Q^{i,1}_k
  \colon 1 \leq j \leq m^0_i
  }
 \colon 1 \leq k \leq m^1_i
 } \colon 1 \leq i \leq n }
}}}}}
\]
Now observe that the following two proofs can be constructed.
\[
\infer[]{
\left(T_0 \tensor T_1\right)
\cpar
P^{i,0}_j \cpar P^{i,1}_k
}{
\infer[]{
\left(T_0 \cpar P^{i,0}_j\right) \tensor \left(T_1 \cpar P^{i,1}_k\right)
}{
\cunit
}}
\qquad\qquad\qquad\qquad
\infer[]{
\left(U_0 \tensor U_1\right)
\cpar
Q^{i,0}_j \cpar Q^{i,1}_k
}{
\infer[]{
\left(U_0 \cpar Q^{i,0}_j\right) \tensor \left(U_1 \cpar Q^{i,1}_k\right)
}{
 \cunit
}}
\]
Furthermore,
$\occ{ T_0 \tensor T_1 } \mstrict \occ{ T_0 \tensor T_1 \tensor U_0 \tensor U_1 }$
and 
$\occ{ U_0 \tensor U_1 } \mstrict \occ{ T_0 \tensor T_1 \tensor U_0 \tensor U_1 }$,
since $T_0 \tensor T_1 \not\equiv \cunit$ and $U_0 \tensor U_1 \not\equiv \cunit$.
Also, by Lemma~\ref{lemma:bound}, the following inequality holds.
\[
\size{
 \tcontext{ \tcontextmn{1}{i}{
  \tcontextmn{0}{i}{
  P^{i,0}_j \cpar P^{i,1}_k \cpar Q^{i,0}_j \cpar Q^{i,1}_k
  \colon 1 \leq j \leq m^0_i
  }
 \colon 1 \leq k \leq m^1_i
 } \colon 1 \leq i \leq n }
} \preceq \size{ V \cpar W }
\]
Hence both
$
\size{ P^{i,0}_j \cpar P^{i,1}_k }
\preceq \size{ V \cpar W }$
and $
\size{ Q^{i,0}_j \cpar Q^{i,1}_k }
\preceq \size{ V \cpar W }$ hold.
Thereby the size of each of the above proofs is strictly bounded above by the size of the proof of $\left(T_0 \tensor T_1 \tensor U_0 \tensor U_1 \right) \cpar V \cpar W$.

\item \textbf{Principal cases for with.}
There are three forms of principal case where the \textit{with} operator is directly involved in the bottommost rules.
Note that in \textsf{MAV} the \textit{with} operator is separated from the core splitting lemma, much like universal quantification in this paper.
However, in the case of \textsf{MAV1} the \textit{left name} and \textit{right name} rules introduce inter-dependencies between nominals and \textit{with}, forcing cases for \textit{with} to be checked in this lemma.

\begin{enumerate}[label*=\textbf{.\arabic*}]

\item Consider the principal case involving the \textit{extrude} rule.
In this case, the bottommost rule is of the form 
\[
\infer[\mbox{where $\vdash \left(P  \cpar R\right)\wwith\left( Q \cpar R\right) \cpar S$ holds.}]{
\left(P \wwith Q\right) \cpar R \cpar S
}{
\left(P  \cpar R\right)\wwith\left( Q \cpar R\right) \cpar S
}
\]
Now, by the induction hypothesis, 
since $\vdash \left(P  \cpar R\right)\wwith\left( Q \cpar R\right) \cpar S$ holds, we have that 
$\vdash P  \cpar R \cpar S$ and $\vdash Q \cpar R \cpar S$ hold, as required.

\item Consider the principal case involving the \textit{left name} rule.
In this case, the bottommost rule is of the form 
\[
\infer[\mbox{, where $\nfv{x}{Q}$, such that $\vdash \mathopen{\wen x.}\left( P \wwith Q \right) \cpar R$.}]{
\left( \wen x.P \wwith Q \right) \cpar R
}{
\mathopen{\wen x.}\left( P \wwith Q \right) \cpar R
}
\] 
By the induction hypothesis, there exist $S$ and $\hat{S}$ such that
$\vcenter{\infer[]{R}{S}}$ and $\nfv{x}{S}$ and $\vdash \left(P \wwith Q\right) \cpar \hat{S}$ 
and either $S = \hat{S}$ or $S = \mathopen{\new x.}\hat{S}$. Furthermore, the size of the proof of $\left(P \wwith Q\right) \cpar \hat{S}$ is strictly less than the size of the proof of $\left( \wen x.P \wwith Q \right) \cpar R$, since the \textit{wen} count strictly decreases, and by Lemma~\ref{lemma:bound}, $\size{\hat{S}} \leq \size{R}$.
By the induction hypothesis again, $\vdash P \cpar \hat{S}$ and $\vdash Q \cpar \hat{S}$ hold.

Now if $S = \hat{S}$ then $\nfv{x}{\hat{S}}$ and $\vdash Q \cpar S$ holds immediately, whereas 
$\vdash \wen x.P \cpar R$ is proved as below left. Otherwise, $S = \new x.\hat{S}$ and
$\vdash \wen x.P \cpar R$ is proved in the middle derivation below, whereas $\vdash  Q \cpar S$ is proved in the right derivation below.
\[
\infer[]{
\wen x.P \cpar R 
}{
\infer[]{
\wen x.P \cpar \hat S
}{
\infer[]
{\wen x.\left(P \cpar \hat S \right)}
{
\infer[]{
\mathopen{\new x.}\left(P \cpar \hat{S} \right)
}{
\infer[]{
\mathopen{\new x.}\cunit
}{
\cunit
}}}}}
\qquad \qquad
\infer[]{
\wen x.P \cpar R 
}{
\infer[]{
\wen x.P \cpar \new x. \hat S
}{
\infer[]{
\mathopen{\new x.}\left(P \cpar \hat{S} \right)
}{
\infer[]{
\mathopen{\new x.}\cunit
}{
\cunit
}}}}
\qquad
\qquad
\infer[.]{
Q \cpar \wen x.\hat{S}
}{
\infer[]{
\mathopen{\wen x.}\left( Q \cpar \hat{S} \right)
}{
\infer[]{
 \mathopen{\new x.}\left( Q \cpar \hat{S} \right)
}{
\infer[]{
 \new x.\cunit 
}{
 \cunit
}}}}
\]


Hence, in either case, $\vdash Q \cpar S$ 
and since $
\vcenter{
\infer[]{
Q \cpar R
}{
Q \cpar S
}}$,
we have that $\vdash Q \cpar R$ holds.
Thereby $\vdash \wen x.P \cpar R$ and $\vdash Q \cpar R$ hold, as required.
The case for the \textit{left name} rule, where $\new$ replaces $\wen$ is similar; as are the cases for the \textit{right name} and \textit{with name} rules.

\item Consider the principal case involving the \textit{medial} rule.
In this case, the bottommost rule of a proof is of the form 
\[
\infer[\mbox{such that $\vdash \left(\left( P \wwith R \right) \cseq \left( Q \wwith S \right)\right) \cpar W$ holds.}]{
\left(\left( P \cseq Q \right) \wwith \left( R \cseq S \right)\right) \cpar W
}{
\left(\left( P \wwith R \right) \cseq \left( Q \wwith S \right)\right) \cpar W
}
\]
By the induction hypothesis, for $1 \leq i \leq n$ there exists $U_i$ and $V_i$ such that $\vdash \left( P \wwith R \right) \cpar U_i$ and $\vdash \left( Q \wwith S \right) \cpar V_i$ hold, and $n$-ary killing context $\tcontext{}$ such that 
$
\vcenter{\infer[]{W}{\tcontext{U_i \cseq V_i \colon 1 \leq i \leq n}}}$.
Furthermore, the size of the proofs of $\left( P \wwith R \right) \cpar U_i$ and $\left( Q \wwith S \right) \cpar V_i$ are strictly less than the size of the proof of $\left(\left( P \wwith R \right) \cseq \left( Q \wwith S \right)\right) \cpar W$. Hence by the induction hypothesis again, $\vdash P \cpar U_i$, $\vdash R \cpar U_i$, $\vdash Q \cpar V_i$ and $\vdash S \cpar V_i$.
Hence we can construct the following two proofs, as required.
\[
\infer[]{
\left( P \cseq Q \right) \cpar W
}{
\infer[]{
\left( P \cseq Q \right) \cpar \tcontext{U_i \cseq V_i \colon 1 \leq i \leq n}
}{
\infer[]{
\tcontext{\left( P \cseq Q \right) \cpar \left(U_i \cseq V_i\right) \colon 1 \leq i \leq n}
}{
\infer[]{
\tcontext{\left( P \cpar U_i \right) \cseq \left(Q \cpar V_i\right) \colon 1 \leq i \leq n}
}{
\infer[]{
\tcontext{\cunit  \colon 1 \leq i \leq n}
}{
\cunit
}}}}}
\qquad
\infer[]{
\left( R \cseq S \right) \cpar W
}{
\infer[]{
\left( R \cseq S \right) \cpar \tcontext{U_i \cseq V_i \colon 1 \leq i \leq n}
}{
\infer[]{
\tcontext{\left( R \cseq S \right) \cpar \left(U_i \cseq V_i\right) \colon 1 \leq i \leq n}
}{
\infer[]{
\tcontext{\left( R \cpar U_i \right) \cseq \left(S \cpar V_i\right) \colon 1 \leq i \leq n}
}{
\infer[]{
\tcontext{\cunit \colon 1 \leq i \leq n}
}{
\cunit
}}}}}
\]

\end{enumerate}

\item \textbf{Commutative cases induced by equivariance.}
%
There are certain commutative cases induced by the \textit{equivariance} rule for nominal quantifiers.
These are the cases that force the rules \textit{all name}, \textit{with name}, \textit{left name} and \textit{right name} to be included.
Notice also that \textit{equivariance} for \textit{new} is required when handling the case induced by \textit{equivariance} for \textit{wen}; hence \textit{equivariance} for both nominal quantifiers must be explicit structural rules rather than properties derived from each other.

\begin{enumerate}[label*=\textbf{.\arabic*}]
\item Consider the commutative case for \textit{wen} where the bottommost rule of a proof is an instance of the \textit{close} rule of following form
\[
\infer[\mbox{,where $\vdash \mathopen{\new y.}\left( \wen x. P \cpar Q \right) \cpar R$, $\nfv{y}{R}$ and $\nfv{x}{R}$.} ]{
\wen x. \wen y. P \cpar \new y. Q \cpar R
}{
\mathopen{\new y.}\left( \wen x. P \cpar Q \right) \cpar R
}
\]
Notice that $\wen x$ is the principal connective but the \textit{close} rule is applied to $\wen y$ behind the principal connective.
Thus we desire some formula $R'$ such that $
\vcenter{
\infer[]{
\new y. Q \cpar R
}{R'}}$
and $\nfv{x}{R'}$ and either $\vdash \wen y. P \cpar R'$ or there exists $Q'$ such that $R' = \new x.Q'$ and $\vdash \wen y. P \cpar Q'$, and the size of $ \wen y. P \cpar R'$ is strictly smaller than $\wen x. \wen y. P \cpar \new y. Q \cpar R$.
By the induction hypothesis, there exist $S$ and $T$ such that $\nfv{y}{S}$ and $\vdash \wen x. P \cpar Q \cpar T$ and either $S = T$ or $S = \wen y. T$ and the derivation
$
\vcenter{
\infer{R}{S}}
$ holds.
Furthermore the size of the proof of $\wen x. P \cpar Q \cpar T$ is bounded above by the size of the proof of $\mathopen{\new y.}\left( \wen x.P \cpar Q \right) \cpar R$; hence strictly bounded by the size of the proof of $\wen x. \wen y. P \cpar \new y.Q \cpar R$.
Hence, by induction, there exist $U$ and $V$ such that $\vdash P \cpar V$ and $\nfv{x}{U}$ and either $U = V$ or $U = \new x. V$ the derivation 
$
\vcenter{
\infer[]{
Q \cpar T
}{
U
}}
$ holds.
Observe that if $S = T$, then 
$
\vcenter{
\infer[]{
\new y. Q \cpar S
}{
\mathopen{\new y.} \left( Q \cpar T \right)
}}$, since $\nfv{y}{S}$.
If $S = \wen y. T$ then 
$
\vcenter{
\infer[]{
\new y. Q \cpar \wen y. T
}{
\mathopen{\new y.} \left( Q \cpar T \right)
}}$.
Thereby the following derivation can be constructed, where if $U = V$ then $W = \new y. V$ and if $U = \new x. V$ then $W = \new x. \new y. V$, and also the premise is equivalent to $W$ by \textit{equivariance} for \textit{new}:
$
\vcenter{
\infer[]{
\new y. Q \cpar R
}{
\infer[]{
\new y. Q \cpar S
}{
\infer[]{
\mathopen{\new y.} \left( Q \cpar T \right)
}{
\new y. U
}}}}
$.
Furthermore, the following proof can be constructed 
$
\vcenter{
\infer[]{
\wen y. P \cpar \new y. V
}{
\infer[]{
\mathopen{\new y.} \left(P \cpar V\right)
}{
\infer[]{
 \new y. \cunit
}{
\cunit
}}}}$
and, by Lemma~\ref{lemma:bound}, $\size{\new y. V} \preceq \size{\new y. Q \cpar R}$ hence 
$\size{ \wen y. P \cpar \new y. V } \prec \size{ \wen x. \wen y. P \cpar \new y. Q \cpar R }$, as required.

\item Consider a commutative case for \textit{new} induced by \textit{equivariance} for \textit{new}, where the bottommost rule is an instance of \textit{extrude new} of the form
\[
\infer[, \mbox{where $\nfv{y}{Q}$ and $\vdash \mathopen{\new y.} \left( \new x. P \cpar Q\right) \cpar R$.}]{
\new x. \new y. P \cpar Q \cpar R
}{
\mathopen{\new y.} \left( \new x. P \cpar Q\right) \cpar R
}
\]
By the induction hypothesis, there exist $S$ and $T$ such that $\nfv{y}{S}$ and $\vdash \new x. P \cpar Q \cpar T$ and either $S = T$ or $S = \wen y. T$, where
$\vcenter{\infer[]{R}{S}}$.
Furthermore, the size of the proof of $\new x. P \cpar Q \cpar T$ is bound above by the size of the proof of $\mathopen{\new y.} \left( \new x. P \cpar Q\right) \cpar R$, hence strictly bound above by the size of the proof of $\new x. \new y. P \cpar Q \cpar R$.
Hence, by induction again, there exist $U$ and $V$ such that $\nfv{x}{U}$ and $\vdash P \cpar V$ and either $U = V$ or $U = \wen x. V$, and also $\infer[]{Q \cpar T}{U}$.
Now define $\hat{W}$ and $W$ as follows.
If $S = T$ then let $\hat{W} = V$.
If $S = \wen y. T$ then let $\hat{W} = \wen y. V$.
If $U = V$ then let $W = \hat{W}$.
If $U = \wen x. V$ then let $W = \wen x. \hat{W}$.
Now observe if $S = T$ then
$
\vcenter{
\infer[]{
Q \cpar R
}{
\infer[]{
Q \cpar T
}{
U
}}}$ and $U = W$.
For $S = \wen y. T$ observe
$
\vcenter{
\infer[]{
Q \cpar R
}{
\infer[]{
Q \cpar \wen y. T
}{
\infer[]{
\mathopen{\wen y.} \left( Q \cpar T \right)
}{
\mathopen{\wen y.} U
}}}}
$, since $\nfv{y}{Q}$,
and if $U = V$ then $\mathopen{\wen y.} U = \hat{W}$,
while if $U = \wen x. V$ then $\wen y.U \equiv \wen x.\hat{W}$, by \textit{equivariance} for \textit{wen}.
Hence in all cases $\vcenter{\infer[]{Q \cpar R}{W}}$
and, since $\nfv{y}{Q}$ and $\nfv{y}{T}$, we can arrange that $\nfv{y}{W}$.
Now, for the cases where $\hat{W} = V$, we have $\nfv{y}{V}$,
and hence $\vcenter{\infer[]{\new y. P \cpar V}{\mathopen{\new y.}\left( P \cpar V \right)}}$.
Also if $\hat{W} = \wen y.V$, then $
\vcenter{
\infer[]{
\new y. P \cpar \wen y.V
}{
\mathopen{\new y.}\left( P \cpar V \right)
}}$. Hence in either case we can construct the proof $
\vcenter{
\infer[]{
\new y. P \cpar \hat{W}
}{
\infer[]{
\mathopen{\new y.}\left( P \cpar V \right)
}{
\infer[]{
 \new y.\cunit
}{ \cunit
}}}}$.
Furthermore, $\size{ \new y. P \cpar \hat{W} } \prec \size{ \new x. \new y. P \cpar Q \cpar R }$,
since by Lemma~\ref{lemma:bound} $\size{ \hat{W} } \preceq \size{ Q \cpar R }$.

\item Similar commutative cases for \textit{wen} and \textit{new} as principal formulae are induced by \textit{equivariance} where the bottommost rule in a proof is an instance of the \textit{close}, \textit{right wen} or \textit{suspend} rules.
In each case, the quantifier involved in the bottommost rule appears behind the principal connective and is propagated in front of the principal connective using \textit{equivariance}.
\end{enumerate}

\fullproof{
A similar commutative case for \textit{wen} is induced where the bottommost rule in a proof is an instance of the \textit{left wen} rule of the form 
$
\infer[]{
\wen x. \wen y. P \cpar Q \cpar R
}{
\mathopen{\wen y.} \left( \wen x. P \cpar Q \right) \cpar R
}
$
where $\nfv{y}{Q}$ and $\vdash \mathopen{\wen y.} \left( \wen x. P \cpar Q \right) \cpar R$.
By the induction hypothesis, there exist $S$ and $T$ such that $\nfv{y}{S}$ and $\vdash \wen x. P \cpar Q \cpar T$ and either $S = T$ or $S = \new y. T$ such that $
\infer[]{R}{S}$.
Furthermore, the size of the proof of $\wen x. P \cpar Q \cpar T$ is bounded by the size of the proof of $\mathopen{\wen y.} \left( \wen x. P \cpar Q \right) \cpar R$ hence strictly bounded by the size of the proof of $\wen x. \wen y. P \cpar Q \cpar R$.
Hence, by the induction hypothesis again, there exist $U$ and $V$ such that $\nfv{x}{U}$ and $\vdash P \cpar V$ and either $U = V$ or $U = \new x. V$, and derivation $
\infer[]{Q \cpar T}{U}$ holds.
If $S = \new y. T$ let $\hat{W} = \new y. V$ otherwise $\hat{W} = V$. If $U = \new x. V$ let $W = \new x. \hat{W}$, otherwise $W = \hat{W}$.
Thereby the following derivations hold:
\begin{itemize}
\item If $S = \new y. T$ and $U = \new x. V$ then
$
\infer[]{
Q \cpar S
}{
\infer[]{
\mathopen{\new y.} \left( Q \cpar T \right)
}{
 \new y. U \equiv W
}}$.

\item If $S = \new y. T$ and $U = V$ then $
\infer[]{Q \cpar S
}{
\infer[]{
\mathopen{\new y.} \left( Q \cpar T \right) 
}{
 \new y. U
}}$, where the premise is equivalent to $W$.

\item If $S = T$ then $Q \cpar S = Q \cpar T$
and $
\infer[]{
Q \cpar T
}{
 U \equiv W
}$.
\end{itemize}
Hence in any of the above cases, $
\infer[]{
Q \cpar R
}{
\infer[]{
 Q \cpar S
}{ W
}}$.
Now, if $\hat{W} = \new y. V$, then $
\infer[]{
\wen y. P \cpar \hat{W}
}{
 \new y. \left(P \cpar V\right)
}$;
and if $\hat{W} = V$ then $S = T$, hence $
\infer[]{
\wen y. P \cpar \hat{W}
}{
\infer[]{
 \new y. P \cpar \hat{W}
}{
 \mathopen{\new y.} \left( P \cpar V\right)
}}$, since $\nfv{y}{V}$.
Clearly $
\infer[]{
\mathopen{\new y.} \left( P \cpar V\right)
}{
\infer[]{
\new y. \cunit
}{
 \cunit
}}
$, hence $\vdash \wen y.P \cpar \hat{W}$ holds.  Furthermore, $\size{\wen y. P \cpar \hat{W}} \prec \size{\wen x. \wen y. P \cpar Q \cpar R}$, since by Lemma~\ref{lemma:bound} $\size{\hat{W}} \preceq \size{ Q \cpar R }$.

The third commutative case for \textit{wen} induced by \textit{equivariance} is where the bottommost rule is an instance of the \textit{suspend} rule of the form 
$
\infer[]{
\wen x. \wen y. P \cpar \wen y. Q \cpar R
}{
\mathopen{\wen y.} \left( \wen x. P \cpar Q \right) \cpar R
}
$, where $\vdash \mathopen{\wen y.} \left( \wen x. P \cpar Q \right) \cpar R$.
By the induction hypothesis, there exist $S$ and $T$ where $\nfv{y}{S}$ and $\vdash \wen x. P \cpar Q \cpar T$ and either $S = T$ or $S = \new y. T$ such that $\infer{R}{S}$.
Furthermore, the size of the proof of $\wen x. P \cpar Q \cpar T$ is bounded above by the size of the proof of $\mathopen{\wen y.} \left( \wen x. P \cpar Q \right) \cpar R$, hence strictly bounded above by the size of the proof of $\wen x. \wen y. P \cpar \wen y. Q \cpar R$.
Hence by the induction hypothesis again, there exist $U$ and $V$ where $\nfv{x}{U}$ and $\vdash P \cpar V$ and either $U = V$ or $U = \new x. V$, such that $\infer{Q \cpar T}{U}$.
If $U = V$ then let $W = \new y. V$, and if $U = \new x. V$ then  let $W = \new x. \new y. V$.
Also observe that whether we have $S = T$ or $S = \new y. T$, we have $
\infer[]{
\wen y. Q \cpar S
}{
\mathopen{\wen y.} \left( Q \cpar T \right)
}$.
Thereby the following derivation can be constructed:
$\infer[]{
\wen y. Q \cpar R
}{
\infer[]{
\mathopen{\wen y.} \left( Q \cpar T \right)
}{
\wen y. U \equiv W
}}$.
Furthermore, the following proof can be constructed $
\infer[]{
\wen y. P \cpar \new y. V
}{
\infer[]{
\mathopen{\new y.} \left( P \cpar V \right)
}{
\infer[]{
\new y. \cunit
}{
\cunit
}}}$;
and the size of the proof of $\size{ \wen y. P \cpar \new y. V } \prec \size{ \wen x. \wen y. P \cpar \wen y. Q \cpar R}$, since $\size{ \new y. V } \preceq \size{ \wen y. Q \cpar R }$ by Lemma~\ref{lemma:bound}.

Consider the first commutative case for \textit{new} induced by equivariance, where the bottommost rule is of the form
$
\infer[]{
\new x. \new y. P \cpar \wen y. Q \cpar R
}{
\mathopen{\new y.} \left( \new x. P \cpar Q\right) \cpar R
}
$,
where $\vdash \mathopen{\new y.} \left( \new x. P \cpar Q\right) \cpar R$.
By the induction hypothesis, there exist $S$ and $T$ such that $\nfv{y}{S}$ and $\vdash \wen x. P \cpar Q \cpar T$ and either $S = T$ or $S = \wen y .T$, where
$\infer[]{R}{S}$.
Furthermore, the size of the proof of $\wen x. P \cpar Q \cpar T$ is bound above by the size of the proof of $\mathopen{\new y.} \left( \new x. P \cpar Q\right) \cpar R$, hence strictly bound above by the size of the proof of $\new x. \new y. P \cpar \wen y. Q \cpar R$.
Hence, by induction again, there exist $U$ and $V$ such that $\nfv{x}{U}$ and $\vdash P \cpar V$ and either $U = V$ or $U = \wen x. V$, and also $\infer[]{Q \cpar T}{U}$.
If $U = V$ then let $W = \wen y. V$, and if $U = \wen x. V$ then let $W = \wen x. \wen y. V$. 
Also, regardless of whether $S = T$ or $S = \wen y. T$, $\infer[]{\wen y. Q \cpar S}{\mathopen{\wen y.} \left( Q \cpar S \right)}$. Hence derivation 
$
\infer[]{
\wen y. Q \cpar R
}{
\infer[]{
\wen y. Q \cpar S
}{
\infer[]{
\mathopen{\wen y.} \left( Q \cpar T \right)
}{
\wen y. U
}}}$ can be constructed, where the premise is equivalent to $W$.
Furthermore, $
\infer[]{
\new y. P \cpar \wen y. V
}{
\infer[]{
\mathopen{\new y.} \left( P \cpar V \right)
}{
\infer[]{
\new y. \cunit 
}{ \cunit
}}}$,
and $\size{ \new y. P \cpar \wen y. V } \prec \size{ \new x. \new y. P \cpar \wen y. Q \cpar R }$,
since by Lemma~\ref{lemma:bound} $\size{ \wen y. V } \preceq \size{ \wen y. Q \cpar R }$.

}

\item \textbf{Regular commutative cases.} 
As in every splitting lemma, there are numerous \textit{commutative} cases where the bottommost rule in a proof does not directly involve the principal connective.
For each principal formula handled by this splitting lemma (\textit{new}, \textit{wen}, \textit{with}, \textit{seq} and \textit{times}) there are commutative cases induced by \textit{new}, \textit{wen}, \textit{all}, \textit{with} and \textit{times} and also two commutative cases induced by \textit{seq}.
Thus there are 35 similar commutative cases to check, that all follow a pattern, hence only a representative selection of four cases are presented that make special use of $\alpha$-conversion and the rules \textit{new wen},
\textit{all name}, \textit{with name}, \textit{left name} and \textit{right name}.
Further, representative cases appear in the proof for existential quantifiers.

\begin{enumerate}[label*=\textbf{.\arabic*}]

\item Consider the commutative case where the principal formula is $\new x. P$ and the bottommost rule is an instance of \textit{extrude new} but applied to a distinct \textit{new} quantifier $\new y. Q$, as in the following rule instance 
\[
\infer[\mbox{, where $\nfv{y}{\new x. P \cpar R}$.}]{
\new x. P \cpar \new y. Q \cpar R \cpar S
}{
\mathopen{\new y.} \left( \new x. P \cpar Q \cpar R\right) \cpar S
}
\] 
Also assume, by $\alpha$-conversion, that $x \not= y$.
By induction, there exist $T$ and $U$ such that $\vdash \new x. P \cpar Q \cpar R \cpar U$, $\nfv{y}{T}$ and either $T = U$ or $T = \wen y. U$, and also
$\vcenter{\infer{S}{T}}$.
Furthermore, the size of the proof of $\new x. P \cpar Q \cpar R \cpar U$ is bounded above by the size of the proof of $\mathopen{\new y.} \left( \new x. P \cpar Q \cpar R\right) \cpar S$ and hence strictly bounded above by the size of the proof of $\new x. P \cpar \new y. Q \cpar R \cpar S$, enabling the induction hypothesis.
Hence, by the induction hypothesis, there exist formulae $V$ and $\hat{V}$ such that $\vdash P \cpar \hat{V}$ and $\nfv{x}{V}$ and either $V = \hat{V}$ or $V = \wen x. \hat{V}$, and also $
\vcenter{
\infer[]{
Q \cpar R \cpar U
}{V}}$.
Define $W$ such that if $V = \hat{V}$ then $W = \new y. \hat{V}$ and if $V = \wen x. \hat{V}$ then $W = \wen x. \new y. \hat{V}$.
Hence if $V = \wen x. \hat{V}$ then $
\vcenter{
\infer[]{
\new y. V
}{
 \wen x. \new y. \hat{V}
}}$ by applying the \textit{new wen} rule, where the premise equals $W$.
If $V = \hat{V}$ then $\new y. V = W$. In both cases, $\nfv{x}{W}$.
Now observe that either $T = U$ and $\nfv{y}{U}$, hence the derivation $(a)$ below holds;
or $T = \wen y. U$, hence the derivation $(b)$ below holds. Given these, the derivation $(c)$ can be constructed:
\[
\begin{array}{cccc}
\infer[]{
\new y. Q \cpar R \cpar T
}{
\mathopen{\new y.} \left(Q \cpar R \cpar U\right)
}
&
\infer[]{
\new y. Q \cpar R \cpar \wen y. U
}{
\infer[]{
 \mathopen{\new y.} \left(Q \cpar R\right) \cpar \wen y. U
}{
 \mathopen{\new y.} \left(Q \cpar R \cpar U\right)
}}
& 
\infer[]{
\new y. Q \cpar R \cpar S
}{
\infer[]{
\new y. Q \cpar R \cpar T
}{
\infer[]{
\mathopen{\new y.} \left( Q \cpar R \cpar U \right)
}{
\infer[]{
\new y. V
}{
W
}}}} &
\infer[]{
P \cpar \new y. \hat{V}
}{
\infer[]{
 \mathopen{\new y.} \left( P \cpar \hat{V} \right) 
}{
\infer[]{
 \new y. \cunit
}{
 \cunit
}}}\\
(a) & (b) & (c) & (d)
\end{array}
\]
Since $\nfv{y}{\new x. P \cpar R}$ and $x \not= y$, we have $\nfv{y}{P}$;
thereby the proof $(d)$ above can be constructed.
Furthermore, $\size{ P \cpar \new y. \hat{V} } \prec \size{ \wen x. P \cpar \new y. Q \cpar R \cpar S }$ since by Lemma~\ref{lemma:bound} $\size{ \new y. \hat{V} } \preceq \size{\new y. Q \cpar R \cpar S}$
and the \textit{wen count} strictly decreases.

\item Consider the commutative case for principal formula $\wen x. T$ where the bottommost rule is \textit{external}: 
\[
\infer[]{
\wen x. T \cpar \left( U \wwith V \right) \cpar W \cpar P
}{
\left(\left(\wen x. T \cpar U \cpar W\right) \wwith
\left(\wen x. T \cpar V \cpar W\right)\right)
\cpar P
}
\]
where $\vdash \left(\left(\wen x. T \cpar U \cpar W\right) \wwith \left(\wen x. T \cpar V \cpar W\right)\right) \cpar P$ holds.
By the induction hypothesis, we have that both $\vdash \wen x. T \cpar U \cpar W \cpar P$ and $\vdash \wen x. T \cpar V \cpar W \cpar P$ hold;
and furthermore the multiset inequalities 
\[
\begin{array}{rcl}
\occ{\wen x. T \cpar U \cpar W \cpar P} & \mstrict & \occ{ \wen x. T \cpar \left( U \wwith V \right) \cpar W \cpar P } \mbox{ and }\\
\occ{\wen x. T \cpar V \cpar W \cpar P} & \mstrict & \occ{ \wen x. T \cpar \left( U \wwith V \right) \cpar W \cpar P }
\end{array}
\]
hold.
Hence, by the induction hypothesis, there exist $Q$ and $\hat{Q}$ such that $\vdash T \cpar \hat{Q}$, $\nfv{x}{Q}$ and either $Q = \hat{Q}$ or $Q = \new x. \hat{Q}$.
Also, by the induction hypothesis, there exist $R$ and $\hat{R}$ such that
$\vdash T \cpar \hat{R}$, $\nfv{x}{R}$ and either $R = \hat{R}$ or $R = \new x. \hat{R}$.
Furthermore the two derivations 
$
\vcenter{
\infer[]{
U \cpar W \cpar P
}{ Q }}
$
and
$
\vcenter{
\infer[]{
V \cpar W \cpar P
}{ R }
}
$ hold.
Now define $S$ such that if $Q = \hat{Q}$ and $R = \hat{R}$ then $S = \hat{Q} \wwith \hat{R}$, and 
$S = \mathopen{\wen x.} \left( \hat{Q} \wwith \hat{R} \right)$ otherwise, observing that in either case $\nfv{x}{S}$.
In the case $Q = \wen x. \hat{Q}$ and $R = \wen x. \hat{R}$, by the \textit{with name} rule,
$
\vcenter{
\infer[]{
\wen x. \hat{Q} \wwith \wen x. \hat{R}
}{
\mathopen{\wen x.} \left( \hat{Q} \wwith \hat{R} \right)
}}$.
In the case $Q = \wen x. \hat{Q}$ and $R = \hat{R}$, by the \textit{left name} rule,
$
\vcenter{
\infer[]{
\wen x. \hat{Q} \wwith \hat{R}
}{
\mathopen{\wen x.} \left( \hat{Q} \wwith \hat{R} \right)
}}$.
In the case that $Q = \hat{Q}$ and $R = \wen x. \hat{R}$, by the \textit{right name} rule,
$
\vcenter{
\infer[]{
\hat{Q} \wwith \wen x. \hat{R}
}{
\mathopen{\wen x.} \left( \hat{Q} \wwith \hat{R} \right)
}}$.
Thereby the following derivation and proof can be constructed:
\[
\infer[]{
\left( U \wwith V \right) \cpar W \cpar P
}{
\infer[]{
\left(U \cpar W \cpar P\right) \wwith \left(V \cpar W \cpar P\right) 
}{
\infer[]{
Q \wwith R
}{
S
}}}
\qquad
\infer[.]{
T \cpar \left(\hat{Q} \wwith \hat{R}\right)
}{
\infer[]{
\left(T \cpar \hat{Q}\right) \wwith \left( T \cpar \hat{R}\right)
}{
\infer[]{
\cunit \wwith \cunit
}{
\cunit
}}}
\]
Furthermore, by Lemma~\ref{lemma:bound}, $\size{ S } \preceq \size{ \left( U \wwith V \right) \cpar W \cpar P }$; and,
since the \textit{wen} count strictly decreases,
$\size{ T \cpar \hat{Q} \wwith \hat{R} } \prec \size{\wen x. T \cpar \left( U \wwith V \right) \cpar W \cpar P}$.

\item Consider the commutative case where the principal formula is $\wen x. T$ and the bottommost rule is an instance of the
\textit{extrude1} rule of the form
\[
\infer[]{
\wen x. T \cpar \forall y. U \cpar V \cpar W
}{
\mathopen{\forall y.}\left( \wen x. T \cpar U \cpar V \right) \cpar W
}\]
 assuming $\nfv{y}{(\wen x. T \cpar V)}$
and $\vdash \mathopen{\forall y.}\left( \wen x. T \cpar U \cpar V \right) \cpar W$ holds.
By Lemma~\ref{lemma:universal}, for every variable $z$, $\vdash \left( \wen x. T \cpar U \cpar V \right)\sub{y}{z} \cpar W$ holds.
Furthermore, since $\nfv{y}{(\wen x. T \cpar V)}$, we have equivalence $\left( \wen x. T \cpar U \cpar V \right)\sub{y}{z} \cpar W \equiv \wen x. T \cpar U\sub{y}{z} \cpar V \cpar W$.
The strict multiset inequality $
\occ{ \wen x. T \cpar U\sub{y}{z} \cpar V \cpar W } \mstrict \occ{ \wen x. T \cpar \forall y. U \cpar V \cpar W }
$ holds.
Hence, by the induction hypothesis, for every variable $z$, there exist formulae $P^z$ and $Q^z$ such that $\vdash T \cpar Q^z$ and $\nfv{x}{P^z}$ and either $P^z = Q^z$ or $P^z = \new x. Q^z$, and also
$
\vcenter{
\infer[]{
U\sub{y}{z} \cpar V \cpar W
}{
P^z
}}
$.
Define $W^z$ such that if $P^z = Q^z$ then $W^z = \forall z. Q^z$ and if $P^z = \new x. Q^z$ then $W^z = \new x. \forall z. Q^z$. Hence if $P^z = \new x. Q^z$ then, since $\forall$ permutes with any quantifier using the \textit{all name} rule, $
\vcenter{
\infer[]{
\forall z. \new x. Q^z
}{
 \new x. \forall z. Q^z
}}$.
Hence, for a fresh $z$ such that $\nfv{z}{(\forall y. U \cpar V \cpar W)}$ and $\nfv{z}{T}$, 
the following derivations can be constructed:
\[
\infer[]{
\forall y. U \cpar V \cpar W
}{
\infer[]{
\mathopen{\forall z.} \left( U\sub{y}{z} \cpar V \cpar W \right) 
}{
\infer[]{
\forall z. P^z
}{
W^z
}}}
\qquad
\infer[]{
T \cpar \forall z. Q^z
}{
\infer[]{
 \mathopen{\forall z.} \left( T \cpar  Q^z\right) 
}{
\infer[]{
 \mathopen{\forall z.} \cunit 
}{
 \cunit
}}}
\]
Furthermore, $\size{ W^z } \preceq \size{ \forall y. U \cpar V \cpar W }$ by Lemma~\ref{lemma:bound}; hence 
\[
\size{ T \cpar \forall z. Q^z } \prec \size{ \wen x. T \cpar \forall y. U \cpar V \cpar W }
\] 
since the wen count strictly decreases.

\item Consider the commutative case where the principal connective is $\textit{wen}$ and the bottommost rule is an instance of the extrude new rule of the form 
\[
\infer[,]{
\wen x. P \cpar \new y. Q \cpar R \cpar S
}{
\mathopen{\new y.} \left( \wen x. P \cpar Q \cpar R\right) \cpar S
}
\]
 where $\nfv{y}{\wen x. P \cpar R}$ and also $x \not= y$, where the second condition can be achieved by $\alpha$-conversion.
By the induction hypothesis, there exist $T$ and $U$ such that $\vdash \wen x. P \cpar Q \cpar R \cpar U$, $\nfv{y}{T}$ and either $T = U$ or $T = \wen y. U$, and also
$\vcenter{\infer{S}{T}}$.
Furthermore, the size of the proof of $\wen x. P \cpar Q \cpar R \cpar U$ is bounded above by the size of the proof of $\mathopen{\new y.} \left( \wen x. P \cpar Q \cpar R\right) \cpar S$ and hence strictly bounded above by the size of the proof of $\wen x. P \cpar \new y. Q \cpar R \cpar S$, enabling the induction hypothesis.
Hence, by the induction hypothesis, there exist formulae $V$ and $\hat{V}$ such that $\vdash P \cpar \hat{V}$ and $\nfv{x}{V}$ and either $V = \hat{V}$ or $V = \new x. \hat{V}$, and also $
\vcenter{
\infer[]{
Q \cpar R \cpar U}{
 V
}}$.
Define $W$ such that if $V = \hat{V}$ then $W = \new y. \hat{V}$ and if $V = \new x. \hat{V}$ then $W = \new x. \new y. \hat{V}$.
Now observe that either we have that $T = U$ and $\nfv{y}{U}$ and hence
the derivation $(a)$ below left holds;  or we have that $T = \wen y. U$ and hence the derivation $(b)$ belw  holds. 
Hence,
by applying one of these cases, we have the derivation $(c)$ below, where the premise is equivalent to $W$.
\[
\begin{array}{cccc}
\infer[]{
\new y. Q \cpar R \cpar T
}{
 \mathopen{\new y.} \left(Q \cpar R \cpar U\right)
}
&
\infer[]{
\new y. Q \cpar R \cpar \wen y. U 
}{
\infer[]{
\mathopen{\new y.} \left(Q \cpar R\right) \cpar \wen y. U
}{
 \mathopen{\new y.} \left(Q \cpar R \cpar U\right)
}}
&
\infer[]{
\new y. Q \cpar R \cpar S
}{
\infer[]{
\new y. Q \cpar R \cpar T
}{
\infer[]{
\mathopen{\new y.} \left( Q \cpar R \cpar U \right)
}{
\new y. V
}}} &
\infer[.]{
P \cpar \new y. \hat{V}
}{
\infer[]{
 \mathopen{\new y.} \left( P \cpar \hat{V} \right)
}{
\infer[]{
 \new y. \cunit
}{
 \cunit
}}}
\\
(a)  & (b) & (c) & (d)
\end{array}
\]
Since $\nfv{y}{\wen x. P}$ and $x \not= y$, we have $\nfv{y}{P}$;
thereby the proof $(d)$ above can be constructed.
Furthermore, $\size{ P \cpar \new y.\hat{V} } \prec \size{ \wen x. P \cpar \new y. Q \cpar R \cpar S }$ since by Lemma~\ref{lemma:bound} 
\[
\size{ \new y. \hat{V} } \preceq \size{\new y. Q \cpar R \cpar S}
\]
and the \textit{wen count} strictly decreases.
\end{enumerate}

\fullproof{
Consider the commutative case when the \textit{wen} quantifier commutes with another \textit{wen} quantifier at the root of the principal formula.
In this case, the bottommost rule of a proof is of the form $\wen x. P \cpar \wen y. Q \cpar R \cpar \infer[]{
S}{
 \mathopen{\wen y.} \left( \wen x. P \cpar Q \cpar R\right) \cpar S
}$, where $\nfv{y}{\wen x. P \cpar R}$ and $x \not= y$, where the second condition can be achieved by $\alpha$-conversion.
By the induction hypothesis, there exist $T$ and $U$ such that $\vdash \wen x. P \cpar Q \cpar R \cpar U$, $\nfv{y}{T}$ and either $T = U$ or $T = \new y. U$, and also
$\infer{S}{T}$.
Furthermore, the size of the proof of $\wen x. P \cpar Q \cpar R \cpar U$ is bounded above by the size of the proof of $\mathopen{\wen y.} \left( \wen x. P \cpar Q \cpar R\right) \cpar S$ and hence strictly bounded above by the size of the proof of $\wen x. P \cpar \wen y. Q \cpar R \cpar S$, enabling the induction hypothesis.
Hence, by the induction hypothesis, there exist formulae $V$ and $\hat{V}$ such that $\vdash P \cpar \hat{V}$ and $\nfv{x}{V}$ and either $V = \hat{V}$ or $V = \new x. \hat{V}$, and also $
\infer[]{
Q \cpar R \cpar U 
}{ V }$.

Define $W$ such that if $V = \hat{V}$ then $W = \new y. \hat{V}$ and if $V = \new x. \hat{V}$ then $W = \new x. \new y. \hat{V}$. In either case, $\nfv{x}{W}$.
Now observe that either $T = U$ and $\nfv{y}{U}$ hence
$\infer[]{
\wen y. Q \cpar R \cpar T
}{
\infer[]{
 \mathopen{\wen y.} \left(Q \cpar R \cpar U\right) 
}{
 \mathopen{\new y.} \left(Q \cpar R \cpar U\right)
}}$;
or $T = \new y. U$ hence $
\infer[]{
\wen y. Q \cpar R \cpar \new y. U
}{
\infer[]{
 \mathopen{\wen y.} \left(Q \cpar R\right) \cpar \mathopen{\new y.} U 
}{
\mathopen{\new y.} \left(Q \cpar R \cpar U\right)
}}$.
Hence the following derivation can be constructed, by applying one of these cases:
$
\infer[]{
\wen y. Q \cpar R \cpar S
}{
\infer[]{
 \wen y. Q \cpar R \cpar T
}{
\infer[]{
\mathopen{\new y.} \left( Q \cpar R \cpar U \right)
}{
\new y. V
}}}
$, where the premise is equivalent to $W$.
Since $\nfv{y}{\wen x. P}$ and $x \not= y$ we have $\nfv{y}{P}$,
hence the following proof can be constructed:
$
\infer[]{
P \cpar \new y. \hat{V}
}{
\infer[]{
\mathopen{\new y.}\left( P \cpar \hat{V} \right)
}{
\infer[]{
\new y. \cunit 
}{
\cunit
}}}$.
Furthermore, $\size{ P \cpar \new y. \hat{V} } \prec \size{ \wen x. P \cpar \new y. Q \cpar R \cpar S }$ since by Lemma~\ref{lemma:bound} $\size{ \new y. \hat{V} } \preceq \size{\new y. Q \cpar R \cpar S}$
and the \textit{wen count} strictly decreases.
\smallskip

Consider the commutative case when the \textit{new} quantifier commutes with $P_0 \tensor P_1$ as the principal formula.
In this case the bottommost rule of a proof is of the form $
\infer[]{
\left( P_0 \tensor P_1 \right) \cpar \new y. Q \cpar R \cpar S
}{
\mathopen{\new y.} \left( \left( P_0 \tensor P_1 \right) \cpar Q \cpar R\right) \cpar S
}$,
assuming that $\nfv{y}{\left( P_0 \tensor P_1 \right) \cpar R}$.
By induction, there exist $T$ and $U$ such that $\vdash \left( P_0 \tensor P_1 \right) \cpar Q \cpar R \cpar U$, $\nfv{y}{T}$ and either $T = U$ or $T = \wen y. U$, and also
$\infer{S}{T}$.
Furthermore, the size of the proof of $\left( P_0 \tensor P_1 \right) \cpar Q \cpar R \cpar U$ is bounded above by the size of the proof of $\mathopen{\new y.} \left( \left( P_0 \tensor P_1 \right) \cpar Q \cpar R\right) \cpar S$ and hence strictly bounded above by the size of the proof of $\left( P_0 \tensor P_1 \right) \cpar \new y. Q \cpar R \cpar S$, enabling the induction hypothesis.
Hence, by the induction hypothesis, there exist formulae $V_i$ and $W_i$ such that $\vdash P_0 \cpar V_i$ and $\vdash P_1 \cpar W_i$, for $1 \leq i \leq n$, and also $n$-ary killing context $\tcontext{}$ such that
$
\infer[]{
Q \cpar R \cpar U 
}{
\tcontext{ V_i \cpar W_i \colon 1 \leq i \leq n }
}$.
Furthermore, the size of the proofs of $P_0 \cpar V_i$ and $P_1 \cpar W_i$ are bounded above by the size of the proof of $\left( P_0 \tensor P_1 \right) \cpar \new y. Q \cpar R \cpar S$.
Now observe that either $T = U$ and $\nfv{y}{U}$ hence
$\infer[]{
\new y. Q \cpar R \cpar T
}{
 \mathopen{\new y.} \left(Q \cpar R \cpar U\right)
}$;
or $T = \wen y. U$ hence $
\infer[]{
\new y. Q \cpar R \cpar \wen y. U
}{
\infer[]{
 \mathopen{\new y.} \left(Q \cpar R\right) \cpar \wen y. U
}{
 \mathopen{\new y.} \left(Q \cpar R \cpar U\right)
}}$.
Hence the following derivation can be constructed, by the above observations:
$
\infer[]{
\new y. Q \cpar R \cpar S
}{
\infer[]{
\new y. Q \cpar R \cpar T
}{
\infer[]{
\mathopen{\new y.} \left( Q \cpar R \cpar U \right)
}{
\new y. \tcontext{ V_i \cpar W_i \colon 1 \leq i \leq n }
}}}
$.
Observe that $\new y. \tcontext{ }$ is a $n$-ary killing context as required.
\smallskip

Consider the commutative case when the \textit{wen} quantifier commutes with $P_0 \tensor P_1$ as the principal formula.
In this case the bottommost rule of a proof is of the form $
\infer[]{
\left( P_0 \tensor P_1 \right) \cpar \wen y. Q \cpar R \cpar S
}{
 \mathopen{\wen y.} \left( \left( P_0 \tensor P_1 \right) \cpar Q \cpar R\right) \cpar S
}$, where $\nfv{y}{\left( P_0 \tensor P_1 \right) \cpar R}$.
By induction, there exist $T$ and $U$ such that $\vdash \left( P_0 \tensor P_1 \right) \cpar Q \cpar R \cpar U$, $\nfv{y}{T}$ and either $T = U$ or $T = \new y. U$, and also
$\infer{S}{T}$.
Furthermore, the size of the proof of $\left( P_0 \tensor P_1 \right) \cpar Q \cpar R \cpar U$ is bounded above by the size of the proof of $\mathopen{\wen y.} \left( \left( P_0 \tensor P_1 \right) \cpar Q \cpar R\right) \cpar S$ and hence strictly bounded above by the size of the proof of $\left( P_0 \tensor P_1 \right) \cpar \wen y. Q \cpar R \cpar S$, enabling the induction hypothesis.
Hence, by the induction hypothesis, there exist formulae $V_i$ and $W_i$ such that $\vdash P_0 \cpar V_i$ and $\vdash P_1 \cpar W_i$, for $1 \leq i \leq n$, and also $n$-ary killing context $\tcontext{}$ such that
$\infer[]{
Q \cpar R \cpar U
}{
\tcontext{ V_i \cpar W_i \colon 1 \leq i \leq n }
}$.
Furthermore, the size of the proofs of $P_0 \cpar V_i$ and $P_1 \cpar W_i$ are bounded above by the size of the proof of $\left( P_0 \tensor P_1 \right) \cpar \wen y. Q \cpar R \cpar S$.
Now observe that either $T = U$ and $\nfv{y}{U}$ hence
$\infer[]{
\wen y. Q \cpar R \cpar T
}{
\infer[]{
\mathopen{\wen y.} \left(Q \cpar R \cpar U\right)
}{
 \mathopen{\new y.} \left(Q \cpar R \cpar U\right)
}}$;
or $T = \new y. U$ hence $
\infer[]{
\wen y. Q \cpar R \cpar \new y. U
}{
\infer[]{
\mathopen{\wen y.} \left(Q \cpar R\right) \cpar \new y. U
}{
 \mathopen{\new y.} \left(Q \cpar R \cpar U\right)
}}$.
Hence the following derivation can be constructed, by the above observations:
$
\infer[]{
\wen y. Q \cpar R \cpar S
}{
\infer[]{
\wen y. Q \cpar R \cpar T
}{
\infer[]{
\mathopen{\new y.} \left( Q \cpar R \cpar U \right)
}{
\mathopen{\new y.} \tcontext{ V_i \cpar W_i \colon 1 \leq i \leq n }
}}}
$.
Observe that $\new y. \tcontext{ }$ is a $n$-ary killing context as required.
\smallskip

Consider the commutative case where the \textit{new} quantifier commutes with $P_0 \cseq P_1$ as the principal formula.
In this case the bottommost rule of a proof is of the form $
\infer[]{
\left( P_0 \cseq P_1 \right) \cpar \new y. Q \cpar R \cpar S 
}{
\mathopen{\new y.} \left( \left( P_0 \cseq P_1 \right) \cpar Q \cpar R\right) \cpar S
}$,
where $\nfv{y}{\left( P_0 \cseq P_1 \right) \cpar R}$.
By induction, there exist $T$ and $U$ such that $\vdash \left( P_0 \cseq P_1 \right) \cpar Q \cpar R \cpar U$, $\nfv{y}{T}$ and either $T = U$ or $T = \wen y. U$, and also
$\infer{S}{T}$.
Furthermore, the size of the proof of $\left( P_0 \cseq P_1 \right) \cpar Q \cpar R \cpar U$ is bounded above by the size of the proof of $\mathopen{\new y.} \left( \left( P_0 \cseq P_1 \right) \cpar Q \cpar R\right) \cpar S$ and hence strictly bounded above by the size of the proof of $\left( P_0 \cseq P_1 \right) \cpar \new y. Q \cpar R \cpar S$, enabling the induction hypothesis.
Hence, by the induction hypothesis, there exist formulae $V_i$ and $W_i$ such that $\vdash P_0 \cpar V_i$ and $\vdash P_1 \cpar W_i$, for $1 \leq i \leq n$, and also $n$-ary killing context $\tcontext{}$ such that
$\infer[]{
Q \cpar R \cpar U
}{
\tcontext{ V_i \cseq W_i \colon 1 \leq i \leq n }
}$.
Furthermore, the size of the proofs of $P_0 \cpar V_i$ and $P_1 \cpar W_i$ are bounded above by the size of the proof of $\left( P_0 \cseq P_1 \right) \cpar \new y. Q \cpar R \cpar S$.
Now observe that either $T = U$ and $\nfv{y}{U}$ hence
$
\infer[]{
\new y. Q \cpar R \cpar T
}{
\mathopen{\new y.} \left(Q \cpar R \cpar U\right)
}$;
or $T = \wen y. U$ hence $
\infer[]{
\new y. Q \cpar R \cpar \wen y. U
}{
\infer[]{
 \mathopen{\new y.} \left(Q \cpar R\right) \cpar \wen y. U 
}{ \mathopen{\new y.} \left(Q \cpar R \cpar U\right)
}}$.
Hence the following derivation can be constructed:
$
\infer[]{
\new y. Q \cpar R \cpar S
}{
\infer[]{
\new y. Q \cpar R \cpar T
}{
\infer[]{
\mathopen{\new y.} \left( Q \cpar R \cpar U \right)
}{
\new y. \tcontext{ V_i \cseq W_i \colon 1 \leq i \leq n }
}}}
$.
Observe that $\new y. \tcontext{ }$ is a $n$-ary killing context as required.
\smallskip

Consider the commutative case when the \textit{wen} quantifier commutes with $P_0 \cseq P_1$ as the principal formula.
In this case the bottommost rule of a proof is of the form $
\infer[]{
\left( P_0 \cseq P_1 \right) \cpar \wen y. Q \cpar R \cpar S
}{
\mathopen{\wen y.} \left( \left( P_0 \cseq P_1 \right) \cpar Q \cpar R\right) \cpar S
}$. where $\nfv{y}{\left( P_0 \cseq P_1 \right) \cpar R}$.
By induction, there exist $T$ and $U$ such that $\vdash \left( P_0 \cseq P_1 \right) \cpar Q \cpar R \cpar U$, $\nfv{y}{T}$ and either $T = U$ or $T = \new y. U$, and also
$\infer{S}{T}$.
Furthermore, the size of the proof of $\left( P_0 \cseq P_1 \right) \cpar Q \cpar R \cpar U$ is bounded above by the size of the proof of $\mathopen{\wen y.} \left( \left( P_0 \cseq P_1 \right) \cpar Q \cpar R\right) \cpar S$ and hence strictly bounded above by the size of the proof of $\left( P_0 \cseq P_1 \right) \cpar \wen y. Q \cpar R \cpar S$, enabling the induction hypothesis.
Hence, by the induction hypothesis, there exist formulae $V_i$ and $W_i$ such that $\vdash P_0 \cpar V_i$ and $\vdash P_1 \cpar W_i$, for $1 \leq i \leq n$, and also $n$-ary killing context $\tcontext{}$ such that
$
\infer[]{
Q \cpar R \cpar U 
}{
 \tcontext{ V_i \cseq W_i \colon 1 \leq i \leq n }
}$.
Furthermore, the size of the proofs of $P_0 \cpar V_i$ and $P_1 \cpar W_i$ are bounded above by the size of the proof of $\left( P_0 \cseq P_1 \right) \cpar \wen y. Q \cpar R \cpar S$.
Now observe that either $T = U$ and $\nfv{y}{U}$ hence
$
\infer[]{
\wen y. Q \cpar R \cpar T
}{
\infer[]{
 \wen y. \left(Q \cpar R \cpar U\right) 
}{ \mathopen{\new y.} \left(Q \cpar R \cpar U\right)
}}$;
or $T = \new y. U$ hence $
\infer[]{
\wen y. Q \cpar R \cpar \new y. U
}{
\infer[]{
 \wen y. \left(Q \cpar R\right) \cpar \new y. U 
}{
 \mathopen{\new y.} \left(Q \cpar R \cpar U\right)
}}$.
Hence the following derivation can be constructed:
$
\infer[]{
\wen y. Q \cpar R \cpar S
}{
\infer[]{
\wen y. Q \cpar R \cpar T
}{
\infer[]{
\mathopen{\new y.} \left( Q \cpar R \cpar U \right)
}{
\mathopen{\new y.} \tcontext{ V_i \cseq W_i \colon 1 \leq i \leq n }
}}}
$.
Observe that $\new y. \tcontext{ }$ is a $n$-ary killing context as required.
\smallskip

Consider commutative cases where the principal formula moves entirely to the left hand side of a \textit{seq} operator. 
For principal formula $\wen x. U$, the bottommost rule in a proof is of the form
$
\infer[]{
\wen x. U \cpar \left( V \cseq P \right) \cpar W \cpar Q
}{
\left( \left( \wen x. U \cpar V \cpar W\right) \cseq P \right) \cpar Q
}
$
such that $\vdash \left( \left( \wen x. U \cpar V \cpar W\right) \cseq P \right) \cpar Q$ holds.
By the induction hypothesis, there exist $R_i$ and $S_i$ such that $\vdash \wen x. U \cpar V \cpar W \cpar R_i$ and $\vdash P \cpar S_i$, for $1 \leq i \leq n$, and $n$-ary killing context $\tcontext{}$ such that the derivation
$
\infer[]{
 Q
}{
 \tcontext{ R_1 \cseq S_1, \hdots, R_n \cseq S_n }
}
$ holds, 
and furthermore the size of the proof of $\wen x. U \cpar V \cpar W \cpar R_i$ is bounded above by the size of the proof of $\left( \left( \wen x. U \cpar V \cpar W\right) \cseq P \right) \cpar Q$ hence strictly bounded above by the size of the proof of $\wen x. U \cpar \left( V \cseq P \right) \cpar W \cpar Q$ enabling the induction hypothesis.
By the induction hypothesis again, there exist formulae $T_i$ and $\hat{T}_i$ such that $\vdash U \cpar \hat{T}_i$ and $\nfv{x}{T_i}$ and either $T_i = U_i$ or $T_i = \new x. U_i$, and also the derivation $
\infer[]{
V \cpar W \cpar R_i
}{
 T_i
}$ holds.
Hence the following derivation can be constructed.
\[
\infer[]{
\left( V \cseq P \right) \cpar W \cpar Q
}{
\infer[]{
\left( V \cseq P \right) \cpar W \cpar \tcontext{ R_1 \cseq S_1, \hdots, R_n \cseq S_n } \\
}{
\infer[]{
\tcontext{ \left( V \cseq P \right) \cpar W \cpar R_i \cseq S_i \colon 1 \leq i \leq n } \\
}{
\infer[]{
\tcontext{ \left( V \cpar W \cpar R_i \right) \cseq \left( P \cpar S_i \right) \colon 1 \leq i \leq n } 
}{
\infer[]{
\tcontext{ V \cpar W \cpar R_i \colon 1 \leq i \leq n } 
}{
\tcontext{ T_i \colon 1 \leq i \leq n }
}}}}}
\]
Furthermore, by Lemma~\ref{lemma:bound},
$\size{T_i} \preceq \size{\left( V \cseq P \right) \cpar W \cpar Q}$
and hence $\size{ U \cpar \hat{T}_i } \prec \size{\wen x. U \cpar \left( V \cseq P \right) \cpar W \cpar Q}$, since the \textit{wen} count strictly decreases.
The cases where the principal formula moves entirely to the right hand side of the \textit{seq} operator, and the analogous case for \textit{times}, are similar to the above case.

\item \textbf{Commutative cases involving all and with.}
Consider the commutative case for \textit{with} where $T_0 \tensor T_1$ is the principal formula. In this case the bottommost rule is the following form, such that $\vdash \left(\left(\left(T_0 \tensor T_1\right) \cpar U \cpar W\right) \wwith \left(\left(T_0 \tensor T_1\right) \cpar V \cpar W\right)\right) \cpar P$ holds:
$
\infer[]{
\left(T_0 \tensor T_1\right) \cpar \left( U \wwith V \right) \cpar W \cpar P
}{
\left(\left(T_0 \tensor T_1\right) \cpar U \cpar W \wwith
\left(T_0 \tensor T_1\right) \cpar V \cpar W\right)
\cpar P
}
$.
By the induction hypothesis, $\vdash \left(T_0 \tensor T_1\right) \cpar U \cpar W \cpar P$ and $\vdash \left(T_0 \tensor T_1\right) \cpar V \cpar W \cpar P$; and furthermore strict multiset inequalities 
$\occ{\left(T_0 \tensor T_1\right) \cpar U \cpar W \cpar P} \mstrict \occ{ \left(T_0 \tensor T_1\right) \cpar \left( U \wwith V \right) \cpar W \cpar P }$
and
$\occ{\left(T_0 \tensor T_1\right) \cpar V \cpar W \cpar P} \mstrict \occ{ \left(T_0 \tensor T_1\right) \cpar \left( U \wwith V \right) \cpar W \cpar P }$
hold.
Hence, by the induction hypothesis, there exist $Q^0_i$ and $Q^1_i$ such that $\vdash T_0 \cpar Q^0_i$ and $\vdash T_1 \cpar Q^1_i$, for $1 \leq i \leq m$; and $R_j^0$ and $R_j^1$ such that
$\vdash T_0 \cpar R^0_j$ and $\vdash T_1 \cpar R^1_j$, for $1 \leq j \leq n$;
and also $m$-ary killing context $\tcontextn{0}{}$ and $n$-ary killing context $\tcontextn{1}{}$
such that the two derivations 
$
\infer[]{
U \cpar W \cpar P 
}{
 \tcontextn{0}{Q^0_i \cpar Q^1_i \colon 1 \leq i \leq m }
}
$
and
$
\infer[]{
V \cpar W \cpar P
}{
 \tcontextn{1}{R^0_j \cpar R^1_j \colon 1 \leq j \leq n }
}
$ hold.
Furthermore, the size of the proofs of $T_0 \cpar R^0_j$, $T_1 \cpar R^1_j$, $T_0 \cpar Q^0_i$ and $T_1 \cpar Q^1_i$ are bounded above by the size of the proof of $\left(T_0 \tensor T_1\right) \cpar \left( U \wwith V \right) \cpar W \cpar P$.
Thereby the following derivation can be constructed.
\[
\infer[]{
\left( U \wwith V \right) \cpar W \cpar P
}{
\infer[]{
\left(U \cpar W \cpar P\right) \wwith \left(V \cpar W \cpar P\right) 
}{
\tcontextn{0}{Q^0_i \cpar Q^1_i \colon 1 \leq i \leq m }
\wwith
\tcontextn{1}{R^0_j \cpar R^1_j \colon 1 \leq j \leq n }
}}
\]

Consider the commutative case where universal quantification commutes with $T_0 \tensor T_1$  as the principal formula. Suppose the bottommost rule is of the form
$
\infer[]{
\left(T_0 \tensor T_1\right) \cpar \forall y. U \cpar V \cpar W
}{
\mathopen{\forall y.}\left( \left(T_0 \tensor T_1\right) \cpar U \cpar V \right) \cpar W
}$, 
assuming $\nfv{y}{(\left(T_0 \tensor T_1\right) \cpar V)}$
where $\vdash \mathopen{\forall y.}\left( \left(T_0 \tensor T_1\right) \cpar U \cpar V \right) \cpar W$ holds.
By Lemma~\ref{lemma:universal}, for every variable $z$, $\vdash \left( \left(T_0 \tensor T_1\right) \cpar U \cpar V \right)\sub{y}{z} \cpar W$ holds.
Furthermore, since $\nfv{y}{(\left(T_0 \tensor T_1\right) \cpar V)}$, $\left( \left(T_0 \tensor T_1\right) \cpar U \cpar V \right)\sub{y}{z} \cpar W \equiv \left(T_0 \tensor T_1\right) \cpar U\sub{y}{z} \cpar V \cpar W$.
Since $\forall$ is removed, $
\occ{ \left(T_0 \tensor T_1\right) \cpar U\sub{y}{z} \cpar V \cpar W } \mstrict \occ{ \left(T_0 \tensor T_1\right) \cpar \forall y. U \cpar V \cpar W }
$ holds.
Pick a fresh $z$ such that $\nfv{z}{(\forall y. U \cpar V \cpar W)}$.
Hence, by the induction hypothesis, there exist formulae $P_i$ and $Q_i$ such that $\vdash T_0 \cpar P_i$ and $\vdash T_1 \cpar Q_i$, for $1 \leq i \leq n$; and also $n$-ary killing context $\tcontext{}$ such that
$
\infer[]{
U\sub{y}{z} \cpar V \cpar W
}{
\tcontext{ P_i \cpar Q_i \colon 1 \leq i \leq n }
}
$.
Furthermore, the size of the proof of $T_0 \cpar P_i$ and $T_1 \cpar Q_i$ are bounded above by the size of the proof of $\left(T_0 \tensor T_1\right) \cpar \forall y. U \cpar V \cpar W$.
Since $z$ was chosen such that $\nfv{z}{ \forall y. U \cpar V \cpar W }$ the following derivation can be constructed, as required.
$
\infer[]{
\forall y. U \cpar V \cpar W
}{
\infer[]{
\mathopen{\forall z.} \left( U\sub{y}{z} \cpar V \cpar W \right) 
}{
\mathopen{\forall z.} \tcontext{ P_i \cpar Q_i \colon 1 \leq i \leq n }
}}
$.
\smallskip

Consider the commutative case for \textit{with} where $T_0 \cseq T_1$ is the principal formula. The bottommost rule is the form
$
\infer[]{
\left(T_0 \cseq T_1\right) \cpar \left( U \wwith V \right) \cpar W \cpar P
}{
\left(\left(T_0 \cseq T_1\right) \cpar U \cpar W \wwith
\left(T_0 \cseq T_1\right) \cpar V \cpar W\right)
\cpar P
}
$
where $\vdash \left(\left(\left(T_0 \cseq T_1\right) \cpar U \cpar W\right) \wwith \left(\left(T_0 \cseq T_1\right) \cpar V \cpar W\right)\right) \cpar P$ holds.
By the induction hypothesis, $\vdash \left(T_0 \cseq T_1\right) \cpar U \cpar W \cpar P$ and $\vdash \left(T_0 \cseq T_1\right) \cpar V \cpar W \cpar P$; and furthermore the strict multiset inequalities
$\occ{\left(T_0 \cseq T_1\right) \cpar U \cpar W \cpar P} \mstrict \occ{ \wen x. T \cpar \left( U \wwith V \right) \cpar W \cpar P }$
and
$\occ{\left(T_0 \cseq T_1\right) \cpar V \cpar W \cpar P} \mstrict \occ{ \wen x. T \cpar \left( U \wwith V \right) \cpar W \cpar P }$ hold.

Hence, by the induction hypothesis, there exist $Q^0_i$ and $Q^1_i$ such that $\vdash T_0 \cpar Q^0_i$ and $\vdash T_1 \cpar Q^1_i$, for $1 \leq i \leq m$; and $R_j^0$ and $R_j^1$ such that
$\vdash T_0 \cpar R^0_j$ and $\vdash T_1 \cpar R^1_j$, for $1 \leq j \leq n$;
and also $m$-ary killing context $\tcontextn{0}{}$ and $n$-ary killing context $\tcontextn{1}{}$
such that the two derivations 
$
\infer[]{
U \cpar W \cpar P
}{
 \tcontextn{0}{Q^0_i \cseq Q^1_i \colon 1 \leq i \leq m }
}
$
and
$
\infer[]{
V \cpar W \cpar P 
}{
  \tcontextn{1}{R^0_j \cseq R^1_j \colon 1 \leq j \leq n }
}
$ hold.
Furthermore, the size of the proofs of $T_0 \cpar R^0_j$, $T_1 \cpar R^1_j$, $T_0 \cpar Q^0_i$ and $T_1 \cpar Q^1_i$ are bounded above by the size of the proof of $\left(T_0 \cseq T_1\right) \cpar \left( U \wwith V \right) \cpar W \cpar P$.
Thereby the following derivation can be constructed.
\[
\infer[]{
\left( U \wwith V \right) \cpar W \cpar P
}{
}{
\infer[]{
\left(U \cpar W \cpar P\right) \wwith \left(V \cpar W \cpar P\right) 
}{
\tcontextn{0}{Q^0_i \cseq Q^1_i \colon 1 \leq i \leq m }
\wwith
\tcontextn{1}{R^0_j \cseq R^1_j \colon 1 \leq j \leq n }
}}
\]
\smallskip

Consider the commutative case where universal quantification commutes with $T_0 \cseq T_1$  as the principal formula. Suppose the bottommost rule is of the form
$
\infer[]{
\left(T_0 \cseq T_1\right) \cpar \forall y. U \cpar V \cpar W
}{
\mathopen{\forall y.}\left( \left(T_0 \cseq T_1\right) \cpar U \cpar V \right) \cpar W
}$,
 assuming $\nfv{y}{(\wen x. T \cpar V)}$
where $\vdash \mathopen{\forall y.}\left( \left(T_0 \cseq T_1\right) \cpar U \cpar V \right) \cpar W$ holds.
By Lemma~\ref{lemma:universal}, for every variable $z$, $\vdash \left( \left(T_0 \cseq T_1\right) \cpar U \cpar V \right)\sub{y}{z} \cpar W$ holds.
Furthermore, since $\nfv{y}{(\left(T_0 \cseq T_1\right) \cpar V)}$, $\left( \left(T_0 \cseq T_1\right) \cpar U \cpar V \right)\sub{y}{z} \cpar W \equiv \left(T_0 \cseq T_1\right) \cpar U\sub{y}{z} \cpar V \cpar W$.

The strict multiset inequality $
\occ{ \left(T_0 \cseq T_1\right) \cpar U\sub{y}{z} \cpar V \cpar W } \mstrict \occ{ \left(T_0 \tensor T_1\right) \cpar \forall y. U \cpar V \cpar W }
$ holds.
Pick a fresh $z$ such that $\nfv{z}{(\forall y. U \cpar V \cpar W)}$.
Hence, by the induction hypothesis, there exist formulae $P_i$ and $Q_i$ such that $\vdash T_0 \cpar P_i$ and $\vdash T_1 \cpar Q_i$, for $1 \leq i \leq n$; and also $n$-ary killing context $\tcontext{}$ such that
$
\infer[]{
U\sub{y}{z} \cpar V \cpar W
}{
\tcontext{ P_i \cseq Q_i \colon 1 \leq i \leq n }
}
$.
Furthermore, the size of the proof of $T_0 \cpar P_i$ and $T_1 \cpar Q_i$ are bounded above by the size of the proof of $\left(T_0 \cseq T_1\right) \cpar \forall y. U \cpar V \cpar W$.

Since $z$ was chosen such that $\nfv{z}{ \forall y. U \cpar V \cpar W }$ the following derivation can be constructed, as required.
$
\infer[]{
\forall y. U \cpar V \cpar W
}{
\infer[]{
\mathopen{\forall z.} \left( U\sub{y}{z} \cpar V \cpar W \right) 
}{
\forall z. \tcontext{ P_i \cseq Q_i \colon 1 \leq i \leq n }
}}
$.
\smallskip

Consider the commutative case for \textit{sequence} rule in the presence of principal formula $T \cseq U$, where the \textit{seq} connective in the principal formula is not active on the \textit{sequence} rule. In this case, the bottommost rule in a proof is an instance of the \textit{sequence} rule of the form
$
\infer[]{
\left( T \cseq U \right) \cpar \left( V \cseq P \right) \cpar W \cpar Q
}{
\left( \left( \left( T \cseq U\right) \cpar V \cpar W\right) \cseq P \right) \cpar Q
}
$,
where $T \cseq U \not\equiv \cunit$ and $P \not\equiv \cunit$ and $\vdash \left( \left( \left( T \cseq U\right) \cpar V \cpar W\right) \cseq P \right) \cpar Q$ holds.
By the induction hypothesis, there exists $R_i$, $S_i$ such that $\vdash  \left( T \cseq U\right) \cpar V \cpar W \cpar R_i$ and $\vdash P \cpar S_i$, for $1 \leq i \leq n$, and $n$-ary killing context $\tcontext{}$ such that the following derivation holds:
$
\infer[]{
 Q
}{
 \tcontext{ R_1 \cseq S_1, \hdots, R_n \cseq S_n }
}
$.
Furthermore, $\size{\left( T \cseq U\right) \cpar V \cpar W \cpar R_i} \mleq \size{\left( \left( \left( T \cseq U\right) \cpar V \cpar W\right) \cseq P \right) \cpar Q}$ hence the induction hypothesis is enabled again.
By the induction hypothesis, for $1 \leq i \leq n$, there exist formulae $P^i_j$ and $Q^i_j$ such that $\vdash T \cpar P^i_j$ and $\vdash U \cpar Q^i_j$ hold, for $1 \leq j \leq m_i$, and killing contexts $\tcontextn{i}{}$ such that the following derivation holds.
\[
\infer[]{
V \cpar W \cpar R_i
}{
 \tcontextn{i}{P^i_1 \cseq Q^i_1, \hdots, P^i_{m_i} \cseq Q^i_{m_i}}
}
\]
Furthermore the following strict multiset inequalities hold.
\[
\occ{T \cpar P^i_j} \mstrict \occ{\left( T \cseq U \right) \cpar \left( V \cseq P \right) \cpar W \cpar Q
}
\qquad
\mbox{and}
\qquad
\occ{U \cpar Q^i_j} \mstrict \occ{\left( T \cseq U \right) \cpar \left( V \cseq P \right) \cpar W \cpar Q
}
\]
Hence the following derivation can be constructed, as required.
\[
\infer[]{
\left( V \cseq P \right) \cpar W \cpar Q
}{
\infer[]{
\left( V \cseq P \right) \cpar W \cpar \tcontext{ R_1 \cseq S_1, \hdots, R_n \cseq S_n }
}{
\infer[]{
\tcontext{ \left( V \cseq P \right) \cpar W \cpar \left(R_1 \cseq S_1\right), \hdots, \left( V \cseq P \right) \cpar W \cpar \left(R_n \cseq S_n\right) }
}{
\infer[]{
\tcontext{ \left( V \cpar W \cpar R_1 \right) \cseq \left( P \cpar S_1 \right), \hdots, \left( V \cpar W \cpar R_n \right) \cseq \left( P \cpar S_n \right) }
}{
\infer[]{
\tcontext{ V \cpar W \cpar R_1, \hdots, V \cpar W \cpar R_n }
}{
\tcontext{ \tcontextn{i}{P^i_j \cseq Q^i_j \colon 1 \leq j \leq m_i } \colon 1 \leq i \leq n }
}}}}}
\]
The case for the \rseq rule commuting with the principal formula $T \tensor U$ is similar to the above case. Also the cases for the \textit{switch} rule commuting with seq and times as the principal formula, follow a similar pattern.
\smallskip

}  


\item \textbf{Commutative cases deep in contexts.}
In many commutative cases, the bottommost rule does not interfere with the principal formula either directly or indirectly. Two such cases are presented for \textit{wen} as the principal connective. Other such cases use almost identical reasoning.

\begin{enumerate}[label*=\textbf{.\arabic*}]

\item Consider when a rule is applied outside the scope of the principal formula.
In this case, the bottommost rule in a proof is of the form
\[
\infer[\mbox{, such that $\vdash \wen x. U \cpar \context{ W }$.}]{
\wen x. U \cpar \context{ V }
}{
 \wen x. U \cpar \context{ W }
}
\]
By the induction hypothesis, there exist formulae $P$ and $Q$ such that $\vdash U \cpar Q$ and $\nfv{x}{P}$ and either $P = Q$ or $P = \new x. Q$, and also $
\vcenter{\infer[]{
\context{ W }
}{
 P
}}
$.
Hence clearly derivation $
\vcenter{\infer[]{
\context{ V }
}{
\infer[]{
 \context{ W } 
}{
  P
}}}
$
 holds.
Furthermore, by Lemma~\ref{lemma:bound}, $\size{ \wen x. U \cpar \context{ W }} \prec \size{ U \cpar \context{ W }}$ 
and $\size{ U \cpar \context{ W }} \preceq \size{\wen x. U \cpar \context{ V }}$.

\fullproof{
Assume that the following application of any rule
$
\infer[]{
\left( T \cseq U \right) \cpar \context{ V }
}{
 \left( T \cseq U \right) \cpar \context{ W }
}$
is the bottommost rule in a proof,
such that $\vdash \left( T \cseq U \right) \cpar \context{ W }$.
By the induction hypothesis, there exist $n$-ary killing context $\tcontext{}$ and formulae $Q_i$ and $R_i$ such that $\vdash T \cpar Q_i$ and $\vdash U \cpar R_i$, for $1 \leq i \leq n$, such that $
\infer[]{
\context{ W } 
}{
\tcontext{ Q_1 \cseq R_1, \ldots, Q_n \cseq R_n }
}
$.
Hence, the derivation 
$
\infer[]{
\context{ V }
}{
\infer[]{
 \context{ W } 
 \tcontext{ Q_1 \cseq R_1, \ldots, Q_n \cseq R_n }
}}
$
holds, satisfying the induction invariant.

Alternatively, the bottommost rule may appear inside the context of principal formula without affecting the root connective of the principal formula. 
%
Consider the case where \textit{seq} is the principal formula.
Assume that the following application of any rule is the bottommost rule in a proof
$
\infer[]{
\left( \context{T} \cseq V \right) \cpar W 
}{
 \left( \context{U} \cseq V \right) \cpar W
}
$
such that $\vdash \left( \context{U} \cseq V \right) \cpar W$ has a proof of length $n$. 
Hence by induction, there exist $n$-ary killing context $\tcontext{}$ and formulae $P_i$ and $Q_i$ such that $\vdash \context{U} \cpar P_i$ and $\vdash V \cpar Q_i$ hold and have a proof no longer than $n$, for $1 \leq i \leq n$, and also 
$
\infer[]{
W
}{
 \tcontext{ P_1 \cseq Q_1, \ldots, P_n \cseq Q_n }
}
$.
Hence we can construct the following proof of length no longer than $n+1$, for all $i$, as required:
$
\infer[]{
\context{T} \cpar P_i
}{
\infer[]{
   \context{U} \cpar P_i 
}{
 \cunit
}}
$.
}

\item Consider the case where the following application of any rule in a derivation of the form 
\[
\infer[]{
\wen x. \context{T} \cpar W
}{
 \wen x. \context{U} \cpar W
}
\]
is the bottommost rule is a proof of length $k+1$, where $\vdash \wen x. \context{U} \cpar W$ has a proof of length $k$.
Hence, by induction, there exist formulae $P$ and $Q$ such that $\vdash \context{U} \cpar Q$ and $\nfv{x}{P}$ and either $P = Q$ or $P = \new x. Q$, and also
$
\vcenter{
\infer[]{
W
}{
P
}}
$.
Furthermore, the size of the proof of $\context{U} \cpar Q$ is bounded above by the size of the proof of $\wen x. \context{U} \cpar W$; hence either $\size{\context{U} \cpar Q} \prec \size{\wen x. \context{U} \cpar W}$ or $\size{\context{U} \cpar Q} = \size{\wen x. \context{U} \cpar W }$ and the length of the proof of $U \cpar Q$ is bound by $k$.
The proof 
$
\vcenter{
\infer[]{
\context{T} \cpar Q
}{
\infer[]{
 \context{U} \cpar Q 
}{
   \cunit
}}}
$ can be constructed as required.
Furthermore, if $\size{\context{U} \cpar Q} \prec \wen x. \size{\context{U} \cpar W}$ then $\size{\context{U} \cpar Q} \prec \size{\wen x. \context{U} \cpar \context{ V }}$, by Lemma~\ref{lemma:bound}. Otherwise, $\size{\context{U} \cpar Q} = \size{\wen x. \context{U} \cpar W }$
hence $\size{U \cpar Q} \preceq \size{\wen x. U \cpar \context{ V }}$ by Lemma~\ref{lemma:bound} and the length of the proof of $\vdash \context{T} \cpar Q$ is $k+1$.
Thereby in either case, the size of the proof of $\context{T} \cpar Q$ is bounded above by the size of the proof of $\wen x. \context{T} \cpar W$.
\end{enumerate}

\end{enumerate}

This covers all scenarios for the bottommost rule, hence splitting follows by induction over the size of the proof.
\end{proof}

}{}
The final three splitting lemmas mainly involve checking commutative cases.
The commutative cases follow a similar pattern to the commutative cases in Lemma~\ref{lemma:split-times}.
\begin{lemma}\label{lemma:split-exists}
If $\vdash \exists x. P \cpar Q$, then there exist formulae $V_i$ and values $v_i$ such that $\vdash P\sub{x}{v_i} \cpar V_i$, where $1 \leq i \leq n$, and $n$-ary killing context $\tcontext{}$ such that $\vcenter{\infer{Q}{ \tcontext{V_1, V_2, \hdots, V_n}}}$
and if $\tcontext{}$ binds $y$ then $\nfv{y}{\left(\exists x . P\right)}$.
\end{lemma}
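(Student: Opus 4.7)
The plan is to prove this by induction on the size of the proof of $\vdash \exists x. P \cpar Q$, as measured by Definition~\ref{definition:size}, with case analysis on the bottommost rule in the derivation, following the methodology established in Lemma~\ref{lemma:split-times}. The crucial observation is that \textit{select1} is the only rule that can act directly on the root $\exists$ of the principal predicate, so the principal case is immediate: if the bottommost rule is $\exists x. P \cpar Q \longrightarrow P\sub{x}{v} \cpar Q$, then we obtain the required splitting with $n=1$, $v_1 = v$, $V_1 = Q$, and $\tcontext{} = \{\ \cdot\ \}$.

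The remaining commutative cases mirror those appearing in Lemma~\ref{lemma:split-times}. First, when the bottommost rule acts strictly inside $P$ beneath the $\exists$, the induction hypothesis applied to $\exists x. P' \cpar Q$ yields witnesses $v_i$, $V_i$, $\tcontext{}$, and Lemma~\ref{lemma:substitution} provides the rewrite $P\sub{x}{v_i} \longrightarrow P'\sub{x}{v_i}$, so that $\vdash P\sub{x}{v_i} \cpar V_i$ follows. Second, when the bottommost rule acts strictly within $Q$, producing $\exists x. P \cpar Q'$, we apply the induction hypothesis to the latter and compose the single step $Q \longrightarrow Q'$ with the derivation provided by induction. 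Third, for rules that wrap a binder or $\with$ around $\exists x. P$ at the top-level boundary --- such as \textit{extrude new}, \textit{extrude1}, \textit{external}, \textit{close}, and \textit{left wen}/\textit{right wen} --- we first use $\alpha$-conversion to ensure the bound variable is fresh for $\exists x. P$, then invoke the induction hypothesis on the result, extending the killing context with the new binder (converting $\wen$ to $\new$ where necessary via the \textit{fresh} rule, since killing contexts admit only $\forall$ and $\new$ binders, along with $\with$).

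The most delicate subcase arises when the bottommost rule is an instance of \textit{switch} or \textit{sequence} that absorbs $\exists x. P$ into a surrounding $\tensor$ or $\cseq$ structure: the resulting formula no longer exposes $\exists x. P$ at the top of a $\cpar$-context, so we cannot apply the induction hypothesis directly. In this situation we first appeal to Lemma~\ref{lemma:split-times} to decompose the outer multiplicative structure, which produces sub-proofs of strictly smaller size (by the bound established in that lemma) of the form $\exists x. P \cpar Q''$ suitable for a recursive call. The main obstacle is maintaining two invariants throughout the commutative cases: the freshness condition that $\nfv{y}{\exists x. P}$ for every $y$ bound by $\tcontext{}$, which requires systematic $\alpha$-conversion in cases introducing new binders (notably \textit{extrude new} and the commutative cases interacting with \textit{new wen}); and the strict decrease of the size measure of Definition~\ref{definition:size} at each recursive invocation, which follows from Lemma~\ref{lemma:bound} together with the fact that the shorter sub-proof above the bottommost rule is strictly smaller in the lexicographic size-then-length ordering.
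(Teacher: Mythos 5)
Your overall structure is right --- induction on the size measure, base case via \textit{select1}, then commutative cases --- and your base case, the ``rule inside $P$'' case (via Lemma~\ref{lemma:substitution}), and the ``rule inside $Q$'' case are all handled correctly. However, there is a genuine gap in how you distribute the workload across the remaining commutative cases.

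You correctly observe, for \textit{switch} and \textit{sequence}, that after the bottommost rule the principal predicate $\exists x. P$ ``is no longer exposed at the top of a $\cpar$-context, so we cannot apply the induction hypothesis directly,'' and that Lemma~\ref{lemma:split-times} must be used to pull $\exists x. P$ back to the surface. But this diagnosis applies equally to the group you treat separately --- \textit{extrude new}, \textit{external}, \textit{right wen}, etc. After \textit{extrude new} the conclusion is $\mathopen{\new y.}\left( \exists x. P \cpar Q \cpar R\right) \cpar S$, in which $\exists x. P$ sits strictly inside a $\new$; after \textit{external} it sits inside a $\with$. In neither case is the conclusion of the shape $\exists x. P \cpar Q'$, so ``invoke the induction hypothesis on the result, extending the killing context with the new binder'' does not typecheck: there is nothing for the IH to attach to. The paper's proof therefore invokes Lemma~\ref{lemma:split-times} (its $\new$, $\wen$, and $\with$ clauses) for \emph{all} of these cases, not just the multiplicative ones, to peel the wrapping connective off and recover a judgement $\vdash \exists x. P \cpar Q''$ of strictly smaller size before appealing to the IH; only then can the recovered killing context be enclosed under $\new y.\ \tcontext{\cdot}$ or a $\with$-node. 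Your phrase ``converting $\wen$ to $\new$ where necessary via the \textit{fresh} rule'' is also imprecise: the \textit{fresh} step is used inside the construction of the final derivation after case analysis on whether the splitting residue carries a $\new$ binder, not as a blanket conversion of the context.

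A second omission: for the \textit{extrude1} case (a $\forall y$ wraps the principal $\exists$), the relevant peeling step is Lemma~\ref{lemma:universal}, not Lemma~\ref{lemma:split-times}. You group \textit{extrude1} with the binder cases but never mention Lemma~\ref{lemma:universal}, which is essential for substituting a fresh variable and stripping the $\forall$ before the IH can be applied. Finally, the inclusion of \textit{close} in your list of rules that ``wrap a binder around $\exists x. P$'' is a red herring: \textit{close} merges a $\new$ and a $\wen$ in the context $Q$, leaving $\exists x. P$ exposed, so it falls under the routine ``rule acts inside $Q$'' case you already handle.
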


%
%
%
%
The proofs of the splitting lemmas for \textit{plus} and atoms offer no new insight or difficulties compared to their treatment in \textsf{MAV}~\cite{Horne2015}.
Similarly, to the above lemma for existential quantifiers, the proofs mainly involve commutative cases of a standard form.
\begin{lemma}\label{lemma:split-plus}
If $\vdash \left( P \ooplus Q \right) \cpar R$, then there exist formulae $W_i$ such that either $\vdash P \cpar W_i$ or $\vdash Q \cpar W_i$ where $1 \leq i \leq n$, and $n$-ary killing context $\tcontext{}$ such that $\vcenter{\infer{R}{\tcontext{W_1, W_2, \hdots, W_n}}}$
and if $\tcontext{}$ binds $x$ then $\nfv{x}{\left(P \ooplus Q\right)}$.
\end{lemma}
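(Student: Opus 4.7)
The plan is to proceed by induction on the size of the proof, using the multiset-based measure of Definition~\ref{definition:size}, following the same schema as the proof of Lemma~\ref{lemma:split-exists}. The only principal cases for the $\oplus$ operator are induced by the \textit{left} and \textit{right} rewrite rules. If the bottommost rule in the proof is $\context{(P \oplus Q) \cpar R} \longrightarrow \context{P \cpar R}$ (the \textit{left} rule applied to the principal $\oplus$), then we immediately have $\vdash P \cpar R$, and splitting is satisfied with the trivial one-hole killing context $\tcontext{} = \{\,\cdot\,\}$ and $W_1 = R$. The \textit{right} case is symmetric, producing $\vdash Q \cpar R$. Thus the base cases are immediate.

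The remaining cases are the commutative ones, where the bottommost rule does not interfere with the principal $\oplus$. These closely mirror the commutative cases of Lemma~\ref{lemma:split-exists}, since the statement of splitting for $\oplus$ has exactly the same shape as the statement for $\exists$, but with the disjunctive choice ``either $\vdash P \cpar W_i$ or $\vdash Q \cpar W_i$'' replacing the existential witness. For each binary or quantifier rule applied outside or beside the principal $\oplus$, I would invoke Lemma~\ref{lemma:split-times} (or Lemma~\ref{lemma:universal} for the \textit{extrude1} case) to peel off the outer connective, then apply the induction hypothesis to produce the witnesses $W_i$ and a killing context $\tcontext{}$, and finally reassemble the derivation using the distributivity properties of killing contexts in Lemma~\ref{lemma:kill-distribute} and, where seq is involved, Lemma~\ref{lemma:medial} and Lemma~\ref{lemma:commute}. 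For instance, for a bottommost application of \textit{external}, $(P \oplus Q) \cpar (U \with V) \cpar W \cpar S \longrightarrow ((P \oplus Q) \cpar U \cpar W \with (P \oplus Q) \cpar V \cpar W) \cpar S$, Lemma~\ref{lemma:split-times} yields provability of each branch, the induction hypothesis applied to each branch gives two killing contexts $\tcontextn{0}{}$ and $\tcontextn{1}{}$ with witnesses, and $\tcontextn{0}{} \with \tcontextn{1}{}$ is the required combined killing context.

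The commutative cases involving \textit{extrude new}, \textit{right wen}, and the \textit{new wen} rule require extra care: as in Lemma~\ref{lemma:split-exists}, one must verify that when $T = \wen y.U$ or $T = \new y.U$ arises from Lemma~\ref{lemma:split-times}, the resulting killing context can be extended to $\new y.\tcontext{}$ (respectively) while preserving the freshness condition $\nfv{y}{(P \oplus Q)}$, which holds by $\alpha$-conversion. The two commutative cases for \textit{sequence} — where the principal predicate moves entirely to the left or to the right of the \textit{seq} — follow the same manipulation displayed for $\exists x. U$ in Lemma~\ref{lemma:split-exists}, simply with the existential witness replaced by the disjunctive choice.

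The main obstacle, as usual in this family of splitting lemmas, is bookkeeping rather than any new proof-theoretic idea: one must check that in every commutative case the induction hypothesis is enabled, which requires verifying that the size measure of Definition~\ref{definition:size} strictly decreases or that the length of the proof strictly decreases while the measure does not grow, appealing to Lemma~\ref{lemma:bound}. Since the structure of each commutative case is essentially identical to its counterpart in Lemma~\ref{lemma:split-exists}, and since no principal case for $\oplus$ generates nontrivial inter-dependencies with the nominal quantifiers (unlike the situation for $\with$, which is why $\with$ had to be bundled into Lemma~\ref{lemma:split-times}), this splitting lemma can be established independently and without substantial new machinery.
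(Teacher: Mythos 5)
The paper deliberately omits this proof, remarking only that it follows the same pattern as Lemma~\ref{lemma:split-exists} (commutative cases of a standard form, no new difficulties), and your proposal reconstructs exactly that: trivial principal cases from \emph{left} and \emph{right} yielding a one-hole killing context, commutative cases peeled by Lemma~\ref{lemma:split-times} (and Lemma~\ref{lemma:universal} for \emph{extrude1}), reassembly via the killing-context lemmas, and the measure argument of Definition~\ref{definition:size} with Lemma~\ref{lemma:bound}. This matches the intended proof; the only slight imprecision is listing \emph{new wen} among the commutative cases needing special freshness care — unlike \emph{extrude new} and \emph{right wen}, it does not pull the principal predicate under a binder and is just an ordinary deep-in-context case — but that is a cosmetic point, not a gap.
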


\begin{lemma}\label{lemma:split-atoms}
The following statements hold, for any atom $\alpha$, where if $\tcontext{}$ binds $x$ then $\nfv{x}{\alpha}$.
\begin{itemize}
\item If $\vdash \co{\alpha} \cpar Q$, then there exist $n$-ary killing context $\tcontext{}$ such that 
$\vcenter{\infer{Q}{\tcontext{\alpha, \alpha, \hdots, \alpha}}}$.

\item If $\vdash \alpha \cpar Q$, then there exist $n$-ary killing context $\tcontext{}$ such that 
$\vcenter{\infer{Q}{\tcontext{\co{\alpha}, \co{\alpha}, \hdots, \co{\alpha}}}}$.
\end{itemize}
\end{lemma}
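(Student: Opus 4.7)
The plan is to prove the two statements symmetrically; I focus on the first, with principal $\co{\alpha}$, and the second follows by interchanging the roles of $\alpha$ and $\co{\alpha}$ throughout the argument. The induction proceeds on the size of the proof of $\vdash \co{\alpha} \cpar Q$ in the sense of Definition~\ref{definition:size}, following the template of Lemma~\ref{lemma:split-times} and Lemma~\ref{lemma:split-exists}. A significant simplification is that the principal predicate is atomic: there is only one principal case (atomic interaction), and no principal cases for times, seq, with, or any quantifier to consider.

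For the principal case, the bottommost rule is the atomic interaction consuming the principal $\co{\alpha}$. Up to the commutativity and associativity of $\cpar$, this presupposes $Q \equiv \alpha \cpar R$ with $\vdash R$. The remainder of the proof reduces $R \longrightarrow \cunit$, giving the derivation $Q \equiv \alpha \cpar R \longrightarrow \alpha \cpar \cunit \equiv \alpha$, and the trivial one-hole killing context satisfies the conclusion with $n = 1$.

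For the commutative cases, when the bottommost rule is applied strictly inside $Q$, producing $Q \longrightarrow Q'$ with $\vdash \co{\alpha} \cpar Q'$ of strictly smaller size, the induction hypothesis yields a killing context $\tcontext{}$ with $Q' \longrightarrow \tcontext{\alpha, \ldots, \alpha}$, and prepending the one-step derivation gives $Q \longrightarrow \tcontext{\alpha, \ldots, \alpha}$ as required. When the bottommost rule instead uses the top-level par context with $\co{\alpha}$ — as in \textit{external} when $Q$ exposes a $\with$, \textit{extrude1} when $Q$ exposes a $\forall$, or \textit{extrude new}, \textit{left wen} and \textit{right wen} when $Q$ exposes a nominal quantifier — I mirror the analogous commutative cases of Lemma~\ref{lemma:split-times} and Lemma~\ref{lemma:split-exists}. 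The enclosing binder or additive is absorbed into the killing context (using $\with$, $\forall$ or $\new$), each resulting sub-problem carries a top-level $\co{\alpha}$ and is of strictly smaller size, the induction hypothesis applies to each, and the sub-derivations are recombined using the distributivity properties of Lemma~\ref{lemma:kill-distribute}. Freshness side-conditions on the newly absorbed binders are inherited from those of the original rule, which guarantee $\nfv{x}{\co{\alpha}}$ and hence $\nfv{x}{\alpha}$.

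The hard part will be bookkeeping rather than conceptual: each commutative case that branches the proof — most prominently \textit{external} distributing $\co{\alpha}$ over a $\with$ — increases the arity of the resulting killing context, so the multiplicity $n$ is not bounded by the syntax of $Q$ but grows with the size of the proof. This is exactly why the conclusion is stated with an $n$-ary killing context rather than a single occurrence of $\alpha$, and the atom case inherits this design from the earlier splittings without further complication.
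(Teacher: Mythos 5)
The overall strategy — induction on the size of the proof, one principal case (\textit{atomic interaction}), commutative cases mirroring Lemmas~\ref{lemma:split-times} and~\ref{lemma:split-exists} — is the right one, and the paper itself omits this proof as routine. However, your enumeration of the commutative cases is incomplete in a way that your stated strategy cannot repair. You cover only the rules that wrap the top-level \textit{par} with a $\with$, $\forall$, or nominal quantifier (\textit{external}, \textit{extrude1}, \textit{extrude new}, \textit{left/right wen}), handled by absorbing the new constructor into the killing context. But the \textit{switch} and \textit{sequence} rules can also involve $\co{\alpha}$ at the top-level par: $\co{\alpha} \cpar (U \tensor V) \cpar W \longrightarrow \bigl((\co{\alpha} \cpar U) \tensor V\bigr) \cpar W$, or $\co{\alpha} \cpar (U \cseq V) \cpar W \longrightarrow \bigl((\co{\alpha} \cpar U) \cseq V\bigr) \cpar W$ via $\co{\alpha} \equiv \co{\alpha} \cseq \cunit$. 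Here $\co{\alpha}$ is buried inside $\tensor$ or $\cseq$, which are not killing-context constructors, so ``absorb into the killing context'' is not available; nor is the rule ``applied strictly inside $Q$,'' so the first category does not apply either.

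These cases need a different mechanism: first invoke Lemma~\ref{lemma:split-times} on the \textit{times} or \textit{seq} connective to split off the sub-proof that again has $\co{\alpha}$ at the top level, check (via Lemma~\ref{lemma:bound} and the strictness clause in Lemma~\ref{lemma:split-times}) that its size is strictly smaller, apply the induction hypothesis, and recombine the resulting derivations with Lemmas~\ref{lemma:kill-distribute} and, for \textit{seq}, \ref{lemma:medial}. This is exactly how the paper treats the analogous commutative cases in Lemma~\ref{lemma:split-exists} (``there are commutative cases for new, wen, all, with, times and two for seq''). Without the \textit{switch} and \textit{sequence} cases your induction does not cover all possible bottommost rules, so the proof as written is incomplete.
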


\section{Context Reduction and the Admissibility of Co-rules}
\label{section:context}

The splitting lemmas  in the previous section are formulated for sequent-like \textit{shallow contexts}.
By applying splitting repeatedly, \textit{context reduction} (Lemma~\ref{lemma:context}) is established, which can be used to extends normalisation properties to an arbitrary (deep) context.
In particular, we extend a series of proof normalisation properties called \textit{co-rule elimination} properties to any context, by first establishing the normalisation property in a shallow context, then applying context reduction to extend to any context.
Together, these \textit{co-rule elimination} properties establish cut elimination, by eliminating each connective directly involved in a cut one-by-one.

%

\subsection{Extending from a sequent-like context to a deep context}

Context reduction extends rules simulated by splitting to any context.
This appears to be the first context reduction lemma in the literature to handle first-order quantifiers.
Of particular note is the use of substitutions to account for the effect of existential quantifiers in the context.
The trick is to first establish the following stronger invariant.
\begin{lemma}\label{lemma:invariant}
If $\vdash \context{T}$, then there exist formulae $U_i$ and substitutions $\sigma_i$, for $1 \leq i \leq n$, and $n$-ary killing context $\tcontext{}$ such that $\vdash T\sigma_i \cpar U_i$; and, for any formula $V$ there exist $W_i$ such that either $W_i = V\sigma_i \cpar U_i$ or $W_i = \cunit$ and the following holds:
$
\vcenter{
\infer[]{
 \context{V}
}{
 \tcontext{ W_1, W_2, \hdots, W_n }
}
}$.
\end{lemma}

Having established the above stronger invariant, the context lemma follows directly.
\begin{lemma}[Context reduction]
\label{lemma:context}
If $\vdash P\mathclose\sigma \cpar R$ yields that $\vdash Q\mathclose\sigma \cpar R$, for any formula $R$ and substitution of terms for variables $\sigma$, then $\vdash \context{ P }$ yields $\vdash \context{ Q }$, for any context $\context{}$.
\end{lemma}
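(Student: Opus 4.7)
The plan is to derive context reduction as a direct corollary of Lemma~\ref{lemma:invariant}, which has already done the hard work of decomposing a proof in an arbitrary deep context into a shallow sequent-like form guarded by a killing context.

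Assuming the hypothesis and fixing a proof of $\vdash \context{P}$, I would first apply Lemma~\ref{lemma:invariant} with $T := P$. This yields predicates $U_i$, substitutions $\sigma_i$ (for $1 \leq i \leq n$) and an $n$-ary killing context $\tcontext{}$ such that $\vdash P\sigma_i \cpar U_i$ holds for each $i$, and moreover, instantiating the universally-quantified $V$ of the invariant with $V := Q$, there is a derivation
\[
\context{Q} \longrightarrow \tcontext{W_1, W_2, \hdots, W_n}
\]
in which each $W_i$ is either $\cunit$ or $Q\sigma_i \cpar U_i$. Next, I would apply the hypothesis of the lemma once per index, taking $\sigma := \sigma_i$ and $R := U_i$, to turn the collection of shallow judgements $\vdash P\sigma_i \cpar U_i$ into $\vdash Q\sigma_i \cpar U_i$. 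Consequently, in either branch of the case split, $\vdash W_i$ holds.

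It remains to assemble these local proofs into a proof of $\vdash \tcontext{W_1,\hdots,W_n}$, after which composition with the derivation above gives $\vdash \context{Q}$. For this step I would argue by deep inference: each $W_i$ admits a derivation $W_i \longrightarrow \cunit$, and since $\tcontext{}$ is a context (even though it has several holes), these derivations may be performed independently inside each hole, producing $\tcontext{W_1,\hdots,W_n} \longrightarrow \tcontext{\cunit,\hdots,\cunit}$. The latter is provable by Lemma~\ref{lemma:kill-distribute}, which explicitly gives $\vdash \tcontext{\cunit,\hdots,\cunit}$.

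The only delicate point — which I would expect to be the main place where care is required rather than a genuine obstacle — is the interaction between the substitutions $\sigma_i$ and the binders occurring in $\tcontext{}$ and in the original context $\context{}$. Specifically, the derivation produced by Lemma~\ref{lemma:invariant} must be compatible with the fact that $\tcontext{}$ may bind variables appearing in $\sigma_i$, and the hypothesis $\vdash P\sigma \cpar R \Rightarrow \vdash Q\sigma \cpar R$ is stated for arbitrary $R$, so $U_i$ may legitimately contain any free variables introduced by the $\sigma_i$. Freshness conditions on killing contexts (as tracked in the splitting lemmas) together with standard $\alpha$-conversion ensure this goes through, so the proof should reduce to an essentially direct reading-off of the conclusion from Lemma~\ref{lemma:invariant} combined with Lemma~\ref{lemma:kill-distribute}.
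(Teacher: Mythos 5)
Your proposal is correct and takes essentially the same route as the paper: apply Lemma~\ref{lemma:invariant} with $T := P$ to obtain the killing context and the shallow judgements $\vdash P\sigma_i \cpar U_i$, use the hypothesis once per index to convert them to $\vdash Q\sigma_i \cpar U_i$, and discharge the holes of the killing context by deep inference together with Lemma~\ref{lemma:kill-distribute}. If anything, your write-up is slightly more explicit about the role of the $\sigma_i$ than the paper's own statement of the argument.
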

\tohide{
\begin{proof}
Assume that for any formula $U$, $\vdash S \cpar U$ yields $\vdash T \cpar U$, and fix any context $\context{}$ such that $\vdash \context{S}$ holds. By Lemma~\ref{lemma:invariant}, there exist $n$-ary killing context $\tcontext{}$ and,
for $1 \leq i \leq n$, $P_i$ such that either $P_i = \cunit$ or there exists $W_i$ where $P_i = T \cpar W_i$ and $\vdash S \cpar W_i$, and furthermore
$
\vcenter{
\infer[]{
\context{T}
}{
\tcontext{P_1, \hdots, P_n}
}}$.
Since, by our assumption, also $\vdash T \cpar W_i$ holds for $1 \leq i \leq n$, the proof 
$
\vcenter{
\infer[]{
\context{T}
}{
\infer[]{
 \tcontext{P_1, \hdots, P_n} 
}{
\infer[]{
 \tcontext{\cunit, \hdots, \cunit} 
}{
 \cunit
}}}}
$ can be constructed.
Therefore $\vdash \context{T}$ holds.
\end{proof}

Note that the case for existential quantifiers will not work for second-order quantifiers, since termination of the induction is reliant on the size of the term-free part of the formula being reduced. Thus the techniques in the above proof apply to first-order quantifiers only.
}{
The subtelty of context reduction is to accommodate \textit{plus} and \textit{some} by the following stronger induction invariant:
If $\vdash \context{T}$, then there exist formulae $U_i$ and substitutions $\sigma_i$ such that $\vdash T\mathclose{\sigma_i} \cpar U_i$, for $1 \leq i \leq n$; and $n$-ary killing context $\tcontext{}$  such that for any formula $V$ there exist $W_i$ such that either $W_i = V\mathclose{\sigma_i} \cpar U_i$ or $W_i = \cunit$, for $1 \leq i \leq n$, and the following holds:
$
\vcenter{
\infer[]{
 \context{V}
}{
 \tcontext{ W_1, W_2, \hdots, W_n }
}
}$.
}

\subsection{Cut elimination as co-rule elimination}

\begin{figure}
\begin{gather*}
\infer[\mbox{(atomic co-interaction)}]{
\context{ \cunit }
}{
\context{ \alpha \tensor \co{\alpha} }
}
\qquad
\infer[\mbox{(co-select1)}]{
\context{ P \sub{x}{v} }
}{
\context{ \forall x. P }
}
\\[9pt]
\infer[\mbox{(co-sequence)}]{
\context{(P \tensor U) \cseq (Q \tensor V)}
}{
\context{(P \cseq Q) \tensor (U \cseq V)}
}
\qquad\qquad
\infer[\mbox{(co-external)}]{
\context{ (P \cpar R) \ooplus (Q \cpar S) }
}{
\context{ (P \ooplus Q) \cpar S }
}
\\[9pt]
\infer[\mbox{(co-tidy)}]{
\context{ \cunit }
}{
\context{ \cunit \ooplus \cunit }
}
\qquad\qquad
\infer[\mbox{(co-left)}]{
\context{ P}
}{
\context{ P \wwith Q }
}
\qquad\qquad
\infer[\mbox{(co-right)}]{
\context{ Q}
}{
\context{ P \wwith Q }
}
\\[9pt]
\infer[\mbox{(co-extrude1)}]{
\context{ \mathopen{\exists x.}\left( P \tensor R \right) }
}{
\context{ \exists x. P \tensor R }
}
\qquad\qquad\qquad
\infer[\mbox{(co-tidy1)}]{
\context{ \cunit }
}{
\context{ \exists x. \cunit }
}
\\[9pt]
\infer[\mbox{(co-close)}]{
\context{ \mathopen{\wen x .}\left( P \tensor Q \right) }
}{
\context{ \new x. P \tensor \wen x. Q }
}
\qquad\qquad\qquad
\infer[\mbox{(co-tidy name)}]{
\context{ \cunit }
}{
\context{ \wen x. \cunit }
}
\end{gather*}
\caption[Co-rules extending the system \textsf{MAV1} to \textsf{SMAV1}.]
{Co-rules extending the system \textsf{MAV1} to \textsf{SMAV1}, where 
 $\nfv{x}{R}$.}
\label{figure:co-rules}
\end{figure}

For a rule of the form $\vcenter{\infer{P}{Q}}$, there is a  corresponding \textit{co-rule} of the form $\vcenter{\infer{\co{Q}}{\co{P}}}$, where premise and conclusion are interchanged and each formula is dualised using negation. 
The rules \textit{switch}, \textit{fresh} and \textit{new wen} are their own co-rules. 
Also the co-rule of the \textit{medial new} rule is an instance of the \textit{suspend} rule.
All other rules give rise to distinct co-rules, presented in Figure~\ref{figure:co-rules}.
Note co-rules with no role in cut elimination are ommitted from the figure.


The following nine lemmas each establish that a co-rule is admissible in \textsf{MAV1}.
Only the following co-rules need be handled directly in order to establish cut elimination:
\textit{co-close}, \textit{co-tidy name}, \textit{co-extrude1}, \textit{co-select1}, \textit{co-tidy1}, \textit{co-left}, \textit{co-right}, \textit{co-external}, \textit{co-tidy}, \textit{co-sequence} and \textit{atomic co-interaction}.
In each case, the proof proceeds by applying splitting in a shallow context, forming a new proof, and finally applying Lemma~\ref{lemma:context}. 
Each co-rule can be treated independently, hence are established as separate lemmas.
\begin{lemma}[co-close]
\label{lemma:co-close}
If $\vdash \context{ \wen x. P \tensor \new x. Q }$ holds
then $\vdash \context{ \mathopen{\wen x.} \left( P \tensor Q \right) }$ holds.
\end{lemma}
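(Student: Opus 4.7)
The plan is to prove the result first in a shallow context of the form $\ehole{\,\cdot\,} \cpar R$ and then invoke Lemma~\ref{lemma:context} to lift to an arbitrary deep context $\context{}$. So I would begin by assuming $\vdash \wen x. P \tensor \new x. Q \cpar R$ and apply Lemma~\ref{lemma:split-times} for the $\tensor$ operator, obtaining predicates $A_i, B_i$ and an $n$-ary killing context $\tcontext{}$ such that $R \longrightarrow \tcontext{A_i \cpar B_i \colon 1 \leq i \leq n}$, together with proofs $\vdash \wen x. P \cpar A_i$ and $\vdash \new x. Q \cpar B_i$ for each $i$.

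Next I would apply Lemma~\ref{lemma:split-times} once more on each side: for $\wen x. P \cpar A_i$ to get $A_i \longrightarrow A'_i$ with $A'_i$ either equal to $\hat{A}_i$ (and $\nfv{x}{\hat{A}_i}$) or to $\new x. \hat{A}_i$, and with $\vdash P \cpar \hat{A}_i$; symmetrically for $\new x. Q \cpar B_i$ to get $B_i \longrightarrow B'_i$ with $B'_i$ either $\hat{B}_i$ (with $\nfv{x}{\hat{B}_i}$) or $\wen x. \hat{B}_i$, and with $\vdash Q \cpar \hat{B}_i$. The central task is then to construct a proof of $\vdash \wen x. (P \tensor Q) \cpar A'_i \cpar B'_i$ in each of the four combinations. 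When $A'_i = \new x. \hat{A}_i$ I would combine the outer $\wen x$ with the inner $\new x$ via the \textit{close} rule; when $B'_i = \wen x. \hat{B}_i$ I would merge the two $\wen x$ quantifiers using \textit{medial wen}; and the name-free components are absorbed via \textit{left wen}, \textit{extrude new}, or \textit{fresh} as appropriate (Lemma~\ref{lemma:close} packages exactly this pattern). In every case, the derivation terminates in $\mathopen{\new x.}\bigl((P \cpar \hat{A}_i) \tensor (Q \cpar \hat{B}_i)\bigr)$ by \textit{switch}, which reduces to $\new x. \cunit \longrightarrow \cunit$ using the assumed proofs.

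Putting the four sub-cases together with Lemma~\ref{lemma:kill-distribute} to push $\wen x. (P \tensor Q)$ through the killing context yields $\vdash \wen x. (P \tensor Q) \cpar R$ in the shallow-context case. An invocation of Lemma~\ref{lemma:context} with $T = \wen x. P \tensor \new x. Q$ and $U = \wen x. (P \tensor Q)$ then extends the result to arbitrary $\context{}$, noting that the substitutions introduced by context reduction interact innocuously with the bound $x$ (modulo $\alpha$-conversion).

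The main obstacle I anticipate is the bookkeeping around freshness in the four-way case analysis: the splitting lemma returns $A'_i$ in either an `absorbed' or `retained' form, and the order in which one applies \textit{close}, \textit{medial wen}, and the name-free absorption rules must respect the freshness side-conditions on each rule, so a careful accounting of whether $x$ is free in each $\hat{A}_i$ and $\hat{B}_i$ is needed to ensure the derivation is well-formed in every combination.
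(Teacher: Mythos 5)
Your proposal matches the paper's proof essentially step for step: split around $R$ with Lemma~\ref{lemma:split-times} for $\tensor$, split again for the inner $\wen$ and $\new$, dispose of the four resulting combinations using \textit{close}, \textit{medial wen}, \textit{fresh}, and \textit{extrude new} according to whether the residuals retain their quantifier, distribute through the killing context (Lemma~\ref{lemma:kill-distribute}), and lift via Lemma~\ref{lemma:context}. The one point the paper treats more carefully is threading an arbitrary substitution $\sigma$ through the shallow-context argument (together with an $\alpha$-conversion to a fresh $y$ avoiding $\sigma$), which is strictly needed to satisfy the hypothesis of Lemma~\ref{lemma:context}; you correctly flag this but leave it to a remark rather than carrying it explicitly.
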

\begin{proof}
Assume that $\vdash \left(\wen x. P \tensor \new x. Q \right)\mathclose\sigma \cpar R$ for some substitution of terms for variables $\sigma$.
By Lemma~\ref{lemma:split-times}, there exist $S_i$ and $T_i$ such that $\vdash \left(\wen x. P\right)\mathclose\sigma \cpar S_i$ and $\vdash \left(\new x. Q\right)\mathclose\sigma \cpar T_i$, for $1 \leq i \leq n$,
and $n$-ary killing context such that the derivation
\[
\infer{
R
}{
 \tcontext{ S_i \cpar T_i \colon 1 \leq i \leq n }
}
\] holds.
Also for some $y$ such that $\nfv{y}{\wen x. P}$, $\nfv{y}{\new x. Q}$ and $\nfv{y}{\sigma}$, $\left(\wen x. P\right)\mathclose\sigma \equiv \mathopen{\wen y.} \left( P\sub{x}{y}\mathclose\sigma \right)$ and $\left(\new x. Q\right)\mathclose\sigma \equiv \mathopen{\new y.}\left(Q\sub{x}{y}\mathclose\sigma\right)$, where $\nfv{y}{\sigma}$ is defined such that $y$ does not appear in the domain of $\sigma$ nor free in any term in the range of $\sigma$.
Hence both $\vdash \mathopen{\wen y.}\left( P\sub{x}{y}\mathclose\sigma \right) \cpar S_i$ and $\vdash \mathopen{\new y.} \left(Q\sub{x}{y}\mathclose\sigma\right) \cpar T_i$ hold.

Hence, by Lemma~\ref{lemma:split-times}, there exist $U_i$ and $\hat{U}_i$ such that $\vdash P\sub{x}{y}\mathclose\sigma \cpar \hat{U}_i$ and either $U_i = \hat{U}_i$ or $U_i = \new y. \hat{U}_i$, and also the derivation
$
\vcenter{
\infer{
S_i
}{
 U_i
}}
$ holds.

Similarly, by Lemma~\ref{lemma:split-times}, there exist $W_i$ and $\hat{W}_i$ such that $\vdash Q\sub{x}{y}\mathclose\sigma \cpar \hat{W}_i$ and either $W_i = \hat{W}_i$ or $W_i = \wen y. \hat{W}_i$, and also the derivation 
$
\vcenter{
\infer{
T_i
}{
 W_i
}}
$ holds.

There are four cases to consider for each $i$. Three of the cases are as follows.
\begin{itemize}
\item
If $U_i = \new y. \hat{U}_i$ and $W_i = \wen y. \hat{W}_i$ then 
\[
\infer[.]{
\new y. \hat{U}_i \cpar \wen y. \hat{W}_i
}{
\new y. \left( \hat{U}_i \cpar \hat{W}_i\right)
}
\]

\item 
If $U_i = \hat{U}_i$, $\nfv{y}{\hat{U}_i}$, and $W_i = \wen y. \hat{W}_i$, 
then 
\[
\infer[.]{
U_i \cpar \wen y. \hat{W}_i
}{
\infer[]{
\mathopen{\wen y.}\left( \hat{U}_i \cpar \hat{W}_i\right)
}{
\mathopen{\new y.} \left( \hat{U}_i \cpar \hat{W}_i\right)
}}
\]

\item
If $U_i = \new y. \hat{U}_i$ and $W_i = \hat{W}_i$, such that $\nfv{y}{\hat{W}_i}$ then 
\[
\infer[.]{
\new x. U_i \cpar \hat{W}_i
}{
\mathopen{\new y.} \left( \hat{U}_i \cpar \hat{W}_i\right)
}
\]
\end{itemize}
Thereby in any of the above three cases the following derivation can be constructed.
\[
\infer[]{
\left(\mathopen{\wen x.} \left( P \tensor Q \right)\right)\mathclose{\sigma} \cpar U_i \cpar W_i 
}{
\infer[]{
\left(\mathopen{\wen x.} \left( P \tensor Q \right)\right)\mathclose{\sigma} \cpar \new y. \left( \hat{U}_i \cpar \hat{W}_i \right)
}{
\mathopen{\new y.} \left(\left( P \tensor Q \right)\sub{x}{y}\mathclose\sigma \cpar \hat{U}_i \cpar
   \hat{W}_i \right)
}}
\]
In the fourth case $U_i = \hat{U}_i$ and $W_i = \hat{W}_i$, such that $\nfv{y}{\hat{W}_i}$ and  $\nfv{y}{\hat{U}_i}$ yielding the following.
\[
\infer[]{
\left(\mathopen{\wen x.} \left( P \tensor Q \right)\right)\mathclose\sigma \cpar \hat{U}_i \cpar
   \hat{W}_i
}{
\infer[]{
\mathopen{\new y.} \left(\left( P \tensor Q\right)\sub{x}{y}\mathclose\sigma\right) \cpar \hat{U}_i \cpar
   \hat{W}_i
}{
\mathopen{\new y.} \left(\left( P \tensor Q \right)\sub{x}{y}\mathclose\sigma \cpar \hat{U}_i \cpar
   \hat{W}_i \right)
}}
\]
By applying one of the above possible derivations for every $i$, the following proof can be constructed.
\[
\infer[]{
\left(\mathopen{\wen x.} \left( P \tensor Q \right)\right)\mathclose\sigma \cpar R
}{
\infer[]{
\left(\mathopen{\wen x.} \left( P \tensor Q \right)\right)\mathclose\sigma \cpar \tcontext{S_i \cpar T_i \colon 1 \leq i \leq n }
}{
\infer[]{
\left(\mathopen{\wen x.} \left( P \tensor Q \right)\right)\mathclose\sigma \cpar
\tcontext{
 U_i
 \cpar 
 W_i
 \colon 1 \leq i \leq n
}
}{
\infer[]{
\tcontext{
 \left(\mathopen{\wen x.} \left( P \tensor Q \right)\right)\mathclose\sigma
 \cpar
 U_i
 \cpar 
 W_i
 \colon 1 \leq i \leq n
}
}{
\infer[]{
\tcontext{
 \mathopen{\new y.} \left(\left( P \tensor Q \right)\sub{x}{y}\mathclose\sigma \cpar \hat{U}_i \cpar
   \hat{W}_i \right)
 \colon 1 \leq i \leq n
}
}{
\infer[]{
\tcontext{
   \mathopen{\new y.} \left(\left( P\sub{x}{y}\mathclose\sigma \cpar \hat{U}_i \right) \tensor
   \left( Q\sub{x}{y}\mathclose\sigma \cpar \hat{W}_i \right)\right)
 \colon 1 \leq i \leq n
}
}{
\infer[]{
\tcontext{
   \new y. \cunit
 \colon 1 \leq i \leq n
}
}{
 \cunit
}}}}}}}
\]
Therefore, by Lemma~\ref{lemma:context}, for all contexts $\context{}$,
if $\vdash \context{ \wen x. P \tensor \new x. Q }$ then
$\vdash \context{ \mathopen{\new x.} \left( P \tensor Q \right) }$.
\end{proof}

\begin{lemma}[co-tidy name]
\label{lemma:co-tidynew}
If $\vdash \context{ \wen x. \cunit }$ holds
then $\vdash \context{ \cunit }$ holds.
\end{lemma}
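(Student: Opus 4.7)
The plan is to follow the same template as Lemma~\ref{lemma:co-close}: first establish the statement in a shallow context of the form $\left\{\ \cdot\ \right\} \cpar R$ (with a substitution applied to the principal predicate, to accommodate the use of Lemma~\ref{lemma:context}), and then lift to arbitrary contexts via context reduction (Lemma~\ref{lemma:context}). Specifically, I would assume that $\vdash (\wen x. \cunit)\mathclose\sigma \cpar R$ holds for an arbitrary substitution $\sigma$ and arbitrary predicate $R$, and aim to construct a proof of $\cunit \cpar R$, i.e.\ of $R$.

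The key step is to invoke the splitting lemma for \textit{wen} (Lemma~\ref{lemma:split-times}, fourth case). Picking a fresh $y$ with $\nfv{y}{\sigma}$, we have $(\wen x. \cunit)\mathclose\sigma \equiv \wen y.\cunit$, so splitting yields predicates $V$ and $W$ with $\nfv{y}{V}$ and $\vdash \cunit \cpar W$ (equivalently $\vdash W$), such that $R \longrightarrow V$ and either $V = W$ or $V = \new y.W$. Since $\vdash W$, there is a derivation $W \longrightarrow \cunit$. In the case $V = W$, directly $R \longrightarrow W \longrightarrow \cunit$. In the case $V = \new y.W$, we build $R \longrightarrow \new y. W \longrightarrow \new y. \cunit \longrightarrow \cunit$ using the \textit{tidy name} rule at the last step. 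Combining, $\vdash \cunit \cpar R$ holds.

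Having settled the shallow context case, I would then apply Lemma~\ref{lemma:context} with $P = \wen x.\cunit$ and $Q = \cunit$ to conclude that $\vdash \context{\wen x.\cunit}$ implies $\vdash \context{\cunit}$ for any context $\context{}$. This step is routine once the shallow case is in hand, since substitutions act trivially on $\wen x.\cunit$ and $\cunit$ (up to $\alpha$-conversion of the bound $x$ to avoid clashes with $\sigma$).

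I do not expect a serious obstacle: there is no principal case for an atom or binary operator to consider, and \textit{wen} acts only on $\cunit$. The only small subtlety is the bookkeeping of the fresh variable when applying splitting (ensuring $y$ is chosen so that $\nfv{y}{R}$ and $\nfv{y}{\sigma}$), and the observation that the \emph{tidy name} rule is exactly what is needed to close off the case $V = \new y. W$. Compared with Lemma~\ref{lemma:co-close}, this proof is strictly easier: there is no need to balance two splittings or to glue contexts back together via \textit{close}; a single application of splitting plus one instance of \textit{tidy name} suffices.
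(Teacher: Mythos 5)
Your proof is correct and follows essentially the same route as the paper: one application of the \textit{wen} case of Lemma~\ref{lemma:split-times} in a shallow context, followed by Lemma~\ref{lemma:context}. Your version is actually more careful than the paper's (which compresses the two subcases $V=W$ and $V=\new y.W$ into a single phrase ``there exists $Q$ such that $\vdash Q$ and $P \longrightarrow Q$'', and omits the substitution $\sigma$ entirely); your explicit use of \textit{tidy name} to close the $V = \new y.W$ case and your handling of $\sigma$ via $\alpha$-conversion make the argument cleaner, though the mathematical content is identical.
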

\begin{proof}
Assume that $\vdash \wen x. \cunit \cpar P$ holds.
By Lemma~\ref{lemma:split-times}, there exists $Q$ such that $\vdash Q$ and $\vcenter{\infer{P}{Q}}$.
Hence the following proof of $P$ can be constructed:
$
\vcenter{
\infer{
P
}{
\infer[]{
Q
}{
\cunit
}}}
$.
Therefore, by Lemma~\ref{lemma:context}, for any context $\context{}$, if $\vdash \context{\wen{x}. \cunit}$ then $\vdash \context{ \cunit }$, as required.
\end{proof}

\begin{lemma}[co-extrude1]
\label{lemma:co-extrude}
If  $\nfv{x}{Q}$ and $\vdash \context{ \exists x. P \tensor Q }$ holds then $\vdash \context{ \mathopen{\exists x.} \left( P \tensor Q \right) }$ holds.
\end{lemma}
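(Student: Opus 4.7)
The plan is to apply Lemma~\ref{lemma:context}, which reduces the statement to the shallow form: for every substitution $\sigma$ of terms for variables and every predicate $R$, if $\vdash (\exists x. P \tensor Q)\mathclose\sigma \cpar R$ then $\vdash (\mathopen{\exists x.}(P \tensor Q))\mathclose\sigma \cpar R$. By $\alpha$-conversion I pick a variable $y$ fresh for $P$, $Q$, $R$ and for $\sigma$, so that $(\exists x. P)\mathclose\sigma \equiv \mathopen{\exists y.} P\sub{x}{y}\mathclose\sigma$, and, crucially using $\nfv{x}{Q}$, also $(\mathopen{\exists x.}(P \tensor Q))\mathclose\sigma \equiv \mathopen{\exists y.}(P\sub{x}{y}\mathclose\sigma \tensor Q\mathclose\sigma)$.

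First I would apply Lemma~\ref{lemma:split-times} to the hypothesis $\vdash (\mathopen{\exists y.} P\sub{x}{y}\mathclose\sigma \tensor Q\mathclose\sigma) \cpar R$. This yields predicates $S_i$, $T_i$ and an $n$-ary killing context $\tcontext{}$ with $\vdash \mathopen{\exists y.} P\sub{x}{y}\mathclose\sigma \cpar S_i$, $\vdash Q\mathclose\sigma \cpar T_i$, and $R \longrightarrow \tcontext{S_i \cpar T_i \colon 1 \leq i \leq n}$. Next I would apply Lemma~\ref{lemma:split-exists} to each of the $n$ proofs $\vdash \mathopen{\exists y.} P\sub{x}{y}\mathclose\sigma \cpar S_i$, obtaining witness terms $v^i_j$, predicates $U^i_j$, and $m_i$-ary killing contexts $\tcontextn{i}{}$ such that $\vdash P\sub{x}{v^i_j}\mathclose\sigma \cpar U^i_j$ for $1 \leq j \leq m_i$ and $S_i \longrightarrow \tcontextn{i}{U^i_j \colon 1 \leq j \leq m_i}$. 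The freshness clauses in the two splitting lemmas ensure that $y$ can be chosen so that it is not bound by any $\tcontext{}$ or $\tcontextn{i}{}$ and remains fresh for every $U^i_j$ and $T_i$.

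To assemble the required proof of $\mathopen{\exists y.}(P\sub{x}{y}\mathclose\sigma \tensor Q\mathclose\sigma) \cpar R$, I would first rewrite $R$ down to $\tcontext{\tcontextn{i}{U^i_j \colon 1 \leq j \leq m_i} \cpar T_i \colon 1 \leq i \leq n}$, then use Lemma~\ref{lemma:kill-distribute} to push $\mathopen{\exists y.}(P\sub{x}{y}\mathclose\sigma \tensor Q\mathclose\sigma)$ into every hole of the nested killing context. In each hole I apply the \textit{select1} rule with the witness $v^i_j$, obtaining $(P\sub{x}{v^i_j}\mathclose\sigma \tensor Q\mathclose\sigma) \cpar U^i_j \cpar T_i$, then the \textit{switch} rule to get $(P\sub{x}{v^i_j}\mathclose\sigma \cpar U^i_j) \tensor (Q\mathclose\sigma \cpar T_i)$; both factors reduce to $\cunit$ by the proofs produced by splitting. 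Thereby the entire predicate reduces to $\cunit$, and a final appeal to Lemma~\ref{lemma:context} lifts the result from the shallow context to arbitrary $\context{}$.

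The main novelty compared to Lemma~\ref{lemma:co-close} is mild but worth flagging: since a single $\exists y$ binder is being introduced above the $\tensor$, one has to exploit the fact that the nested killing-context structure lets \textit{select1} be applied with a \textit{different} witness $v^i_j$ at each leaf. I do not expect any real obstacle beyond bookkeeping of the freshness of $y$ with respect to every killing context and substitution that arises.
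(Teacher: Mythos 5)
Your proposal follows the paper's own proof essentially step for step: reduce via Lemma~\ref{lemma:context} to a shallow context with a substitution $\sigma$, $\alpha$-rename the bound variable to a fresh $y$, apply Lemma~\ref{lemma:split-times} then Lemma~\ref{lemma:split-exists} to obtain nested killing contexts with per-leaf witnesses, and reassemble with \textit{select1} and \textit{switch}. The only discrepancy is notational and harmless: after \textit{select1} the residual is $(P\sub{x}{y}\mathclose{\sigma})\sub{y}{v^i_j}$ rather than $P\sub{x}{v^i_j}\mathclose{\sigma}$ (the two can differ when $v^i_j$ meets the domain of $\sigma$), and Lemma~\ref{lemma:split-exists} hands you exactly the former; the argument goes through unchanged once this is written as in the paper.
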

\begin{proof}
Assume that $\vdash \left( \exists x. P \tensor Q \right)\mathclose{\sigma} \cpar V$ holds, where $\nfv{x}{Q}$. Now, since
$\nfv{y}{(\exists x. P \tensor Q)}$ and $\nfv{y}{\sigma}$, we have
$\left( \exists x. P \tensor Q \right)\mathclose{\sigma} \cpar V
\equiv
\left( \mathopen{\exists y.} \left(P\sub{x}{y} \sigma\right) \tensor Q\sigma \right)  \cpar V$.
So, by Lemma~\ref{lemma:split-times}, there exist $T_i$ and $U_i$ such that $\vdash \mathopen{\exists y.} \left(P\sub{x}{y} \sigma\right) \cpar T_i$ and $\vdash Q\sigma \cpar U_i$, for $1 \leq i \leq n$, and $n$-ary killing context such that the derivation
\[
\infer[]{
V
}{ \tcontext{T_1 \cpar U_1, \hdots, T_n \cpar U_n}
}
\] holds.
By Lemma~\ref{lemma:split-exists}, there exist $R^i_j$ and $v^i_j$ such that $\vdash P\sub{x}{y} \sigma \sub{y}{v^i_j} \cpar R^i_j$, for $1 \leq j \leq m_i$, and $m_i$-ary killing context $\tcontextn{i}{}$ such that the derivation
\[
\infer[]{
T_i
}{
 \tcontextn{i}{ R^i_1, R^i_2, \hdots, R^i_{m_i} }
}
\] holds.
Hence the following proof can be constructed, where we appeal to $\alpha$-conversion in the conclusion.
\[
\infer[]{
\mathopen{\exists y.} \left( P\sub{x}{y}\sigma \tensor Q\sigma \right) \cpar V
}{
\infer[]{
\mathopen{\exists y.} \left( P\sub{x}{y}\sigma \tensor Q\sigma \right)
\cpar \tcontext{T_i \cpar U_i \colon 1 \leq i \leq n }
}{
\infer[]{
\mathopen{\exists y.} \left( P\sub{x}{y}\sigma \tensor Q\sigma \right)
\cpar \tcontext{
            \tcontextn{i}{ R^i_j \colon 1 \leq j \leq m_i } \cpar U_i
            \colon 1 \leq i \leq n 
           }
}{
\infer[]{
\tcontext{
 \mathopen{\exists y.} \left( P\sub{x}{y}\sigma \tensor Q\sigma \right)
 \cpar \tcontextn{i}{ R^i_j \colon 1 \leq j \leq m_i } \cpar U_i
 \colon 1 \leq i \leq n 
 }
}{
\infer[]{
\tcontext{
 \tcontextn{i}{ \mathopen{\exists y.} \left( P\sub{x}{y}\sigma \tensor Q\sigma \right)
 \cpar R^i_j \cpar U_i \colon 1 \leq j \leq m_i }
 \colon 1 \leq i \leq n 
 }
}{
\infer[]{
\tcontext{
 \tcontextn{i}{
  \left( P\sub{x}{y}\sigma\sub{y}{v^i_j} \tensor Q\sigma \right)
  \cpar R^i_j \cpar U_i \colon 1 \leq j \leq m_i }
 \colon 1 \leq i \leq n
 }
}{
\infer[]{
\tcontext{
 \tcontextn{i}{
  \left( P\sub{x}{y}\sigma\sub{y}{v^i_j} \cpar R^i_j \right) 
  \tensor
  \left( Q\sigma \cpar U_i \right)
  \colon 1 \leq j \leq m_i }
 \colon 1 \leq i \leq n
 }
}{
\infer[]{
\tcontext{
 \tcontextn{i}{
  \cunit
  \colon 1 \leq j \leq m_i }
 \colon 1 \leq i \leq n } 
}{
 \cunit
}}}}}}}}
\]
\noindent Hence, by Lemma~\ref{lemma:context}, if $\vdash \context{ \exists x. P \tensor Q }$, where $\nfv{x}{Q}$, then $\vdash \context{ \mathopen{\exists x.} \left( P \tensor Q \right) }$.
\end{proof}

\begin{lemma}[co-tidy1]
\label{lemma:co-tidy1}
If $\vdash \context{ \exists x. \cunit }$ holds
then $\vdash \context{ \cunit }$ holds.
\end{lemma}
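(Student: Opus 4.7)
The plan is to follow the template of Lemma~\ref{lemma:co-tidynew} almost verbatim, substituting the splitting lemma for $\exists$ in place of splitting for $\wen$. First I would reduce to a shallow context: assume $\vdash (\exists x. \cunit)\mathclose{\sigma} \cpar R$ holds for some substitution $\sigma$ and predicate $R$. Since $\exists x. \cunit$ has no free variables, $(\exists x. \cunit)\mathclose{\sigma} \equiv \exists x. \cunit$, so the hypothesis simplifies to $\vdash \exists x. \cunit \cpar R$.

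Next I would apply Lemma~\ref{lemma:split-exists} to extract an $n$-ary killing context $\tcontext{}$, predicates $V_1, \ldots, V_n$, and terms $v_1, \ldots, v_n$ such that $\vdash \cunit\sub{x}{v_i} \cpar V_i$ for each $i$, together with a derivation $R \longrightarrow \tcontext{V_1, V_2, \ldots, V_n}$. Since $\cunit\sub{x}{v_i} = \cunit$, each judgement $\vdash \cunit \cpar V_i$ reduces to $\vdash V_i$, i.e.\ a derivation $V_i \longrightarrow \cunit$.

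Composing these derivations together with the property $\vdash \tcontext{\cunit, \ldots, \cunit}$ from Lemma~\ref{lemma:kill-distribute} yields the proof
\[
R \longrightarrow \tcontext{V_1, \ldots, V_n} \longrightarrow \tcontext{\cunit, \ldots, \cunit} \longrightarrow \cunit,
\]
so $\vdash R$ holds, establishing the shallow-context statement. Finally, an appeal to Lemma~\ref{lemma:context} lifts this to an arbitrary context, giving $\vdash \context{\cunit}$ from $\vdash \context{\exists x. \cunit}$. I do not anticipate any obstacle here: the proof is essentially routine bookkeeping, structurally identical to Lemma~\ref{lemma:co-tidynew}, with the only minor observation being that the substitutions $\sub{x}{v_i}$ produced by splitting act trivially on the unit.
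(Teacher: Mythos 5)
Your proof is correct and follows the same route as the paper: apply Lemma~\ref{lemma:split-exists} in a shallow context, observe that $\cunit\sub{x}{v_i} = \cunit$ so each $\vdash V_i$ holds, compose the derivations on the killing context to get $\vdash R$, and lift via Lemma~\ref{lemma:context}. Your explicit remark that $(\exists x.\cunit)\mathclose\sigma \equiv \exists x.\cunit$ is a small but welcome bit of extra care that the paper leaves implicit.
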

\begin{proof}
Assume that $\vdash \exists x. \cunit \cpar T$ holds.
By Lemma~\ref{lemma:split-exists}, there exists $U_i$ such that $\vdash U_i$, for $1 \leq i \leq n$, and $n$-ary killing context $\tcontext{}$ such that 
$\vcenter{\infer{T}{\tcontext{U_1, \hdots, U_n}}}$.
Hence the following proof of $T$ can be constructed:
\[
\infer[.]{
\cunit \cpar T 
}{
\infer[]{
\tcontext{U_1, \hdots, U_n}
}{
\infer[]{
\tcontext{\cunit, \hdots, \cunit}
}{
\cunit
}}}
\]
Therefore, by Lemma~\ref{lemma:context}, if $\vdash \context{\exists{x} \cunit}$ then $\vdash \context{ \cunit }$, as required.
\end{proof}

The above four lemmas are particular to \textsf{MAV1}.
The following lemma is proven directly for \textsf{MAV}, similarly to Lemma~\ref{lemma:universal}; however, for \textsf{MAV1} the proof is more indirect due to interdependencies between $\wwith$ and nominals.
\begin{lemma}[co-left and co-right]\label{lemma:co-branching}
If $\vdash \context{ P \wwith Q }$ holds
then both $\vdash \context{ P }$ and $\vdash \context{ Q }$ hold.
\end{lemma}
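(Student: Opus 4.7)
The plan is to follow the pattern established for the earlier co-rule admissibility lemmas (\textit{co-close}, \textit{co-tidy name}, \textit{co-extrude1}, etc.): first verify the statement in the shallow sequent-like context $\left\{\,\cdot\,\right\} \cpar R$ under an arbitrary substitution, and then promote to an arbitrary deep context via the context reduction lemma (Lemma~\ref{lemma:context}). The shallow case is essentially immediate from the splitting lemma for \textit{with}, which is the last clause of Lemma~\ref{lemma:split-times}.

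More concretely, I would fix an arbitrary substitution $\sigma$ and an arbitrary predicate $R$, and assume $\vdash \left( P \with Q \right)\mathclose\sigma \cpar R$. Since the additive \textit{with} commutes with capture-avoiding substitution, $\left( P \with Q \right)\mathclose\sigma \equiv P\mathclose\sigma \with Q\mathclose\sigma$, so this assumption is the same as $\vdash \left( P\mathclose\sigma \with Q\mathclose\sigma \right) \cpar R$. Applying the splitting case for \textit{with} from Lemma~\ref{lemma:split-times}, I obtain immediately that $\vdash P\mathclose\sigma \cpar R$ and $\vdash Q\mathclose\sigma \cpar R$ both hold. This establishes the hypothesis of Lemma~\ref{lemma:context} for the predicate pairs $(P \with Q, P)$ and $(P \with Q, Q)$ respectively: whenever $\vdash (P \with Q)\mathclose\sigma \cpar R$ holds, so does $\vdash P\mathclose\sigma \cpar R$, and likewise for $Q$.

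Applying Lemma~\ref{lemma:context} twice (once for $P$, once for $Q$) then yields that $\vdash \context{P \with Q}$ implies $\vdash \context{P}$ and $\vdash \context{Q}$ for every context $\context{}$, which is exactly the statement of the lemma.

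Unlike many of the other co-rule admissibility results in this section, there is really no substantive obstacle here: the work has already been carried out in the splitting case for \textit{with} within Lemma~\ref{lemma:split-times}, which itself absorbed the intricate interplay between nominal quantifiers and \textit{with} induced by the rules \textit{with name}, \textit{left name}, and \textit{right name}. Consequently the whole proof reduces to two invocations of earlier machinery and should occupy just a handful of lines, in contrast to the more delicate shallow constructions required, for example, in Lemma~\ref{lemma:co-close}.
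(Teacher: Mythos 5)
Your proposal is correct and matches the paper's proof: it applies the \textit{with} clause of Lemma~\ref{lemma:split-times} in the shallow context $\{\,\cdot\,\}\cpar R$ under an arbitrary substitution $\sigma$, and then promotes to deep contexts via Lemma~\ref{lemma:context}. The paper's own argument is exactly this, phrased even more tersely.
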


The proofs for the four co-rule elimination lemmas below are similar to the corresponding cases in \textsf{MAV}~\cite{Horne2015}.
\begin{lemma}[co-external]
\label{lemma:co-external}
If $\vdash \context{ P \tensor  \left( Q \ooplus R \right) }$ holds
 then $\vdash \context{ \left( P \tensor Q \right) \ooplus \left( P \tensor R \right) }$ holds.
\end{lemma}

\begin{lemma}[co-sequence]
\label{lemma:co-sequence}
If $\vdash \context{ \left( P \cseq Q \right) \tensor \left( R \andthen S \right) }$ holds
 then $\vdash \context{ \left( P \tensor R \right) \cseq \left( Q \tensor S \right) }$ holds.
\end{lemma}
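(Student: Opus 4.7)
The plan is to follow the same pattern as the preceding co-rule elimination lemmas (in particular Lemma~\ref{lemma:co-close} and Lemma~\ref{lemma:co-external}): first prove the statement in a shallow context of the form $\{\ \cdot\ \} \cpar W$ under an arbitrary substitution $\sigma$, and then lift to an arbitrary context $\context{}$ by a single appeal to Lemma~\ref{lemma:context}. Concretely, I would assume that $\vdash \left(\left(P \cseq Q\right) \tensor \left(R \cseq S\right)\right)\mathclose{\sigma} \cpar W$ holds and aim to construct a proof of $\vdash \left(\left(P \tensor R\right) \cseq \left(Q \tensor S\right)\right)\mathclose{\sigma} \cpar W$ from it, from which the stated form follows by Lemma~\ref{lemma:context}.

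The key steps would be three nested applications of Lemma~\ref{lemma:split-times}. First, splitting for \textit{times} gives predicates $T_i, U_i$ and an $n$-ary killing context $\tcontext{}$ with $\vdash \left(P \cseq Q\right)\mathclose{\sigma} \cpar T_i$ and $\vdash \left(R \cseq S\right)\mathclose{\sigma} \cpar U_i$, for $1 \leq i \leq n$, such that $W \longrightarrow \tcontext{T_i \cpar U_i \colon 1 \leq i \leq n}$. Second, splitting for \textit{seq} applied to each $\left(P \cseq Q\right)\mathclose{\sigma} \cpar T_i$ yields $V^i_j, W^i_j$ with $\vdash P\mathclose{\sigma} \cpar V^i_j$ and $\vdash Q\mathclose{\sigma} \cpar W^i_j$ and an $m_i$-ary killing context $\tcontextmn{i}{0}{}$ such that $T_i \longrightarrow \tcontextmn{i}{0}{V^i_j \cseq W^i_j \colon 1 \leq j \leq m_i}$. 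Third, splitting for \textit{seq} applied to each $\left(R \cseq S\right)\mathclose{\sigma} \cpar U_i$ yields $A^i_k, B^i_k$ with $\vdash R\mathclose{\sigma} \cpar A^i_k$ and $\vdash S\mathclose{\sigma} \cpar B^i_k$ and an $\ell_i$-ary killing context $\tcontextmn{i}{1}{}$ such that $U_i \longrightarrow \tcontextmn{i}{1}{A^i_k \cseq B^i_k \colon 1 \leq k \leq \ell_i}$.

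The final assembly step plugs $\left(\left(P \tensor R\right) \cseq \left(Q \tensor S\right)\right)\mathclose{\sigma}$ into the outer killing context, propagates it into the inner ones using Lemma~\ref{lemma:kill-distribute}, and then for each triple $(i,j,k)$ reduces the leaf predicate $\left(\left(P \tensor R\right) \cseq \left(Q \tensor S\right)\right)\mathclose{\sigma} \cpar \left(V^i_j \cseq W^i_j\right) \cpar \left(A^i_k \cseq B^i_k\right)$ to $\cunit$. This is done by applying the \textit{sequence} rule twice to obtain $\left(\left(P \tensor R\right)\mathclose{\sigma} \cpar V^i_j \cpar A^i_k\right) \cseq \left(\left(Q \tensor S\right)\mathclose{\sigma} \cpar W^i_j \cpar B^i_k\right)$, then applying \textit{switch} inside each side to regroup as $\left(\left(P\mathclose{\sigma} \cpar V^i_j\right) \tensor \left(R\mathclose{\sigma} \cpar A^i_k\right)\right) \cseq \left(\left(Q\mathclose{\sigma} \cpar W^i_j\right) \tensor \left(S\mathclose{\sigma} \cpar B^i_k\right)\right)$, and finally collapsing each factor to $\cunit$ using the four subproofs obtained from splitting and the unit laws $\cunit \tensor \cunit \equiv \cunit$ and $\cunit \cseq \cunit \equiv \cunit$.

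The main obstacle I anticipate is purely bookkeeping: there is no deep new idea beyond the uniform pattern already used for co-close and co-external, but the triple nesting of splittings produces derivations indexed by $i,j,k$ and one must be careful that the inner killing contexts $\tcontextmn{i}{0}{}$ and $\tcontextmn{i}{1}{}$ sit in parallel (connected by $\cpar$) rather than under a shared \textit{seq}, so Lemma~\ref{lemma:medial} is \emph{not} needed here --- this distinguishes the proof from the analogous splitting case for \textit{seq} inside Lemma~\ref{lemma:split-times}. Since the \textit{sequence} and \textit{switch} rules apply uniformly inside the product of killing contexts $\tcontext{\tcontextmn{i}{0}{\cdot} \cpar \tcontextmn{i}{1}{\cdot}}$, the remaining reductions are routine, and no interaction with nominal quantifiers or additives in the principal predicate is required.
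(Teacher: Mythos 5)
Your plan is correct and matches the paper's proof essentially step for step: one application of splitting for \textit{times}, two applications of splitting for \textit{seq}, distribution into the nested killing contexts via Lemma~\ref{lemma:kill-distribute}, reduction of each leaf to $\cunit$ using \textit{sequence} and \textit{switch} together with the four subproofs, and a final appeal to Lemma~\ref{lemma:context}. Your observation that Lemma~\ref{lemma:medial} is not needed here (since the two inner killing contexts sit in parallel rather than under a shared \textit{seq}) is also correct and is consistent with the paper.
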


\begin{lemma}[co-tidy]
\label{lemma:co-tidy}
If $\vdash \context{ \cunit \ooplus \cunit }$ holds, then $\vdash \context{ \cunit }$ holds.
\end{lemma}

\begin{lemma}[atomic co-interaction]
\label{lemma:co-atoms}
If $\vdash \context{ \alpha \tensor \co{\alpha} }$ holds
then $\vdash \context{ \cunit }$ holds.
\end{lemma}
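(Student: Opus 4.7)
The plan is to follow the standard recipe used throughout this section: establish the statement in a shallow context of the form $\left\{\,\cdot\,\right\} \cpar R$ under an arbitrary substitution $\sigma$, then apply the context reduction lemma (Lemma~\ref{lemma:context}) to lift to any context. So I will assume $\vdash (\alpha \tensor \co{\alpha})\mathclose\sigma \cpar R$ and aim to derive $\vdash \cunit \cpar R$, i.e.\ $\vdash R$. Let $\beta$ denote the atom $\alpha\mathclose\sigma$, so that the hypothesis reads $\vdash (\beta \tensor \co{\beta}) \cpar R$.

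First I would peel off the multiplicative by splitting. Applying Lemma~\ref{lemma:split-times} to $\vdash (\beta \tensor \co{\beta}) \cpar R$ yields, for $1 \leq i \leq n$, predicates $V_i$ and $W_i$ with $\vdash \beta \cpar V_i$ and $\vdash \co{\beta} \cpar W_i$, together with an $n$-ary killing context $\tcontext{}$ such that $R \longrightarrow \tcontext{V_1 \cpar W_1, \hdots, V_n \cpar W_n}$. Next I would exploit the atomic splitting result: applying Lemma~\ref{lemma:split-atoms} to each $\vdash \beta \cpar V_i$ produces a killing context $\tcontextn{i}{}$ such that $V_i \longrightarrow \tcontextn{i}{\co{\beta}, \hdots, \co{\beta}}$.

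The heart of the construction is then to thread $W_i$ through $\tcontextn{i}{}$ so that each leaf can be discharged by the hypothesis $\vdash \co{\beta} \cpar W_i$. By $\alpha$-conversion I may assume that none of the binders of $\tcontextn{i}{}$ capture a free variable of $W_i$, at which point Lemma~\ref{lemma:kill-distribute} gives
\[
V_i \cpar W_i \longrightarrow \tcontextn{i}{\co{\beta}, \hdots, \co{\beta}} \cpar W_i \longrightarrow \tcontextn{i}{\co{\beta} \cpar W_i, \hdots, \co{\beta} \cpar W_i}.
\]
Each leaf $\co{\beta} \cpar W_i$ rewrites to $\cunit$ by the assumed proof, and Lemma~\ref{lemma:kill-distribute} also supplies $\tcontextn{i}{\cunit, \hdots, \cunit} \longrightarrow \cunit$. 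Splicing these together with the outer killing context yields
\[
R \longrightarrow \tcontext{V_i \cpar W_i \colon 1 \leq i \leq n} \longrightarrow \tcontext{\cunit, \hdots, \cunit} \longrightarrow \cunit,
\]
so $\vdash R$, which is $\vdash \cunit \cpar R$.

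Having established the shallow-context statement for an arbitrary substitution, Lemma~\ref{lemma:context} immediately lifts it to arbitrary contexts $\context{}$, completing the proof. The only subtle step is the $\alpha$-renaming of $\tcontextn{i}{}$ before distributing $W_i$; everything else is a direct combination of the two splitting lemmas and the routine properties of killing contexts, and there is no principal case to analyse since the work has already been done in Lemmas~\ref{lemma:split-times} and~\ref{lemma:split-atoms}.
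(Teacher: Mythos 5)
Your proof is correct, and it takes a genuinely different route from the paper's. The paper applies Lemma~\ref{lemma:split-atoms} symmetrically, to both $\vdash \alpha\sigma \cpar U_i$ and $\vdash \co{\alpha\sigma} \cpar V_i$, obtaining two killing contexts $\tcontextmn{0}{i}{}$ (whose leaves are $\co{\alpha\sigma}$) and $\tcontextmn{1}{i}{}$ (whose leaves are $\alpha\sigma$). It then nests one killing context inside the other using Lemma~\ref{lemma:kill-distribute} twice, so that each leaf of the combined context contains a single pair $\co{\alpha\sigma} \cpar \alpha\sigma$, which is closed by one \textit{atomic interaction}. You instead apply split-atoms only once, to $\vdash \beta \cpar V_i$, leaving $W_i$ opaque, and discharge each resulting leaf $\co{\beta} \cpar W_i$ by directly invoking the hypothesis $\vdash \co{\beta} \cpar W_i$ rather than re-deriving it from scratch. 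This is slightly more economical: you avoid the second application of split-atoms and the nested double killing context. What the paper's version buys in return is a more uniform, self-contained derivation whose only nontrivial rewrite is \textit{atomic interaction}; your version relies on splicing in the whole (possibly large) proof of $\vdash \co{\beta} \cpar W_i$ at each leaf. Both approaches equally require the freshness side condition of Lemma~\ref{lemma:kill-distribute}, which you are right to flag and resolve by $\alpha$-renaming the killing context: this is safe precisely because Lemma~\ref{lemma:split-atoms} guarantees that the binders of $\tcontextn{i}{}$ do not occur in $\beta$, so renaming them does not disturb the $\co{\beta}$ leaves.
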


\subsection{The proof of cut elimination}
The main result of this paper, Theorem~\ref{theorem:cut}, follows by induction on the structure of $P$ in a formula of the form $\vdash \context{ P \tensor \co{P} }$, by applying the above eight co-rule elimination lemmas and also Lemma~\ref{lemma:universal} in the cases for \textit{all} and \textit{some}.

\tohide{
\begin{proof}
The base cases for any atom $\alpha$ follows since if $\vdash \context{\co{\alpha} \tensor \alpha}$ then $\vdash \context{\cunit}$ by Lemma~\ref{lemma:co-atoms}. The base case for the unit is immediate. 
As the induction hypothesis in the following cases assume for any context $\context{}$, $\vdash \context{P \tensor \co{P}}$ yields $\context{ \cunit }$ and $\vdash \contextb{Q \tensor \co{Q}}$ yields $\contextb{ \cunit }$.

Consider the case for \textit{times}. 
Assume that $\vdash \context{ P \tensor Q \tensor \left(\co{P} \cpar \co{Q}\right) }$ holds.
By the \textit{switch} rule, $\vdash \context{ \left( P \tensor \co{P} \right) \cpar \left( Q \tensor \co{Q} \right) }$ holds.
Hence, by the induction hypothesis twice, $\vdash \context{ \cunit }$ holds.
The case for \textit{par} is symmetric to the case for \textit{times}.

Consider the case for \textit{seq}. 
Assuming that $\vdash \context{ \left( P \cseq Q \right) \tensor \left(\co{P} \cseq \co{Q}\right) }$ holds,
by Lemma~\ref{lemma:co-sequence}, it holds that $\vdash \context{ \left( P \tensor \co{P} \right) \cseq \left( Q \tensor \co{Q} \right) }$.
Hence, by the induction hypothesis twice, $\vdash \context{ \cunit }$ holds.

Consider the case for \textit{with}.
Assume that $\vdash \context{ \left(P \wwith Q\right) \tensor \left(\co{P} \ooplus \co{Q}\right) }$ holds.
By Lemma~\ref{lemma:co-external},
$\vdash \context{ \left(\left(P \wwith Q\right) \tensor \co{P}\right) \ooplus \left(\left(P \wwith Q\right) \tensor \co{Q}\right) }$
holds.
By Lemma~\ref{lemma:co-branching} twice, $\vdash \context{ \left(P \tensor \co{P}\right) \ooplus \left(Q \tensor \co{Q}\right) }$ holds.
Hence by the induction hypothesis twice, $\vdash \context{ \cunit \ooplus \cunit }$ holds.
Hence by Lemma~\ref{lemma:co-tidy}, $\vdash \context{ \cunit }$ holds, as required.
The case for \textit{plus} is symmetric to the case for \textit{with}.

Consider the case for universal quantification.
Assume that $\vdash \context{ \forall x. P \tensor \exists x. \co{P} }$ holds.
By Lemma~\ref{lemma:co-extrude}, it holds that $\vdash \context{ \mathopen{\exists x.} \left( \forall x. P \tensor \co{P} \right) }$, since $\nfv{x}{\exists x. P}$.
By Lemma~\ref{lemma:universal}, $\vdash \context{ \mathopen{\exists x.} \left( P \tensor \co{P} \right) }$ holds.
Hence by the induction hypothesis, $\vdash \context{ \exists x. \cunit }$ holds.
Hence by Lemma~\ref{lemma:co-tidy1}, $\vdash \context{ \cunit }$ holds, as required.
The case for existential quantification is symmetric to the case for universal quantification.

Consider the case for \textit{new}.
Assume that $\vdash \context{ \new x. P \tensor \wen x. \co{P} }$ holds.
By Lemma~\ref{lemma:co-close}, it holds that $\vdash \context{ \mathopen{\wen x.} \left( P \tensor \co{P} \right) }$.
Hence by the induction hypothesis, $\vdash \context{ \wen x. \cunit }$ holds.
Hence by Lemma~\ref{lemma:co-tidynew}, $\vdash \context{ \cunit }$ holds, as required.
The case for \textit{wen} is symmetric to the case for \textit{new}.

Therefore, by induction on the structure of $P$, if $\vdash \context{ P \tensor \co{P} }$ holds, then $\vdash \context{ \cunit }$ holds.
\end{proof}
}{}

Notice that the structure of the above argument is similar to the structure of the argument for Proposition~\ref{proposition:reflexivity}.
The only difference is that the formulae are dualised and co-rule lemmas are applied instead of rules.

\subsection{Discussion on alternative presentations of rules for \textsf{MAV1}}

Having established cut elimination (Theorem~\ref{theorem:cut}), an immediate corollary is that all co-rules in Fig.~\ref{figure:co-rules} are admissible. This can be formulated by demonstrating that linear implication coincides with the inverse of a derivation in the symmetric system \textit{SMAV1}.
\begin{corollary}
$\vdash P \multimap Q$ in \textsf{MAV1} if and only if $\infer{Q}{P}$ in \textsf{SMAV1}.
\end{corollary}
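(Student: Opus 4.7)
The plan is to prove the two implications separately, exploiting the symmetry between \textsf{MAV1} rules and \textsf{SMAV1} co-rules, together with the cut elimination theorem (Theorem~\ref{theorem:cut}).

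For the forward direction, assume $\vdash P \multimap Q$ in \textsf{MAV1}, i.e., the derivation $\co{P} \cpar Q \longrightarrow \cunit$ exists in \textsf{MAV1}. I first note that by dualising each step of the reflexivity proof (Proposition~\ref{proposition:reflexivity}), replacing every rule with its de~Morgan dual from Figure~\ref{figure:co-rules} (and leaving self-dual rules such as \emph{switch} unchanged), the induction on $P$ yields an \textsf{SMAV1} derivation $\cunit \longrightarrow P \tensor \co{P}$. The required \textsf{SMAV1} derivation then arises by chaining:
\[
Q \;\equiv\; Q \cpar \cunit \;\longrightarrow\; Q \cpar \bigl(P \tensor \co{P}\bigr) \;\longrightarrow\; P \tensor \bigl(Q \cpar \co{P}\bigr) \;\longrightarrow\; P \tensor \cunit \;\equiv\; P,
\]
where the first step embeds the dual-reflexivity derivation deeply, the second is \emph{switch} (modulo commutativity), and the third applies the assumed \textsf{MAV1} proof of $\co{P} \cpar Q$ deeply inside the context $P \tensor \{\,\cdot\,\}$.

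For the backward direction, assume $Q \longrightarrow P$ in \textsf{SMAV1}. Embedding this derivation inside the shallow context $\co{P} \cpar \{\,\cdot\,\}$, which is legitimate by deep inference, yields $\co{P} \cpar Q \longrightarrow \co{P} \cpar P$ in \textsf{SMAV1}. Composing with the reflexivity proof (Proposition~\ref{proposition:reflexivity}, a \textsf{MAV1} and thus \textsf{SMAV1} derivation), I obtain $\vdash_{\textsf{SMAV1}} \co{P} \cpar Q$. To convert this \textsf{SMAV1} proof into a \textsf{MAV1} proof, I show that every co-rule is derivable in \textsf{MAV1} augmented with the cut rule $\context{\cunit} \longrightarrow \context{P \tensor \co{P}}$ from the alternative formulation of Theorem~\ref{theorem:cut}; the theorem then eliminates the cuts to produce a genuine \textsf{MAV1} proof of $\co{P} \cpar Q$, i.e., $\vdash P \multimap Q$ in \textsf{MAV1}.

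To simulate a co-rule $\text{LHS} \longrightarrow \text{RHS}$ from Figure~\ref{figure:co-rules} in \textsf{MAV1}+cut, I exploit that its de~Morgan dual $\co{\text{RHS}} \longrightarrow \co{\text{LHS}}$ is by construction an instance of a \textsf{MAV1} rule. Starting from LHS, I apply cut to introduce $\text{RHS} \tensor \co{\text{RHS}}$ and rearrange via \emph{switch}:
\[
\text{LHS} \;\equiv\; \text{LHS} \cpar \cunit \;\longrightarrow\; \text{LHS} \cpar \bigl(\text{RHS} \tensor \co{\text{RHS}}\bigr) \;\longrightarrow\; \text{RHS} \tensor \bigl(\text{LHS} \cpar \co{\text{RHS}}\bigr).
\]
The sub-predicate $\text{LHS} \cpar \co{\text{RHS}}$ is then reduced to $\cunit$ by applying the \textsf{MAV1} rule $\co{\text{RHS}} \longrightarrow \co{\text{LHS}}$ inside the context $\text{LHS} \cpar \{\,\cdot\,\}$ and then invoking reflexivity on $\text{LHS} \cpar \co{\text{LHS}}$, yielding $\text{LHS} \longrightarrow \text{RHS} \tensor \cunit \equiv \text{RHS}$. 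The main obstacle will be ensuring this simulation carries through uniformly for co-rules involving binders and freshness side-conditions — notably \emph{co-close}, \emph{co-extrude1}, \emph{co-tidy name}, \emph{co-select1} and the nominal \emph{co-medials}. However, since each such co-rule is by construction the precise de~Morgan dual of a \textsf{MAV1} rule whose side-conditions transfer directly under negation, this demands only careful $\alpha$-renaming to avoid variable capture when introducing the cut, and no new proof-theoretic ingredient beyond what is already established.
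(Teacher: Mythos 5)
Your proposal is correct, and the forward direction is the same argument the paper gives (the paper just writes $Q \longrightarrow (P \tensor \co{P}) \cpar Q$ as shorthand for the dual-reflexivity derivation you spell out). For the converse direction you take a slightly more uniform route than the paper: the paper appeals directly to the co-rule elimination lemmas (Lemma~\ref{lemma:universal} and Lemmas~\ref{lemma:co-close}--\ref{lemma:co-sequence}) for the co-rules those lemmas cover, and then observes that the \emph{remaining} co-rules (the co-medials, \emph{co-extrude new}, etc.) are provable as linear implications and so can be removed via Theorem~\ref{theorem:cut}; you instead give a single uniform simulation of \emph{every} co-rule $\text{LHS}\longrightarrow\text{RHS}$ by one cut introducing $\text{RHS}\tensor\co{\text{RHS}}$, one \emph{switch}, the dual \textsf{MAV1} rule, and reflexivity, then discharge all cuts at once with the admissibility formulation of Theorem~\ref{theorem:cut}. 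This buys you cleaner bookkeeping (no case split between lemma-covered co-rules and the rest, and no need to re-inspect freshness conditions separately per lemma, since the side-conditions of a co-rule are literally those of its dual \textsf{MAV1} rule), at the cost of invoking the full strength of cut elimination even for co-rules whose admissibility lemmas are independently established. One small point of care worth noting: the ``alternative formulation'' of Theorem~\ref{theorem:cut} (any \textsf{MAV1}+cut proof can be turned into a cut-free one) is only sketched in the paper after the formal statement, so strictly speaking you need the standard induction on the number of cut instances to eliminate all the cuts your simulation introduces; this is routine, but it should be said.
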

\begin{proof}
Firstly, assume $\vdash P \multimap Q$ in \textsf{MAV1},
in which case the following can be constructed in \textsf{SMAV1}: 
\[
\infer{
 Q
}{
\infer[.]{
 \left(P \tensor \co{P}\right) \cpar Q
}{
\infer[]{
 P \tensor \left(\co{P} \cpar Q\right)
}{
 P
}}}
\]
For the converse, assume $\vcenter{\infer{Q}{P}}$ in \textsf{SMAV1}; hence 
\[
\infer{
 \co{P} \cpar Q
}{\infer[]{
 \co{P} \cpar P
}{
 \cunit
}}
\] 
can be constructed.
Thereby by Lemma~\ref{lemma:universal} and Lemmas~\ref{lemma:co-close}~to~\ref{lemma:co-sequence},
the above derivation in \textsf{SMAV1} can be transformed into a proof in \textsf{MAV1}.
\end{proof}
The advantage of the definition of linear implication using provability in \textsf{MAV} rather than derivations in \textsf{SMAV1}, is that \textsf{MAV1} is \textit{analytic}~\cite{Bruscoli2009}; hence, with some care taken for existential quantifiers~\cite{Lincoln1994,Brunnler2003}, each formula gives rise to finitely many derivations up-to congruence. In contrast, in $\textsf{SMAV1}$, many co-rules can be applied indefinitely.
Notice co-rules including \textit{atomic co-interaction}, \textit{co-left} and \textit{co-tidy} can infinitely increase the size of a formula during proof search.

\textbf{A small rule set.}
Alternatively, we could extend the structural congruence with the following. 
\[
\wen x. P \equiv P 
~~\mbox{only if $\nfv{x}{P}$}
\qquad
\new x. P \equiv P 
~~\mbox{only if $\nfv{x}{P}$}
\qquad\mbox{(vacuous)}
\]
Vacuous allows nominals to be defined by the smaller set of rules \textit{close}, \textit{medial new}, \textit{suspend}, \textit{new wen}, \textit{with name}, and \textit{all wen}.
Any formula provable in this smaller system is also provable in \textsf{MAV1}, since all rules of \textsf{MAV1} can be simulated by the rules above.
Perhaps the least obvious case is the \textit{fresh} rule,
where since 
$
\vcenter{
\infer[]{\new x. \wen x. P}{\wen x. \new x. P}}
$, 
by the \textit{new wen} rule
and both 
$\wen x. \new x. P \equiv \new x. P$
and
$\wen x. P \equiv \new x. \wen x. P$
hold
using the \textit{vacuous} rule,
we have 
$
\vcenter{
\infer[]{\wen x. P}{\new x. P}}
$.

Conversely, \textit{vacuous} is a provable equivalence in \textsf{MAV1}; hence, by inductively applying cut elimination to eliminate each \textit{vacuous} rule in a proof using the smaller set of rules, we can obtain a proof with the same conclusion in \textsf{MAV1}.
The disadvantage of the above system is that the \textit{vacuous} rules can introduce an arbitrary number of nominal quantifiers at any stage in the proof leading to infinite paths in proof search,
i.e., the above system is not \textit{analytic}. 
Indeed the multiset-based measure used to guide splitting would not be respected, hence our cut elimination strategy would fail.
None the less, the smaller rule set above offers insight into design decisions.

\textbf{Alternative approaches to cut elimination.}
Further styles of proof system are possible. For example, again as a consequence of cut elimination, we can show the equivalence of \textsf{MAV1} and a system which reduces the implicit contraction in the \textit{external} rule to an atomic form $\vcenter{\infer{\alpha}{\alpha \ooplus \alpha}}$, in which additional medial rules play a central role for propagating contraction~\cite{Brunnler2001,Strassburger2002,Bruscoli2016}.
Similarly, the implicit vacuous existential quantifier introduction can be given an explicit atomic treatment~\cite{Strassburger09}.
The point is that, although the cut elimination result in this work is sufficient to establish the equivalent expressive power of systems mentioned in this subsection, further proof theoretic insight may be gained by attempting direct proofs of cut elimination in such alternative systems.
Indeed a different approach to cut elimination is required for tackling \textsf{MAV2} with second-order quantifiers.

\textbf{Note on probabilistic choice.} 
Insight from investigating the proof theory of \textsf{MAV1} led to the surprising observation that
probabilistic choice has similar proof theoretic properties to \textit{new}.
A proof theory of \textsf{MAV} extended with \textit{sub-additive} operators is explored in related work~\cite{Horne2019}.
The sub-additives, similarly to nominal quantifiers which lie between universal and existential quantifiers, lie between the traditional additives \textit{with} and \textit{plus}.
Sub-additives can either be self-dual, similarly to $\nabla$, or de Morgan dual, similarly to $\new$ and $\wen$ --- controlling distributivity properties which are undesirable when embedding probabilistic processes, much like the quantifiers in this work avoid undesirable distributivity properties when embedding processes with private names.

We remark that adapting recent work on splitting in \textit{subatomic logic}~\cite{Tubella2018} may help explain general patterns emerging, connecting the nominal quantifiers and sub-additives.
Subatomic logic may also be used to provide a more concise proof of splitting by exploiting the evident general patterns in the case analysis. Beside abstractly explaining general patterns, the study of \textsf{MAV1} in terms of subatomic logic would likely expose alternative formulations of the rules of \textsf{MAV1}.

\section{Decidability of Proof Search}
\label{section:complexity}

Here we identify complexity classes for proof search in fragments of \textsf{MAV1}. 
The hardness results in this section are consequences of cut elimination (Theorem~\ref{theorem:cut}) and established complexity results for fragments of linear logic and extensions of \textsf{BV}.

NEXPTIME-hardness follows from the NEXPTIME-hardness of \textsf{MALL1}~\cite{Lincoln1994}; while membership in NEXPTIME follows a similar argument as for \textsf{MALL1}~\cite{Lincoln1994b} (in a proof there are at most exponentially many \textit{atomic interaction} rules, each involving quadratically bounded terms).
\begin{proposition}\label{proposition:NEXPTIME}
Deciding provability in \textsf{MAV1} is NEXPTIME-complete.
\end{proposition}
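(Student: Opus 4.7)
The plan is to split the claim into hardness and membership, using cut elimination (Theorem~\ref{theorem:cut}) as the bridge to inherit hardness from \textsf{MALL1} and to restrict attention to analytic proofs for membership.

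For NEXPTIME-hardness, I would show that \textsf{MAV1} is a conservative extension of first-order multiplicative-additive linear logic with mix, so that the known NEXPTIME lower bound for \textsf{MALL1}~\cite{Lincoln1994} transfers. Concretely, given a \textsf{MALL1} formula $P$ built only from atoms, units, $\tensor$, $\cpar$, $\with$, $\oplus$, $\forall$ and $\exists$, I claim $P$ is \textsf{MALL1}-provable (with mix) iff $\vdash P$ holds in \textsf{MAV1}. The forward direction is immediate since every \textsf{MALL1} rule (with mix) is derivable using the first two parts of Fig.~\ref{figure:rewrite}. For the backward direction, any \textsf{MAV1} proof of $P$ can first be shown not to need the non-commutative or nominal rules on formulas free of $\cseq$, $\new$ and $\wen$: by inspecting the rules in Fig.~\ref{figure:rewrite}, any instance of \textit{sequence}, \textit{medial}, \textit{medial1}, a nominal rule, or \textit{close} applied to a subformula that is $\cseq$-, $\new$- and $\wen$-free either produces such operators on the right-hand side (contradicting the shape of $P$) or is vacuous under the congruence, which, combined with the corollary to Theorem~\ref{theorem:cut} that linear implication is a precongruence, lets us translate the \textsf{MAV1} proof into a \textsf{MALL1}-with-mix proof.

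For NEXPTIME membership, the strategy is the standard ``guess-and-check'' on analytic proofs: by Theorem~\ref{theorem:cut} it suffices to search for cut-free proofs of \textsf{MAV1}, and the restrictions on rule applications recalled at the end of Section~\ref{section:syntax} (no \textit{vacuous} rule, and the non-triviality side-conditions on \textit{switch}, \textit{sequence}, \textit{medial}, \textit{external}, \textit{extrude1}, \textit{extrude new}, \textit{left wen}, \textit{right wen}) make \textsf{MAV1} analytic in the sense of~\cite{Bruscoli2009}. The key quantitative bounds to verify are: (i) the affine measure of Definition~\ref{definition:size}, which by Lemma~\ref{lemma:bound} is non-increasing along any derivation, yields a polynomial bound on the size of each intermediate predicate once the \textit{external} duplications are tracked; (ii) the number of \textit{atomic interaction} steps is bounded by the number of atom occurrences in the conclusion after \textit{external}/\textit{medial} duplications, which is at most singly exponential in the size of $P$; and (iii) the terms instantiated by \textit{select1} may be taken to have size quadratic in the size of $P$ via the standard most-general-unifier argument used for \textsf{MALL1}~\cite{Lincoln1994b} and carried over here because terms in \textsf{MAV1} are still first-order. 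Together these give a nondeterministic procedure that guesses a derivation of exponential length in which each rule instance is of at most quadratic size, and verifies it in time polynomial in the guess, yielding NEXPTIME.

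The main obstacle will be step (iii) together with (i): ensuring that the \textit{medial new}, \textit{medial wen}, \textit{all name} and \textit{with name} rules, which are absent from \textsf{MALL1}, do not blow the bounds beyond NEXPTIME. I expect to handle this by observing that each of these rules leaves the occurrence count $\occ{\cdot}$ weakly decreasing (by Lemma~\ref{lemma:bound}) and only reshuffles nominal quantifiers without introducing fresh atom occurrences, so that the same counting argument that bounds atomic interactions in \textsf{MALL1} still bounds them in \textsf{MAV1}; and by extending the unification argument for \textit{select1} so that witnesses for the nominal quantifiers can be taken from a polynomially bounded pool of fresh names, which is sound by equivariance and $\alpha$-conversion.
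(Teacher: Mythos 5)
Your plan follows essentially the same route as the paper's brief sketch: inherit NEXPTIME-hardness from \textsf{MALL1} by cut-elimination-based conservativity, and get membership by bounding the number of \textit{atomic interaction} steps and the size of terms introduced by \textit{select1} exactly as in the \textsf{MALL1} argument. One small slip worth fixing: in step (i) you claim a \emph{polynomial} bound on the size of each intermediate predicate, but this cannot be right --- \textit{external} (and \textit{medial}) applications multiply the $\occ{\cdot}$ multiset-of-multisets, so after $k$ nested occurrences of $\with$ the predicate can grow to exponential size. This does not threaten the NEXPTIME conclusion (an exponentially long derivation whose steps are each of size exponential in the input, verified in time polynomial in the guess, is still NEXPTIME), but the bound should be stated as exponential rather than polynomial.
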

If we restrict terms to a nominal type, i.e.\ \textit{some} can only be instantiated with variables and constants, we obtain a tighter complexity bound.
PSPACE-hardness is a consequence of the PSPACE-hardness of \textsf{MAV}~\cite{Horne2015}, which in turn follows from the PSPACE-hardness of MALL~\cite{Lincoln1992}. Membership in PSPACE follows a similar argument as for \textsf{MALL1} without function symbols~\cite{Lincoln1994}.
\begin{proposition}\label{proposition:PSPACE}
Deciding provability in \textsf{MAV1} without function symbols is PSPACE-complete.
\end{proposition}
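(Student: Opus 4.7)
The plan is to establish the two bounds separately, following the pattern used for the analogous result about MALL1 without function symbols~\cite{Lincoln1994}.

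For \textbf{PSPACE-hardness}, I would appeal to the fact that \textsf{MAV} is a conservative sub-system of \textsf{MAV1}, a fact which itself is a corollary of cut elimination (Theorem~\ref{theorem:cut}): any \textsf{MAV1} proof of an \textsf{MAV} predicate can, by eliminating cuts and the rules for quantifiers (which cannot be productively introduced in a proof of a quantifier-free predicate), be turned into an \textsf{MAV} proof. Since \textsf{MAV} is already known to be PSPACE-hard~\cite{Horne2015} (by reduction from \textsf{MALL}~\cite{Lincoln1992}), PSPACE-hardness transfers immediately to \textsf{MAV1} without function symbols.

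For \textbf{membership in PSPACE}, the strategy is to bound the space used by cut-free proof search. Cut elimination (Theorem~\ref{theorem:cut}) reduces the question of provability to the existence of a derivation in \textsf{MAV1} itself, i.e.\ in the analytic fragment without the \textit{cut} rule. The key observation, similar to that exploited for \textsf{MALL1}~\cite{Lincoln1994} and \textsf{MAV}~\cite{Horne2015}, is that in the absence of function symbols every term appearing anywhere in a proof is either a constant from the conclusion or a variable, so terms themselves occupy constant space and substitution cannot blow up the syntactic size of a predicate. By Lemma~\ref{lemma:bound}, each rewrite in a cut-free derivation weakly decreases the multiset-based size measure $\size{\cdot}$, and by inspection of Fig.~\ref{figure:rewrite} no rule increases the number of symbol occurrences beyond a polynomial factor in the conclusion (the only potentially duplicating rules, \textit{external} and \textit{medial1}/\textit{medial}, replicate a subterm but the total depth and number of atom occurrences remain polynomially bounded).

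With predicates of polynomial size, proof search can be organised as a depth-first exploration of the tree induced by the additive branchings (\textit{external}, \textit{with name}, \textit{tidy}) and by the nondeterministic choice of which rule to apply next. Since only the current branch need be kept in memory, the total space required is polynomial in the size of the conclusion. The \textit{main obstacle} will be justifying that the nominal machinery does not secretly cause exponential blow-up: specifically, that the interplay of \textit{equivariance}, \textit{medial new}, \textit{medial wen}, and the \textit{left/right name} rules cannot force predicates to grow beyond a polynomial bound during search. To handle this, I would appeal directly to Lemma~\ref{lemma:bound}, which shows that the measure $\size{\cdot}$ from Definition~\ref{definition:size} is non-increasing under all rules of \textsf{MAV1}; combined with the absence of function symbols (so that \textit{select1} introduces only atomic terms), this gives a uniform polynomial bound on the size of every intermediate predicate. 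Membership in PSPACE then follows by a standard space-bounded alternating search, giving the required matching upper bound.
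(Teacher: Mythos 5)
Your hardness argument is fine, and your overall plan matches the paper's intent (the paper gives only a one-sentence pointer to Lincoln's \textsf{MALL1}-without-functions argument). One small inaccuracy: conservativity of \textsf{MAV1} over \textsf{MAV} does not require Theorem~\ref{theorem:cut} at all. The system \textsf{MAV1} is already cut-free; its rules are term rewrites that never introduce a quantifier into a quantifier-free predicate, so any \textsf{MAV1} proof of an \textsf{MAV} predicate is already an \textsf{MAV} proof. You appear to be conflating \textsf{MAV1} with the symmetric system \textsf{SMAV1} (or \textsf{MAV1} plus cut).

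The membership direction, however, has a genuine gap. You invoke Lemma~\ref{lemma:bound} as if it gave ``a uniform polynomial bound on the size of every intermediate predicate.'' It does not. That lemma asserts that the multiset-of-multisets measure $\size{\cdot}$ from Definition~\ref{definition:size} is non-increasing under all rules; it says nothing about the syntactic size of the predicate, and indeed that measure can itself be exponential in the size of the input. Concretely, $\occ{P \cpar Q} = \occ{P} \boxplus \occ{Q}$ multiplies outer cardinalities, so $\occ{\,(a_1 \with b_1) \cpar \hdots \cpar (a_n \with b_n)\,}$ has $2^n$ members. Correspondingly, $n$ applications of \textit{external} to such a predicate produce a genuinely exponentially larger predicate while leaving $\size{\cdot}$ unchanged, contradicting your claim that ``the total depth and number of atom occurrences remain polynomially bounded.'' So a na\"ive depth-first search over deep-inference rewrites, which holds the whole rewritten predicate in memory at each step, is \emph{not} polynomial-space by this argument. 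The missing idea is that the $\with$-branches must be explored independently rather than materialised simultaneously in a single predicate. This is exactly what the $\with$ clause of Lemma~\ref{lemma:split-times} licenses: $\vdash (P \with Q) \cpar R$ iff $\vdash P \cpar R$ and $\vdash Q \cpar R$. Organising search as an alternating (AND/OR) procedure in which each $\with$ spawns two independent sub-problems of size no larger than the current one --- rather than a single doubled predicate --- yields polynomially sized recursive calls of polynomial depth, and hence PSPACE by Savitch/alternation. Your appeal to ``a standard space-bounded alternating search'' gestures at this but is not justified by the affine lemma you cite; it needs the splitting lemma (or an equivalent structural argument on cut-free proofs) to keep the branches from being multiplied into one predicate. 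The absence-of-function-symbols observation for terms is correct and does the right work for the quantifier rules.
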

If we consider the sub-system without \textit{with} and \textit{plus}, named \textsf{BV1}, we obtain a tighter complexity bound again, even with function symbols in terms.
NP-hardness is a consequence of the NP-hardness of \textsf{BV}~\cite{Ozan2008a}; while membership in NP follows a similar argument as for \textsf{MLL1}~\cite{Lincoln1994b}
\begin{proposition}\label{proposition:NP}
Deciding provability in \textsf{BV1} is NP-complete.
\end{proposition}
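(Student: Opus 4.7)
The plan is to establish both directions separately, following the outline sketched in Propositions~\ref{proposition:NEXPTIME} and~\ref{proposition:PSPACE}, but exploiting the absence of the additives \textit{with} and \textit{plus} to obtain a polynomial rather than exponential bound on proof size.

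For NP-hardness, I would reduce provability in \textsf{BV} to provability in \textsf{BV1}. Every \textsf{BV} predicate is syntactically a \textsf{BV1} predicate, and if such a predicate is provable in \textsf{BV} then it is trivially provable in \textsf{BV1}. The reverse direction, needed for a faithful reduction, is a conservativity statement: a quantifier-free predicate provable in \textsf{BV1} is already provable in \textsf{BV}. This conservativity follows from the cut elimination result (Theorem~\ref{theorem:cut}) together with the splitting lemmas (Lemma~\ref{lemma:split-times} and its companions): a bottom-up analysis of any \textsf{BV1} proof of a quantifier-free predicate, guided by splitting, yields a proof that never introduces quantifier rules, since no quantifier ever appears as a principal connective. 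NP-hardness then transfers from the NP-hardness of \textsf{BV}~\cite{Ozan2008a}.

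For membership in NP, the strategy is to guess a proof of polynomial size and verify it in polynomial time. The crucial observation, enabled by removing \textit{with} and \textit{plus}, is that no rule of \textsf{BV1} duplicates subpredicates: \textit{switch}, \textit{sequence}, \textit{close}, \textit{extrude new}, \textit{medial new}, \textit{medial wen}, \textit{new wen}, \textit{extrude1}, \textit{medial1} and the various name-propagation rules all merely re-bracket predicates, while \textit{atomic interaction}, \textit{tidy1} and \textit{tidy name} strictly decrease the atom/unit count. Consequently, along any normalised derivation the multiset of atom occurrences can only decrease, so the number of rule instances is linearly bounded by the number of atoms and quantifier nodes in the conclusion. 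The only rule that can enlarge a predicate is \textit{select1}, which replaces $\exists x. P$ by $P\sub{x}{v}$; following the Lincoln--Scedrov argument for \textsf{MLL1}~\cite{Lincoln1994b} one shows that in a cut-free proof it suffices to pick witness terms $v$ whose size is quadratically bounded in the size of the conclusion, since each first-order variable need only be unified with subterms present in the end-predicate (and atomic witnesses introduced by \textit{new}). Thus a candidate proof has total size polynomial in the input, and each rule application, including the freshness side conditions on nominal rules and the congruence check (in particular $\alpha$-conversion and \textit{equivariance}), is checkable in polynomial time.

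The main obstacle is the witness-term bound for \textit{select1} in the presence of function symbols and nominal quantifiers. Unlike the PSPACE argument of Proposition~\ref{proposition:PSPACE}, here the term language is unrestricted, so I would need to adapt the Herbrand-style analysis of~\cite{Lincoln1994b} to a setting where $\new$ and $\wen$ may introduce and propagate fresh names through \textit{close}, \textit{new wen} and \textit{equivariance}. The key technical lemma would state that, in a cut-free proof obtained by splitting, every witness term selected for an existential quantifier is either a subterm of the conclusion, a fresh name bound by an enclosing $\new$, or a function symbol applied to such terms, with overall depth linear in the formula; this suffices to bound term size quadratically and thus proof size polynomially, yielding membership in NP.
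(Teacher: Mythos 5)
Your approach matches the paper's, which is extremely terse here: NP-hardness is cited directly from the NP-hardness of \textsf{BV}~\cite{Ozan2008a}, and membership in NP is stated to follow the \textsf{MLL1} argument of~\cite{Lincoln1994b}; your elaborations on conservativity (needed to make the reduction from \textsf{BV} faithful, and justified by cut elimination) and the no-duplication observation for the additive-free fragment capture the intended reasoning. One imprecision worth noting: the number of rule instances in a cut-free \textsf{BV1} proof is \emph{polynomially}, not linearly, bounded by the conclusion size---rules such as \textit{medial1}, \textit{medial new} and \textit{new wen} preserve the occurrence count and may each fire a number of times quadratic in binder depth---but a polynomial bound is all that NP membership requires, so this does not break your argument.
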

For problems in the complexity class NEXPTIME, we can always check a proof in exponential time.
The high worst-case complexity means that proof search in general is considered to be infeasible.
Implementations of NEXPTIME-complete problems that regularly work efficiently, include reasoning in description logic $\mathcal{ALCI(D)}$~\cite{Lutz2004}.

\begin{figure*}[t]
\begin{tabular}{l|l|l}
Complexity class &
Linear logic &
Calculus of structures
\\ \hline
NP-complete &
\textsf{MLL1}~with functions~\cite{Kanovich1994} &
\txt{
\textsf{BV1}~with functions
\\
(Proposition~\ref{proposition:NP})
}
\\ \hline
PSPACE-complete &
\textsf{MALL1}~without functions~\cite{Lincoln1992} &
\txt{
\textsf{MAV1}~without functions
\\
(Proposition~\ref{proposition:PSPACE})
}
\\ \hline
NEXPTIME-complete &
\textsf{MALL1}~with functions~\cite{Lincoln1994,Lincoln1994b} &
\txt{
\textsf{MAV1}~with functions
\\
(Proposition~\ref{proposition:NEXPTIME})
}
\\ \hline
Undecidable &
\textsf{MAELL}~\cite{Lincoln1992}
and 
\textsf{MLL2}~\cite{Lincoln1995}
&
\textsf{NEL}~\cite{Strassburger2003}
\end{tabular}
\caption{Complexity results.}\label{figure:complexity}
\end{figure*}


Figure~\ref{figure:complexity} summarises complexity results for related calculi.
Notice the pattern that each fragment of linear logic has the same complexity as the calculus that is a conservative extension of that fragment of linear logic (with mix), where the extra operator is the self-dual non-commutative operator \textit{seq}. The complexity classes match since the source of the NP-completeness in multiplicative-only linear logic (\textsf{MLL}) lies in the number of ways of partitioning resources (formulae), while the mix rule and sequence rule are also ways of partitioning the same resources.

An exceptional case is that \textsf{BV} extended with exponentials (\textsf{NEL}) is undecidable, whereas the decidability of multiplicative linear logic with exponentials (\textsf{MELL}) is unknown.\footnote{\textsf{MELL} was claimed to be decidable in~\cite{Bimbo2015}, but this was later refuted~\cite{Strassburger19}. } 
However, by including additives to obtain full propositional linear logic (\textsf{MAELL} or simply \textsf{LL}) provability is known to be undecidable.

By the above observations, the complexity of deciding linear implication for embeddings of finite name passing processes, as in $\pi$-calculus, is in PSPACE.
However, extending to finite value passing processes where terms constructed using function symbols can be communicated, e.g.\ capturing tuples in the polyadic $\pi$-calculus~\cite{Milner1993}, the complexity class increases, but only for processes involving choice.
Further extensions to \textsf{MAV1}
introducing second-order quantifiers, exponentials or fixed points would lead to undecidable proof search~\cite{Lafont1996,Lincoln1995,Strassburger2003}.

\section{Conclusion}

This paper makes two significant contributions to proof theory: the first cut elimination result for a novel de Morgan dual pair of nominal quantifiers;
and the first direct cut elimination result for first-order quantifiers in the calculus of structures. 
As a consequence of cut-elimination (Theorem~\ref{theorem:cut}), we obtain the first proof system that features both
non-commutative operator \textit{seq} and first-order quantifiers $\forall$ and $\exists$.
%
A novelty of the nominal quantifiers $\new$ and $\wen$
 compared to established self-dual nominal quantifiers
is in how they distribute over positive and negative operators. 
This greater control of bookkeeping of names enables private names to be modelled in direct embeddings of 
processes as formulae in \textsf{MAV1}.
In Section~\ref{section:syntax}, every rule in \textsf{MAV1} is justified as necessary either: for soundly embedding processes;
or for ensuring cut elimination holds. Of particular note, some rules were introduced for ensuring cut elimination holds in the presence of \textit{equivariance}.

The cut elimination result is an essential prerequisite for recommending the system \textsf{MAV1} as a logical system.
This paper only hints about formal connections between \textsf{MAV1} and models of processes, which receives separate attention in a companion paper~\cite{MSCS}.
In particular, we know that linear implication defines a precongruence over processes embedded as formulae, that is sound with respect to both weak simulation and pomset traces.

Further to connections with process calculi, there are several problems exposed as future work.
Regarding the sequent calculus, in the setting of linear logic (i.e.,\ without \textsf{seq}), it is an open problem to determine whether there is a sequent calculus presentation of \textit{new} and \textit{wen}.
Regarding model theory, a model theory or game semantics may help to explain the nature of the de Morgan dual pair of nominal quantifiers, 
although note that it remains an open problem just to establish a sound and complete denotational model of \textsf{BV}.
Another open question is whether quantifiers \textit{new} and \textit{wen} are relevant in a classical or intuitionistic setting, or whether these operators are uniquely interesting in a linear setting.
Since \textit{new} must distribute over classical disjunction (recall, in contrast, \textit{new} does not distribute over multiplicative disjunction), nominal operators \textit{new} and \textit{wen} likely collapse to an established self-dual nominal operator in the classical setting; hence \textit{wen} is probably unrelated to the `generous' operator proposed in related work on stratifiable languages~\cite{Gabbay}.
Regarding implementation, it is a challenge to reduce non-determinism in proof search~\cite{Chaudhuri2011,Ozan2014,Andreoli1992};
a problem that can perhaps be tackled by restricting to well-behaved fragments of \textsf{MAV1} or by exploiting complexity results to embed rules as constraints for a suitable solver. 
Regarding proof normalisation, systems including classical propositional logic~\cite{Tubella2017}, first-order logic~\cite{Tubella2017}, intuitionistic logic~\cite{Guenot2014} and \textsf{NEL} (\textsf{BV} with exponentials)~\cite{Strassburger2011} 
satisfy a proof normalisation property called \textit{decomposition} related to interpolation;
leading to the question of whether there is an alternative presentation of the rules of \textsf{MAV1}, for which a decomposition result can be established. 
Finally, an expressivity problem, perhaps related to decomposition, is how to establish cut elimination for second-order extensions suitable for modelling infinite processes.

\paragraph*{Acknowledgements.}
We thank the anonymous reviewers, whose thorough reading led to improvements in the presentation of \textsf{MAV1}.


\bibliographystyle{plain} 
\bibliography{WWV15}

\newpage

\appendix

\section{Electronic Appendix}

\begin{proposition}[Reflexivity: Proposition~\ref{proposition:reflexivity}]
For any formula $P$, $\vdash \co{P} \cpar P$ holds, i.e., $\vdash P \multimap P$.
\end{proposition}
\begin{proof}
The proof proceeds by induction on the structure of a formula $P$.
The base cases for any atom $\alpha$ follows immediately from the \textit{atomic interaction} rule.
The base case for the unit is immediate by definition of a proof. 
For the following inductive cases assume that $\vdash \co{P} \cpar P$ and $\vdash \co{Q} \cpar Q$ hold.

Consider when the root connective in the formula is the $\tensor$ operator. 
Observe, by definition, $\co{\left(P \tensor Q\right)} \cpar \left(P \tensor Q\right) = \co{P} \cpar \co{Q} \cpar \left(P \tensor Q\right)$ and 
by applying the \textit{switch} rule and then the \textit{induction hypothesis}
we have the following proof:
\[
\infer[.]{
\co{P} \cpar \co{Q} \cpar \left(P \tensor Q\right) 
}{
\infer[
]{
\left( \co{P} \cpar P \right) \tensor \left( \co{Q} \cpar Q\right)
}{
\cunit
}}
\]

The case when the root connective is the \textit{par} operator is symmetric to the case for \textit{times}.

Consider when the root connective in the formula is the \textit{seq} operator. We have, by definition,
$ \co{\left(P \cseq Q\right)} \cpar \left(P \cseq Q\right)
    = \left(\co{P} \cseq \co{Q}\right) \cpar \left(P \cseq Q\right)$
and, by applying the \textit{sequence} rule and then the \textit{induction hypothesis}, the following proof holds:
\[
\infer[.]{
 \left(\co{P} \cseq \co{Q}\right) \cpar \left(P \cseq Q\right) 
}{
\infer[
]{
\left( \co{P} \cpar P \right) \cseq \left( \co{Q} \cpar Q\right)
}{
\cunit
}}
\]

Consider when the root connective in the formula is the \textit{with} operator.
By definition we have 
$\co{\left(P \wwith Q\right)} \cpar \left(P \wwith Q\right)
    =
\left(\co{P} \ooplus \co{Q}\right) \cpar \left(P \wwith Q\right)$
and the following proof holds.
\[
\infer[\mbox{by the \textit{external} rule}]{
\left(\co{P} \ooplus \co{Q}\right) \cpar \left(P \wwith Q\right) 
}{
\infer[\mbox{by the \textit{left} and \textit{right} rules}]{
\left(\left(\co{P} \ooplus \co{Q}\right) \cpar P\right)
\wwith
\left(\left(\co{P} \ooplus \co{Q}\right) \cpar Q\right) 
}{
\infer[\mbox{by the \textit{induction hypothesis}}]{
\left(\co{P} \cpar P\right) \wwith \left(\co{Q} \cpar Q\right) 
}{
\infer[\mbox{by \textit{tidy}}]{
\cunit \wwith \cunit
}{
\cunit
}}}
}
\]
The case for when \textit{plus}, $\ooplus$, is the root connective is symmetric to the case for \textit{with}.

Consider when the root connective in the formula is $\forall$.
By definition,
$ \co{\forall x. P} \cpar \forall x. P
    = \exists x. \co{P} \cpar \forall x. P $
and the following proof holds:
\[
\infer[\mbox{by the \textit{extrude1} rule}]{
\exists x. \co{P} \cpar \forall x. P 
}{
\infer[\mbox{by the \textit{select1} rule}]{
\mathopen{\forall x.} \left(\exists x. \co{P} \cpar P\right) 
}{
\infer[\mbox{by the \textit{induction hypothesis}}]{
\mathopen{\forall x.} \left(\co{P} \cpar P\right) 
}{
\infer[\mbox{by the \textit{tidy1} rule}]{
\forall x. \cunit
}{
 \cunit
}}}}
\]

 The case for when $\exists$ is the root connective is symmetric to the case for $\forall$.

Consider when the root connective in the formula is $\new$.
By definition $ \co{\new{x}. P} \cpar \new{x}. P
    = \wen{x}. \co{P} \cpar \new{x}. P $
and the following proof holds:
\[
\infer[\mbox{by the \textit{close} rule}]{
 \wen{x}. \co{P} \cpar \new{x}. P 
}{
\infer[\mbox{by the \textit{induction hypothesis}}]{
 \mathopen{\new{x}.} \left( \co{P} \cpar P \right) 
}{
\infer[\mbox{by the \textit{tidy name} rule}]{
 \mathopen{\new x.} \cunit
}{
  \cunit
}}}
\]

The case for when the root connective is $\wen$ is symmetric to the case for $\new$.

Hence, by induction on the number of connectives in the formula, reflexivity holds.

\end{proof}

\begin{lemma}[Universal: Lemma~\ref{lemma:universal}] 
If $\vdash \context{ \forall x. P }$ holds
 then, for all terms~$v$, $\vdash \context{ P\sub{x}{v} }$ holds.
\end{lemma}
\begin{proof}
We require a function over formulae $s_v(T)$ that replaces a certain universal quantifier in $T$ with a substitution for a value $v$. 
The universal quantifiers to be replaced are highlighted in bold $\boldall$. Note that during a proof the bold operator may be duplicated by the \textit{external} rule and \textit{medial1} rule, hence there may be multiple bold occurrences in a formula.
The function is defined as follows, where $\odot \in \left\{ \cseq, \cpar, \tensor, \ooplus, \wwith \right\}$ is any binary connective, $\quantifier \in \left\{ \forall, \exists, \new, \wen \right\}$ is any quantifier except bold universal quantification and $\kappa \in \left\{ \alpha, \co{\alpha}, \cunit \right\}$ is any constant or atom.
\begin{gather*}
\mathopen{s_v}\left(\boldall x. T\right) = \mathopen{s_v}\left(T \sub{x}{v}\right)
\qquad
\mathopen{s_v}\left(\quantifier x. T\right) =
\quantifier x. \mathopen{s_v}\left(T\right)
\qquad
\mathopen{s_v}\left( T \odot U \right) = 
\mathopen{s_v}\left(T\right) \odot \mathopen{s_v}\left(U\right)
\qquad
\mathopen{s_v}\left(\kappa\right) = \kappa
\end{gather*}
In what follows we use that $\mathopen{s_v}\left(\context{ U }\right) = \contextp{ \mathopen{s_v}\left(U'\right) }$, 
for some context $\context{}$ and $U'$ 
such that $\contextp{}$ is obtained from $\context{}$ by applying the $s_v$ function and
$U'$ is obtained by substituting free variables in $U$, 
that are bound by $\boldall$ quantifiers in the context $\context{}$, with $v$.

We shall prove a stronger statement in the following: 
for every $R$, if $\vdash R$ holds then for all terms $v$, $\vdash \mathopen{s_v}\left(R\right)$ holds.

Without loss of generality, we can assume that the bound and the free variables in $R$ are pairwise
distinct and that the bound variables in $R$ are also distinct from the 
variables in $v$. This simplifies the proof below since substitutions of $\boldall$-quantified variables
commute with other connectives and quantifiers in $R.$

For the base case, $s_v(R) = R$, in which case trivially if $\vdash R$ then $\vdash s_v(R)$, for example where $R \equiv \cunit$.

Consider the case when the bottommost rule in a proof is an instance of the \textit{extrude1} rule involving a bold universal quantifier, as follows, 
$\vcenter{\infer[]{\context{ \boldall x. T \cpar U }}{\context{ \mathopen{\boldall x.} \left( T \cpar U \right) }}}$, 
where $\nfv{x}{U}$ and $\vdash \context{ \mathopen{\boldall x.} \left( T \cpar U \right) }$.

By the induction hypothesis, $\vdash \mathopen{s_v}\left(\context{ \mathopen{\boldall x.} \left( T \cpar U \right) }\right)$ holds.
Now the following equalities hold.
\[
\begin{array}{rl}
\mathopen{s_v}\left(\context{ \mathopen{\boldall x.} \left( T \cpar U \right) }\right)
= & \contextp{ \mathopen{s_v}\left(\left( T' \cpar U' \right) \sub{x}{v}\right)} \\
= & \contextp{ \mathopen{s_v}\left(T'\sub{x}{v}\right) \cpar \mathopen{s_v}\left(U'\right) } \\
= & \mathopen{s_v}\left(\context{ \boldall x. T \cpar U }\right)
\end{array}
\]
Hence $\vdash \mathopen{s_v}\left(\context{ \boldall x. T \cpar U }\right)$ holds as required.

Consider the case where the bottommost rule of a proof is an instance of the \textit{tidy1} rule of the form 
$\vcenter{\infer[]{\context{ \boldall x. \cunit }}{\context{ \cunit }}}$, where $\vdash \context{\cunit}$ holds.
By the induction hypothesis, $\vdash \mathopen{s_v}\left( \context{ \cunit } \right)$ holds.
Since $\mathopen{s_v}\left( \context{ \boldall x. \cunit } \right) = \mathopen{s_v}\left( \context{ \cunit } \right)$, we have $\vdash \mathopen{s_v}\left( \context{ \boldall x. \cunit } \right)$ holds, as required.

Consider the case where the bottommost rule of a proof is an instance of the \textit{all name} rule of the form 
$\vcenter{\infer[]{\context{ \boldall x. \wen y. P }}{\context{ \wen y. \boldall x. P }}}$, where $\vdash \context{ \wen y. \boldall x. P }$ holds.
By the induction hypothesis, $\vdash s_v\left( \context{ \wen y. \boldall x. P } \right)$ holds. Observe that the following equalities hold, by definition of function $s_v$.
\[
\begin{array}{rl}
\mathopen{s_v}\left( \context{  \boldall x. \wen y. P } \right)
=
\contextp{ \mathopen{s_v}\left(\left( \wen y. P' \right)\sub{x}{v}\right) }
=
\contextp{ \mathopen{\wen y. s_v}\left(P'\sub{x}{v} \right)}
=
\mathopen{s_v}\left( \context{ \wen y. \boldall x. P } \right)
\end{array}
\]
Hence $\vdash \mathopen{s_v}\left( \context{ \wen y. \boldall x. P } \right)$ holds, as required.
The case where \textit{all name} involves \textit{new} is similar.

Consider the case when the bottommost rule does not involve a bold universal quantifier.
We show here one instance where the rule involved is \textit{extrude1}; the other cases are similar.
So suppose the bottommost rule instance is 
\[
\infer[.]{
\context{ \forall x. T \cpar U }
}{
\context{ \mathopen{\forall x.} \left( T \cpar U \right) }
}
\]
By the induction hypothesis, $\vdash \mathopen{s_v}\left(\context{ \mathopen{\forall x.} \left( T \cpar U \right) }\right)$.
So, since
\[
\mathopen{s_v}\left(\context{ \mathopen{\forall x.}\left(T \cpar U\right) }\right)
=
\contextp{ \mathopen{\forall x.}\left(\mathopen{s_v}\left(T'\right) \cpar \mathopen{s_v}\left(U'\right)\right) } 
\]
we have $\vdash \contextp{ \mathopen{\forall x.}\left(\mathopen{s_v}\left(T'\right) \cpar \mathopen{s_v}\left(U'\right)\right) }$ also holds.
Hence, since
\[
\mathopen{s_v}\left(\context{ \forall x. T \cpar U }\right) 
= \contextp{ \forall x. \mathopen{s_v}\left(T'\right) \cpar \mathopen{s_v}\left(U'\right) }
\]
and 
\[
\infer[]{
\contextp{ \forall x. \mathopen{s_v}\left(T'\right) \cpar \mathopen{s_v}\left(U'\right) }
}{
\contextp{ \mathopen{\forall x.}\left(\mathopen{s_v}\left(T'\right) \cpar \mathopen{s_v} \left(U'\right)\right) } 
}
\]
we have $\vdash \mathopen{s_v}\left(\context{ \forall x. T \cpar U }\right)$ holds, as required.

The statement of the lemma is then a special case of the stronger statement established by induction. If $\vdash \context{ \boldall x. T }$, where no further bold universal quantifiers occur in the context, then $\vdash \context{ T\sub{x}{v} }$ holds, since in such a scenario $\mathopen{s_v}\left( \context{ \boldall x. T } \right) = \context{ T\sub{x}{v} }$.
\end{proof}

\begin{lemma}[Lemma~\ref{lemma:medial}]
Assume that $I$ is a finite subset of natural numbers, $P_i$ and $Q_i$ are formulae, for $i \in I$, and $\tcontext{}$ is a killing context. There exist killing contexts $\tcontextn{0}{}$ and $\tcontextn{1}{}$ and sets of natural numbers $J \subseteq I$ and $K \subseteq I$ such that the following derivation holds:
$
\vcenter{
\infer[]{
\tcontext{ P_i \cseq Q_i \colon i \in I }
}{
\tcontextn{0}{ P_j \colon j \in J } \cseq
\tcontextn{1}{ Q_k \colon k \in K }
}}
$.
\end{lemma}
\begin{proof}
Proceed by induction on the structure of the killing context. The base case is immediate.

Consider a predicate of the form $\new x. \tcontext{P_i \cseq Q_i \colon i \in I }$.
By the induction hypothesis, assume there exists $\tcontextn{0}{}$ and $\tcontextn{1}{}$ such that 
\[
\begin{prooftree}
\tcontextn{0}{P_j \colon j \in J} \cseq \tcontextn{1}{ Q_k \colon k \in K }
\justifies
\tcontext{P_i \cseq Q_i \colon i \in I }
\end{prooftree}
\]
where $J \subseteq I$ and $K \subseteq I$.
There are three cases to consider.

If $\tcontextn{0}{P_j \colon j \in J} \equiv \cunit$, then we have derivation 
\[
\infer[.]{
\new x. \tcontext{P_i \cseq Q_i \colon i \in I }
}{
\infer[\mbox{by using $\equiv$}]{
 \cunit \cseq \new x. \tcontextn{1}{ Q_k \colon k \in K }
}{
  \mathopen{\new x.} \left( \cunit \cseq \tcontextn{1}{Q_k \colon k \in K } \right)
}
}
\]

If $\tcontextn{1}{Q_k \colon k \in K} \equiv \cunit$, then we have derivation
\[
\infer[.]{
\new x. \tcontext{P_i \cseq Q_i \colon i \in I }
}{
\infer[\mbox{using $\equiv$}]{
\new x. \tcontextn{0}{P_j \colon j \in J } \cseq \cunit
}{
  \mathopen{\new x.} \left( \tcontextn{0}{P_j \colon j \in J } \cseq \cunit \right)
}
}
\]

Otherwise, $\tcontextn{0}{P_j \colon j \in J} \not\equiv \cunit$ and $\tcontextn{1}{Q_k \colon k \in K} \not\equiv \cunit$ in which case the \textit{medial new} rule can be applied as follows:
\[
\infer[.]{
 \new x. \tcontext{P_i \cseq Q_i \colon i \in I }
}{
 \infer[\mbox{by the \textit{medial new} rule}]{
  \mathopen{\new x.} \left( \tcontextn{0}{P_j \colon j \in J } \cseq \tcontextn{1}{Q_k \colon k \in K } \right) 
 }{
  \new x. \tcontextn{0}{P_j \colon j \in J } \cseq \new x. \tcontextn{1}{Q_k \colon k \in K }
 }
}
\]

In each of the three cases above, killing contexts of the correct form are obtained. The arguments in the cases of universal quantifiers and with follow a similar pattern.
\end{proof}
\begin{lemma}[Affine: Lemma~\ref{lemma:bound}]
Any derivation $\vcenter{\infer{Q}{P}}$, is bound such that $\size{P} \preceq \size{Q}$.
\end{lemma}
\begin{proof}
The proof proceeds by checking that each rule preserves the bound on the size of the formula, from which the result follows by induction on the length of a derivation.

Consider the case of the \textit{close} rule. $\occ{\new x. P \cpar \wen x. Q} = \occ{ P } \mplus \occ{ Q } = \occ{\mathopen{\new x.} \left( P \cpar Q \right)}$,
since $P \not\equiv \cunit$ and $Q \not\equiv \cunit$,
 and 
$
\wensize{\new x. P \cpar \wen x. Q}
=
\wensize{P} + \left( 1  +  \wensize{Q}\right)
>
\wensize{P} + \wensize{Q}
=
\wensize{\mathopen{\new x.} \left(P \cpar Q\right)}
$.

Consider the case of the \textit{fresh} rule.
For the occurrence count, $\occ{ \wen x. P } = \occ{ \new x. P }$ and the wen count strictly decreases as follows:
$
\wensize{ \wen x. P }
=
 1 + \wensize{ P }
>
\wensize{ P }
=
\wensize{ \new x. P }
$.

Consider the case of the \textit{extrude new} rule, where $Q \not\equiv \cunit$.
If $P \equiv \cunit$, then the occurence count is such that $\occ{ \new x. P \cpar Q } = \left\{\left\{ 0,0 \right\}\right\} \mplus \occ{ Q } > \occ{ Q } = \occ{ \mathopen{\new x.} \left( P \cpar Q \right)}$. If however $P \not\equiv \cunit$, then $\occ{ \new x. P \cpar Q } = \occ{ P } \mplus \occ{ Q } = \occ{ \mathopen{\new x.}\left( P \cpar Q \right) }$. Furthermore, for the new count, the following inequality holds:
$
\newsize{ \new x. P \cpar Q } = \left(1 + \newsize{P}\right)\newsize{Q} \geq 1 + \newsize{P}\newsize{Q} = \newsize{ \mathopen{\new x.} \left( P \cpar Q \right) }
$.

Consider the case of the \textit{external} rule, where $R \not\equiv \cunit$.
For the occurrence count, by distributivity of $\mcup$ over $\mplus$, 
the following multiset equality holds:
\[
\begin{array}{rl}
\occ{\left( P \wwith Q \right) \cpar R }
& = \left( \occ{P} \mcup \occ{Q} \right) \mplus \occ{R} \\
& = \left( \occ{P}  \mplus \occ{R}\right) \mcup \left(\occ{Q}  \mplus \occ{R} \right) \\
& = \occ{\left( P \cpar R\right) \wwith \left(Q  \cpar R \right)}
\end{array}
\]
For the wen count 
\[
\wensize{ \left( P \wwith Q \right) \cpar R }
=
\left(\wensize{ P } + \wensize{Q}\right) \wensize{R}
=
\wensize{ P }\wensize{R} + \wensize{Q}\wensize{R}
=
\wensize{ \left( P \cpar R\right) \wwith \left( Q \cpar R \right)  }
\]
and similarly for the new count.

Consider the case of the \textit{suspend} rule, where $P \not\equiv \cunit$ and $Q \not\equiv \cunit$.
For the occurrence count, 
$\occ{ \wen x. P \cseq \wen x. Q } = \occ{ P } \discup \occ{ Q } = \occ{ \mathopen{\wen x.} \left( P \cseq Q \right) }$
and
$\occ{ \wen x. P \cpar \wen x. Q } = \occ{ P } \mplus \occ{ Q } = \occ{ \mathopen{\wen x.} \left( P \cpar Q \right) }$
for \textit{par} and \textit{seq} respectively.
For the wen count for either operator, $\odot \in \left\{\cpar, \cseq \right\}$, the following strict inequality holds, noting $\wensize{P} \geq 1$ for any formula:
\[
\wensize{ \wen x. P \odot \wen x. Q }
=
\left( 1 + \wensize{ P }\right)\left(1 + \wensize{ Q }\right)
=
\wensize{ P } + \wensize{ P }\wensize{ Q } + \wensize{ Q }
>
1 + \wensize{P}\wensize{Q}
=
\wensize{ \mathopen{\wen x.} \left( P \cseq Q \right) }
\]

Consider the case of the \textit{left wen} rules, where $\nfv{x}{Q}$ and $Q \not\equiv \cunit$.
For the occurrence count, there are four cases covering the operators \textit{seq} and \textit{par}. 
\begin{itemize}
\item
If $P \equiv \cunit$ then, for \textit{seq}:
$
 \occ{ \mathopen{\wen x.} \left( P \cseq Q \right) } = \occ{ Q }  \mstrict  \left\{\left\{ 0,0 \right\}\right\} \discup \occ{ Q } 
 = \occ{ \wen x. P \cseq Q }
$.

\item
If $P \not\equiv \cunit$ then, for \textit{seq}:
$
\occ{ \wen x. P \cseq Q } = \occ{ P } \discup \occ{ Q } = \occ{ \mathopen{\wen x.} \left( P \cseq Q \right) }
$.

\item
If $P \equiv \cunit$ then for \textit{par}:
$
\occ{ \mathopen{\wen x.} \left( P \cpar Q \right) }
= \occ{Q}
\mstrict
\left\{\left\{ 0,0 \right\}\right\} \mplus \occ{ Q } 
=
\occ{ \wen x. P \cpar Q }
$.

\item
If $P \not\equiv \cunit$ then for \textit{par}:
$
\occ{ \wen x. P \cpar Q } = \occ{ P } \mplus \occ{ Q } = \occ{ \mathopen{\wen x.} \left( P \cpar Q \right) }
$.


\end{itemize}
For the wen count 
$
\begin{array}{rl}
\wensize{ \wen x. P \odot Q }
=
\left(1 + \wensize{ P }\right)\wensize{ Q }
=
\wensize{Q} + \wensize{P}\wensize{ Q }
\geq
1 + \wensize{ P }\wensize{ Q }
=
\wensize{ \mathopen{\wen x.} \left( P \odot Q \right) }
\end{array}
$ holds, for $\odot \in \left\{\cpar, \cseq \right\}$.
Also, for the new count 
$\newsize{ \wen x. P \cseq Q } = \mmax{\newsize{ P }, \newsize{ Q}} = \newsize{ \mathopen{\wen x.} \left( P \cseq Q \right) }$ and
$\newsize{ \wen x. P \cpar Q } = \newsize{ P }\newsize{ Q } = \newsize{ \mathopen{\wen x.} \left( P \cpar Q \right) }$.
The case \textit{right wen} follows a symmetric argument.

Consider the case for the \textit{extrude} rule, where $Q \not\equiv \cunit$.
$\occ{ \mathopen{\forall x.} \left( P \cpar Q \right) } \mstrict \occ{ \forall x. P \cpar Q }$ by the following:
$
 \left\{\left\{0\right\}\right\} \mcup \left( \occ{ P } \mplus \occ{ Q } \right)
\mstrict
 \left(\left\{\left\{0\right\}\right\} \mplus \occ{ Q } \right) \mcup \left( \occ{ P } \mplus \occ{ Q } \right)
=
 \left( \left\{\left\{0\right\}\right\} \mcup \occ{ P } \right) \mplus \occ{ Q }
$.

Consider the case for the \textit{medial1} rule, where $P \not\equiv \cunit$ and $Q \not\equiv\cunit$.
By distributivity of $\discup$ over $\mcup$,
$
\occ{ \mathopen{\forall x.} \left( P \cseq Q \right) }
=
\left\{\left\{ 0 \right\}\right\} \mcup \left( \occ{ P } \discup \occ{ Q } \right)
=
\left( \left\{\left\{ 0 \right\}\right\} \mcup \occ{ P } \right) \discup \left( \left\{\left\{ 0 \right\}\right\} \mcup \occ{ Q } \right)
=
\occ{ \forall x. P \cseq \forall x. Q }
$.
Also $\wensize{ \mathopen{\forall x.} \left( P \cseq Q \right) } = \wensize{ \forall x. P \cseq \forall x. Q }$
and $\newsize{ \mathopen{\forall x.} \left( P \cseq Q \right) } = \newsize{ \forall x. P \cseq \forall x. Q }$.

For the \textit{select} rule,
$
 \occ{ \exists x. P }
   = \left\{\left\{ 0 \right\}\right\} \mcup \occ{ P }
   \mstrict \occ{ P }
   = \occ{ P \sub{x}{t} }
$,
by Lemma~\ref{lemma:size-sub}.

Consider the case for the \textit{switch} rule, where $P \not\equiv \cunit$ and $R \not\equiv \cunit$.
If $Q \not\equiv \cunit$, then, since $R \not\equiv \cunit$ we have $\left\{\left\{ 0 \right\}\right\} \mstrict \occ{R}$ and hence $\occ{P} = \occ{P} \mplus \left\{\left\{ 0 \right\}\right\} \mstrict \occ{P} \mplus \occ{R}$; and therefore the following holds since $\discup$ distributes over $\mplus$.
\[
\begin{array}{rl}
\occ{P \tensor \left(Q \cpar R\right)}
=& \occ{P} \discup \left( \occ{Q} \mplus \occ{R} \right)
\\
\mstrict& \left(\occ{P} \mplus \occ{R}\right) \discup \left( \occ{Q} \mplus \occ{R} \right)
\\
=& \left(\occ{P} \discup \occ{Q}\right) \mplus \occ{R}
= \occ{\left(P \tensor Q\right) \cpar R}
\end{array}
\]
If $Q \equiv \cunit$ then, since $\left\{\left\{0\right\}\right\} \mstrict \occ{P}$ and $\left\{\left\{0\right\}\right\} \mstrict \occ{R}$, the following hold.
\[
\occ{P \tensor \left( \cunit \cpar R\right)}
= \occ{P} \discup \occ{R}
\mstrict \occ{P} \mplus \occ{R}
= \occ{\left(P \tensor \cunit\right) \cpar R}
\]

Consider the case of the \textit{sequence} rule, where $P \not\equiv \cunit$ and $S \not\equiv \cunit$.
If $Q \not\equiv \cunit$ and $R \not\equiv \cunit$, then the following holds since $\discup$ distributes over $\mplus$.
\[
\!\!
\begin{array}{rl}
\occ{\left(P \cpar R\right) \cseq \left(Q \cpar S\right)}
=&\!\!\left( \occ{P} \mplus \occ{R} \right) \discup \left( \occ{Q} \mplus \occ{S} \right)
\\
\mstrict&\!\!\left( \occ{P} \mplus \occ{R} \right) \discup \left( \occ{Q} \mplus \occ{S} \right) \discup \left( \occ{P} \mplus \occ{S} \right) \discup \left( \occ{Q} \mplus \occ{R} \right)
\\
=&\!\!\left( \occ{P} \discup \occ{Q} \right) \mplus \left( \occ{R} \discup \occ{S} \right)
= \occ{\left(P \cseq Q\right) \cpar \left(R \cpar S\right)}
\end{array}
\]
If $Q \equiv \cunit$ and $R \not\equiv \cunit$, then, since $\left\{\left\{0\right\}\right\} \mstrict \occ{R}$, and hence $\occ{S} = \occ{S} \mplus \left\{\left\{0\right\}\right\} \mstrict \occ{S} \mplus \occ{R}$, therefore since $\discup$ distributes over $\mplus$.
\[
\begin{array}{rl}
\!\!
\occ{\left(P \cpar R\right) \cseq \left(\cunit \cpar S\right)}
=\left( \occ{P} \mplus \occ{R} \right) \discup \occ{S}
\mstrict&\!\!\left( \occ{P} \mplus \occ{R} \right) \discup \left(\occ{P} \mplus \occ{S}\right)
\\
=&\!\! \occ{P} \mplus \left( \occ{R} \discup \occ{S} \right)
=
\occ{\left(P \cseq \cunit \right) \cpar \left(R \cpar S\right)}
\end{array}
\]
A symmetric argument holds when $Q \not\equiv \cunit$ and $R \equiv \cunit$.

If $Q \equiv \cunit$ and $R \equiv \cunit$, then $\left\{\left\{0\right\}\right\} \mstrict \occ{P}$ and $\left\{\left\{0\right\}\right\} \mstrict \occ{S}$; hence the following strict inequality holds:
$
\occ{\left(P \cpar \cunit\right) \cseq \left(\cunit \cpar S\right)}
= \occ{P} \discup \occ{S}
\mstrict \occ{P} \mplus \occ{S}
= \occ{\left(P \cseq \cunit \right) \cpar \left(\cunit \cseq S\right)}
$.

Consider the case of the \textit{medial new} rule where $P \not\equiv \cunit$ and $Q \not\equiv \cunit$. For the occurrence count the equality $\occ{ \mathopen{\new x.} \left( P \cseq Q \right) } = \occ{ P } \mplus \occ{ Q } = \occ{ \new x. P \cseq \new x.Q }$ holds.
For the wen count,
$\wensize{ \mathopen{\new x.} \left( P \cseq Q \right) } = 
\wensize{ P } \wensize{ Q } = 
\wensize{ \new x. P \cseq \new x. Q }$.
For the new count the following equality holds:
$
\newsize{ \mathopen{\new x.} \left( P \cseq Q \right) }
=
1 + \mmax{\newsize{ P }, \newsize{ Q }}
=
\mmax{ 1 + \newsize{ P }, 1 + \newsize{ Q }}
=
\newsize{ \new x. P \cseq \new x. Q }
$.

Consider the case for the \textit{medial} rule, where either $P \not\equiv \cunit$ or $R \not\equiv \cunit$ and also either $Q \not\equiv \cunit$ or $S \not\equiv \cunit$.
When all of $P$, $Q$, $R$ and $S$ are not equivalent to the unit, we have the following.
\[
\begin{array}{rl}
\!\!\!
\occ{
 \left(P \wwith R\right) \cseq \left(Q \wwith S\right)
}
=&\!\!\!
\left( \occ{P} \mcup \occ{R} \right) \discup \left( \occ{Q} \mcup \occ{S} \right)
\\
\mstrict
&\!\!\!
\left( \occ{P} \mcup \occ{R} \right) \discup \left( \occ{Q} \mcup \occ{S} \right) \discup \left( \occ{P} \mcup \occ{S} \right) \discup \left( \occ{Q} \mcup \occ{R} \right)
\\
=
&\!\!\!
\left( \occ{P} \discup \occ{Q} \right) \mcup \left( \occ{R} \discup \occ{S} \right)
=
\occ{
 \left(P \cseq Q\right) \wwith \left(R \cseq S\right)
}
\end{array}
\]
For when exactly one of $P$, $Q$, $R$ and $S$ is equivalent to the unit, all cases are symmetric. Without loss of generality suppose that $S \equiv \cunit$ (and possibly also $Q \equiv \cunit$, but $R \not\equiv \cunit$). By distributivity of $\discup$ over $\mcup$ the following holds.
\[
\begin{array}{rl}
\occ{
 \left(P \wwith R\right) \cseq \left(Q \wwith \cunit\right)
}
=&
\left( \occ{P} \mcup \occ{R} \right) \discup \left( \occ{Q} \mcup \left\{\left\{ 0 \right\}\right\} \right)
\\
\mstrict&
\left( \occ{P} \mcup \occ{R} \right) \discup \left( \occ{Q} \mcup \occ{R} \right)
\\
=&
\left( \occ{P} \discup \occ{Q} \right) \mcup \occ{R}
=
\occ{
 \left(P \cseq Q\right) \wwith \left(R \cseq \cunit \right)
}
\end{array}
\]
There is one more form of case to consider for the medial: either $P \not\equiv \cunit$, $Q \equiv \cunit$, $R \equiv \cunit$ and $S \not\equiv \cunit$; or $P \equiv \cunit$, $Q \not\equiv \cunit$, $R \not\equiv \cunit$ and $S \equiv \cunit$. We consider only the former case. The later case, can be treated symmetrically.
Since $P \not\equiv \cunit$ and $S \not\equiv \cunit$, $\left\{\left\{ 0 \right\}\right\} \mstrict \occ{P}$ and $\left\{\left\{ 0 \right\}\right\} \mstrict \occ{S}$. Therefore, $\occ{P} \mcup \left\{\left\{ 0 \right\}\right\} \mstrict \occ{P} \mcup \occ{S}$ and $\occ{Q} \mcup \left\{\left\{ 0 \right\}\right\} \mstrict \occ{P} \mcup \occ{S}$. Hence, we have established that $\left(\occ{P} \mcup \left\{\left\{ 0 \right\}\right\}\right) \discup \left( \occ{Q} \mcup \left\{\left\{ 0 \right\}\right\} \right) \mstrict \occ{P} \mcup \occ{S}$.
Note that the restriction on the \textit{medial} rule, either $P \not\equiv \cunit$ or $R \not\equiv \cunit$ and also either $Q \not\equiv \cunit$ or $S \not\equiv \cunit$, excludes any further cases. Hence we have established that $
\occ{\left(P \wwith R\right) \cseq \left(Q \wwith S\right)}
\mstrict
\occ{\left(P \cseq Q\right) \wwith \left(R \cseq S\right)}$. 

For the \textit{with name} rule $\occ{\quantifier x.P \wwith \quantifier x.Q} = \occ{P} \mcup \occ{Q} = \occ{\mathopen{\quantifier x.}\left(P \wwith Q\right)}$,
where $\quantifier \in \left\{ \new, \wen \right\}$. For the new count $\newsize{\new x.P \wwith \new x.Q} = 2 + \newsize{P} + \newsize{Q} > 1 + \newsize{P} + \newsize{Q} = \newsize{\mathopen{\new x.}\left(P \wwith Q\right)}$
and $\newsize{\wen x.P \wwith \wen x.Q} = \newsize{\mathopen{\wen x.}\left(P \wwith Q\right)}$.
Similarly, $\wensize{\wen x.P \wwith \wen x.Q} > \wensize{\mathopen{\wen x.}\left(P \wwith Q\right)}$.
For \textit{left name}, \textit{right name} and \textit{all name}, the size of formulae are invariant.

The cases for the rules \textit{tidy}, \textit{tidy name}, \textit{left}, \textit{right}, \textit{atomic interact} are established by the following inequalities:
$\occ{ \cunit } \mstrict \occ{ \cunit \wwith \cunit }$,
$\occ{ \cunit } \mstrict \occ{ \new x. \cunit }$,
$\occ{ \cunit } \mstrict \occ{ \co{a} \cpar a }$,
$\occ{ P } \mstrict \occ{ P \ooplus Q }$
and
$\occ{ Q } \mstrict \occ{ P \ooplus Q }$.

Hence the lemma holds by induction on the length of the derivation.
\end{proof}


\begin{lemma}
[Lemma~\ref{lemma:split-exists}]
If $\vdash \exists x. P \cpar Q$, then there exist formulae $V_i$ and values $v_i$ such that $\vdash P\sub{x}{v_i} \cpar V_i$, where $1 \leq i \leq n$, and $n$-ary killing context $\tcontext{}$ such that $\vcenter{\infer{Q}{ \tcontext{V_1, V_2, \hdots, V_n}}}$
and if $\tcontext{}$ binds $y$ then $\nfv{y}{\left(\exists x . P\right)}$.
\end{lemma}

\begin{proof}
The proof proceeds by induction on the size of the proof in Definition~\ref{definition:size},
until the principal \textit{exists} operator is removed from the proof, according to the base case.
In the base case, the bottommost rule in a proof is an instance of the \textit{select} rule of the form 
$
\vcenter{
\infer[,]{
\exists x. T \cpar U 
}{
 T\sub{x}{v} \cpar U
}
}$ where $\vdash T\sub{x}{v} \cpar V$ holds; hence splitting is immediately satisfied.
As in every splitting lemma, there are commutative cases for \textit{new}, \textit{wen}, \textit{all}, \textit{with}, \textit{times} and two for \textit{seq}.

Consider the commutative case induced by the \textit{external} rule.
The bottommost rule is the form
\[
\infer[]{
\exists x. T \cpar \left( U \wwith V \right) \cpar W \cpar P
}{
\left(\exists x. T \cpar U \cpar W \wwith
\exists x. T \cpar V \cpar W\right)
\cpar P
}
\]
where it holds that $\vdash \left(\left(\exists x. T \cpar U \cpar W\right) \wwith \left(\exists x. T \cpar V \cpar W\right)\right) \cpar P$.
By Lemma~\ref{lemma:split-times}, $\vdash \exists x. T \cpar U \cpar W \cpar P$ and $\vdash \exists x. T \cpar V \cpar W \cpar P$; and furthermore 
$\size{\exists x. T \cpar U \cpar W \cpar P} \mstrict \size{ \exists x. T \cpar \left( U \wwith V \right) \cpar W \cpar P }$
and
$\size{\exists x. T \cpar V \cpar W \cpar P} \mstrict \size{ \exists x. T \cpar \left( U \wwith V \right) \cpar W \cpar P }$ hold.
Hence, by the induction hypothesis, there exist $Q_i$ and $u_i$ such that $\vdash T\sub{x}{u_i} \cpar Q_i$, for $1 \leq i \leq m$, and $R_j$ and $v_j$ such that
$\vdash T\sub{x}{v_j} \cpar R_j$, for $1 \leq j \leq n$; and $m$-ary and $n$-ary killing contexts $\tcontextn{0}{}$ and $\tcontextn{1}{}$ such that 
the derivations $(1)$ and $(2)$ below hold. 
\[
\begin{array}{ccc}
\infer[]{
U \cpar W \cpar P
}{
 \tcontextn{0}{Q_1, \hdots, Q_m}
}
&
\infer[]{
V \cpar W \cpar P
}{
 \tcontextn{1}{R_1, \hdots, R_n}
}
&
\infer[]{
\left( U \wwith V \right) \cpar W \cpar P
}{
\infer[]{
\left(U \cpar W \cpar P\right) \wwith \left(V \cpar W \cpar P\right) 
}{
\tcontextn{0}{Q_1, \hdots, Q_m} \wwith \tcontextn{1}{R_1, \hdots, R_n}
}} \\
(1) & (2) & (3)
\end{array}
\]
Thus the derivation $(3)$ above can be constructed.
Notice that $\tcontextn{0}{} \wwith \tcontextn{1}{}$ is an $m+n$-ary killing context, as required.

Consider the commutative case induced by the \textit{extrude1} rule. In this case, the bottommost rule is
\[
\infer[]{
\exists x. T \cpar \forall y. U \cpar V \cpar W
}{
\mathopen{\forall y.}\left( \exists x. T \cpar U \cpar V \right) \cpar W
}
\]
 assuming $\nfv{y}{(\exists x. T \cpar V)}$
where $\vdash \mathopen{\forall y.}\left( \exists x. T \cpar U \cpar V \right) \cpar W$ holds.
By Lemma~\ref{lemma:universal}, for every variable $z$, $\vdash \left( \exists x. T \cpar U \cpar V \right)\sub{y}{z} \cpar W$ holds.
Furthermore, by definition of substitution $\left( \exists x. T \cpar U \cpar V \right)\sub{y}{z} \cpar W \equiv \exists x. T \cpar U\sub{y}{z} \cpar V \cpar W$,  since $\nfv{y}{(\exists x. T \cpar V)}$.
Now observe the strict multiset inequality $
\size{ \exists x. T \cpar U\sub{y}{z} \cpar V \cpar W } \mstrict \size{ \exists x. T \cpar \forall y. U \cpar V \cpar W }
$ holds;
hence, by the induction hypothesis, for every variable $z$, there exist formulae $P^z_i$ and values $v^z_i$ such that $\vdash T\sub{x}{v^z_i} \cpar P^z_i$ holds, for $1 \leq i \leq n$, and $n$-ary killing context $\tcontext{}$ such that derivation $(4)$ below can be constructed. Hence, for $\nfv{z}{(\forall y. U \cpar V \cpar W)}$, the  derivation $(5)$
below can be constructed:
\[
\begin{array}{cc}
\infer[]{
U\sub{y}{z} \cpar V \cpar W
}{
\tcontext{P^z_1, \hdots, P^z_n}
}
&
\infer[]{
\forall y. U \cpar V \cpar W
}{
\infer[]{
\mathopen{\forall z.} \left( U\sub{y}{z} \cpar V \cpar W \right) 
}{
\mathopen{\forall z.} \tcontext{P^z_1, \hdots, P^z_n} 
}} \\
(4) & (5) 
\end{array}
\]
Notice that $\mathopen{\forall z.} \tcontext{}$ is a $n$-ary killing context as required.

Consider the commutative cases involving the \textit{sequence} rule.
We present the scenario where the principal formula $\exists x. U$ moves entirely to the left hand side of \textit{seq} operator. The cases where the principal formula moves entirely to the right hand side of the \textit{seq} operator and the commutative case for \textit{times}, are similar to the cases presented below.
In the scenario we consider, the bottommost rule in a proof is of the following form:
\[
\infer[]{
\exists x. U \cpar \left( V \cseq P \right) \cpar W \cpar Q
}{
\left( \left( \exists x. U \cpar V \cpar W\right) \cseq P \right) \cpar Q
}
\]
such that $\vdash \left( \left( \exists x. U \cpar V \cpar W\right) \cseq P \right) \cpar Q$ holds.
By Lemma~\ref{lemma:split-times}, there exist $R_i$ and $S_i$ such that $\vdash \exists x. U \cpar V \cpar W \cpar R_i$ and $\vdash P \cpar S_i$ hold, for $1 \leq i \leq n$, and $n$-ary killing context $\tcontext{}$ such that the derivation
$
\vcenter{
\infer[]{
 Q 
}{
 \tcontext{ R_1 \cseq S_1, \hdots, R_n \cseq S_n }
}}
$ holds, 
and furthermore the size of the proof of ${ \exists x. U \cpar V \cpar W \cpar R_i }$ is bounded above by the size of the proof of $\left( \left( \exists x. U \cpar V \cpar W\right) \cseq P \right) \cpar Q$ hence strictly bounded above by the size of the proof of $\exists x. U \cpar \left( V \cseq P \right) \cpar W \cpar Q$.
By the induction hypothesis, for $1 \leq i \leq n$, there exist formulae $P^i_j$ and terms $t^i_j$ such that $\vdash U\sub{x}{t^i_j} \cpar P^i_j$, for $1 \leq j \leq m_i$, and killing contexts $\tcontextn{i}{}$ such that the derivation $
\vcenter{
\infer[]{
V \cpar W \cpar R_i
}{
 \tcontextn{i}{P^i_1, \hdots, P^i_{m_i}}
}}$ holds.
Hence the following derivation can be constructed, as required.
\[
\infer[]{
\left( V \cseq P \right) \cpar W \cpar Q
}{
\infer[]{
\left( V \cseq P \right) \cpar W \cpar \tcontext{ R_1 \cseq S_1, \hdots, R_n \cseq S_n } 
}{
\infer[]{
\tcontext{ \left( V \cseq P \right) \cpar W \cpar R_i \cseq S_i \colon 1 \leq i \leq n } 
}{
\infer[]{
\tcontext{ \left( V \cpar W \cpar R_i \right) \cseq \left( P \cpar S_i \right) \colon 1 \leq i \leq n }
}{
\infer[]{
\tcontext{ V \cpar W \cpar R_i \colon 1 \leq i \leq n } 
}{
\tcontext{ \tcontextn{1}{P^1_1, \hdots, P^1_{m_1}}, \hdots, \tcontextn{n}{P^n_1, \hdots, P^n_{m_n}} } 
}}}}}
\]
Notice that $\tcontext{ \tcontextn{1}{}, \hdots, \tcontextn{n}{} }$ is a $\sum_{i=1}^n{m_i}$-ary killing context as required.

Consider the commutative case induced by the \textit{extrude new} rule.
In this case, the bottommost rule of a proof is of the form 
\[
\infer[\mbox{, where $\nfv{y}{\exists x. P \cpar R}$ and $\vdash \mathopen{\new y.} \left( \exists x. P \cpar Q \cpar R\right) \cpar S$ holds.}]{
\exists x. P \cpar \new y. Q \cpar R \cpar S
}{
 \mathopen{\new y.} \left( \exists x. P \cpar Q \cpar R\right) \cpar S
}
\] 
By Lemma~\ref{lemma:split-times}, there exist $T$ and $U$ such that $\vdash \exists x. P \cpar Q \cpar R \cpar U$, $\nfv{y}{T}$ holds and either $T = U$ or $T = \wen y. U$, and also
$\vcenter{\infer{S}{T}}$.
Furthermore, the size of the proof of $\exists x. P \cpar Q \cpar R \cpar U$ is bounded above by the size of the proof of $\mathopen{\new y.} \left( \exists x. P \cpar Q \cpar R\right) \cpar S$ and hence strictly bounded above by the size of the proof of $\exists x. P \cpar \new y. Q \cpar R \cpar S$, enabling the induction hypothesis.
Hence, by the induction hypothesis, there exist formulae $V_i$ and terms $t_i$ such that $\vdash P\sub{x}{t_i} \cpar V_i$ holds, for $1 \leq i \leq n$, and $n$-ary killing context $\tcontext{}$ such that the derivation $
\vcenter{
\infer{
Q \cpar R \cpar U
}{
\tcontext{ V_1, \hdots, V_n }
}}$ holds.
Observe that, either $T = U$ and $\nfv{y}{U}$, and hence we have
derivation $(6)$ below; 
or $T = \wen y. U$ and hence we have derivation $(7)$ below. Thereby we can construct the derivation 
$(8)$ below.
\[
\begin{array}{ccc}
\infer[]{
\new y. Q \cpar R \cpar T
}{
 \mathopen{\new y.} \left(Q \cpar R \cpar U\right)
}
&
\infer[]{
\new y. Q \cpar R \cpar \wen y. U
}{
\infer[]{
 \mathopen{\new y.} \left(Q \cpar R\right) \cpar \wen y. U
}{
 \mathopen{\new y.} \left(Q \cpar R \cpar U\right)
}}
&
\infer[]{
\new y. Q \cpar R \cpar S
}{
\infer[]{
\new y. Q \cpar R \cpar T
}{
\infer[]{
\mathopen{\new y.} \left( Q \cpar R \cpar U \right)
}{
\new y. \tcontext{ V_1, \hdots, V_n }
}}} \\
(6) & (7) & (8)
\end{array}
\]
Observe that $\new y. \tcontext{ }$ is a $n$-ary killing context as required.

Consider the commutative case induced by the \textit{right wen} rule.
In this case, the bottommost rule of a proof is of the form
\[
\infer[\mbox{, where $\nfv{y}{\exists x. P \cpar R}$.}]{
\exists x. P \cpar \wen y. Q \cpar R \cpar S
}{
 \wen y \left( \exists x. P \cpar Q \cpar R\right) \cpar S
}
\]
By Lemma~\ref{lemma:split-times}, there exist $T$ and $U$ such that 
$\vdash \exists x. P \cpar Q \cpar R \cpar U$, $\nfv{y}{T}$ and either $T = U$ or $T = \new y. U$, and also
$\vcenter{\infer{S}{T}}$.
Furthermore, the size of the proof of $\exists x.P \cpar Q \cpar R \cpar U$ is bounded above by the size of the proof of $\mathopen{\wen y.} \left( \exists x. P \cpar Q \cpar R\right) \cpar S$ and hence strictly bounded above by the size of the proof of $\exists x. P \cpar \wen y. Q \cpar R \cpar S$, enabling the induction hyothesis.
Hence, by the induction hypothesis, there exist formulae $V_i$ and terms $t_i$ such that $\vdash P\sub{x}{t_i} \cpar V_i$, for $1 \leq i \leq n$, and $n$-ary killing context $\tcontext{}$ such that $
\vcenter{
\infer{
Q \cpar R \cpar U
}{
 \tcontext{ V_1, \hdots, V_n }
}}$.
Observe that either $T = U$ and $\nfv{y}{U}$ hence the derivation $(9)$ below holds; 
or $T = \new y. U$ hence the derivation $(10)$ below holds. Hence the derivation $(11)$ below can be constructed:
\[
\begin{array}{ccc}
\infer{
\wen y. Q \cpar R \cpar T
}{
\infer{
 \mathopen{\wen y.} \left(Q \cpar R \cpar U\right)
}{
 \mathopen{\new y.} \left(Q \cpar R \cpar U\right)
}}
&
\infer{
\wen y. Q \cpar R \cpar \new y. U
}{
\infer[]{
 \mathopen{\wen y.} \left(Q \cpar R\right) \cpar \new y. U 
}{
 \mathopen{\new y.} \left(Q \cpar R \cpar U\right)
}}
&
\infer[]{
\wen y. Q \cpar R \cpar S
}{
\infer[]{
\wen y. Q \cpar R \cpar T
}{
\infer[]{
\mathopen{\new y.} \left( Q \cpar R \cpar U \right)
}{
\new y. \tcontext{ V_1, \hdots, V_n }
}}} \\
(9) & (10) & (11)
\end{array}
\]
Observe that $\new y. \tcontext{ }$ is a $n$-ary killing context as required.

In many commutative cases, the bottommost rule does not interfere with the principal formula. Consider when a rule is applied outside the scope of the principal formula.
In this case, the bottommost rule in a proof is of the form
$
\vcenter{
\infer[]{
 \exists x. U \cpar \context{ V }
}{
 \exists x. U \cpar \context{ W }
}}$
such that $\vdash \exists x. U \cpar \context{ W }$.
By the induction hypothesis, there exist formulae $P_i$ and terms $t_i$, for $1 \leq i \leq n$ such that $\vdash U\sub{x}{t_i} \cpar P_i$, for $1 \leq i \leq n$, and $n$-ary killing context $\tcontext{}$ such that 
$
\vcenter{
\infer[]{
\context{ W }
}{
 \tcontext{ P_1, \ldots, P_n }
}}
$.
Hence $
\vcenter{
\infer[]{
\context{ V }
}{
\infer[]{
 \context{ W } 
}{
 \tcontext{ P_1, \ldots, P_n }
}}}
$ as required.

Consider the following application of any rule 
$
\vcenter{
\infer[]{
\exists x. \context{T} \cpar W
}{
 \exists x. \context{U} \cpar W
}}
$
such that $\vdash \exists x. \context{U} \cpar W$. 
By the induction hypothesis, there exist formulae $P_i$ and terms $t_i$ where $\vdash \context{U}\sub{x}{t_i} \cpar P_i$, for $1 \leq i \leq n$, and $n$-ary killing context $\tcontext{}$ such that 
$
\vcenter{
\infer[]{
W
}{
 \tcontext{ P_1, \ldots, P_n }
}}
$.
Hence, by Lemma~\ref{lemma:substitution}, the proof 
$
\vcenter{
\infer[]{
\context{T}\sub{x}{v_i} \cpar P_i
}{
\infer[]{
 \context{U}\sub{x}{v_i} \cpar P_i 
}{
 \cunit
}}}
$ holds.

All cases have been considered hence the lemma holds by induction on the size of a proof.
\end{proof}


\begin{lemma}
[Lemma~\ref{lemma:invariant}]
If $\vdash \context{T}$, then there exist formulae $U_i$ and substitutions $\sigma_i$, for $1 \leq i \leq n$, and $n$-ary killing context $\tcontext{}$ such that $\vdash T\sigma_i \cpar U_i$; and, for any formula $V$ there exist $W_i$ such that either $W_i = V\sigma_i \cpar U_i$ or $W_i = \cunit$ and the following holds:
$
\vcenter{
\infer[]{
 \context{V}
}{
 \tcontext{ W_1, W_2, \hdots, W_n }
}}
$.
\end{lemma}

\begin{proof}
The proof proceeds by induction on the size of the formula part of the context (n.b.\ not counting the size of atoms). The base case concerning one hole is immediate.

Consider the case for a context of the form $\exists x. \context{} \cpar P$, where $\vdash \exists x. \context{ T } \cpar P$.
By Lemma~\ref{lemma:split-exists},
there exist formulae $Q_i$ and values $v_i$ such that $\vdash \context{ T } \sub{x}{v_i} \cpar Q_i$, for $1 \leq i \leq n$;
and $n$-ary killing context $\tcontext{}$ such that the following derivation holds.
\[
\infer{P}{
 \tcontext{ Q_1, Q_2, \hdots, Q_n }
}
\]
For context $\context{}$ and any formula $U$, let $\contextn{i}{}$ and $\sigma_{i}$ be such that $\context{U}\sub{x}{v_i} \equiv \contextn{i}{U\sigma_{i}}$.
Notice that for first-order quantifiers, the substitutions does not increase the size of the formula part of the context. It can only increases the size of terms in atoms, which are not counted in this induction.
Since $\vdash \context{ T } \sub{x}{v_i} \cpar Q_i$ holds, then $\vdash \contextn{i}{ T\sigma_i }  \cpar Q_i$ holds.
Therefore, by the induction hypothesis, there exists formula $V^i_j$ such that either $V^i_j = \cunit$ or $V^i_j = \left(U\sigma_i\right) \sigma^i_j \cpar W^i_j$, where $\vdash \left(T \sigma_i\right) \sigma^i_j \cpar W^i_j$, for $1 \leq j \leq m_i$;
and $m_i$-ary killing context $\tcontextn{i}{}$
such that 
$
\context{ U } \sub{x}{v_i} \cpar Q_i
\equiv
\contextn{i}{ U\sigma_i } \cpar Q_i
$
and the following derivation holds:
\[
\infer[.]{
\contextn{i}{ U\sigma_i } \cpar Q_i
}{
\tcontextn{i}{ V^i_1, V^i_2, \hdots, V^i_{m_i} }
}
\]
Hence the following derivation can be constructed for all formulae $U$.
\[
\infer[]{
\exists x. \context{ U } \cpar P
}{
\infer[]{
\exists x. \context{ U } \cpar \tcontext{ Q_1, \hdots, Q_n }
}{
\infer[]{
\exists x. \context{ U } \cpar \tcontext{ Q_i \colon 1 \leq i \leq n }
}{
\infer[]{
\tcontext{ \exists x. \context{ U } \cpar Q_i \colon 1 \leq i \leq n }
}{
\infer[]{
\tcontext{ \context{ U } \sub{x}{v_i} \cpar Q_i \colon 1 \leq i \leq n }
}{
\tcontext{ \tcontextn{i}{ V^i_j \colon 1 \leq j \leq m_i } \colon 1 \leq i \leq n }
}}}}}
\]
Observe $V^i_j = \cunit$ or $V^i_j = U \mathclose{\left(\sigma_i \cdot \sigma^{i}_j\right)} \cpar W^i_j$, such that $\vdash T \mathclose{\left(\sigma_i \cdot \sigma^{i}_j\right)} \cpar W^i_j$, for all $i$ and $j$, as required.

Consider the case for a context of the form $\new x. \context{} \cpar P$, where $\vdash \new x. \context{ T } \cpar P$.
By Lemma~\ref{lemma:split-times},
there exist formulae $Q$ and $\hat{Q}$ such that $\vdash \context{ T } \cpar \hat{Q}$ and either $Q = \hat{Q}$ or $Q$ and $\wen x.\hat{Q}$, and also 
$
\vcenter{
\infer{P}{Q}}
$.
Therefore, by the induction hypothesis,
there exist formulae $V_i$ and $W_i$ and substitutions $\sigma_i$ such that either $V_i = \cunit$ or $V_i = U \sigma_i \cpar W_i$, where $\vdash T \sigma_i \cpar W_i$, for $1 \leq i \leq n$;
and $n$-ary killing context $\tcontext{}$
such that 
\[
\infer[.]{
\context{ U } \cpar \hat{Q}
}{
\tcontext{ V_1, V_2, \hdots, V_n }
}
\]
Hence the following derivation 
\[
\infer[]{
\new x. \context{ U } \cpar P
}{
\infer[]{
\new x. \context{ U } \cpar Q
}{
\infer[]{
\mathopen{\new x.} \left( \context{ U } \cpar \hat{Q} \right)
}{
\new x. \tcontextn{i}{ V_i \colon 1 \leq i \leq n }
}}}
\]
can be constructed for all formulae $U$, as required.

Consider the case for a context of the form $\wen x. \context{} \cpar P$, where $\vdash \wen x. \context{ T } \cpar P$. 
By Lemma~\ref{lemma:split-times},
there exist formulae $Q$ and $R$ such that $\nfv{x}{Q}$ and $\vdash \context{T} \cpar R$ and either $Q = R$ or $Q = \new x. R$, and also
$
\vcenter{\infer{P}{Q}}
$.
Therefore, by the induction hypothesis,
there exist formulae $V_i$ and $W_i$ and substitutions $\sigma_i$ such that either $V_i = \cunit$ or $V_i = U \sigma_i \cpar W_i$, where $\vdash T \sigma_i \cpar W_i$, for $1 \leq i \leq n$;
and $n$-ary killing context $\tcontext{}$
such that 
\[
\infer[.]{
\context{ U } \cpar R
}{
\tcontext{ V_1, V_2, \hdots, V_n }
}
\]
In the former case that $Q = R$, since $\nfv{x}{Q}$, the derivation
\[
\infer[]{
\wen x. \context{ U } \cpar R
}{
\infer[]{
\new x. \context{ U } \cpar R
}{
\new x. \left( \context{ U } \cpar R \right)
}}
\] holds.
In the case, $Q = \new x. R$ the derivation 
\[
\infer[]{
\wen x .\context{ U } \cpar \new x. R
}{
\mathopen{\new x.} \left( \context{ U } \cpar R \right)
}
\] holds.
Hence, for all formulae $U$,
\[
\infer[]{
\wen x. \context{ U } \cpar P
}{
\infer[.]{
\wen x. \context{ U } \cpar Q
}{
\infer[]{
\mathopen{\new x.} \left( \context{ U } \cpar R \right)
}{
\new x. \tcontext{ V_1, V_2, \hdots, V_n}
}}}
\]

Consider the case of a context of the form $\forall x. \context{ } \cpar P$, where $\vdash \forall x. \context{ T } \cpar$ holds. By Lemma~\ref{lemma:universal}, for any variable $y$, $\vdash \context{ T } \sub{x}{y} \cpar P$ holds.
For name $y$, let $\contextn{y}{}$ be such that for any formula $U$, $\context{U} \sub{x}{y} \equiv \contextn{y}{U\sub{x}{y}}$.
For any $y$, by the induction hypothesis, for any formula $U$, there exist formulae $V^y_i$ such that either $V^y_i = \cunit$ or $V^y_i = U\sub{x}{y}\sigma^y_i \cpar W^y_i$, where $\vdash T\sub{x}{y}\sigma^y_i \cpar W^y_i$ holds, for $1 \leq i \leq n$; and $n$-ary killing context $\tcontextn{y}{}$ such that 
$
\context{ U } \sub{x}{y} \cpar P
\equiv
\contextn{y}{ U \sub{x}{y}}  \cpar P$
and the following derivation can be constructed:
\[
\infer[.]{
\contextn{y}{ U \sub{x}{y}}  \cpar P
}{
\tcontextn{y}{ V^{y}_i \colon 1 \leq i \leq n  }
}
\]
Therefore, for  $\nfv{y}{(\forall x. \context{ U } \cpar P)}$ and any $U$,
derivation
\[
\infer[]{
\forall x. \context{ U } \cpar P
}{
\infer[]{
\mathopen{\forall y.} \left(\context{ U } \sub{x}{y} \cpar P \right)
}{
\forall y. \tcontextn{y}{ V^{y}_i \colon 1 \leq i \leq n }
}}
\] 
holds.
In the above $V^{y}_i = \cunit$ or $V^y_i = U\sub{x}{y}\sigma^y_i \cpar W^y_i$, where $\vdash T\sub{x}{y}\sigma^y_i \cpar W^y_i$ holds, for all $i$, as required.

The cases for \textit{plus}, \textit{with}, \textit{tensor} and \textit{seq} do not differ significantly from \textsf{MAV}~\cite{Horne2015}.
\end{proof}

\begin{lemma}[co-left and co-right: Lemma~\ref{lemma:co-branching}]
If $\vdash \context{ P \wwith Q }$ holds
then both $\vdash \context{ P }$ and $\vdash \context{ Q }$ hold.
\end{lemma}
\begin{proof}
Assume that $\vdash \left(P \wwith Q\right)\mathclose\sigma \cpar R$ holds.
By Lemma~\ref{lemma:split-times}, $\vdash P\mathclose\sigma \cpar R$ and $\vdash Q\mathclose\sigma \cpar R$ hold.
Hence by Lemma~\ref{lemma:context}, for any context $\context{}$, if $\vdash \context{ P \wwith Q }$ then $\vdash \context{ P }$ and $\vdash \context{ Q }$.
\end{proof}

\begin{lemma}[co-external: Lemma~\ref{lemma:co-external}]
If $\vdash \context{ P \tensor  \left( Q \ooplus R \right) }$ holds
 then $\vdash \context{ \left( P \tensor Q \right) \ooplus \left( P \tensor R \right) }$ holds.
\end{lemma}
\tohide{
\begin{proof}
Assume that $\vdash \left( \left( P \ooplus Q \right) \tensor R\right)\mathclose{\sigma} \cpar W$ holds, for some substitution $\sigma$.
By Lemma~\ref{lemma:split-times}, there exist formulae $T_i$ and $U_i$ such that $\vdash \left( P \ooplus Q \right)\mathclose{\sigma} \cpar T_i$ and $\vdash R\sigma \cpar U_i$, for $1 \leq i \leq n$, and killing context $\tcontext{}$ such that 
\[
\infer[.]{
 W
}{
 \tcontext{ T_1 \cpar U_1, \hdots, T_n \cpar U_n }
}
\]
Now, by Lemma~\ref{lemma:split-plus}, for every $i$, there exists killing context $\tcontextn{i}{}$ and types $V^i_j$ such that either $\vdash P\sigma \cpar V^i_j$ or $\vdash Q\sigma \cpar V^i_j$ holds, for $1 \leq j \leq m_i$, and the derivation
\[
\infer[]{
 T_i 
}{
 \tcontextn{i}{ V^i_1, V^i_2, \hdots, V^i_{m_i} }
}
\] holds.

Notice that if $\vdash P\sigma \cpar V^i_j$ holds then the following derivation can be constructed.
\[
\infer[]{
\left( \left(P \tensor R\right) \ooplus \left(Q \tensor R\right) \right)\mathclose{\sigma} \cpar V^i_j  \cpar U_i
}{
\infer[]{
 \left( P \tensor R \right)\mathclose{\sigma} \cpar V^i_j  \cpar U_i 
}{
\infer[]{
 \left( P\sigma  \cpar  V^i_j \right) \tensor \left( R\sigma  \cpar U_i \right) 
}{
 \cunit
}}}
\]
Otherwise $\vdash Q \cpar V^i_j$ holds, hence the following derivation can be constructed.
\[
\infer[]{
\left( \left(P \tensor R\right) \ooplus \left(Q \tensor R \right)\right)\mathclose{\sigma} \cpar V^i_j  \cpar U_i
}{
\infer[]{
 \left( Q \tensor R \right)\mathclose{\sigma} \cpar V^i_j  \cpar U_i 
}{
\infer[]{
 \left( Q\sigma  \cpar  V^i_j \right) \tensor \left( R\sigma  \cpar U_i \right)
}{ \cunit
}}}
\]
Hence by applying one of the above proofs for each $i$ and $j$ we can construct the following proof.
\[
\infer[]{
\left( \left(P \tensor R\right) \ooplus \left(Q \tensor R\right) \right)\mathclose{\sigma} \cpar W 
}{
\infer[]{
\left(\left( P \tensor R\right) \ooplus \left(Q \tensor R\right) \right)\mathclose{\sigma} \cpar \tcontext{ T_1 \cpar U_1, \hdots, T_n \cpar U_n } 
}{
\infer[]{
\left( \left(P \tensor R\right) \ooplus \left(Q \tensor R\right) \right)\mathclose{\sigma} \cpar \tcontext{ \tcontextn{i}{ V^i_1, V^i_2, \hdots, V^i_{m_i} } \cpar U_i \colon 1 \leq i \leq n } 
}{
\infer[]{
\left( \left(P \tensor R\right) \ooplus \left(Q \tensor R\right) \right)\mathclose{\sigma} \cpar \tcontext{ \tcontextn{i}{ V^i_j  \cpar U_i \colon 1 \leq j \leq m_i } \colon 1 \leq i \leq n }
}{
\infer[]{
 \tcontext{ \left( \left(P \tensor R\right) \ooplus \left(Q \tensor R \right)\right)\mathclose{\sigma} \cpar \tcontextn{i}{ V^i_j  \cpar U_i \colon 1 \leq j \leq m_i } \colon 1 \leq i \leq n } 
}{
\infer[]{
 \tcontext{ \tcontextn{i}{ \left(\left( P \tensor R\right) \ooplus \left(Q \tensor R \right)\right)\mathclose{\sigma} \cpar V^i_j  \cpar U_i \colon 1 \leq j \leq m_i } \colon 1 \leq i \leq n } 
}{
\infer[]{
 \tcontext{ \tcontextn{i}{ \cunit \colon 1 \leq j \leq m_i } \colon 1 \leq i \leq n } 
}{
 \cunit
}}}}}}}
\]
Hence $\vdash \left(\left( P \tensor R\right) \ooplus \left(Q \tensor R\right) \right) \cpar W$.
Therefore, by Lemma~\ref{lemma:context}, for any context $\vdash \context{ \left( P \ooplus Q \right) \tensor R }$ yields $\vdash \context{ \left(P \tensor R\right) \ooplus \left(Q \tensor R\right) }$, as required.
\end{proof}

}{}

\begin{lemma}[co-sequence: Lemma~\ref{lemma:co-sequence}]
If $\vdash \context{ \left( P \cseq Q \right) \tensor \left( R \andthen S \right) }$ holds
 then $\vdash \context{ \left( P \tensor R \right) \cseq \left( Q \tensor S \right) }$ holds.
\end{lemma}
\tohide{
\begin{proof}
Assume that $\vdash \left( \left( P \cseq Q \right) \tensor \left( R \cseq S \right) \right)\mathclose{\sigma} \cpar U$ holds, for some substitution $\sigma$.
By Lemma~\ref{lemma:split-times}, there exist $n$-ary killing context $\tcontext{}$ and $U^0_i$ and $U^1_i$, for $1 \leq i \leq n$, such that $\vdash \left(P \cseq Q\right)\mathclose{\sigma} \cpar U^0_i$ and $\vdash \left(R \cseq S\right)\sigma \cpar U^1_i$ and the derivation
\[
\infer{
 U
}{
 \tcontext{ U^0_1 \cpar U^1_1, U^0_2 \cpar U^1_2, \hdots }
}
\]  holds.

Hence by Lemma~\ref{lemma:split-times}, for $k \in \left\{0, 1\right\}$ there exists $m^k_i$-ary killing context $\tcontextmn{k}{i}{}$ and types $V^k_{i,j}$, $W^k_{i,j}$ for $1 \leq j \leq m_i^k$, such that 
$\vdash P\sigma \cpar V^0_{i,j}$ and $\vdash Q \sigma\cpar W^0_{i,j}$ and $\vdash R\sigma \cpar V^1_{i,j}$ and $\vdash S\sigma \cpar W^1_{i,j}$
and the following derivation
\[
\infer{
U^k_i
}{
 \tcontextmn{k}{i}{V^k_{i,1}~\andthen~W^k_{i,1}, V^k_{i,2}~\andthen~W^k_{i,2} \hdots}
}
\]  holds.

Hence we can construct the following proof.
\[
\infer[]{
\left(\left(P \tensor R\right) \cseq \left(Q \tensor S\right)\right)\mathclose{\sigma} \cpar U
}{
\infer[]{
 \left(\left(P \tensor R\right) \cseq \left(Q \tensor S\right)\right)\mathclose{\sigma} \cpar
 \tcontext{ U^0_1 \cpar U^1_1, U^0_2 \cpar U^1_2, \hdots } 
}{
\infer[]{
 \left(\left(P \tensor R\right) \cseq \left(Q \tensor S\right)\right)\mathclose{\sigma} \cpar
 \tcontext{
 \begin{array}{rl}
      &\tcontextmn{0}{i}{V^0_{i,j} \cseq W^0_{i,j} \colon 1 \leq j \leq m^0_i } \\
 \cpar&\tcontextmn{1}{i}{V^1_{i,k} \cseq W^1_{i,k} \colon 1 \leq k \leq m^1_i } 
 \end{array}
   \colon 1 \leq i \leq n                                               
 }  
}{
\infer[]{
 \left(\left(P \tensor R\right) \cseq \left(Q \tensor S\right)\right)\mathclose{\sigma}
   \cpar
   \tcontext{
\begin{array}{r}
     \tcontextmn{1}{i}{
       \begin{array}{r}
       \tcontextmn{0}{i}{
          V^0_{i,j} \cseq W^0_{i,j}
           \colon 1 \leq j \leq m^0_i
       }
       \cpar
       \left(V^1_{i,k} \cseq W^1_{i,k}\right) 
\\
       \colon 1 \leq k \leq m^1_i
       \end{array}
       }
\\
  \colon 1 \leq i \leq n 
\end{array}
                                       }  
}{
\infer[]{
 \left(\left(P \tensor R\right) \cseq \left(Q \tensor S\right)\right)\mathclose{\sigma}
   \cpar
   \tcontext{
     \begin{array}{r}
     \tcontextmn{1}{i}{
       \begin{array}{r}
       \tcontextmn{0}{i}{
          \begin{array}{l}
          \left(V^0_{i,j} \cseq W^0_{i,j}\right) \cpar
       \left(V^1_{i,k} \cseq W^1_{i,k}\right)
           \colon 1 \leq j \leq m^0_i
          \end{array}
       }
       \\
       \colon 1 \leq k \leq m^1_i
       \end{array} }
     \\ \colon 1 \leq i \leq n 
     \end{array}
                                       }  
}{
\infer[]{
 \left(\left(P \tensor R\right) \cseq \left(Q \tensor S\right)\right)\sigma
   \cpar
   \tcontext{
     \begin{array}{r}
     \tcontextmn{1}{i}{
       \begin{array}{r}
       \tcontextmn{0}{i}{
          \begin{array}{l}
          \left(V^0_{i,j} \cpar V^1_{i,k}\right) \cseq
       \left(W^0_{i,j} \cpar W^1_{i,k}\right)
           \colon 1 \leq j \leq m^0_i
          \end{array}
       }
       \\
       \colon 1 \leq k \leq m^1_i
       \end{array} }
     \\ \colon 1 \leq i \leq n 
     \end{array}
                                       }  
}{
\infer[]{
   \tcontext{
     \tcontextmn{1}{i}{
       \tcontextmn{0}{i}{
          \begin{array}{l}
   \left(\left(P \tensor R\right) \cseq \left(Q \tensor S\right)\right)\mathclose{\sigma}
   \cpar \\
      \left(\left(V^0_{i,j} \cpar V^1_{i,k}\right) \cseq
       \left(W^0_{i,j} \cpar W^1_{i,k}\right)\right)
          \end{array} \colon 1 \leq j \leq m^0_i
       }
       \colon 1 \leq k \leq m^1_i
      }
     \colon 1 \leq i \leq n 
  }  
}{
\infer[]{
   \tcontext{
     \tcontextmn{1}{i}{
       \tcontextmn{0}{i}{
         \begin{array}{l}
         \left(\left(P \tensor R\right)\sigma \cpar V^0_{i,j} \cpar V^1_{i,k}\right)
         \cseq \\ 
         \left(
          \left(Q \tensor S\right)\sigma \cpar
          W^0_{i,j} \cpar W^1_{i,k}\right)
        \end{array}
           \colon 1 \leq j \leq m^0_i 
       }
       \colon 1 \leq k \leq m^1_i
     }
     \colon 1 \leq i \leq n 
   } 
}{
\infer[]{
   \tcontext{
     \tcontextmn{1}{i}{
       \tcontextmn{0}{i}{
          \begin{array}{l}
\left(\left(P\sigma \cpar V^0_{i,j}\right) \tensor \left(R\sigma \cpar V^1_{i,k}\right)\right)
   \cseq \\
      \left(
          \left(Q\sigma \cpar W^0_{i,j}\right) \tensor
       \left(S\sigma \cpar W^1_{i,k}\right)\right)
\end{array}
           \colon 1 \leq j \leq m^0_i          
       }
       \colon 1 \leq k \leq m^1_i
       }
     \colon 1 \leq i \leq n 
                                       }  
}{
\infer[]{
   \tcontext{
     \begin{array}{l}
     \tcontextmn{1}{i}{
       \begin{array}{l}
       \tcontextmn{0}{i}{
          \begin{array}{l}
          \cunit
           \colon 1 \leq j \leq m^0_i
          \end{array}
       }
       \colon 1 \leq k \leq m^1_i
       \end{array} }
      \colon 1 \leq i \leq n 
     \end{array}
                                       }  
}{
 \cunit
}}}}}}}}}}
\]
Therefore, by Lemma~\ref{lemma:context}, for any context $\vdash \context{\left(P \cseq Q\right) \tensor \left(R \cseq S\right)}$ yields $\vdash \context{ \left(P \tensor R\right) \cseq \left(Q \tensor S\right) }$.
\end{proof}

}{}

\begin{lemma}[co-tidy: Lemma~\ref{lemma:co-tidy}]
If $\vdash \context{ \cunit \ooplus \cunit }$ holds, then $\vdash \context{ \cunit }$ holds.
\end{lemma}
\begin{proof}
Assume that $\vdash \left( \cunit \ooplus \cunit \right) \cpar P$ holds. By Lemma~\ref{lemma:split-plus}, there exist killing context $\tcontext{}$
and formulae $U_i$ for $1 \leq i \leq n$ such that $\vdash \cunit \cpar U_i$ or $\vdash \cunit \cpar U_i$ hold, hence $\vdash U_i$ holds, and the following derivation can be constructed.
\[
\infer[.]{
P
}{
 \tcontext{ U_1, \hdots, U_n }
}
\]
Thereby the following proof can be constructed:
\[
\infer[.]{
 P
}{
\infer[]{
 \tcontext{ U_1, \hdots, U_n } 
}{
\infer[]{
 \tcontext{ \cunit, \cunit, \hdots } 
}{ \cunit
}}}
\]
Therefore, by Lemma~\ref{lemma:context}, for any context $\vdash \context{ \cunit \ooplus \cunit }$ yields $\vdash \context{ \cunit }$, as required.
\end{proof}
\begin{lemma}[atomic co-interaction: Lemma~\ref{lemma:co-atoms}]
If $\vdash \context{ \alpha \tensor \co{\alpha} }$ holds
then $\vdash \context{ \cunit }$ holds.
\end{lemma}
\begin{proof}
Assume for atom $\alpha$ that 
$\vdash \left( \alpha \tensor \co{\alpha} \right)\sigma \cpar P$, for some formula $P$ and some substitution $\sigma$.
By Lemma~\ref{lemma:split-times}, there exist $n$-ary killing context $\tcontext{}$ and formulae $U_i$ and $V_i$ such that $\vdash \alpha \sigma \cpar U_i$ and $\vdash \co{\alpha \sigma} \cpar ~V_i$, for $1 \leq i \leq n$, such that 
\[
\infer[.]{P}{
\tcontext{ U_1 \cpar V_1, U_2 \cpar  V_2, \hdots}
}
\]
By Lemma~\ref{lemma:split-atoms}, for every $i$, there exist $m_i^0$-ary killing contexts $\tcontextmn{0}{i}{}$
such that 
\[
\infer[.]{U_i}{
\tcontextmn{0}{i}{\co{\alpha \sigma}, \hdots, \co{\alpha \sigma}}
}
\]
By Lemma~\ref{lemma:split-atoms}, for every $i$, there exist $m_i^1$-ary killing contexts $\tcontextmn{1}{i}{}$
such that 
\[
\infer[.]{
V_i}{
 \tcontextmn{1}{i}{\alpha \sigma, \hdots, \alpha \sigma}
}
\]
Thereby the following proof can be constructed.
\[
\begin{array}{rl}
\infer[]{
P
}{
\infer[]{
 \tcontext{U_1 \cpar V_1, U_2 \cpar V_2, \hdots} 
}{
\infer[]{
 \tcontext{
                               \tcontextmn{0}{i}{\co{\alpha \sigma} \colon 1 \leq j \leq m_i^0} 
                         \cpar \tcontextmn{1}{i}{\alpha \sigma \colon 1 \leq k \leq m_i^1 }
                             \colon 1 \leq i \leq n 
}
}{
\infer[]{
 \tcontext{
    \tcontextmn{1}{i}{\tcontextmn{0}{i}{\co{\alpha \sigma} \colon 1 \leq j \leq m_i^0} \cpar \alpha \sigma \colon 1 \leq k \leq m_i^1 }
    \colon 1 \leq i \leq n
}
}{
\infer[]{
 \tcontext{
    \tcontextmn{1}{i}{\tcontextmn{0}{i}{\co{\alpha \sigma} \cpar \alpha \sigma \colon 1 \leq j \leq m_i^0} \colon 1 \leq k \leq m_i^1 } 
    \colon 1 \leq i \leq n 
}
}{
\infer[]{
 \tcontext{
    \tcontextmn{1}{i}{\tcontextmn{0}{i}{\cunit \colon 1 \leq j \leq m_i^0} \colon 1 \leq k \leq m_i^1 }
\colon 1 \leq i \leq n }
}{
 \cunit
}}}}}}
\end{array}
\]
Therefore, by Lemma~\ref{lemma:context}, for any context $\context{}$, $\vdash \context{\alpha \tensor \co{\alpha}}$ yields that $\vdash \context{ \cunit }$, as required.
\end{proof}

\end{document}